\documentclass{CSML}
\pdfoutput=1

\usepackage{lastpage}

\lmcsheading{}{1--\pageref{LastPage}}{}{}%
{Feb.~08,~2017}{Sep.~26, 2017}{}

\keywords{Kleene allegories, Kleene lattices, Petri automata, Petri nets, decidability, graphs, Kleene theorem}

\ACMCCS{[{\bf Theory of computation}]: Formal languages and automata theory, Logic}

\usepackage{harengs}

\def\ie{{\em i.e.}}

\setcounter{tocdepth}{1}

\begin{document}

\title{Petri automata}%
\titlecomment{{\lsuper*} This work has been
  supported by the European Research Council (ERC) under the European
  Union’s Horizon 2020 programme (CoVeCe, grant agreement No 678157),
  as well as the the LABEX MILYON (ANR-10-LABX-0070) of Universit\'e
  de Lyon, within the program "Investissements d'Avenir"
  (ANR-11-IDEX-0007) operated by the French National Research Agency
  (ANR).}

\author[Paul Brunet]{Paul Brunet}	
\address{Univ Lyon, CNRS, ENS de Lyon, UCB Lyon 1, LIP, France}	
\email{paul@brunet-zamansky.fr} 
\email{Damien.Pous@ens-lyon.fr}  

\author[Damien Pous]{Damien Pous}
\address{}

\begin{abstract}
  \noindent
  Kleene algebra axioms are complete with respect to both language
  models and binary relation models. In particular, two regular
  expressions recognise the same language if and only if they are
  universally equivalent in the model of binary relations. 

  We consider Kleene allegories, i.e., Kleene algebras with two
  additional operations and a constant which are natural in binary
  relation models: intersection, converse, and the full
  relation. While regular languages are closed under those operations,
  the above characterisation breaks. Putting together a few results
  from the literature, we give a characterisation in terms of
  languages of directed and labelled graphs.

  By taking inspiration from Petri nets, we design a finite automata
  model, Petri automata, allowing to recognise such graphs.  We prove
  a Kleene theorem for this automata model: the sets of graphs
  recognisable by Petri automata are precisely the sets of graphs
  definable through the extended regular expressions we consider.

  Petri automata allow us to obtain decidability of identity-free
  relational Kleene lattices, i.e., the equational theory generated by
  binary relations on the signature of regular expressions with
  intersection, but where one forbids unit. This restriction is used
  to ensure that the corresponding graphs are acyclic. We actually
  show that this decision problem is \expspace-complete.
\end{abstract}

\maketitle

\section{Introduction}
\label{sec:intro}

Consider binary relations and the operations of union~($\sumtm$),
intersection~($\cap$), composition~($\cdot$),
converse~($\conv\argument$), transitive closure~($\argument^+$),
reflexive-transitive closure~($\argument^\star$), and the constants
identity~($1$), empty relation~($0$) and universal
relation~($\top$). These objects give rise to an (in)equational theory
over the following signature.
\begin{equation*}
  \Sigma\eqdef
  \tuple{\sumtm_2,\cap_2,\cdot_2,{\conv\argument}_1,{\argument^+}_1,{\argument^\star}_1,0_0,1_0,\top_0}
\end{equation*}
A pair of terms $e,f$ built from those operations and some variables
$a,b,\dots$ is a \emph{valid equation}, denoted $\Rel\models e=f$, if
the corresponding equality holds universally. Similarly, an inequation
$\Rel\models e\leq f$ is valid when the corresponding containment
holds universally. Here are valid equations and inequations: they hold
whatever the relations we assign to variables $a$, $b$, and $c$.
\begin{align}
  \label{eq:ka}
  &\Rel~\models~ (a\sumtm b)^\star\cdot b \cdot (a\sumtm b)^\star ~=~
  (a^\star\cdot b \cdot a^\star)^+\\
  \label{eq:kc}
  &\Rel~\models~ a^\star ~\leq~ 1\,\sumtm\, a\cdot\conv a\cdot a^+\\
  \label{eq:al}
  &\Rel~\models~ a\cdot b~\cap~ c ~\leq~ a\cdot(b~\cap~ \conv a\cdot c)\\
  \label{eq:kl}
  &\Rel~\models~ a^+~\cap~1 ~\leq~ (a\cdot a)^+\\
  \label{eq:top}
  &\Rel~\models~ \top\cdot a\cdot\top\cdot b\cdot\top ~=~ \top\cdot b\cdot\top\cdot a\cdot\top
\end{align}

\noindent
Various fragments of this theory have been studied in the literature:
\begin{itemize}
\item \emph{Kleene algebra}~\cite{Conway71}, where one removes
  intersection, converse, and $\top$, so that terms are plain regular
  expressions. The equational theory is decidable~\cite{Kleene}, and
  actually \pspace-complete~\cite{MS72}.  The equational theory is not
  finitely based~\cite{redko64}, but finite quasi-equational
  axiomatisations exist~\cite{Krob90,Kozen91,Boffa95}.
  Equation~\eqref{eq:ka} lies in this fragment, and one can notice
  that the two expressions recognise the same language.
\item \emph{Kleene algebra with converse}, where one only removes
  intersection and $\top$, is also a decidable
  fragment~\cite{BES95}. It remains
  \pspace~\cite{bp:ramics14:kac,bp:jlamp15:kac}, and can be
  axiomatised relatively to Kleene
  algebra~\cite{EB95}. Inequation~\eqref{eq:kc} belongs to this
  fragment.
\item (representable, distributive) \emph{allegories}~\cite{FS90},
  sometimes called positive relation algebras, where transitive and
  reflexive-transitive closures and the full relation are not
  allowed. They are decidable~\cite[page 208]{FS90}, and not finitely
  based.  Inequation~\eqref{eq:al} is known as the \emph{modularity
    law} in this setting.
\item (representable) \emph{identity-free Kleene
    lattice}~\cite{AMN11}, where converse is removed, as well as the
  full relation, the identity relations, and reflexive-transitive
  closure (transitive closure is kept: reflexive-transitive closure is
  removed just because it hides the identity constant, we have
  $0^\star=1$). 
\end{itemize}
\noindent
Accordingly we call the whole (in)equational theory that of
\emph{representable Kleene allegories}\footnote{An algebra is
  representable when it is isomorphic to an algebra of binary
  relation. The class of Kleene algebra is usually defined as a
  finitely based quasi-variety, and its equational theory coincides
  with that generated by binary relations (i.e., by representable
  algebras); in contrast, allegories are defined by a set of equations
  which is not complete with respect to representatble algebra: there
  are some allegories which are not representable~\cite{FS90};
  similarly for Kleene lattices.}.  
We obtain several important steps
towards decidability of this (in)equational theory.
\begin{enumerate}
\item we characterise it in terms of \emph{graph} languages;
\item we design a new automata model, called \emph{Petri automata}, to
  recognise such graph languages, and we prove a Kleene like theorem:
  the languages definable through Kleene allegory expressions are
  precisely those recognisable by Petri automata;
\item we prove that the fragment where converse, full relation, and
  identity are forbidden is \expspace-complete.
\end{enumerate}

\noindent
The latter fragment was studied by Andréka, Mikulás and
Németi~\cite{AMN11}. They proved that on the corresponding signature,
the equational theory generated by binary relations coincides with
that generated by formal languages:
\begin{equation*}
  \Rel\models e=f \quad \Leftrightarrow \quad \Lang\models e=f
\end{equation*}
(where $\Lang\models e=f$ stands for universal validity in all
$\Sigma$-algebras of formal languages with the standard interpretation
of the operations---in particular, $\conv\argument$ is language
reversal and $\top$ is the language of all words).  It is important to
understand that such a property highly depends on the considered
fragment:
\begin{itemize}
\item It does not holds when $1$, $\cap$ and $\cdot$ are all
  available: we have $\Lang\models (a\cap 1)\cdot b=b\cdot (a\cap 1)$
  but this equation is not valid in $\Rel$.
\item It does hold in Kleene algebra (\ie, on the usual signature of
  regular expressions).
\item It does not holds when both $\top$ and $\cdot$ are available: we
  have
  $\Rel\models\top\cdot a\cdot\top\cdot b\cdot\top = \top\cdot
  b\cdot\top\cdot a\cdot\top$ while this equation fails in $\Lang$.
\item It does not hold when both $\conv\argument$ and $\cdot$ are
  available: we have $\Rel\models a \leq a \cdot \conv a\cdot a$ while
  this inequation fails in $\Lang$. Ésik et al. actually proved that
  this is the only missing law for the signature of Kleene algebra
  with converse, and that the two equational theories are
  decidable~\cite{BES95,EB95}.
\end{itemize}

\subsection*{Note}
This paper is a follow-up to the paper we presented at
LiCS'15~\cite{bp:lics15:paka}. It contains more details and proofs,
but also a full Kleene theorem for Petri automata (we only have the
easiest half of it in~\cite{bp:lics15:paka}), a comparison of our
automata model with the branching automata of Lodaya and
Weil~\cite{LODAYA2000347}, \expspace-hardness of identity-free Kleene
lattices (we only have it for the underlying automata
problem in~\cite{bp:lics15:paka}), and the handling of the
constant~$\top$.

\bigskip\noindent
We continue this introductory section by an informal description of
the characterisation based on graph languages, and of our automata
model.

\subsection{Languages}

In the simple case of Kleene algebra, \ie, without converse and
intersection, the (in)equational theory generated by relations (or
languages) can be characterised by using regular languages. Write
$\Ln e$ for the language denoted by a regular expression $e$; for
all regular expressions $e,f$, we have
\begin{equation}
  \label{eq:charac:ka}
  \Rel\models e\leq f \text{ if and only if } \Ln e \subseteq \Ln f.
\end{equation}
(This result is easy and folklore; proving that this is also
equivalent to provability using Kleene algebra
axioms~\cite{Krob90,Kozen91,Boffa95} is much harder.)

While regular languages are closed under intersection, the above
characterisation does not scale. Indeed, consider two distinct
variables $a$ and $b$. The extended regular expressions $a\cap b$ and
$0$ both recognise the empty language, while
$\Rel\not\models a\cap b=0$: one can interpret $a$ and $b$ by
relations with a non-empty intersection.

\subsection{Graphs}

Freyd and Scedrov' decision procedure for representable
allegories~\cite[page 208]{FS90} relies on a notion of directed,
labelled, 2-pointed graph. The same notion was proposed independently
by Andréka and Bredikhin~\cite{AB95}, in a more comprehensive way. 

Call \emph{terms} the terms built out of composition, intersection,
converse, constants $1$ and $\top$, and variables $a,b\dots$. A term
$u$ can be represented as a labelled directed graph $\Gr u$ with two
distinguished vertices called the \emph{input} and the
\emph{output}. We give some examples in
Figure~\ref{fig:ground:graphs}, see Section~\ref{sec:graphs} for a
precise definition.

\begin{figure}
  \noindent
  \begin{minipage}[b]{.55\linewidth}
  \begin{tabular}{m{.41\columnwidth}m{.5\columnwidth}}
   \hfill$\Gr a$ &
   \begin{tikzpicture}
     \position (0) (0,0);
     \position (1) (1.5,0);
     \initst (0);
     \fnst (1);
     \edge[above] (0) (1)[a];           
   \end{tikzpicture}
   \\
   \hfill$\Gr{\paren{a\cdot\paren{b\cap c}}\cap d}$ &  
   \begin{tikzpicture}
     \position (0) (0,-0.20);
     \position (1) (1, 0.20);
     \position (2) (2,-0.20);
     \initst (0); \fnst (2);
     \edge[above,out=55,in=180] (0) (1)[a]; 
     \edge[below,out=-25,in=-155] (0) (2)[d]; 
     \edge[above,out=0,in=135] (1) (2)[b]; 
     \draw[arc] (1) 
     to[below,out=-65,in=180] node[near start]{$c$} (2);
   \end{tikzpicture}
   \\
   \hfill$\Gr{\paren{a\cdot b}\cap\paren{a\cdot c}}$ &
   \begin{tikzpicture}
     \position (0) (0,0);
     \position (1) (1,0.25);
     \position (2) (1,-0.25);
     \position (3) (2,0);
     \initst (0); \fnst (3);
     \edge[above,out=25,in=-175] (0) (1)[a]; 
     \edge[below,out=-25,in=175] (0) (2)[a]; 
     \edge[above,out=-5,in=160] (1) (3)[b]; 
     \edge[below,out=5,in=-160] (2) (3)[c]; 
   \end{tikzpicture}
   \\
   \hfill$\Gr{\paren{a\;\cap\; b\cdot c}\cdot d}$ &
   \begin{tikzpicture}
     \position (0) (0,0);
     \position (2) (.8,-0.25);
     \position (3) (1.6,0);
     \position (4) (2.4,0);
     \initst (0); \fnst (4);
     \edge[above,out=20,in=160] (0) (3)[a]; 
     \edge[below,out=-25,in=175] (0) (2)[b]; 
     \edge[below,out=5,in=-160] (2) (3)[c]; 
     \edge[above] (3) (4)[d]; 
   \end{tikzpicture}
   \\
   \hfill$\Gr{a\cdot\conv a\cdot a}$ &
   \begin{tikzpicture}
     \position (0) (0,0);
     \position (2) (.8,0);
     \position (3) (1.6,0);
     \position (4) (2.4,0);
     \initst (0); \fnst (4);
     \edge (0) (2)[a]; 
     \edge (3) (2)[a]; 
     \edge (3) (4)[a]; 
   \end{tikzpicture}
   \\
  \end{tabular}
  \caption{Graphs associated with some terms.}
  \label{fig:ground:graphs}
  \end{minipage}
  \hfill
  \begin{minipage}[b]{.4\linewidth}
  \begin{tikzpicture}
   \node at (-1.5,2.05) {\normalsize$G$~:};
   \node at (-1.6,1.05) {$\blacktriangle$};
   \node (x) at (0,2) {4} ;
   \node (y) at (1.5,2.5) {5} ;
   \node (z) at (3,2) {6} ;
   \initst (x); \fnst (z);
   \edge[above,out=45,in=180] (x) (y)[a]; 
   \edge[below,out=-25,in=-155] (x) (z)[d]; 
   \edge[above,out=0,in=145] (y) (z)[b]; 
   \draw[arc] (y) 
   to[below,out=-45,in=180] node[near start]{$c$} (z);

   \node at (-1.5,0.05) {\normalsize$F$~:};
   \node (0) at (0,0) {0} ;
   \node (1) at (1.5,0.4) {1} ;
   \node (2) at (1.5,-0.4) {2} ;
   \node (3) at (3,0) {3} ;
   \initst (0); \fnst (3);
   \edge[above,out=35,in=180] (0) (1)[a]; 
   \edge[below,out=-35,in=180] (0) (2)[a]; 
   \edge[above,out=0,in=150] (1) (3)[b]; 
   \edge[below,out=0,in=-150] (2) (3)[c]; 
   
   \draw[arc,color=red,style=dotted] (0) to[bend left] (x);
   \draw[arc,color=red,style=dotted] (1) to[bend left] (y);
   \draw[arc,color=red,style=dotted] (2) to[bend right] (y);
   \draw[arc,color=red,style=dotted] (3) to[bend right] (z);
  \end{tikzpicture}
  \caption{A graph homomorphism.}
  \label{fig:GinfF}
  \end{minipage}
\end{figure}

This algebra of two-pointed graphs can be used---and was proposed by
Freyd, Scedrov, Andréka and Bredikhin, for the full syntax of
allegories. All results from this section extend to such a setting, in
particular characterisation~(\ref{eq:charac:kal}) below. In the
present case, where we forbid converse and the constants identity and
full relation, an important property is that the considered graphs are
acyclic, and in fact, series-parallel.

Graphs can be endowed with a preorder relation: we write $G\lessgr F$
when there exists a graph homomorphism from $F$ to $G$ preserving
labels, inputs, and outputs. For instance the graph corresponding to
$\paren{a\cdot\paren{b\cap c}}\cap d$ is less than the graph of
$\paren{a\cdot b}\cap\paren{a\cdot c}$, thanks to the homomorphism
depicted in Figure~\ref{fig:GinfF} using dotted arrows. Notice that
the homomorphism needs not be injective or surjective, so that this
preorder has nothing to do with the respective sizes of the graphs: a
graph may very well be smaller than another in the sense of $\lessgr$,
while having more vertices or edges (and vice versa).

The key result from Freyd and Scedrov~\cite[page 208]{FS90}, or
Andréka and Bredikhin~\cite[Theorem~1]{AB95}, is that for all
terms $u,v$, we have
\begin{equation}
  \label{eq:charac:al}
  \Rel\models u\leq v \text{ if and only if } \Gr u\lessgr\Gr v.
\end{equation}
The graphs are finite so that one can search exhaustively for a
homomorphism, whence the decidability result. 

\subsection{Graph languages}

To extend the above graph-theoretical characterisation to
identity-free Kleene lattices, we need to handle union, zero, and
transitive closure. It suffices for that to consider sets of graphs:
to each expression $e$, we associate a set of graphs $\G(e)$. This set
is infinite whenever the expression $e$ contains transitive closures.

Writing $\clgr X$ for the downward closure of a set of graphs $X$ by
the preorder $\lessgr$ on graphs, we obtain the following
generalisation of both~\eqref{eq:charac:ka} and~\eqref{eq:charac:al}:
for all expressions $e,f$,
\begin{equation}
  \label{eq:charac:kal}
  \Rel\models e\leq f \text{ if and only if } \clgr{\G(e)}\subseteq \clgr{\G(f)}.
\end{equation}
This is Theorem~\ref{thm:interlang} in the sequel, and this result is
almost there in the work by Andréka et al.~\cite{AB95,AMN11}. To the
best of our knowledge this explicit formulation is new, as well as its
use towards decidability results.

When $e$ and $f$ are terms, we recover the
characterisation~\eqref{eq:charac:al} for representable allegories:
$\G(e)$ and $\G(f)$ are singleton sets in this case. For plain regular
expressions, the graphs are just words and the preorder~$\lessgr$
reduces to isomorphism. We thus recover the
characterisation~\eqref{eq:charac:ka} for Kleene algebra. This result
also generalises the characterisation provided by Ésik et
al.~\cite{BES95} for Kleene algebra with converse: graphs of
expressions without intersection are just words over a duplicated
alphabet, and the corresponding restriction of the preorder~$\lessgr$
precisely corresponds to the word rewriting system they use.

\subsection{Petri automata}

In order to exploit the above characterisation and obtain decidability
results, one has first to represent graph languages in a finitary way.
We propose for that a new finite automata model, largely based on
Petri nets~\cite{Petri,Petri62,Murata}.  We describe this model below,
ignoring converse, $1$, and $\top$ for the sake of clarity.

Recall that a Petri net consists of
\begin{itemize}
\item a finite set of \emph{places}, denoted with circles; 
\item a set of \emph{transitions}, denoted with rectangles;
\item for each transition, a set of input places and a set of output
  places, denoted with arrows;
\item an \emph{initial place}, denoted by an entrant arrow;
 \item a set of \emph{final markings}.
\end{itemize}
The execution model is the following: start by putting a token on the
initial place; choose a transition whose input places all contain a
token, remove those tokens and put new tokens in the output places of
the transition; repeat this process until a final marking is
reached. The obtained sequence of transitions is called an
\emph{accepting run}. (We actually restrict ourselves to \emph{safe}
Petri nets, to ensure that there is always at most one token in a
given place when playing this game.)

A \emph{Petri automaton} is just a safe Petri net with variables
labelling the outputs of each transition, and with a single final
marking consisting in the empty set of tokens. The automaton depicted
below is the automaton we will construct for the term
$a\cdot b~\cap ~a\cdot c$. A run must start by firing the left-most
transition, reaching the marking $\set{B,C}$; then we have the choice
of firing the upper transition first, reaching the marking
$\set{D,C}$, or the lower one, reaching the marking $\set{B,E}$. In
both cases we reach the marking $\set{D,E}$ by firing the remaining
transition. We complete the run by firing the last transition, which
leads to the final marking $\emptyset$.
\begin{center}
   \begin{tikzpicture}
    \state[A](A)(0,0);
    \state[B](B)(2,0.5);
    \state[C](C)(2,-0.5);
    \state[D](D)(4,0.5);
    \state[E](E)(4,-0.5);
    \trans(0)(0.75,0);
    \trans(1)(2.75,0.5);
    \trans(2)(2.75,-0.5);
    \transf(fn)(5.25,0);
    \initst (A);
    \edge (A) (0) ; 
    \edge[above,out=40,in=-175] (0) (B)[a]; 
    \edge[below,out=-40,in=175] (0) (C)[a]; 
    \edge (B) (1) ; 
    \edge[above] (1) (D)[b]; 
    \edge (C) (2) ; 
    \edge[below] (2) (E)[c]; 
    \edge[out=-5,in=140] (D) (fn);
    \edge[out=5,in=-140] (E) (fn);
  \end{tikzpicture}
\end{center}
To read a graph in such an automaton, we try to find an accepting run
that matches the graph up to homomorphism (Definitions~\ref{def:read}
and~\ref{def:lang}). We do that by using a sequence of functions from
the successive configurations of the run to the vertices of the
graph. We start with the function mapping the unique token in the
initial place, to the input vertex of the graph. To fire a transition,
we must check that all its input tokens are mapped to the same vertex
in the graph, and that this vertex has several outgoing edges,
labelled according to the outputs of the transition. If this is the
case, we update the function by removing the mappings corresponding to
the deleted tokens, and by adding new mappings for each of the created
tokens (using the target vertices of the aforementioned outgoing
edges, according to the labels). If a transition has no output, before
firing it we additionally require that each of its input tokens are
mapped to the output vertex of the graph. The graph is accepted by the
Petri automaton if we can reach the final marking $\emptyset$.

For instance, the previous automaton accepts the graph of
$a\cdot b~\cap ~a\cdot c$ ($F$ in Figure~\ref{fig:GinfF}).  We start
with the function $\set{A\mapsto 0}$. We can fire the first
transition, updating the function into $\set{B\mapsto 1,\,C\mapsto 2}$
(We could also choose to update the function into
$\set{B\mapsto 2,\,C\mapsto 1}$, or $\set{B,C\mapsto 1}$, or
$\set{B,C\mapsto 2}$ but this would lead to a dead-end). Then we can
fire the upper transition, evolving the function into
$\set{D\mapsto 3,\,C\mapsto 2}$, the lower transition leading to
$\set{D,E\mapsto 3}$ and we finish by firing the remaining transition,
thus getting the function with domain $\emptyset$.

We call \emph{language} of $\A$ the set of graphs $\Ln\A$ accepted by
a Petri automaton $\A$. This language is downward-closed:
$\Ln\A=\clgr{\Ln\A}$.
For instance, the previous automaton also accepts the graph $G$ from
Figure~\ref{fig:GinfF}, which is smaller than $F$. Indeed, when we
fire the first transition, we can associate the two newly created
tokens (in places $B$ and $C$) with the same vertex $(5)$. This actually
corresponds to composing the functions used to accept $F$ with the
homomorphism depicted with dotted arrows.

\medskip

This automata model is expressive enough for Kleene allegories: for
every expression $e$, we can construct a Petri automaton $\A(e)$ such
that $\Ln{\A(e)}=\clgr{\G(e)}$ (Sections~\ref{sec:exp:pa}
and~\ref{sec:reading}). We give three other examples of Petri automata
to give more intuition on their behaviour.

\medskip

The first transition in the previous Petri automaton splits the
initial token into two tokens, which are moved concurrently in the
remainder of the run. This corresponds to an intersection in the
considered expression.
%
This is to be contrasted with the behaviour of the following
automaton, which we would construct for the expression
$a\cdot b\,\sumtm\,a\cdot c$. This automaton has two accepting runs:
$\set{A},\set{B},\set{D},\emptyset$ and
$\set{A},\set{C},\set{E},\emptyset$, which can be used to accept the
(graphs of the) terms $a\cdot b$ and $a\cdot c$.
\begin{center}
   \begin{tikzpicture}
    \state[A](A)(0,0);
    \state[B](B)(2,0.5);
    \state[C](C)(2,-0.5);
    \state[D](D)(4,0.5);
    \state[E](E)(4,-0.5);
    \trans(3)(0.75,-0.5);
    \trans(1)(2.75,0.5);
    \trans(2)(2.75,-0.5);
    \trans(0)(0.75,0.5);
    \transf(f1)(4.75,.5);
    \transf(f2)(4.75,-.5);
    \initst (A);
    \edge[out=60,in=-180] (A) (0) ; 
    \edge[out=-60,in=180] (A) (3) ; 
    \edge[above] (0) (B)[a]; 
    \edge[below] (3) (C)[a]; 
    \edge (B) (1) ; 
    \edge[above] (1) (D)[b]; 
    \edge (C) (2) ; 
    \edge[below] (2) (E)[c]; 
    \edge(D)(f1);
    \edge(E)(f2);
  \end{tikzpicture}
\end{center}
In a sense, two transitions competing for the same tokens represent a
non-deterministic choice, \ie, a union in an expression.

Still in the first example, the two tokens created by the first
transition are later collected in the final transition. Tokens may
also be collected and merged by a non-final transition. Consider for
instance the following automaton for $(a\,\cap\, b\cdot c)\cdot d$. It
has only one accepting run,
$\set{A},\set{B,C},\set{B,D},\set E,\emptyset$, and this run can be
used to read the fourth graph from Figure~\ref{fig:ground:graphs}.
\begin{center}
   \begin{tikzpicture}
    \state[A](A)(0,0);
    \state[B](B)(3,0.5);
    \state[C](C)(2,-0.5);
    \state[D](D)(4,-0.5);
    \state[E](E)(6,0);
    \trans(0)(0.75,0);
    \trans(1)(4.75,0);
    \trans(2)(2.75,-0.5);
    \transf(f)(6.75,0);
    \initst (A);
    \edge (A) (0) ; 
    \edge[above,out=40,in=-180] (0) (B)[a]; 
    \edge[below,out=-45,in=180] (0) (C)[b]; 
    \edge[out=0,in=135] (B) (1) ; 
    \edge[out=20,in=-135] (D) (1) ; 
    \edge (C) (2) ; 
    \edge[below] (2) (D)[c]; 
    \edge[above] (1) (E)[d]; 
    \edge(E)(f);
  \end{tikzpicture}
\end{center}

As a last example, consider the following automaton for the expression
$a\;\cap\; b^+\cdot c$. The upper transition introduces a loop, so
that there are infinitely many accepting runs. For all $n>0$, the
graph of the term $a\;\cap\; b^n \cdot c$ is accepted by this
automaton.
\begin{center}
   \begin{tikzpicture}
    \state[A](A)(0,-0.5);
    \state[B](B)(2,0);
    \state[C](C)(4,0);
    \state[D](D)(4,-1);
    \trans(0)(0.75,-0.5);
    \trans(1)(2,1);
    \trans(2)(2.75,0);
    \transf(f)(5.25,-.5);
    \initst (A);
    \edge (A) (0) ; 
    \edge[below,out=45,in=180] (0) (B)[b]; 
    \edge[above,out=-30,in=180] (0) (D)[a]; 
    \edge[out=135,in=-135] (B) (1) ; 
    \edge (B) (2) ; 
    \edge[right,out=-30,in=50] (1) (B)[b]; 
    \edge[above] (2) (C)[c]; 
    \edge[out=-5,in=140](C)(f);
    \edge[out=5,in=-140](D)(f);
  \end{tikzpicture}
\end{center}

\clearpage
\tableofcontents

In Section~\ref{sec:expressions} we define the various sets of
expressions we consider in the paper as well as a few auxiliary
syntactic functions.
We define graphs in Section~\ref{sec:graphs}, and we use them to
characterise the equational theory of representable Kleene allegories.
Series-parallel graphs play a central role, they are defined and
related to fragments of expressions in Section~\ref{sec:sp}.

Petri automata and the sets of graphs they produce are defined in
Section~\ref{sec:pa}. The next two sections are devoted to the Kleene
theorem for Petri automata: in Section~\ref{sec:exp:pa} we translate
expressions into automata, and we obtain the converse translation in
Section~\ref{sec:pa:exp}. The latter translation is technically
involved and requires a large number of preliminary definitions. The
results and definitions in this long section are not required in the
sequel.

Section~\ref{sec:reading} is devoted to the reading of graphs in Petri
automata modulo homomorphism, in a local and incremental way, as
illustrated in the Introduction. This leads us to
Section~\ref{sec:compare}, where we show how to compare Petri automata
modulo homomorphism, using an appropriate notion of simulation.
We provide complexity bounds in Section~\ref{sec:complexity}.

We relate the notion of graph produced by a Petri automaton to the
standard notion of \emph{pomset-trace} of a Petri net in
Section~\ref{sec:relat-with-stand}. We discuss other works on regular
sets of graphs and we compare Petri automata with Lodaya and Weil's
\emph{branching automata} in Section~\ref{sec:relat-with-branch}.

\section{Terms and expressions}
\label{sec:expressions}

We let $a,b$ range over a set $X$ of \emph{variables}. We consider
the following syntax of \emph{expressions}:
\begin{align*}
  e,f\Coloneqq
  e \sumtm f\Mid e \cdot f\Mid e\cap f \Mid 
  e^+ \Mid \conv e \Mid
  0 \Mid 1 \Mid \top \Mid
  a\tag{$a\in X$}
\end{align*}
We denote their set by $\Exp$. We write $\Trm$ for the subset of
\emph{terms}: those expressions that do not contain $0$, $\sumtm$ or
$\argument^+$.
\begin{align*}
  u,v\Coloneqq
  u \cdot v\Mid u\cap v \Mid \conv u \Mid 1 \Mid \top \Mid
  a\tag{$a\in X$}
\end{align*}

If $\sigma: X\rightarrow\Rel[S]$ is an interpretation of the variables
into some algebra of relations, we write $\hat\sigma$ for the unique
homomorphism extending $\sigma$ into a function from $\Exp$ to
$\Rel[S]$.
\begin{defi}[Validity of an (in)equation]
  An inequation between two expressions $e$ and $f$ is \emph{valid},
  written $\Rel\models e\leq f$, if for every relational
  interpretation $\sigma$ we have
  $\hat\sigma(e)\subseteq\hat\sigma(f)$. Similarly, we write
  $\Rel\models e= f$, if for every relational interpretation $\sigma$
  we have $\hat\sigma(e)=\hat\sigma(f)$.
\end{defi}

\noindent Expressions actually denote sets of terms.
\begin{defi}[Terms of an expression]
  The \emph{set of terms} of an expression $e\in\Exp$, written
  $\tms{e}$, is the set of terms defined inductively as
  follows.
  \begin{align*}
    \tms{e\cdot f}&\eqdef\setcompr{u\cdot v}{u\in\tms e\text{ and } v\in\tms f}&
    \tms 1&\eqdef \set 1\\
    \tms{e\cap f}&\eqdef\setcompr{u\cap v}{u\in\tms e\text{ and } v\in\tms f}&
    \tms \top&\eqdef \set \top\\
    \tms{\conv e}&\eqdef\setcompr{\conv u}{u\in\tms e}\\
    \tms{e\sumtm f}&\eqdef\tms e \sumtm \tms f&
    \tms 0&\eqdef \emptyset\\
    \tms{e^+}&\eqdef\textstyle{ \bigcup_{n>0}\setcompr{u_1\cdot\cdots\cdot u_n}{\forall i, u_i\in \tms e}}&
    \tms a &\eqdef\set{a}
  \end{align*}
\end{defi}

\noindent As observed by Andréka, Mikulás, and Németi the
(relational) semantics of an expression only depends on the above set
of terms.
\begin{lem}[{\cite[Lemma 2.1]{AMN11}}]
  \label{lem:AMN2+} 
  For every expression $e\in\Exp$, every set $S$ and every relational
  interpretation $\sigma:X\rightarrow \Rel[S]$, we have
  \begin{equation*}
    \hat\sigma(e)=\bigcup_{u\in\tms e}\hat\sigma(u).
  \end{equation*}
\end{lem}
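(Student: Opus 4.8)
The plan is to proceed by structural induction on $e$, the point being that every operation of the signature $\Sigma$ distributes over arbitrary unions of relations in each of its arguments, so that applying $\hat\sigma$ commutes with ``summing over terms''. For the base cases $e\in\set{a,1,\top}$ the set $\tms e$ is the singleton $\set e$ and the claimed identity is immediate; for $e=0$ we have $\tms 0=\emptyset$, so the right-hand side is the empty union and matches $\hat\sigma(0)$, the empty relation.

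For the inductive step I would first record the elementary facts, valid for binary relations over any fixed set $S$, that $\bigl(\bigcup_i R_i\bigr)\cdot\bigl(\bigcup_j T_j\bigr)=\bigcup_{i,j}R_i\cdot T_j$, that $\bigl(\bigcup_i R_i\bigr)\cap\bigl(\bigcup_j T_j\bigr)=\bigcup_{i,j}\paren{R_i\cap T_j}$, that $\conv{(\bigcup_i R_i)}=\bigcup_i\conv{R_i}$, and that $\cup$ is associative, commutative and idempotent. Combining each of these with the matching clause of the definition of the set of terms and with the induction hypotheses on the immediate subexpressions, the cases $e=e_1\sumtm e_2$, $e=e_1\cdot e_2$, $e=e_1\cap e_2$ and $e=\conv{e_1}$ all reduce to rewriting $\bigcup_{u\in\tms e}\hat\sigma(u)$ into the corresponding combination of the sets $\bigcup_{u\in\tms{e_i}}\hat\sigma(u)=\hat\sigma(e_i)$. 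For $\cdot$ and $\cap$ this involves a little bookkeeping, splitting a union indexed by $\tms{e_1\cdot e_2}$ (resp.\ $\tms{e_1\cap e_2}$) into a doubly-indexed union over $\tms{e_1}\times\tms{e_2}$, but nothing deeper.

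The only case requiring slightly more care is $e=e_1^+$, since $\tms{e_1^+}=\bigcup_{n>0}\setcompr{u_1\cdot\cdots\cdot u_n}{\forall i,\ u_i\in\tms{e_1}}$ is an infinite union. Here I would establish, by an auxiliary induction on $n\geq 1$ that uses the distributivity of composition over unions recalled above, that $\bigcup\setcompr{\hat\sigma(u_1\cdot\cdots\cdot u_n)}{\forall i,\ u_i\in\tms{e_1}}=\bigl(\bigcup_{u\in\tms{e_1}}\hat\sigma(u)\bigr)^n$, which by the induction hypothesis on $e_1$ equals $\hat\sigma(e_1)^n$; taking the union over all $n>0$ then gives $\bigcup_{u\in\tms{e_1^+}}\hat\sigma(u)=\bigcup_{n>0}\hat\sigma(e_1)^n=\hat\sigma(e_1)^+=\hat\sigma(e_1^+)$. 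I do not anticipate any real obstacle: all operations in $\Sigma$ are monotone and preserve arbitrary unions on the lattice of binary relations over $S$, so once these distributivity laws are isolated the induction is mechanical, the $e^+$ clause simply being the most verbose.
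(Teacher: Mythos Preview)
The paper does not supply its own proof of this lemma: it is quoted verbatim as \cite[Lemma~2.1]{AMN11} and used as a black box. Your structural induction is the standard argument and is correct; the only mildly delicate point is the $\cap$ case, and there you have stated exactly the distributivity law that is needed, namely $\bigl(\bigcup_i R_i\bigr)\cap\bigl(\bigcup_j T_j\bigr)=\bigcup_{i,j}(R_i\cap T_j)$, which holds because binary intersection distributes over arbitrary unions in each argument in any powerset lattice. With that and the analogous facts for composition, converse, and iteration, the induction goes through as you describe.
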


\noindent We will consider the following subsets of terms and
expressions when we focus on identity-free Kleene
lattices. \emph{Simple terms} (resp.\ \emph{simple expressions}) are
those terms (resp.\ expressions) that do not contain converse, $1$ or
$\top$; their set is denoted by $\STrm[X]$ (resp.\ $\SExp[X]$).

Let $\Xb\eqdef X\cup\setcompr{a'}{a\in X}\cup\set{1,\top}$. The
following function $\detype\argument$ on the left-hand side associates
a simple expression over $\Xb$ to every expression over $X$. It is
defined recursively together with the auxiliary function on the
right-hand side, it basically consists in pushing converse to the
leaves.
\begin{align*}
  \detype\argument\colon \Exp &\to \SExp &
  \detype\argument'\colon \Exp &\to \SExp \\
  e\cdot f &\mapsto \detype e\cdot \detype f &
  e\cdot f &\mapsto \detype f'\cdot \detype e' \\
  e\cap f &\mapsto \detype e\cap \detype f &
  e\cap f &\mapsto \detype e'\cap \detype f' \\
  e\sumtm f &\mapsto \detype e\sumtm \detype f &
  e\sumtm f &\mapsto \detype e'\sumtm \detype f' \\
  e^+ &\mapsto \detype e^+ &
  e^+ &\mapsto {\detype e'}^+ \\
  \conv e &\mapsto \detype e' &
  \conv e &\mapsto \detype e \\
  1 &\mapsto 1\in\Xb &
  1 &\mapsto 1\in\Xb \\
  \top &\mapsto \top\in\Xb &
  \top &\mapsto \top\in\Xb \\
  0 &\mapsto 0 &
  0 &\mapsto 0 \\
  a\in X &\mapsto a\in\Xb &
  a\in X &\mapsto a'\in\Xb
\end{align*}
Conversely, we write $\retype\argument\colon\SExp\to\Exp$ for the unique
homomorphism such that $\retype a=a$, $\retype {a'}=\conv a$,
$\retype 1=1$, and $\retype \top=\top$. It is straightforward to check
that for every simple expression $e\in\SExp$, we have
$\detype{\retype e}=e$. In the other direction, the equality is not
syntactic: for every expression $e\in\Exp$, we have
$\Rel\models\retype{\detype e}=e$. (This equation also holds in
language algebras, where converse also commutes with the other
operations in the appropriate way.) Those two functions naturally
restrict to terms and simple terms. 

The diagram below summarises the sets and functions we defined so far.
\begin{align*}
  \begin{tikzpicture}
    \node (0) at (0,2) [] {$\Exp$};
    \node (1) at (2,2) [] {$\SExp$};
    \draw[arc, bend left=15] (0) to node [above] {$\detype\argument$} (1); 
    \draw[arc, bend left=15] (1) to node [below] {$\retype\argument$} (0);
    \node (2a) at (-3,0) [] {$\pset\Trm$};
    \node (2) at (0,0) [] {$\Trm$};
    \node (3) at (2,0) [] {$\STrm$};
    \node (3b) at (5,0) [] {$\pset\STrm$};
    \draw[arc, bend left=15] (2) to node [above] {$\detype\argument$} (3); 
    \draw[arc, bend left=15] (3) to node [below] {$\retype\argument$} (2);
    \draw [right hook-latex] (2)--(0);
    \draw [right hook-latex] (2)--(2a);
    \draw [right hook-latex] (3)--(1);
    \draw [right hook-latex] (3)--(3b);
    \draw [arc] (0) to node [above] {$\tms\argument$} (2a);
    \draw [arc] (1) to node [above] {$\tms\argument$} (3b);
  \end{tikzpicture}   
\end{align*}

\section{Graphs}
\label{sec:graphs}

We let $G,H$ range over 2-pointed labelled directed graphs, which we
simply call \emph{graphs} in the sequel. Those are tuples
$\tuple{V,E,\iota,o}$ with $V$ a finite set of vertices,
$E\subseteq V\times X\times V$ a set of edges labelled with $X$,
and $\iota,o\in V$ two distinguished vertices, respectively called
\emph{input} and \emph{output}. We write $\Gph$ for the set of such
graphs. They were introduced independently by Freyd and
Scedrov~\cite[page 208]{FS90}, and Andréka and Bredikhin~\cite{AB95},
for allegories; they form an algebra for the sub-signature of terms:
\begin{itemize}
\item $G\cdot H$ is the \emph{series composition} of the two graphs
  $G$ and $H$, obtained by merging the output of $G$ with the input of $H$;
\item $G\cap H$ is the \emph{parallel composition} of the two graphs
  $G$ and $H$, obtained by merging their inputs and their outputs;
\item $\conv G$ is the graph obtained from $G$ by swapping input and output;
\item $1$ is the graph with a single vertex (both input and output) and no edge;
\item $\top$ is the disconnected graph with two vertices (the input
  and the output) and no edge.
\end{itemize}
See Figure~\ref{fig:graphsdef} for a graphical description of these
operations.

\begin{defi}[Graph of a term]\label{def:grnd-grph}
  The \emph{graph $\Gr u$ of a term $u$} is obtained by using the
  unique homomorphism $\G$ mapping a variable $a$ to the graph with
  two vertices (the input and the output), and a single edge labelled
  $a$ from the input to the output.
\end{defi}
Figure~\ref{fig:ground:graphs} in the Introduction contains some
examples; graphs of non-simple terms are displayed in
Figure~\ref{fig:ground:graphs2}.
\begin{figure}[t]
  \begin{minipage}{\textwidth}
  \noindent
    \begin{align*}
      G\cdot H&\eqdef 
                \begin{tikzpicture}[baseline=(0.south)]
                  \position (0) (0,0);
                  \position (1) (2,0);
                  \position (2) (4,0);
                  \initst (0); \fnst (2);
                  \draw[arc] (0) 
                  to node[midway,fill=white,inner sep = 0] {$G$} (1);
                  \draw[arc] (1)
                  to node[midway,fill=white,inner sep = 0] {$H$} (2);
                \end{tikzpicture}
      &
        1&\eqdef 
           \begin{tikzpicture}[baseline=(0.south)]
             \position (0) (0,0); \initst (0); \fnst (0);
           \end{tikzpicture}
      \\
      G\cap H&\eqdef 
               \begin{tikzpicture}[baseline=(0.south)]
                 \position (0) (0,0);
                 \position (1) (2,0);
                 \initst (0); \fnst (1);
                 \draw[arc] (0) to[bend left] 
                 node[midway,fill=white,inner sep = 0] {$G$} (1);
                 \draw[arc] (0) to[bend right]
                 node[midway,fill=white,inner sep = 0] {$H$} (1);
               \end{tikzpicture}    
      &
        \top&\eqdef 
              \begin{tikzpicture}[baseline=(0.south)]
                \position (0) (0,0);
                \position (1) (1,0);
                \initst (0); \fnst (1);
              \end{tikzpicture}
      \\
      \conv G&\eqdef 
               \begin{tikzpicture}[baseline=(0.south)]
                 \position (0) (0,0);
                 \position (1) (2,0);
                 \draw[arc] (0) to ($ (0) + (-0.8,0) $);
                 \draw[arc] ($ (1) + (0.8,0) $) to (1);
                 \draw[arc] (0) 
                 to node[midway,fill=white,inner sep = 0] {$G$} (1);
               \end{tikzpicture}
      &
        \Gr a&\eqdef 
               \begin{tikzpicture}[baseline=(0.south)]
                 \position (0) (0,0);
                 \position (1) (1,0);
                 \initst (0); \fnst (1);
                 \edge[above] (0)(1)[a];
               \end{tikzpicture}
    \end{align*}
    \caption{The algebra of graphs and the graph of a variable.}
    \label{fig:graphsdef}
  \end{minipage}
  \begin{minipage}{\textwidth}
    \noindent
    \begin{mathpar}
      \G\paren{\paren{a\cdot b}\cap 1} =
      \begin{tikzpicture}[baseline=(0.south)]
        \position(0)(0,0); \position(1)(0,1); 
        \initst[.4] (0); \fnst[.4] (0);
        \edge[left,out=135,in=-135] (0) (1)[a];
        \edge[right,out=-45,in=45] (1) (0)[b];
      \end{tikzpicture}\and
      \G\paren{a\cap \conv b} =
      \begin{tikzpicture}[baseline=(0.south)]
        \position(0)(0,0); \position(1)(1,0); 
        \initst[.4] (0); \fnst[.4] (1);
        \edge[above,out=40,in=140] (0) (1)[a];
        \edge[below,out=-140,in=-40] (1) (0)[b];
      \end{tikzpicture}\and
      \G\paren{\top a\top b\top} =
      \begin{tikzpicture}[baseline=(0.south)]
        \position(0)(0,0); \position(1)(.3,0); \position(2)(1,0);
        \position(3)(1.3,0); \position(4)(2,0); \position(5)(2.3,0);
        \initst[.4] (0);\edge(1)(2)[a];\edge(3)(4)[b]; \fnst[.4] (5);
      \end{tikzpicture}
    \end{mathpar}
    \caption{Graphs of non-simple terms.}
    \label{fig:ground:graphs2}
  \end{minipage}
\end{figure}

\noindent A significant number of the results presented
in~\cite{AB95,AMN11} rely on the following technical lemma.
\begin{lem}[{\cite[Lemma~3]{AB95}}]\label{lem:AB95}
  Let $v\in\Trm$ be a term and write
  $\Gr v=\tuple{V_v,E_v,\iota_v,o_v}$.  Let $S$ be a set and let
  $\sigma\colon X\to \Rel[S]$. For all $i,j\in S$, we have
  $\tuple{i,j}\in\hat\sigma(v)$ if and only if there exists a function
  $\phi: V_v\rightarrow S$ such that
  \begin{enumerate}[label=(\roman*)]
  \item $\phi(\iota_v)=i$, 
  \item $\phi(o_v)=j$, and 
  \item if $\tuple{p,a,q}\in E_v$ then $\tuple{\phi(p),\phi(q)}\in\sigma(a)$.
  \end{enumerate}
\end{lem}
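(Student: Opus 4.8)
The plan is to proceed by structural induction on the term $v\in\Trm$, following the inductive definition of the graph $\Gr v$ through the homomorphism $\G$. Conceptually, a function $\phi\colon V_v\to S$ satisfying conditions (i)--(iii) is exactly a graph homomorphism from $\Gr v$ to the graph on vertex set $S$ having an $a$-labelled edge from $i$ to $j$ whenever $\tuple{i,j}\in\sigma(a)$, mapping the input to $i$ and the output to $j$; the lemma thus says that membership in $\hat\sigma(v)$ is equivalent to the existence of such a homomorphism. One should keep in mind throughout that $\phi$ need not be injective: this is what lets the two distinct vertices of $\Gr a$, or of $\Gr\top$, be sent to a single element of $S$.

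\textbf{Base cases.} For $v=a$, the graph $\Gr a$ has vertices $\set{\iota_a,o_a}$ and the single edge $\tuple{\iota_a,a,o_a}$, and $\hat\sigma(a)=\sigma(a)$; a function $\phi$ with $\phi(\iota_a)=i$ and $\phi(o_a)=j$ satisfies (iii) precisely when $\tuple{i,j}\in\sigma(a)$, so the two sides coincide. For $v=1$, we have $\hat\sigma(1)=\setcompr{\tuple{i,i}}{i\in S}$, while $\Gr 1$ has a single vertex which is both input and output, so a suitable $\phi$ exists iff $i=j$. For $v=\top$, $\hat\sigma(\top)=S\times S$ and $\Gr\top$ has two isolated vertices $\iota_\top,o_\top$, so $\phi=\set{\iota_\top\mapsto i,o_\top\mapsto j}$ always works, with no edge condition to check.

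\textbf{Inductive cases.} Suppose $v=u\cdot w$. Then $\Gr v$ is the series composition: the disjoint union of $\Gr u$ and $\Gr w$ in which $o_u$ is identified with $\iota_w$ into a single vertex $m$, with $\iota_v=\iota_u$ and $o_v=o_w$. Given $\tuple{i,j}\in\hat\sigma(v)=\hat\sigma(u)\cdot\hat\sigma(w)$, pick $k\in S$ with $\tuple{i,k}\in\hat\sigma(u)$ and $\tuple{k,j}\in\hat\sigma(w)$; the induction hypothesis yields $\phi_u\colon V_u\to S$ and $\phi_w\colon V_w\to S$ with $\phi_u(o_u)=k=\phi_w(\iota_w)$, so they agree on $m$ and glue to a function $\phi\colon V_v\to S$ witnessing the right-hand side. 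Conversely, from such a $\phi$ set $k\eqdef\phi(m)$; the restrictions of $\phi$ to $V_u$ and to $V_w$ satisfy the hypotheses of the induction hypothesis (every edge of $\Gr v$ lies in $\Gr u$ or in $\Gr w$), giving $\tuple{i,k}\in\hat\sigma(u)$ and $\tuple{k,j}\in\hat\sigma(w)$, hence $\tuple{i,j}\in\hat\sigma(v)$. The case $v=u\cap w$ is analogous, using that $\Gr v$ merges the inputs of $\Gr u,\Gr w$ into $\iota_v$ and their outputs into $o_v$, and that $\hat\sigma(v)=\hat\sigma(u)\cap\hat\sigma(w)$: restricting and gluing $\phi$ along the two shared vertices, which are exactly $\iota_v$ and $o_v$, transports between a witness for $v$ and a pair of witnesses for $u$ and $w$ with the same endpoints $i,j$. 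Finally, for $v=\conv u$ the graph $\Gr v$ is $\Gr u$ with input and output swapped (same vertices and edges), and $\hat\sigma(\conv u)=\conv{\hat\sigma(u)}$, so the equivalence for $v$ with endpoints $\tuple{i,j}$ is literally the equivalence for $u$ with endpoints $\tuple{j,i}$ provided by the induction hypothesis.

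The only delicate points are bookkeeping ones: that the graph operations identify exactly the vertices named above, so that ``restrict'' and ``glue'' are well defined and the functions being glued genuinely agree on the shared vertices; and that conditions (i)--(iii) are invariant under the isomorphisms implicit in taking disjoint unions, so that it is harmless to speak of ``the'' graph $\Gr v$. I expect no genuine obstacle beyond this: the argument is a direct structural induction mirroring the definition of $\G$.
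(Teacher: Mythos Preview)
Your structural induction is correct and is the standard way to prove this result. The paper itself does not give a proof of this lemma: it is quoted verbatim from Andr\'eka and Bredikhin~\cite[Lemma~3]{AB95} and used as a black box, so there is nothing in the paper to compare your argument against.
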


\noindent This lemma actually is the bridge between relational models
and the following natural definition of graph homomorphism.
\pagebreak
\begin{defi}[Graph homomorphism]
  A \emph{graph homomorphism} from $\tuple{V_1,E_1,\iota_1,o_1}$ to
  $\tuple{V_2,E_2,\iota_2,o_2}$ is a map $\phi\colon V_1\rightarrow V_2$
  such that 
  \begin{enumerate}[label=(\roman*)]
  \item $\phi(\iota_1)=\iota_2$, 
  \item $\phi(o_1)=o_2$, and
  \item $\tuple{p,a,q}\in E_1$ entails $\tuple{\phi(p),a,\phi(q)}\in E_2$.
  \end{enumerate}
\end{defi}

\begin{defi}[Preorders $\lessgr$ and $\lesstm$]
  We denote by~$\lessgr$ the
  relation on graphs defined by $G\lessgr G'$ if there exists a graph
  homomorphism from~$G'$ to~$G$. This relation gives rise to a
  relation on terms, written $\lesstm$ and defined by $u\lesstm v$ if
  $\Gr u\lessgr \Gr v$.
\end{defi}

\noindent Those two relations are preorders: they are reflexive and
transitive. As explained in the introduction, they precisely
characterise inclusion under arbitrary relational interpretations:
\begin{thm}[{\cite[Theorem~1]{AB95}, \cite[page 208]{FS90}}]
  \label{thm:AB95}
  For all terms $u,v$, we have 
  \begin{align*}
    \Rel\models u \leq v\text{ if and only if }u\lesstm v.
  \end{align*}
\end{thm}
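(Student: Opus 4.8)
The plan is to prove the two implications separately, using Lemma~\ref{lem:AB95} as the key bridge in both directions.

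For the "if" direction, suppose $u\lesstm v$, i.e.\ there is a graph homomorphism $\psi\colon \Gr v\to\Gr u$. Fix an arbitrary set $S$, a relational interpretation $\sigma\colon X\to\Rel[S]$, and a pair $\tuple{i,j}\in\hat\sigma(u)$; I must show $\tuple{i,j}\in\hat\sigma(v)$. By the "only if" part of Lemma~\ref{lem:AB95} applied to $u$, there is a map $\phi\colon V_u\to S$ with $\phi(\iota_u)=i$, $\phi(o_u)=j$, and sending each edge of $\Gr u$ into the corresponding relation. Now consider the composite $\phi\circ\psi\colon V_v\to S$. Using the three homomorphism conditions on $\psi$ (it preserves input, output, and edges up to relabelling-preservation), one checks immediately that $\phi\circ\psi$ satisfies conditions (i)--(iii) of Lemma~\ref{lem:AB95} for the term $v$: it maps $\iota_v$ to $i$, maps $o_v$ to $j$, and for an edge $\tuple{p,a,q}\in E_v$ we get $\tuple{p,a,q}\mapsto\tuple{\psi(p),a,\psi(q)}\in E_u$ (by condition (iii) for $\psi$) and then $\tuple{\phi(\psi(p)),\phi(\psi(q))}\in\sigma(a)$ (by condition (iii) for $\phi$). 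The "if" part of Lemma~\ref{lem:AB95} then yields $\tuple{i,j}\in\hat\sigma(v)$. Since $i,j,S,\sigma$ were arbitrary, $\Rel\models u\leq v$. This direction is essentially the soundness of the graph-language semantics and amounts to checking that homomorphisms compose.

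For the "only if" direction, suppose $\Rel\models u\leq v$. The idea is to exhibit a \emph{canonical} or \emph{generic} relational interpretation built from $\Gr u$ itself, so that the hypothesis, instantiated there, forces the desired homomorphism into existence. Concretely, take $S\eqdef V_u$, define $\sigma_u(a)\eqdef\setcompr{\tuple{p,q}}{\tuple{p,a,q}\in E_u}$ for each variable $a$, and consider the pair $\tuple{\iota_u,o_u}$. By the "if" part of Lemma~\ref{lem:AB95} applied to $u$ with the identity map $V_u\to S$, we have $\tuple{\iota_u,o_u}\in\hat\sigma_u(u)$; hence by the hypothesis $\Rel\models u\leq v$ we get $\tuple{\iota_u,o_u}\in\hat\sigma_u(v)$. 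Now invoke the "only if" part of Lemma~\ref{lem:AB95}, this time for the term $v$ in the model $\sigma_u$ over $S=V_u$: there is a map $\phi\colon V_v\to V_u$ with $\phi(\iota_v)=\iota_u$, $\phi(o_v)=o_u$, and such that $\tuple{p,a,q}\in E_v$ implies $\tuple{\phi(p),\phi(q)}\in\sigma_u(a)$, i.e.\ $\tuple{\phi(p),a,\phi(q)}\in E_u$. But these are exactly the three conditions making $\phi$ a graph homomorphism from $\Gr v$ to $\Gr u$, so $\Gr u\lessgr\Gr v$, that is $u\lesstm v$.

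The main obstacle, such as it is, is conceptual rather than technical: recognising that $\Gr u$ should be used simultaneously \emph{as a graph} (the target of the sought homomorphism) and \emph{as a relational model} (the generic interpretation over vertex set $V_u$), and checking that Lemma~\ref{lem:AB95} matches the two notions of ``valuation'' (a map $V_u\to S$ certifying membership in $\hat\sigma(u)$, versus a graph homomorphism $V_v\to V_u$) with no mismatch in the conditions. Once that alignment is seen, both directions are short diagram chases through the three clauses of Lemma~\ref{lem:AB95}; there are no delicate case analyses on the structure of the terms, since that structural work has already been absorbed into Lemma~\ref{lem:AB95} and into the definition of $\Gr{\argument}$.
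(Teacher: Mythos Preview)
Your proof is correct and follows the standard argument. The paper does not actually prove Theorem~\ref{thm:AB95} itself---it cites it from Andr\'eka--Bredikhin and Freyd--Scedrov---but your ``only if'' direction (build the canonical interpretation $\sigma_u$ on $V_u$ and invoke Lemma~\ref{lem:AB95}) is exactly the technique the paper later deploys in the proof of Theorem~\ref{thm:interlang}, and your ``if'' direction is the routine composition-of-maps argument one would expect.
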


\noindent To extend this result to the whole syntax of expressions, we
introduce the following generalisation of the language of a regular
expression. Sets of words become sets of graphs.

\begin{defi}[Graphs of an expression]
  The \emph{set of graphs} of an expression $e\in\Exp$ is the set
  $\Gr e \eqdef \setcompr{\Gr u}{u\in \tms e}$
\end{defi}

\noindent
If $e\in\Exp$ is a regular expression (meaning it never uses the
intersection and converse operators, nor the constant $\top$), then
$\Gr e$ is isomorphic to the rational language of words usually
associated with $e$: terms in $\tms e$ are monoid expressions, so that
graphs in $\Gr e$ can be identified with words. We define accordingly
a notion of regular set of graphs.
\begin{defi}[Regular set of graphs]
  A set $S\subseteq\Gph$ of graphs is \emph{regular} if there
  exists an expression $e\in\Exp$ such that $\Gr e = S$.
\end{defi}

\noindent
Coming back to representable Kleene allegories, we need a slight
refinement of Lemma~\ref{lem:AMN2+} to obtain the characterisation
announced in the Introduction:
\begin{lem} 
  \label{lem:AMN2+:refined} 
  For every expression $e\in\Exp$, every set $S$ and every relational
  interpretation $\sigma\colon X\rightarrow \Rel[S]$, we have
  \begin{equation*}
    \hat\sigma(e)=\bigcup_{u\in\cltm{\tms e}}\hat\sigma(u).
  \end{equation*}
\end{lem}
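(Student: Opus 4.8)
The plan is to deduce this refinement of Lemma~\ref{lem:AMN2+} from that lemma together with Theorem~\ref{thm:AB95}. Indeed, the only difference between the two statements is that the index set $\tms e$ is replaced by its downward closure $\cltm{\tms e}$ under the preorder $\lesstm$ on terms, and Theorem~\ref{thm:AB95} is exactly what lets us control the terms added by this closure. So both inclusions should be short, one of them essentially trivial.

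For the inclusion $\hat\sigma(e)\subseteq\bigcup_{u\in\cltm{\tms e}}\hat\sigma(u)$, I would invoke reflexivity of $\lesstm$, which gives $\tms e\subseteq\cltm{\tms e}$, and then apply Lemma~\ref{lem:AMN2+} directly:
\begin{equation*}
  \hat\sigma(e)\;=\;\bigcup_{u\in\tms e}\hat\sigma(u)\;\subseteq\;\bigcup_{u\in\cltm{\tms e}}\hat\sigma(u).
\end{equation*}

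For the reverse inclusion, I would fix an arbitrary $u\in\cltm{\tms e}$ and show $\hat\sigma(u)\subseteq\hat\sigma(e)$. By definition of the downward closure there is some $v\in\tms e$ with $u\lesstm v$; since $\tms e$ consists of terms and the downward-closure operation keeps us inside $\Trm$ (as $\lesstm$ is a relation on $\Trm$), both $u$ and $v$ are genuine terms, so Theorem~\ref{thm:AB95} applies and yields $\Rel\models u\leq v$. Specialising this universal statement to the particular interpretation $\sigma$ gives $\hat\sigma(u)\subseteq\hat\sigma(v)$, and a second application of Lemma~\ref{lem:AMN2+} gives $\hat\sigma(v)\subseteq\bigcup_{w\in\tms e}\hat\sigma(w)=\hat\sigma(e)$. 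Taking the union over all $u\in\cltm{\tms e}$ then finishes the argument.

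I do not expect a real obstacle here: the mathematical content is entirely carried by Theorem~\ref{thm:AB95} (i.e.\ by Andréka and Bredikhin). The only points deserving a line of care are the bookkeeping just mentioned --- that every element of $\cltm{\tms e}$ is a term, so that Theorem~\ref{thm:AB95} is applicable --- and the (routine) remark that ``$\Rel\models u\leq v$'' quantifies over \emph{all} relational interpretations and may therefore be instantiated at the given $\sigma$.
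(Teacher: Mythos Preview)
Your proposal is correct and follows exactly the same approach as the paper: the paper's proof is a one-line appeal to Lemma~\ref{lem:AMN2+} together with the observation (from Theorem~\ref{thm:AB95}) that $w\lesstm u$ implies $\hat\sigma(w)\subseteq\hat\sigma(u)$. You have simply unpacked both inclusions explicitly.
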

\begin{proof}
  We use Lemma~\ref{lem:AMN2+} and the fact that
  $\hat\sigma(w)\subseteq\hat\sigma(u)$ whenever $w\lesstm u$, thanks
  to Theorem~\ref{thm:AB95}.
\end{proof}

\noindent
Given a set $S$ of graphs, we write $\clgr S$ for its downward
closure: $\clgr S\eqdef\setcompr{G}{G\lessgr G', G'\in S}$.
Similarly, we write $\cltm S$ for the downward closure of a set of
terms w.r.t.~$\lesstm$.
\begin{thm}\label{thm:interlang}
  The following properties are equivalent, for all expressions
  $e,f\in\Exp$:
  \begin{enumerate}[label=(\roman*)]
  \item\label{i:rel} $\Rel\models e \leq f$,
  \item\label{i:tl} $\tms e \subseteq \cltm{\tms f}$,
  \item\label{i:gl} $\Gr e \subseteq \clgr{\Gr f}$.
  \end{enumerate}
\end{thm}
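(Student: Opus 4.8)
The plan is to prove the cycle of implications $(i)\Rightarrow(ii)\Rightarrow(iii)\Rightarrow(i)$, using the tools already assembled in the excerpt.

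\medskip

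\noindent\textbf{$(ii)\Leftrightarrow(iii)$.} These are essentially the same statement transported along the map $u\mapsto\Gr u$. By definition $\Gr e=\setcompr{\Gr u}{u\in\tms e}$, and $\Gr u\lessgr\Gr v$ is precisely $u\lesstm v$; so $\Gr e\subseteq\clgr{\Gr f}$ unfolds to: for every $u\in\tms e$ there is $v\in\tms f$ with $\Gr u\lessgr\Gr v$, i.e.\ $u\lesstm v$, i.e.\ $\tms e\subseteq\cltm{\tms f}$. The only subtlety is that $\clgr{\Gr f}$ is the downward closure of the \emph{set of graphs} $\Gr f$, not of all graphs arising from terms, but since $\lessgr$ is a genuine preorder on $\Gph$ and every graph of a term is a graph, the unfolding goes through verbatim. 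I expect this equivalence to be a short, routine calculation.

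\medskip

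\noindent\textbf{$(ii)\Rightarrow(i)$.} Assume $\tms e\subseteq\cltm{\tms f}$. Fix an interpretation $\sigma\colon X\to\Rel[S]$. By Lemma~\ref{lem:AMN2+} we have $\hat\sigma(e)=\bigcup_{u\in\tms e}\hat\sigma(u)$, and by Lemma~\ref{lem:AMN2+:refined} we have $\hat\sigma(f)=\bigcup_{v\in\cltm{\tms f}}\hat\sigma(v)$. Now every $u\in\tms e$ lies in $\cltm{\tms f}$ by hypothesis, so each summand $\hat\sigma(u)$ on the left already appears as a summand on the right; hence $\hat\sigma(e)\subseteq\hat\sigma(f)$. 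Since $\sigma$ was arbitrary, $\Rel\models e\leq f$. This direction is where Lemma~\ref{lem:AMN2+:refined} (and, through it, Theorem~\ref{thm:AB95}) does the real work.

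\medskip

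\noindent\textbf{$(i)\Rightarrow(ii)$.} This is the direction I expect to be the main obstacle, since it requires producing, from the abstract validity $\Rel\models e\leq f$, a concrete witnessing homomorphism for each term of $e$. The standard device is a \emph{canonical (syntactic) model}: given a term $u\in\tms e$ with graph $\Gr u=\tuple{V_u,E_u,\iota_u,o_u}$, take $S\eqdef V_u$ and define $\sigma(a)\eqdef\setcompr{\tuple{p,q}}{\tuple{p,a,q}\in E_u}$. By Lemma~\ref{lem:AB95} applied to $u$ with the identity function $\phi=\mathrm{id}_{V_u}$, we get $\tuple{\iota_u,o_u}\in\hat\sigma(u)\subseteq\bigcup_{w\in\tms e}\hat\sigma(w)=\hat\sigma(e)$. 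By $(i)$, $\tuple{\iota_u,o_u}\in\hat\sigma(f)=\bigcup_{v\in\tms f}\hat\sigma(v)$, so there is $v\in\tms f$ with $\tuple{\iota_u,o_u}\in\hat\sigma(v)$. Applying Lemma~\ref{lem:AB95} in the other direction to this $v$ yields a function $\phi\colon V_v\to V_u=S$ with $\phi(\iota_v)=\iota_u$, $\phi(o_v)=o_u$, and $\tuple{p,a,q}\in E_v\Rightarrow\tuple{\phi(p),\phi(q)}\in\sigma(a)$, i.e.\ $\tuple{\phi(p),a,\phi(q)}\in E_u$. That is exactly a graph homomorphism from $\Gr v$ to $\Gr u$, so $\Gr u\lessgr\Gr v$, i.e.\ $u\lesstm v\in\tms f$, hence $u\in\cltm{\tms f}$. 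Since $u\in\tms e$ was arbitrary, $(ii)$ follows. The one point to be careful about is the edge case $\tms e=\emptyset$ (e.g.\ $e$ contains $0$ with no alternative), where $(ii)$ and $(iii)$ hold vacuously and $(i)$ holds since $\hat\sigma(e)=\emptyset$; this causes no trouble.
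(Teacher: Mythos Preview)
Your proof is correct and follows essentially the same approach as the paper. The paper organises the implications as $(ii)\Rightarrow(i)$ via Lemma~\ref{lem:AMN2+:refined}, $(iii)\Rightarrow(ii)$ by unfolding, and $(i)\Rightarrow(iii)$ via the canonical model built on $V_u$ together with Lemma~\ref{lem:AB95}; your $(i)\Rightarrow(ii)$ is exactly this last argument (phrased in terms of $\lesstm$ rather than $\lessgr$), and the remaining directions match as well.
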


\noindent 
(Note that this theorem amounts to the characterisation announced in
Section~\ref{sec:intro}~\eqref{eq:charac:kal}: for all sets $X,Y$, we
have $\clgr X\subseteq \clgr Y$ if and only if $X\subseteq \clgr
Y$.)
In other words, deciding the (in)equational theory of representable
Kleene allegories reduces to comparing regular sets graphs, modulo
homomorphism.
\begin{proof}
  The implication $\ref{i:tl}\Rightarrow\ref{i:rel}$ follows
  easily from Lemma~\ref{lem:AMN2+:refined}, and
  $\ref{i:gl}\Rightarrow\ref{i:tl}$ is a matter of unfolding
  definitions. For $\ref{i:rel}\Rightarrow\ref{i:gl}$, we mainly
  use Lemma~\ref{lem:AB95}:

  Let $e,f$ be two expressions such that $\Rel\models e\leq f$,
  and $u\in\tms e$ such that $\Gr u=\tuple{V_u,E_u,\iota_u,o_u}$; we
  can build an interpretation
  $\sigma\colon X\rightarrow \Rel[{V_u}]$ by specifying:
  \[\sigma(a)\eqdef\setcompr{\tuple{p,q}}{\tuple{p,a,q}\in E_u}.\]
  It is simple to check that
  $\hat\sigma(u)=\set{\tuple{\iota_u,o_u}}$.  By Lemma~\ref{lem:AMN2+}
  and $\Rel\models e\leq f$, we know that
  \[\hat\sigma(u)\subseteq\hat\sigma (f)=\bigcup_{v\in\tms
      f}\hat\sigma(v).\] Thus there is some $v\in\tms f$ such that
  $\tuple{\iota_u,o_u}\in\hat\sigma(v)$. By Lemma~\ref{lem:AB95} we get that there
  is a map $\phi:V_v\rightarrow V_u$ such that $\phi(\iota_v)=\iota_u$;
  $\phi(o_v)=o_u$ and \[\tuple{p,a,q}\in E_v\Rightarrow \tuple{\phi(p),\phi(q)}\in\sigma(a).\]
  Using the definition of $\sigma$, we rewrite this last condition as
  \[\tuple{p,a,q}\in E_v\Rightarrow \tuple{\phi(p),a,\phi(q)}\in
  E_u.\]
  Thus $\phi$ is a graph homomorphism from $\Gr v$ to $\Gr u$, proving
  that $\Gr u\lessgr \Gr v$, hence $\Gr u\in \clgr{\Gr f}$.
\end{proof}

\section{Series-parallel graphs}
\label{sec:sp}

When focusing on identity-free Kleene lattice, we can forget the
converse operation and the constants $1$ and $\top$, and work with
simple terms and expressions. In such a situation, the manipulated
graphs are \emph{series-parallel}.

More precisely, consider Valdes et al.\ \emph{SP-rewriting
  system}~\cite{Valdes79}. This system, extended to edge-labelled
graphs, is recalled in Figure~\ref{fig:sp-rewriting system}. The
second rule can only be applied if the intermediate vertex on the
left-hand side is not the input and the output, and has no other
adjacent edge. Valdes et al.~\cite{Valdes79} showed that that system
has the Church-Rosser property. This property can be extended to
labelled graphs without difficulty, modulo the congruence generated by
the associativity of $\cdot$ and the associativity and commutativity
of $\cap$.

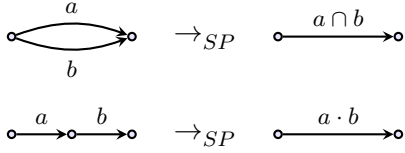
\begin{figure}[t]
  \begin{align*}
    \begin{tikzpicture}[scale=.8,baseline=(x.south)] 
      \state (x) (0,0); \state (y) (2,0); 
      \edge[above,out=20,in=160] (x) (y)[a];
      \edge[below,out=-20,in=-160] (x) (y)[b];
    \end{tikzpicture}\quad&\spred\quad
                            \begin{tikzpicture}[scale=.8,baseline=(z.south)] 
                              \state (z) (5,0); 
                              \state (t) (7,0); 
                              \edge[above] (z) (t)[a\cap b];
                            \end{tikzpicture}\\
    \begin{tikzpicture}[scale=.8,baseline=(x.south)] 
      \state (x) (0,0); \state (w) (1,0); \state (y) (2,0);
      \edge[above] (x) (w)[a]; \edge[above] (w) (y)[b];
    \end{tikzpicture}\quad&\spred\quad
                            \begin{tikzpicture}[scale=.8,baseline=(z.south)] 
                              \state (z) (5,0); \state (t) (7,0); 
                              \edge[above] (z) (t)[a\cdot b];
                            \end{tikzpicture} 
  \end{align*}
  \caption{The SP-rewriting system}
  \label{fig:sp-rewriting system}
\end{figure}

A graph~$G$ is \emph{series-parallel}
if it SP-reduces in some number of steps to a graph of the following shape.
\begin{align*}
\begin{tikzpicture}[yscale=.8,xscale=.5,baseline=(z.south)] 
  \state (z) (5,0); 
  \state (t) (7,0); 
  \edge[above] (z) (t)[e];
  \initst (z); \fnst (t);
\end{tikzpicture}
\end{align*}
In such a case, $e$ is a simple term called a \emph{term
  representation} of~$G$. This term is not unique but we can chose one
which we write $\trm G$.  We write $\SP[Y]$ for the set of
series-parallel graphs labelled in a set $Y$.
Every series-parallel graph has a single source, its input and a
single sink, its output. Also note the series-parallel graphs are
acyclic.

\begin{lem}
  A graph is series-parallel if and only if it is the graph of a
  simple term.
\end{lem}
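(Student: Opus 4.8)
The plan is to prove the two directions of the equivalence separately, by induction on syntax in one direction and by induction on the SP-rewriting derivation in the other.

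For the ``only if'' direction, I would proceed by structural induction on the simple term $u \in \STrm$. If $u = a$ is a variable, then $\Gr u$ is already of the required final shape, hence trivially series-parallel. For $u = v \cdot w$, the induction hypothesis gives SP-reductions of $\Gr v$ and $\Gr w$ to single-edge graphs labelled by term representations $v'$ and $w'$; since $\Gr{v \cdot w}$ is the series composition $\Gr v \cdot \Gr w$, applying these reductions inside each component (SP-rewriting is a local rewriting system, so it may be applied to subgraphs, using the Church--Rosser property to see the order does not matter) reduces $\Gr{v\cdot w}$ to a two-edge path, to which the second SP-rule applies, yielding a single edge; hence $\Gr{v\cdot w}$ is series-parallel. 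The case $u = v \cap w$ is symmetric, using the first SP-rule after reducing the parallel components, and noting that parallel composition merges precisely the inputs and the outputs so the first rule is applicable. Note there is no $\conv{\argument}$, $1$ or $\top$ case since $u$ is simple, which is exactly why we stay within series-parallel graphs.

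For the ``if'' direction, suppose $G$ is series-parallel, so $G$ SP-reduces to a single edge labelled by a simple term $e$. I would show by induction on the length of the SP-reduction — read \emph{backwards}, i.e.\ by induction on the structure of the term representation — that $G$ is isomorphic to $\Gr{e}$ for some simple term $e$, which then establishes $G \in \Gr{e}$ (as $e \in \tms{e}$) and in particular that $G$ is the graph of a simple term. Concretely: the single-edge graph labelled $e$ is $\Gr e$; and each SP-rewriting step, run in reverse, replaces a single edge labelled $a \cdot b$ by a length-two path (which is $\Gr{a}\cdot\Gr{b} = \Gr{a\cdot b}$ expanded) or a single edge labelled $a \cap b$ by a parallel pair (which is $\Gr{a}\cap\Gr{b}$), and since $\Gr{\argument}$ is a homomorphism these expansions, applied inside a graph that is already of the form $\Gr{t}$, produce exactly the graph $\Gr{t'}$ where $t'$ is obtained from $t$ by the corresponding expansion of the relevant subterm. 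Tracking this through the whole reverse derivation gives $G \cong \Gr{e_0}$ for the fully expanded simple term $e_0$.

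The main obstacle, and the place where care is needed, is the bookkeeping in the backward induction: one must argue that the side condition on the second SP-rule (the intermediate vertex is not an endpoint and has no other adjacent edges) guarantees that a reverse step genuinely corresponds to expanding an \emph{edge} of some $\Gr{t}$ into a subgraph, so that the result is again of the form $\Gr{t'}$ — i.e.\ that no ``accidental'' identifications of vertices occur that would take us outside the image of $\Gr{\argument}$. This is essentially the content of the Church--Rosser property of the SP-system (from Valdes et al.~\cite{Valdes79}, extended to labelled graphs as recalled above), which ensures the normal form and the derivation structure are well-behaved; modulo the congruence generated by associativity of $\cdot$ and associativity--commutativity of $\cap$, the correspondence between SP-derivations and simple terms is exact, and this is what makes both directions go through cleanly.
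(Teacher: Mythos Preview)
The paper does not actually give a proof of this lemma: it is stated and immediately followed by ``In other words, we have the following situation'' and a commutative diagram. The result is treated as essentially immediate from the definitions together with the cited properties of the SP-rewriting system of Valdes et al.~\cite{Valdes79}.

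Your proposal is a correct and reasonable way to spell out the omitted argument. The forward direction (graph of a simple term is series-parallel) is exactly right as a structural induction on~$u$. For the converse, your ``backward induction'' is perhaps more cleanly phrased as a forward induction on the length of the SP-reduction, applied to the following slightly stronger statement: for any graph $H$ with edges labelled by simple terms, if $H$ SP-reduces to a single edge labelled~$e$, then the graph obtained from $H$ by expanding each edge labelled~$t$ into $\Gr t$ is isomorphic to $\Gr e$. The base case is immediate, and each SP-step corresponds precisely to $\Gr{t_1\cdot t_2}=\Gr{t_1}\cdot\Gr{t_2}$ or $\Gr{t_1\cap t_2}=\Gr{t_1}\cap\Gr{t_2}$; the side condition on the series rule is exactly what guarantees the expansion matches series composition. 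Applying this to the original $G$ (whose labels are variables, so expansion is the identity) gives $G\cong\Gr e$. This is the same content as what you wrote, just oriented so that the induction hypothesis is easier to state; your version is fine too.
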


\noindent
In other words, we have the following situation.
\begin{align*}
  \begin{tikzpicture}
    \node (0) at (0,2) [] {$\STrm[Y]$};
    \node (1) at (0,0) [] {$\SP[Y]$};
    \draw[arc, bend right=15] (0) to node [left] {$\G$} (1); 
    \draw[arc, bend right=15] (1) to node [right] {$\trmc$} (0);
  \end{tikzpicture}     
\end{align*}

\noindent It is thus natural to consider only simple expressions for
defining the notion of regular set of series-parallel graphs.
\begin{defi}[Regular set of series-parallel graphs]
  \label{def:reg:sp}
  A set $S\subseteq\SP[X]$ of series-parallel graphs is \emph{regular}
  if there exists a simple expression $e\in\SExp[X]$ such that
  $\Gr e = S$.
\end{defi}

Recall the functions $\detype\argument\colon \Trm \to \STrm$ and
$\retype\argument\colon \STrm \to \Trm$ from the previous section. A
counterpart to the latter can be defined at the level of graphs.

\begin{defi}[$\retype G$]
  Let $G=\tuple{V,E,\iota,o}\in \Gph[\Xb]$ be a graph labelled with
  $\Xb$.  Let $\equiv$ be the smallest equivalence relation on $V$
  containing all pairs $\tuple{i,j}$ such that $\tuple{i,1,j}\in E$,
  and $[i]$ be the equivalence class of $i$. We associate with $G$ the
  following graph labelled in $X$.
  \begin{align*}
    \retype G&\eqdef \tuple{V/{\equiv},E',[\iota],[o]},\text{ where}\\
    E'&\eqdef\setcompr{\tuple{[i],x,[j]}}{ x\in X,~
        \exists \tuple{k,l}\in[i]\times[j]: \tuple{k,x,l}\in E\text{ or }
        \tuple{l,x',k}\in E }.
  \end{align*}
\end{defi}

\noindent
Informally, $\retype G$ is obtained by merging all vertices related by
an edge labelled with $1$, removing all edges labelled with $\top$,
and reversing all edges labelled with some $a'\in\Xb$. This operation
is defined on arbitrary graphs labelled in $\Xb$, but we shall use it
only on series-parallel graphs.

\begin{prop}
  \label{prop:type:graphs}
  For every term $u\in\Trm$, we have $\Gr u = \retype{\Gr{\detype u}}$.
  For every simple term $u\in\STrm$, we have
  $\Gr{\retype u} = \retype{\Gr u}$. In other words, the following
  diagrams commute.
  \begin{center}
    \begin{tikzcd}    
      \Trm \arrow[r,"\detype\argument"] \arrow[d,"\G"'] &
      \STrm \arrow[d,"\G"] \\
      \Gph & \SP \arrow[l,"\retype\argument"] 
    \end{tikzcd}
    \qquad\qquad
    \begin{tikzcd}    
      \Trm \arrow[d,"\G"'] &
      \STrm \arrow[l,"\retype\argument"'] \arrow[d,"\G"] \\
      \Gph & \SP \arrow[l,"\retype\argument"] 
    \end{tikzcd}
  \end{center}
\end{prop}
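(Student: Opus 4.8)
The plan is to prove both commutations by structural induction on the term, since $\G$ is defined as a homomorphism and $\detype\argument$, $\retype\argument$ are defined recursively. For this to go through, I would first need to generalise the statement so that the inductive hypothesis is strong enough: the natural move is to prove simultaneously the two identities $\Gr u = \retype{\Gr{\detype u}}$ \emph{and} $\conv{\Gr u} = \retype{\Gr{\detype{u}'}}$ for terms $u\in\Trm$, and similarly $\Gr{\retype e} = \retype{\Gr e}$ together with $\conv{\Gr{\retype e}} = \retype{\Gr{\retype{e}'}}$ — wait, more precisely, since $\detype\argument$ comes paired with $\detype\argument'$, the induction must carry along what happens to the primed version; the key observation being that $\Gr{\detype{u}'}$ should equal (up to the $\retype\argument$ collapse) the graph $\conv{\Gr u}$, i.e.\ $\Gr u$ with input and output swapped. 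So the strengthened claim to prove by induction on $u\in\Trm$ is: $\retype{\Gr{\detype u}} = \Gr u$ and $\retype{\Gr{\detype u'}} = \conv{\Gr u}$.

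For the base cases: a variable $a$ maps under $\detype\argument$ to $a\in\Xb$ and under $\detype\argument'$ to $a'\in\Xb$; $\Gr a$ is the single $a$-edge graph, $\Gr{a'}$ is the single $a'$-edge graph, and applying $\retype\argument$ (which reverses $a'$-edges) to the latter yields exactly $\conv{\Gr a}$, while applying it to the former leaves it unchanged since there are no $1$-, $\top$-, or primed edges to touch. The constants $1$ and $\top$ are fixed by both $\detype\argument$ and $\detype\argument'$, and one checks directly that $\retype{\Gr 1}$ — the two-vertex graph with a single $1$-edge — collapses under $\equiv$ to the one-vertex graph $\Gr 1$, and that $\retype{\Gr\top}$ discards the (there are no) $\top$-edges... actually $\Gr\top$ in $\Gph[\Xb]$ is the two isolated vertices, and $\retype\argument$ leaves it as $\Gr\top$; here $\conv{\Gr 1} = \Gr 1$ and $\conv{\Gr\top}=\Gr\top$ so both halves coincide. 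For the inductive step on $e\cdot f$, $e\cap f$: these are handled by noting that $\retype\argument$, being defined by a congruence/quotient construction that only identifies and reverses edges locally, commutes with series composition and with parallel composition — this is the lemma that needs a small separate argument, namely that merging inputs/outputs and then quotienting by $1$-edges gives the same result as quotienting each piece and then merging — and then the recursive clauses of $\detype\argument$, $\detype\argument'$ match up. The converse case $\conv e$: here $\detype{\conv e} = \detype e'$ and $\detype{\conv e}' = \detype e$, so the two halves of the IH get swapped, which is exactly what we want since $\conv{\conv{\Gr e}} = \Gr e$.

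The second diagram, $\Gr{\retype u} = \retype{\Gr u}$ for $u\in\STrm$, follows by the analogous induction, or more slickly by combining the first diagram with the already-noted syntactic identity $\detype{\retype e} = e$: applying the first commutation to the term $\retype u$ gives $\Gr{\retype u} = \retype{\Gr{\detype{\retype u}}} = \retype{\Gr u}$, so no new work is needed beyond checking that $\retype u$ genuinely lies in $\Trm$ (it does, by construction of $\retype\argument$) and that the primed auxiliary function interacts correctly, which amounts to the strengthened IH above. The main obstacle I expect is the commutation of $\retype\argument$ with series and parallel composition on graphs: one has to be careful that the equivalence relation $\equiv$ generated by $1$-edges behaves well under the vertex-merging that defines $G\cdot H$ and $G\cap H$, in particular that no spurious identifications are created — this is where the restriction to series-parallel graphs (hence acyclic, with a clean input/output structure) is reassuring, and where I would spend most of the care, perhaps isolating it as a preliminary lemma stating that $\retype{(G\cdot H)} = \retype G\cdot\retype H$, $\retype{(G\cap H)} = \retype G\cap\retype H$, and $\retype{\conv G} = \conv{\retype G}$.
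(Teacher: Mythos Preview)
The paper states this proposition without proof, so there is no reference argument to compare against. Your plan is the natural one and is correct: strengthen the first claim to the pair $\retype{\Gr{\detype u}}=\Gr u$ and $\retype{\Gr{\detype u'}}=\conv{\Gr u}$, induct on $u$, and derive the second diagram from the first via the syntactic identity $\detype{\retype u}=u$ (which the paper records just before the proposition). The key auxiliary lemma you identify---that $\retype\argument$ commutes with series and parallel composition of graphs and with $\conv\argument$---is exactly the right thing to isolate, and it holds because the equivalence $\equiv$ is generated locally by $1$-edges, so quotienting before or after merging input/output vertices yields the same result.

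One small slip to fix: in the base case for $\top$, you write that $\Gr\top$ in $\Gph[\Xb]$ ``is the two isolated vertices''. It is not: since $\top$ is a \emph{variable} in $\Xb$, the graph $\Gr\top\in\SP[\Xb]$ has two vertices joined by a single edge labelled $\top$. It is the operation $\retype\argument$ that drops this edge (because $\top\notin X$ and $\top$ is not of the form $a'$, and it does not contribute to $\equiv$), yielding the two-vertex edgeless graph, which is indeed $\Gr\top\in\Gph[X]$. The conclusion you reach is correct; only the intermediate description needs correcting. The same care applies to $1$: $\Gr 1\in\SP[\Xb]$ has a $1$-labelled edge, and $\retype\argument$ collapses its two endpoints into one while discarding the edge.
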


\noindent
This result can be extended to expressions and sets of graphs, which
we can depict using the following commutative diagrams.
\begin{center}
  \begin{tikzcd}    
    \Exp \arrow[r,"\detype\argument"] \arrow[d,"\G"'] &
    \SExp \arrow[d,"\G"] \\
    \pset\Gph & \pset \SP \arrow[l,"\retype\argument"] 
  \end{tikzcd}
  \qquad\qquad
  \begin{tikzcd}    
    \Exp \arrow[d,"\G"'] &
    \SExp \arrow[l,"\retype\argument"'] \arrow[d,"\G"] \\
    \pset\Gph & \pset\SP \arrow[l,"\retype\argument"] 
  \end{tikzcd}
\end{center}

\section{Petri automata}
\label{sec:pa}

We fix an alphabet $Y$, which we will later instantiate either with
the set $X$ of variables used in expressions, or with its extended
version $\Xb$.

A Petri automaton is essentially a safe Petri net where the arcs
coming out of transitions are labelled by letters from $Y$.
\begin{defi}[Petri Automaton]\label{def:auto}
  A \emph{Petri automaton} $\A$ over the alphabet $Y$ is a tuple
  $\tuple{P,\T,\iota}$ where:
  \begin{itemize}
  \item $P$ is a finite set of \emph{places},
  \item $\T\subseteq \pset P\times\pset{Y\times P}$ is a set of
    \emph{transitions},
  \item $\iota\in P$ is the \emph{initial place} of the automaton. 
  \end{itemize}
  For each transition $\tr=\trexp\in \T$, $\trin$ is assumed to be
  non-empty; $\trin\subseteq P$ is the \emph{input of $\tr$}; and
  $\trout\subseteq Y\times P$ is the \emph{output of $\tr$}.
  Transitions with empty outputs are called \emph{final}, and
  transitions with input $\set\iota$ are called \emph{initial}.
  We write $\PA[Y]$ for the set of Petri automata over the alphabet $Y$.
\end{defi}
We write $\settrout\eqdef\setcompr p{\exists a, \tuple{a,p}\in\trout}$
for the set of places appearing in the output of~$\tr$. We will add
two constraints on this definition along the way
(Constraints~\ref{cstr:safety} and~\ref{cstr:sp}), but we need more
definitions to state them. An example of such an automaton is
described in Figure~\ref{fig:auto}. The graphical representation used
here draws round vertices for places and rectangular vertices for
transitions, with the incoming and outgoing arcs to and from the
transition corresponding respectively to the inputs and outputs of
said transition. The initial place is denoted by an unmarked incoming
arc.
%
\begin{figure}[t]
  \centering
  \begin{tikzpicture}[scale=.8]
    \tikzstyle{every node}=[font=\footnotesize]
    \state[A](A)(0,2);
    \state[B](B)(3,3);
    \state[C](C)(6,4);
    \state[D](D)(10.5,4);
    \state[E](E)(6,2);
    \state[F](F)(13,3);
    \state[G](G)(13,1);
    \state[H](H)(4,0);
    \state[I](I)(8.5,0);
    \trans[0] (0) (1.5,2);
    \trans[1] (1) (4.5,3);
    \trans[2] (2) (7.5,4);
    \trans[3] (3) (9,3);
    \trans[4] (4) (11.5,3);
    \trans[5] (5) (1,0);
    \trans[6] (6) (5.5,0);
    \transf[7] (7) ($(F)!.5!(G)+(1.5,0)$);
    \transf[8] (8) ($(I)+(1.5,0)$);
    \initst (A);
    \edge (A) (0);
    \edge[bend right] (A) (5);
    \edge (B) (1);
    \edge (C) (2);
    \edge[bend left] (D) (3);
    \edge[bend left] (D) (4);
    \draw[arc] (E) to (E -| 3)
    to[out=0,in=-90] ($(E -|3)!.5!(3)+(.8,0)$)
    to[out=90,in=0] (3);
    \edge[bend right] (E) (4);
    \edge (H) (6);
    \edge[bend left][near start,above] (0) (B)[b];
    \edge[out=-20,in=180][midway,above] (0) (G)[a];
    \edge[bend left][midway,above] (1) (C)[c];
    \edge[bend right][midway,below] (1) (E)[b];
    \edge[][midway,above] (2) (D)[a];
    \edge[out=175,in=-45][near start,above] (3) (C)[c];
    \edge[out=185,in=45][near end,above] (3) (E)[b];
    \edge[][midway,above] (4) (F)[d];
    \edge[][midway,above] (5) (H)[a];
    \edge[][midway,above] (6) (I)[b];
    \edge[bend left] (F) (7);
    \edge[bend right] (G) (7);
    \edge (I) (8);
  \end{tikzpicture}
  \caption{A Petri automaton.}
  \label{fig:auto}
\end{figure}
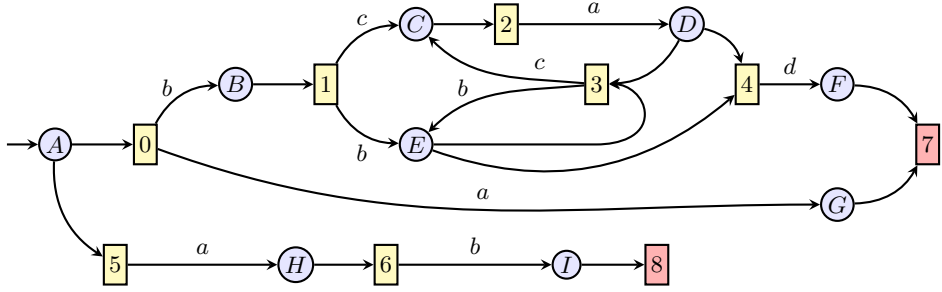

\subsection{Runs and reachable states}
\label{ssec:runs}

We define the operational semantics of Petri automata. Let us fix a
Petri automaton~$\A=\tuple{P,\T,\iota}$ for the remainder of this
section. A \emph{state} of this automaton is a set of places (i.e., a
configuration in the Petri nets terminology). In a given
state~$S\subseteq P$, a transition~$\tr=\trexp$ is \emph{enabled}
if~$\trin\subseteq S$. In that case, we may fire~$\tr$, leading to a
new state~$S'=\paren{S\setminus\trin}\cup\settrout$.  This will be
denoted in the following by~$\atrans S {S'}$. We extend this notation
to sequences of transitions in the natural way:
\begin{equation*}  
  \infer{\atrans[\tr[1];\tr[2];\dots;\tr[n]]{S_0}{S_n}}
  {\atrans[\tr[1]]{S_0}{S_1},
    &\atrans[\tr[2];\dots;\tr[n]]{S_1}{S_n}}
\end{equation*}
In that case we say that~$\tuple{S_0,\tr[1];\tr[2];\dots;\tr[n],S_n}$
is a \emph{valid run}, or simply run, from~$S_0$
to~$S_n$. If~$S_0=\set \iota$ then the run is \emph{initial} and
if~$S_n=\emptyset$ it is \emph{final}. A run that is both initial and
final is called \emph{accepting}. A state~$S$ is reachable in~$\A$ if
there is an initial run leading to~$S$.

We write $\Run$ for the set of runs of an automaton $\A$, and
$\Run\paren{A,B}$ for the set of runs from state $A$ to state $B$ in
$\A$. The set of its accepting runs is then equal
to~$\Run\paren{\set\iota,\emptyset}$, and written $\aRun$.

\begin{exa}[Accepting run]\label{ex:run}
  The triple~$\tuple{\set A,0;1;2;3;2;4;7,\emptyset}$ is a valid run in
  the automaton from Figure~\ref{fig:auto}. It is easy enough to check that:
  \begin{equation*}
    \begin{array}{l@{\quad}l}
    \atrans[0]{\set A}{\set{B,G}};
      &\atrans[1]{\set {B,G}}{\set{C,E,G}};\\
      \atrans[2]{\set {C,E,G}}{\set{D,E,G}};
      &\atrans[3]{\set {D,E,G}}{\set{C,E,G}};\\
      \atrans[4]{\set {D,E,G}}{\set{F,G}};
      &\atrans[7]{\set{F,G}}\emptyset. \\
    \end{array}
  \end{equation*}
  Thus we can prove that~$\atrans[0;1;2;3;2;4;7]{\set A}{\emptyset}$.
  Furthermore, as~$A$ is the initial place, this run is accepting. It
  can be represented graphically as in Figure~\ref{fig:run}.
  \begin{figure}[t]
    \centering
    \begin{tikzpicture}[yscale=.7]
      \node (A) at (0,-0.5) {$A$};
      \node (B) at (2,0) {$B$};
      \node (G1) at (2,-1) {$G$};
      \node (C) at (4,0.5) {$C$};
      \node (E1) at (4,-0.5) {$E$};
      \node (G2) at (4,-1) {$G$};
      \node (D) at (6,0.5) {$D$};
      \node (E2) at (6,-0.5) {$E$};
      \node (G3) at (6,-1) {$G$};
      \node (C1) at (8,0.5) {$C$};
      \node (E3) at (8,-0.5) {$E$};
      \node (G4) at (8,-1) {$G$};
      \node (D1) at (10,0.5) {$D$};
      \node (E4) at (10,-0.5) {$E$};
      \node (G5) at (10,-1) {$G$};
      \node (F) at (12,0) {$F$};
      \node (G6) at (12,-1) {$G$};
      \trans[0] (0) (1,-0.5);
      \trans[1] (1) (3,0);
      \trans[2] (2) (5,0.5);
      \trans[3] (3) (7,0);
      \trans[2] (4) (9,0.5);
      \trans[4] (5) (11,0);
      \transf[7] (6) ($(F)!.5!(G6)+(1,0)$);
      \edge (A) (0);
      \edge[above] (0) (B)[b];
      \edge[below] (0) (G1)[a];

      \edge (B) (1);
      \draw[thick,dotted] (G1) to (G2);
      \edge[above] (1) (C)[c];
      \edge[below] (1) (E1)[b];

      \edge (C) (2);
      \draw[thick,dotted] (E1) to (E2);
      \draw[thick,dotted] (G2) to (G3);
      \edge[above] (2) (D)[a];

      \edge (D) (3);
      \edge (E2) (3);
      \draw[thick,dotted] (G3) to (G4);
      \edge[above] (3) (C1)[c];
      \edge[below] (3) (E3)[b];

      \edge (C1) (4);
      \draw[thick,dotted] (E3) to (E4);
      \draw[thick,dotted] (G4) to (G5);
      \edge[above] (4) (D1)[a];

      \edge (D1) (5);
      \edge (E4) (5);
      \draw[thick,dotted] (G5) to (G6);
      \edge[above] (5) (F)[d];

      \edge (F) (6);
      \edge (G6) (6);
      
    \end{tikzpicture}
    \caption{An accepting run in the automaton from Figure~\ref{fig:auto}.}
    \label{fig:run}
  \end{figure}
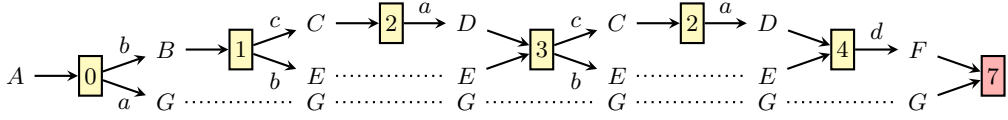
\end{exa}

We may now state the first constraint we impose on Petri automata.
\begin{cstr}[Safety]\label{cstr:safety}
  For every reachable state~$S$ of a Petri automaton $\A$, if for some
  transition $\tr$ we have $\atrans S {S'}$, then
  \begin{align*}
  \paren{S\setminus\trin}\cap\settrout=\emptyset.
  \end{align*}
\end{cstr}
This constraint corresponds to the classic Petri net property of
safety (also called one-boundedness): at any time, there is at most
one token in a given place, thus justifying or use of sets of places
rather than multi-sets of places for configurations. Checking this
constraint is feasible: the set of transitions is finite, and there
are finitely many reachable states (those are subsets of a fixed
finite set). The net in Figure~\ref{fig:non-safe-petri} is not safe.
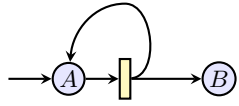
\begin{figure}[t]
  \centering
  \begin{tikzpicture}
    \state[A](A) (0,0);\state[B](B)(2,0);\initst(A);
    \trans(0)(.75,0);\edge(A)(0);\edge(0)(B);
    \draw[arc] (0) to[out=5,in=0] ($(0)+(0,1)$)
    to[out=180,in=85](A);
  \end{tikzpicture}
  \caption{Example of non-safe Petri Net.}
  \label{fig:non-safe-petri}
\end{figure}

\subsection{Graphs produced by a Petri automaton}
\label{ssec:traces}

Petri automata produce series-parallel graphs: to each accepting run
we associate such a graph, called its \emph{trace}. Consider an
accepting run~$\tuple{\set\iota,\tr[0];\dots;\tr[n],\emptyset}$. Its
trace is constructed by creating a vertex~$k$ for each
transition~$\tr[k] = \trexp[k]$ of the run. We add an
edge~$\tuple{k, a, l}$ whenever there is some place~$q$ such
that~$\tuple{a, q}\in \trout[k]$, and~$\tr[l]$ is the first transition
after~$\tr[k]$ in the run with~$q$ among its inputs.
The definition we give here is a generalisation for arbitrary valid
runs, that coincides with the informal presentation we just gave on
accepting runs.

\begin{figure}[t]
  \centering
  \begin{tikzpicture}[yscale=.4,xscale=1.4]
    \state[0](0)(0,0);
    \initst (0);
    \state[1](1)(1,1);
    \state[2](2)(2,2);
    \state[3](3)(3,1);
    \state[4](4)(4,2);
    \state[5](5)(5,1);
    \state[6](6)(6,0);
    \fnst (6);
    \edge[above,in=-170,out=60] (0) (1)[b];
    \edge[below,out=-20,in=-160] (0) (6)[a];
    \edge[above,in=-170,out=60] (1) (2)[c];
    \edge[below,out=-20,in=-160] (1) (3)[b];
    \edge[above,out=-10,in=120] (2) (3)[a];
    \edge[above,in=-170,out=60] (3) (4)[c];
    \edge[below,out=-20,in=-160] (3) (5)[b];
    \edge[above,out=-10,in=120] (4) (5)[a];
    \edge[above,out=-10,in=120] (5) (6)[d];
  \end{tikzpicture}
  \caption{Trace of the run from Example~\ref{ex:run}.}
  \label{fig:ex:trace}
\end{figure}
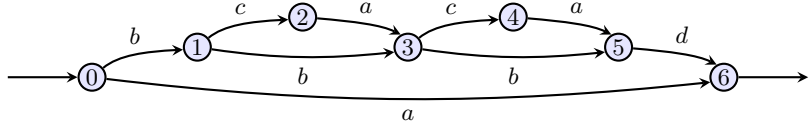

\begin{defi}[Trace of a run]\label{def:trace}
  Let~$\run=\tuple{S,\tr[0];\dots;\tr[n],S'}$ be a run in~$\A$.  For
  every $k$ and~$p\in\settrout[k]$, we define
  $$\nu\paren{k,p}=\setcompr{l}{l>k\text{ and }p\in\trin[l]}.$$
  The \emph{trace} of~$\run$, denoted by~$\Gr\run$, is the graph with
  vertices~$\set{0,\dots,n}\cup S'$ and the set of edges defined by:
  \begin{equation*}
    E_\run=\setcompr{\tuple{k,a,l}}{\tuple{a,p}\in\trout[k]
      \text{ and }\paren{p=l\wedge\nu\paren{k,p}=\emptyset}
      \vee \paren{l=\min{\paren{\nu\paren{k,p}}}}}. 
  \end{equation*}
\end{defi}

\noindent The trace of the run given in Example~\ref{ex:run} is presented in
Figure~\ref{fig:ex:trace}. 
Unfortunately, the trace of an accepting run is not necessarily
series-parallel. Consider for instance the net in
Figure~\ref{fig:bad:sp1}; the trace of its only accepting run is the
graph drawn on the right-hand side, which is not series-parallel.
\begin{figure}[t]
  \hfill
  \begin{tikzpicture}
    \state[A](A)(0,0);\initst(A);\state[B](B)(2,.5);
    \state[C](C)(4,0);\state[D](D)(4,-1);\state[E](E)(6,.5);
    \state[F](F)(6,-.5);
    \trans(0)(.75,0);\trans(1)(2.75,.5);\trans(2)(5,-.5);
    \transf(3)(7,0);
    \edge(A)(0);
    \edge[bend left,above](0)(B)[a];\edge[bend right,below](0)(D)[c];
    \edge(B)(1);
    \edge[bend left,above](1)(E)[b];\edge[bend right,below](1)(C)[e];
    \edge[bend left](C)(2);\edge[bend right](D)(2);\edge(2)(F)[d];
    \edge[bend left](E)(3);\edge[bend right](F)(3);
  \end{tikzpicture}
  \hfill
  \begin{tikzpicture}
    \position(0)(0,0);\position(1)(1.5,.75);
    \position(2)(1.5,-.75);\position(3)(3,0);
    \initst[.5](0);\fnst[.5](3);
    \edge[above,bend left](0)(1)[a];
    \edge[above,bend left](1)(3)[b];
    \edge[below,bend right](0)(2)[c];
    \edge[below,bend right](2)(3)[d];
    \edge[right](1)(2)[e];
  \end{tikzpicture}
  \hfill\mbox{}
  \caption{An automaton yielding a non series-parallel trace.}
  \label{fig:bad:sp1}  
\end{figure}
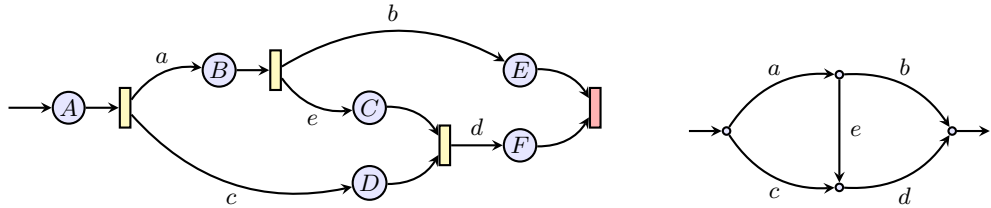
This leads us to the second constraint we impose on Petri automata.
\begin{cstr}[Series-parallel]\label{cstr:sp}
  Every graph $G\in\Gr\A$ is series-parallel.
\end{cstr}
\noindent Despite its infinitary formulation, this property is
decidable. In fact, the procedure we describe in
Section~\ref{sec:pa:exp} to extract a simple expression out of a Petri
automaton fails in finite time if it is given an automaton violating
this constraint, and succeeds otherwise.
Thanks to this constraint, we can restrict our attention to
\emph{proper} runs:
\begin{defi}[Proper run]
  A run $R=\tuple{S_0,\tr[1];\dots;\tr[n],S_n}\in\Run$ is
  \emph{proper} if for all $1\leqslant i\leqslant n$ we have 
$
    \trout[i]=\emptyset\Rightarrow \paren{i=n\wedge S_n=\emptyset}.
$
\end{defi}
\noindent Sub-runs of a proper run are proper by definition, and we have
\begin{lem}\label{lem:cst2:finaltrans}
  All accepting runs of a Petri automaton (satisfying
  Constraints~\ref{cstr:safety} and~\ref{cstr:sp}) are proper runs.
\end{lem}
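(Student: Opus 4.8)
The plan is to prove the contrapositive at the level of a single transition, then propagate along the run. Concretely, I would suppose for contradiction that $R=\tuple{\set\iota,\tr[1];\dots;\tr[n],\emptyset}$ is an accepting run that is not proper, so there is some index $i$ with $\trout[i]=\emptyset$ but $\neg\paren{i=n\wedge S_n=\emptyset}$. Since the run is final we already have $S_n=\emptyset$, so the failure of properness must be that $i<n$: a final transition (one with empty output) is fired strictly before the end of the run.

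The key observation is then to look at the trace $\Gr R$ and show that vertex $i$ is a \emph{sink} of $\Gr R$ other than the output vertex $o_R$, which contradicts the fact (stated in Section~\ref{sec:sp}) that a series-parallel graph has a single sink, namely its output. First, vertex $i$ has no outgoing edge in $E_R$: by Definition~\ref{def:trace}, an edge $\tuple{i,a,l}$ requires some $\tuple{a,p}\in\trout[i]$, but $\trout[i]=\emptyset$. Second, $i$ is not the output vertex: for an accepting run $S_n=\emptyset$, so the vertex set of $\Gr R$ is just $\set{0,\dots,n}$ (here indexed from $1$ to $n$, matching the run's own indexing) and the output vertex is determined by the edges feeding into the final state; since $i<n$ there is at least one later transition $\tr[n]$, and one checks from the construction that $o_R$ is the last transition $n$ (or more precisely that $i$, having been fired and consumed its input tokens without producing any, cannot be the designated output). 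Actually the cleanest route: in a series-parallel graph every vertex other than $o$ has an outgoing edge, because the SP-reduction terminates on a single edge from $\iota$ to $o$ and no rewrite step ever deletes the only out-edge of a non-output vertex; so a vertex with no outgoing edge must be $o$. Hence $i=o_R$. But I then argue $i\neq o_R$ separately: $o_R$ is the (unique, by the single-sink property again, or by direct inspection of $E_R$) vertex that is the target of the ``dangling'' edges, and tracking the tokens shows the transition that empties the last token — reaching $\emptyset$ — is $\tr[n]$; since $i<n$, $i\neq n = o_R$.

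So the core lemma to nail down is: \textbf{for an accepting run, the output vertex of $\Gr R$ is the final transition $\tr[n]$, and every non-final transition produces at least one edge, hence is not a sink.} Given Constraint~\ref{cstr:sp} ($\Gr R$ is series-parallel, hence has a unique sink equal to its output), the contradiction is immediate: $i$ would be a second sink. The main obstacle, I expect, is bookkeeping around Definition~\ref{def:trace} — in particular confirming that for an accepting run the vertex set reduces to $\set{0,\dots,n}$ with no spurious vertices from $S_n$, that $\tr[n]$ really is the output, and that the ``dangling edge'' convention ($p=l$ when $\nu(k,p)=\emptyset$) behaves as expected when several tokens are consumed simultaneously by a final transition. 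None of this is deep, but it requires care to state precisely which vertex is $o_R$.

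An alternative, perhaps slicker, line avoids talking about sinks directly: if $\tr[i]$ with $i<n$ has $\trout[i]=\emptyset$, consider the state $S_i$ reached just after firing $\tr[i]$. Firing a final transition removes $\trin[i]\neq\emptyset$ tokens and adds none, so $|S_i|<|S_{i-1}|$. For the run to continue and eventually succeed, the tokens in $S_i$ must be consumed by later transitions, which forces a ``parallel branch'' in the trace that never reconnects to the branch killed at step $i$ — this is exactly the non-series-parallel obstruction illustrated in Figure~\ref{fig:bad:sp1}. Formalising ``never reconnects'' is essentially the single-sink argument again, so I would in the end present the single-sink version as the main proof. Either way, the proof is short once Constraint~\ref{cstr:sp} and the basic structure of traces are in hand.
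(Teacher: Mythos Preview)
Your proposal is correct and follows essentially the same approach as the paper's proof: both arguments hinge on the observation that a transition index $i$ is a sink of $\Gr R$ precisely when $\trout[i]=\emptyset$, combined with the fact that a series-parallel graph has a single sink, forcing the unique final transition to be the last one (since edges in $E_R$ go from smaller to larger indices). The paper states the full biconditional ``$i$ is a sink iff $\trout[i]=\emptyset$'' and uses the observation that every produced token is eventually consumed (because $S_n=\emptyset$) for the non-trivial direction, whereas you only need the easy direction plus the separate check that $\trout[n]=\emptyset$ (which follows immediately from $S_n=\emptyset$); but this is a cosmetic difference, not a genuinely different route.
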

\begin{proof}
  Let $R=\tuple{\set\iota,\tr[0];\dots;\tr[n],\emptyset}$ be an
  accepting run (which is series-parallel by
  Constraint~\ref{cstr:sp}). Since the last state is empty we know
  that the run must contain a final transition. Hence we may
  reformulate the definition of proper as $\trout[n]=\emptyset$ and
  $\forall 0\leqslant i<n$, $\trout[i]\neq\emptyset$.

  The proof relies on the following observation: the node $i$ in
  $\Gr R$ is a sink if and only if $\trout[i]=\emptyset$. To prove
  this, remember that the edges coming out of $i$ in $\Gr R$ are:
  \[\setcompr{\tuple{i,a,\min\setcompr{l}{l>i\text{ and
          }p\in\trin[l]}}}{\forall\tuple{a,p}\in\trout[i]}.\]
  Because the run ends in state $\emptyset$, we know that
  ${\forall\tuple{a,p}\in\trout[i]}$ there is a later transition $\tr[j]$
  consuming the token at place $p$ (otherwise $p$ would remain in the last
  state). This entails that if $\tr[i]$ is not final, then $i$ has
  a successor in $\Gr R$. On the other hand, if $\trout[i]$ is empty,
  then there cannot be a transition coming out of node $i$.

  As a series-parallel graph has a single sink, it can only use a
  single final transition. Lastly, notice that $\tuple{i,a,j}\in E_R$
  entails $i<j$. This means that the sink can only be the maximal node
  (for the ordering of natural numbers).
\end{proof}

\noindent We finally define the set of graphs produced by a Petri
automaton.
\begin{defi}
  The \emph{set of (series-parallel) graphs produced by a Petri
    automaton} $\A$ is the set of traces of accepting (proper) runs of
  $\A$:
  \begin{align*}
    \Gr\A\eqdef\setcompr{\Gr\run}{\run\in\aRun}.
  \end{align*}
\end{defi}

\section{From expressions to Petri automata}
\label{sec:exp:pa}

So far we have seen two ways of defining sets of series-parallel
graphs: through simple expressions and through Petri automata.
\begin{center}
  \begin{tikzcd}
    \SExp[Y]\arrow[dr,"\G"']&& \PA[Y]\arrow[dl,"\G"]\\ &\pset{\SP[Y]}
  \end{tikzcd}
\end{center}
Now we show how to associate with every simple expression $e\in\SExp[Y]$
a Petri automaton~$\A(e)$ labelled over $Y$ such that:
$\Gr e = \Gr {\A(e)}$. It will follow that regular sets of
series-parallel graphs are recognisable (Theorem~\ref{thm:kl1}).

The construction goes by structural induction on the simple
expressions $e$: we provide Petri automata constructions for each
syntactic entry, thus proving that recognisable languages are closed
under union, iteration and series and parallel compositions.

\subsubsection*{Petri automaton for \texorpdfstring{$0$}{0}.}

We need the automaton for $0$ to have no accepting run. The
following automaton $\underline 0$ has this property.

\begin{center}
  \begin{tikzpicture}[baseline=(0.west)]
    \state[$\quad$](0)(0,0);\initst (0);
  \end{tikzpicture}
\end{center}

\subsubsection*{Petri automaton for a letter.}

For $a\in Y$, we want an automaton whose set of traces is simply the
graph $\Gr{a}$. The following automaton $\underline a$ works.

\begin{center}
  \begin{tikzpicture}[baseline=(0.west)]
    \state[$\quad$](0)(0,0);\initst (0); \state[$\quad$](1)(3,0);
    \trans (2) (1,0); \transf (3) (4,0); \edge (0) (2);\edge (1)
    (3);\edge (2) (1)[a];
  \end{tikzpicture}
\end{center}

\subsubsection*{Union of Petri automata.}

Let $\A_1=\tuple{P_1,\T_1,\iota_1}$ and
$\A_2=\tuple{P_2,\T_2,\iota_2}$ be two Petri automata with disjoint
sets of places.
To get an automation for $\Gr{\A_1}\cup\Gr{\A_2}$, we simply put the
two automata side by side, we add a new initial place $\iota$, and for
every initial transition $\tuple{\set{\iota_j},\trout}$ we add a new
transition $\tuple{\set{\iota},\trout}$. Formally:
\begin{align*}
\A_1\sumtm\A_2\eqdef
\tuple{P_1\cup P_2\cup\set\iota, \T_1\cup\T_2\cup
  \setcompr{\tuple{\set{\iota},\trout}}
  {\tuple{\set{\iota_j},\trout}\in\T_j,j\in\set{1,2}}, \iota}.
\end{align*}
Equivalently the transitions in this automaton follow from the
following rules:
\begin{mathpar}
  \infer[j\in{1,2}]
  {\atrans[\tr][\A_1\sumtm\A_2]S T}{\atrans[\tr][\A_j] S T}
  \and\infer[j\in{1,2}]
  {\atrans[\tuple{\set\iota,S}][\A_1\sumtm\A_2]{\set\iota} S}
  {\atrans[\tuple{\set{\iota_j},S}][\A_j]{\set{\iota_j}} S}
\end{mathpar}

\begin{lem}\label{lem:ok_union}
  We have $\Gr{\A_1\sumtm\A_2}=\Gr{\A_1}\cup\Gr{\A_2}$.
\end{lem}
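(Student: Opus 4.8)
The plan is to show the two inclusions $\Gr{\A_1\sumtm\A_2}\subseteq\Gr{\A_1}\cup\Gr{\A_2}$ and $\Gr{\A_1}\cup\Gr{\A_2}\subseteq\Gr{\A_1\sumtm\A_2}$ by analysing accepting runs, using the observation that the places of $\A_1$ and $\A_2$ are disjoint and that the only new place $\iota$ occurs in exactly one state of any accepting run, namely the initial one $\set\iota$. The key structural fact to establish first is: in any accepting run $R=\tuple{\set\iota,\tr[0];\dots;\tr[n],\emptyset}$ of $\A_1\sumtm\A_2$, the first transition $\tr[0]$ must be one of the freshly added transitions $\tuple{\set\iota,\trout}$ (since $\iota$ is the initial place and no original transition of $\A_j$ has $\iota$ in its input), and it fires to a state $T\subseteq P_j$ for a single fixed $j\in\set{1,2}$; thereafter, since $P_1\cap P_2=\emptyset$ and no transition mixes places from the two components, every subsequent state stays inside $P_j$ and every subsequent transition lies in $\T_j$.

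From this I would extract the central lemma: accepting runs of $\A_1\sumtm\A_2$ are in trace-preserving bijection with accepting runs of $\A_1$ together with accepting runs of $\A_2$. Given an accepting run $R$ of $\A_1\sumtm\A_2$ with $\tr[0]=\tuple{\set\iota,\trout}$, by construction of the union there is a corresponding original initial transition $\tr[0]'=\tuple{\set{\iota_j},\trout}$ in $\T_j$ with the \emph{same} output; replacing $\tr[0]$ by $\tr[0]'$ yields an accepting run $R'=\tuple{\set{\iota_j},\tr[0]';\tr[1];\dots;\tr[n],\emptyset}$ of $\A_j$, because the reachability derivations are identical after the first step (the states $S_1,\dots,S_n$ are unchanged, and $\atrans[{\tr[0]'}][\A_j]{\set{\iota_j}}{S_1}$ holds exactly because $\settrout[0]$ is the same). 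Conversely, any accepting run of $\A_j$ starting with an initial transition $\tuple{\set{\iota_j},\trout}$ lifts to $\A_1\sumtm\A_2$ by swapping that first transition for $\tuple{\set\iota,\trout}$. This correspondence is visibly mutually inverse.

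The final step is to check that this bijection preserves traces. Here I would appeal directly to Definition~\ref{def:trace}: the trace of a run depends only on the sequence of transitions $\tr[0];\dots;\tr[n]$ together with the multiplicities $\trin[l]$ and $\trout[k]$ that appear, and on which place is consumed next. Since $R$ and $R'$ differ only in that $\tr[0]$ versus $\tr[0]'$ carry the same output set $\trout$ and both have singleton input, the functions $\nu(k,p)$ and hence the edge set $E_R = E_{R'}$ coincide, as do the vertex sets $\set{0,\dots,n}\cup\emptyset$; thus $\Gr R=\Gr{R'}$. Combining the bijection with trace preservation gives $\Gr{\A_1\sumtm\A_2}=\bigcup_{j}\set{\Gr{R'}\mid R'\in\aRun[\A_j]}=\Gr{\A_1}\cup\Gr{\A_2}$.

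I expect the only mildly delicate point to be the bookkeeping that an accepting run, once it leaves $\set\iota$, never returns to it and never straddles the two components — i.e.\ formally justifying the $P_j$-invariance of all reachable states along the run by a trivial induction on run length, using disjointness of $P_1,P_2$ and the shape of $\T_1\cup\T_2$. Everything else is unfolding definitions; there is no real obstacle, and in particular safety of $\A_1\sumtm\A_2$ (needed implicitly for the trace machinery) follows from safety of $\A_1,\A_2$ by the same component-invariance argument, since the reachable states of $\A_1\sumtm\A_2$ are $\set\iota$ together with the reachable states of $\A_1$ and of $\A_2$.
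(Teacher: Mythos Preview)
Your proposal is correct and follows essentially the same approach as the paper: both argue that the first transition of any accepting run in $\A_1\sumtm\A_2$ is one of the freshly added initial transitions, that the remainder of the run stays within a single component $P_j$ by disjointness, and that swapping the first transition for the corresponding $\tuple{\set{\iota_j},\trout}$ yields an accepting run of $\A_j$ with the same trace (and conversely). Your write-up is slightly more explicit than the paper's about trace preservation and about safety of the union automaton, but the structure and all the key steps are the same.
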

\begin{proof}
  Let $G\in\Gr{\A_1}\cup\Gr{\A_2}$. There is an accepting run
  $R=\tuple{\set{\iota_j},\tr[0];\tr[1]\dots\tr[n];\emptyset}$ in
  $\A_j$, for either $j=1$ or $j=2$, such that $G=\Gr R$. By
  definition of a valid run, we know that there is a state $S$ such
  that $\atrans[\tr[0]][\A_j]{\set{\iota_j}} S$ and
  $\atrans[\tr[1];\dots;\tr[n]][\A_j] S \emptyset$.  Because
  $\A_1\sumtm\A_2$ contains in particular all places and transitions of
  $\A_j$, we can deduce that
  $\atrans[\tr[1];\dots;\tr[n]][\A_1\sumtm\A_2] S \emptyset$.
  Furthermore~$\atrans[\tr[0]][\A_j]{\set{\iota_j}} S$ entails
  $\tr[0]=\tuple{\set{\iota_j},\trout[0]}$, thus in $\A_1 \sumtm \A_2$
  we have the transition $\tr*[0]=\tuple{\set\iota,\trout[0]}$. Hence:
  $\atrans[\tr*[0]][\A_1\sumtm\A_2]{\set{\iota}} S$. This proves that
  $R'=\tuple{\set{\iota},\tr*[0];\tr[1]\dots\tr[n];\emptyset}$ is an
  accepting run in $\A_1\sumtm\A_2$. As $\Gr{R'}=\Gr{R}$, we have proved
  that $G\in\Gr{\A_1\sumtm\A_2}$.

  Now take $G\in\Gr{\A_1\sumtm\A_2}$, and
  $R=\tuple{\set{\iota},\tr[0];\tr[1]\dots\tr[n];\emptyset}$ such that
  $R=\Gr{G}$. Necessarily, $\tr[0]$ is of the shape
  $\tuple{\set\iota,\trout[0]}$, and thus was produced from
  $\tr*[0]=\tuple{\set{\iota_j},\trout[0]}\in\T_j$, for either $j=1$
  or $j=2$. If $S_0$ is the state reached after the first transition,
  meaning $\atrans[\tr[0]][\A_1 \sumtm\A_2]{\set{\iota_j}} {S_0}$, it
  follows that $S_0\subseteq P_j$. Because we assumed
  $P_1\cap P_2=\emptyset$, it is straightforward that
  $\atrans[\tr][\A_1 \sumtm\A_2] S T$ and $S\subseteq P_j$ entails
  $\atrans[\tr][\A_j] S T$ and $T\subseteq P_j$. This result extends
  to sequences of transitions, allowing us to check that because
  $\atrans[\tr[1];\dots;\tr[n]][\A_1\sumtm\A_2] {S_0} \emptyset$ and
  $S_0\subseteq P_j$ we have
  $\atrans[\tr[1];\dots;\tr[n]][\A_j] {S_0} \emptyset$. Thus
  $R'=\tuple{\set{\iota_j},\tr*[0];\tr[1]\dots\tr[n];\emptyset}$ is an
  accepting run in $\A_j$, and as $\Gr R = \Gr{R'}$, we get
  $G\in \Gr{\A_j}\subseteq\Gr{\A_1}\cup\Gr{\A_2}$.
\end{proof}

\subsubsection*{Series composition of Petri automata.}

Let $\A_1=\tuple{P_1,\T_1,\iota_1}$ and
$\A_2=\tuple{P_2,\T_2,\iota_2}$ be two Petri automata with disjoint
sets of places. To obtain an automaton for
$\Gr{\A_1}\cdot\Gr{\A_2}$, we need that every accepting run~$R_1$
in~$\A_1$ be followed by every accepting run~$R_2$ in~$\A_2$. Because of
Lemma~\ref{lem:cst2:finaltrans}, we know that~$R_1$ must end with a
final transition, and the definition of accepting run imposes
that~$R_2$ starts with an initial transition. Thus it suffices to
put~$\A_1$ in front of~$\A_2$, declare~$\iota_1$ initial, remove the
final transitions~$\tuple{\trin,\emptyset}$ in~$\A_1$ and replace them
with~$\tuple{\trin,\trout}$, for every initial
transition~$\tuple{\set{\iota_2},\trout}$ in~$\A_2$.  This yields the
following definition:
\begin{align*}
  \A_1\cdot \A_2\eqdef\tuple{P_1\cup P_2,\T,\iota_1},
\end{align*}
where
$\T\eqdef\paren{\T_1\setminus \pset {P_1}\times\set{\emptyset}}
\cup\T_2\cup\setcompr\trexp
{\tuple{\tuple{\trin,\emptyset},\tuple{\set{\iota_2},\trout}}\in\T_1\times\T_2}$.
As before, the following inference rules may give more intuition:
\begin{mathpar}
  \infer
  {\atrans[\tr][\A_1\cdot\A_2]S T}
  {\atrans[\tr][\A_1]S T,&T\neq\emptyset}
  \and\infer
  {\atrans[\tr][\A_1\cdot\A_2]S T}
  {\atrans[\tr][\A_2]S T}
  \and\infer
  {\atrans[\trexp][\A_1\cdot\A_2]S T}
  {\atrans[\tuple{\trin,\emptyset}][\A_1]S \emptyset,
    &\atrans[\tuple{\set{\iota_2},\trout}][\A_2]
    {\set{\iota_2}}T}
\end{mathpar}

\begin{rem}
  The inference system above is not entirely faithful to the above
  definition, in the sense that the last rule should be:
  \begin{align*}
    \infer
    {\atrans[\trexp][\A_1\cdot\A_2]S {S'\cup T}}
    {\atrans[\tuple{\trin,\emptyset}][\A_1]S {S'},
      &\atrans[\tuple{\set{\iota_2},\trout}][\A_2]
      {\set{\iota_2}}T}.
  \end{align*}
  However, the application of this rule in the case where
  $S'\neq\emptyset$ cannot yield an accepting run without $\A_1$
  violating Constraint~\ref{cstr:sp}. Indeed if there was an accepting
  run in $\A_1\cdot\A_2$ using this rule, this run would need at least
  another occurrence of the same rule to consume the remaining places
  in $P_1$, thus allowing one to build an accepting run in $\A_1$ with
  several final transitions. This contradicts
  Lemma~\ref{lem:cst2:finaltrans} by violating
  Constraint~\ref{cstr:sp}. The same phenomenon occurs below in the
  constructions we give for $\A^+$ and $\A_1\cap\A_2$, thus explaining
  the slight discrepancies between the inference systems we give to
  drive the intuition, and the formal definitions.
\end{rem}

\begin{lem}\label{lem:ok_seq}
  $\Gr{\A_1\cdot\A_2}=\Gr{\A_1}\cdot\Gr{\A_2}$.
\end{lem}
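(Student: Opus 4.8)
The plan is to prove the two inclusions $\Gr{\A_1\cdot\A_2}\subseteq\Gr{\A_1}\cdot\Gr{\A_2}$ and $\Gr{\A_1}\cdot\Gr{\A_2}\subseteq\Gr{\A_1\cdot\A_2}$ by analysing accepting runs, exactly as in the proof of Lemma~\ref{lem:ok_union}, but with the extra care that here the ``gluing'' happens in the middle of a run rather than at its start. The key structural fact, which I would establish first, is a \emph{factorisation of accepting runs} of $\A_1\cdot\A_2$: since $P_1\cap P_2=\emptyset$ and the only transitions of $\A_1\cdot\A_2$ that move tokens out of $P_1$ into $P_2$ are the combined transitions $\tr=\tuple{\trin,\trout}$ coming from a pair $\tuple{\tuple{\trin,\emptyset},\tuple{\set{\iota_2},\trout}}$, any initial run of $\A_1\cdot\A_2$ stays inside $P_1$ until the first such combined transition is fired. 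Using Lemma~\ref{lem:cst2:finaltrans} and Constraint~\ref{cstr:sp} (as in the Remark above), the state reached just before that combined transition is $\set{\iota_1}$-initial and must be exactly the singleton input of the corresponding $\A_1$-final transition; after firing, the token lands on $\set{\iota_2}$, and from then on the run stays in $P_2$ and is an accepting run of $\A_2$. Hence an accepting run $R$ of $\A_1\cdot\A_2$ decomposes uniquely as $R = R_1' ; \tr ; R_2$, where $R_1'$ consists of $\A_1$-transitions, $R_2$ is (the tail of) an accepting run of $\A_2$ starting with an initial transition, and $\tr$ is the combined transition; and correspondingly $R_1 \eqdef R_1';\tuple{\trin,\emptyset}$ is an accepting run of $\A_1$.

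Given this factorisation, for the inclusion $\Gr{\A_1\cdot\A_2}\subseteq\Gr{\A_1}\cdot\Gr{\A_2}$ I would take $G=\Gr R$ with $R$ as above, form $R_1$ and the accepting run $R_2^\star \eqdef \tuple{\set{\iota_2},\tuple{\set{\iota_2},\trout};R_2,\emptyset}$ of $\A_2$, and check that $\Gr R = \Gr{R_1}\cdot\Gr{R_2^\star}$. This is a computation with Definition~\ref{def:trace}: the vertices of $\Gr R$ split into those coming from the $R_1'$-part (plus the combined transition $\tr$) and those from the $R_2$-part; the combined transition plays the role of both the $\A_1$-final transition (its input tokens come entirely from $P_1$, exactly as in $R_1$, so its incoming edges in $\Gr R$ match those of the final vertex in $\Gr{R_1}$) and the $\A_2$-initial transition (its outputs $\trout$ lie in $P_2$ and are consumed only by later $\A_2$-transitions, so its outgoing edges match those of the initial vertex in $\Gr{R_2^\star}$); since the $P_1$ and $P_2$ token flows never interact, the $\nu(k,p)$ sets used in Definition~\ref{def:trace} are computed independently on each side, and merging the output vertex of $\Gr{R_1}$ with the input vertex of $\Gr{R_2^\star}$ (the definition of series composition of graphs) yields precisely $\Gr R$. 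The reverse inclusion is symmetric: given accepting runs $R_1$ of $\A_1$ and $R_2^\star$ of $\A_2$, write $R_1 = R_1';\tuple{\trin,\emptyset}$ (it must end in a final transition, by Lemma~\ref{lem:cst2:finaltrans}) and $R_2^\star = \tuple{\set{\iota_2},\tuple{\set{\iota_2},\trout};R_2,\emptyset}$; then $R_1';\tr;R_2$ with $\tr=\tuple{\trin,\trout}$ is a valid accepting run of $\A_1\cdot\A_2$ (validity uses $P_1\cap P_2=\emptyset$ to see that the $\A_2$-part is undisturbed by leftover $P_1$-tokens, of which there are none), and its trace is $\Gr{R_1}\cdot\Gr{R_2^\star}$ by the same bookkeeping.

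The main obstacle is the bookkeeping in the trace computation around the single combined transition $\tr$: one must verify that its incoming edges in $\Gr R$ are exactly those of the sink of $\Gr{R_1}$ and its outgoing edges are exactly those of the source of $\Gr{R_2^\star}$, and that no ``spurious'' edge crosses between the two halves. The crossing-edge issue is where $P_1\cap P_2=\emptyset$ is essential: an edge $\tuple{k,a,l}$ with $k$ in the $\A_1$-part and $l$ in the $\A_2$-part would require a place $p\in\settrout[k]\subseteq P_1$ to appear in $\trin[l]\subseteq P_2$, which is impossible; the only link between the halves is through $\tr$ itself. A secondary subtlety — already flagged in the Remark preceding the lemma — is ensuring the factorisation is into a \emph{single} combined transition: if a run used a combined transition while leftover $P_1$-tokens remained, one could extract an $\A_1$-accepting run with two final transitions, contradicting Lemma~\ref{lem:cst2:finaltrans} via Constraint~\ref{cstr:sp}; I would invoke this argument to rule out such runs, so that the decomposition $R=R_1';\tr;R_2$ is clean and $R_1$ is genuinely accepting in $\A_1$.
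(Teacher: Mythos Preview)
Your proposal is correct and follows essentially the same approach as the paper: both directions are handled by decomposing an accepting run of $\A_1\cdot\A_2$ around the unique combined transition (uniqueness enforced via Lemma~\ref{lem:cst2:finaltrans} and Constraint~\ref{cstr:sp}), using $P_1\cap P_2=\emptyset$ to keep the two halves independent, and checking the trace splits as a series composition. Two small slips in your informal narrative: the input of an $\A_1$-final transition need not be a singleton (you just need the state to equal $\trin$, not that $\trin$ is a singleton), and after firing the combined transition the tokens land in $\settrout\subseteq P_2$, not in $\set{\iota_2}$; your formal definitions of $R_1$ and $R_2^\star$ already get this right, so these are wording issues rather than gaps.
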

\begin{proof}
  Let $G\in\Gr{\A_1}\cdot\Gr{\A_2}$. There are two accepting runs
  $R_j=\tuple{\set{\iota_j},\tr[0][j];\cdots;\tr[n_j][j],\emptyset}$
  (with $j\in\set{1,2}$) such that $G=\Gr{R_1}\cdot\Gr{R_2}$.
  Let us name some intermediary states: we call $S_1$ and $S_2$ the
  states such that:
  \[\begin{array}{l@{\quad\quad}l}
    \atrans[\tr[0][1];\dots;\tr[n_1-1][1]][\A_1]{\set{\iota_1}}{S_1};
    &\atrans[\tr[n_1][1]][\A_1]{S_1}{\emptyset};\\
    &\\[-0.5em]
    \atrans[\tr[0][2]][\A_2]{\set{\iota_2}}{S_2};
    &\atrans[\tr[1][2];\dots;\tr[n_2][2]][\A_2]{S_2}\emptyset.
    \end{array}
  \]
  Because of Lemma~\ref{lem:cst2:finaltrans}, we know that in $R_1$, only
  $\tr[n_1][1]$ is a final transition. Using the inference rules
  above, we obtain:
   \[\begin{array}{l@{\quad\quad}l}
    \atrans[\tr[0][1];\dots;\tr[n_1-1][1]][\A_1\cdot\A_2]
      {\set{\iota_1}}{S_1};
      &\atrans[\tr[1][2];\dots;\tr[n_2][2]][\A_1\cdot\A_2]
        {S_2}\emptyset;\\
      &\\[-0.5em]
      \atrans[\tuple{\trin[n_1][1],\trout[0][2]}][\A_1\cdot\A_2]
      {S_1}{S_2}.&
    \end{array}
  \]
  Thus we obtain $R=\tuple{\set{\iota_1},\tr[0][1];\dots;\tr[n_1-1][1];
    \tuple{\trin[n_1][1],\trout[0][2]};
    \tr[1][2];\dots;\tr[n_2][2],\emptyset}$ which is accepting in
  $\A_1\cdot\A_2$ and satisfies
  $\Gr R=\Gr{R_1}\cdot\Gr{R_2}= G$.
  
  For the other direction, let $G\in\Gr{\A_1\cdot\A_2}$ be a graph,
  and $R=\tuple{\set{\iota_1},\tr[0];\dots;\tr[n],\emptyset}$ be the
  corresponding accepting run. We can extract from this run two runs
  $R_1$ and $R_2$ respectively in $\A_1$ and $\A_2$ as follows:
  \begin{align*}
    R_1^0\eqdef&\epsilon&
    R_1^{k+1} \eqdef&\left\{
                 \begin{array}{ll}
                   R_1^k;\tr[k]&\paren{\text{if }\tr[k]\in\T_1}\\
                   R_1^k;\tuple{\trin[k],\emptyset}
                               &\paren{\text{if }\tuple{\tuple{\trin[k],\emptyset},
                                 \tuple{\set{\iota_2},\trout[k]}}
                                 \in\T_1\times\T_2}\\
                   R_1^k&\paren{\text{otherwise}}
                 \end{array}\right.\\
    R_2^0\eqdef&\epsilon&
    R_2^{k+1} \eqdef&\left\{
                 \begin{array}{ll}
                   R_2^k;\tr[k]&\paren{\text{if }\tr[k]\in\T_2}\\
                   R_2^k;\tuple{\set{\iota_2},\trout[k]}
                               &\paren{\text{if }\tuple{\tuple{\trin[k],\emptyset},
                                 \tuple{\set{\iota_2},\trout[k]}}
                                 \in\T_1\times\T_2}\\
                   R_2^k&\paren{\text{otherwise}}
                 \end{array}\right.\\
    R_1\eqdef&R_1^{n+1}&R_2\eqdef&R_2^{n+1}
  \end{align*}
  Using these definitions, we may prove that $\forall k\leqslant n$,
  $\atrans[\tr[0];\dots;\tr[k]][\A_1\cdot\A_2] {\set{\iota_1}} S$
  entails
  $\atrans[R_1^{k+1}][\A_1] {\set{\iota_1}} {\paren{S\setminus P_2}}$.
  This means $R_1$ is an accepting run in
  $\A_1$. By Lemma~\ref{lem:cst2:finaltrans} this entails there is a single
  final transition in $R_1$, hence a single transition $\tr[j]$ in $R$
  such that $\tuple{{\tuple{\trin[j],\emptyset},
    \tuple{\set{\iota_2},\trout[j]}}\in\T_1\times\T_2}$. 
  Again by Lemma~\ref{lem:cst2:finaltrans} we know that $\tr[j]$ marks the
  end of $R_1$, and clearly $R_2$ cannot begin before $\tr[j]$ has
  been fired (no place in $P_2$ can appear before that). From this we
  can deduce that $R$ has almost the shape $R_1;R_2$ (but with the two
  transitions in the middle merged into $\tr[j]$). We can also check
  that $\forall k\geqslant j$,
  $\atrans[\tr[0];\dots;\tr[k]][\A_1\cdot\A_2] {\set{\iota_1}} S$
  entails
  $\atrans[R_2^{k+1}][\A_2] {\set{\iota_2}} {\paren{S\setminus P_1}}$,
  thus proving that $R_2$ is an accepting run in $\A_2$. Finally we
  get that
  $\Gr R =\Gr{R_1}\cdot\Gr{R_2}\in\Gr{\A_1}\cdot\Gr{\A_2}$.
\end{proof}

\subsubsection*{Strict iteration of a Petri automata.}

Then strict iteration of a Petri automaton~$\A=\tuple{P,\T,\iota}$ can
be done by using the previous construction on the automaton itself: we
keep the places and transitions of the automaton, but simply add a
transition~$\trexp$ for every pair of an initial
transition~$\tuple{\set{\iota},\trout}$ and a final
transition~$\tuple{\trin,\emptyset}$ in~$\T$.
\begin{align*}
\A^+\eqdef
  \tuple{P,\T\cup 
    \setcompr\trexp
    {\tuple{\tuple{\trin,\emptyset},
      \tuple{\set{\iota},\trout}}\in\T\times\T},
    \iota}.
\end{align*}
As an inference system, we get:
\begin{mathpar}
  \infer{\atrans[\tr][\A^+]S T}
  {\atrans[\tr][\A]S T}
  \and
  \infer{\atrans[\tr][\A^+]S T}
  {\atrans[\tuple{\trin,\emptyset}][\A]S \emptyset,
    &\atrans[\tuple{\set{\iota},\trout}][\A]{\set{\iota}} T}
\end{mathpar}

\begin{lem}\label{lem:ok_it}
  We have $\Gr{\A^+}=\Gr{\A}^+$.
\end{lem}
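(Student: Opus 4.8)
The plan is to prove the two inclusions $\Gr{\A^+}\supseteq\Gr{\A}^+$ and $\Gr{\A^+}\subseteq\Gr{\A}^+$ by a direct induction-style argument on the number of ``loops'' taken in an accepting run, closely mirroring the proof of Lemma~\ref{lem:ok_seq}. Indeed, the construction of $\A^+$ is exactly the series-composition construction $\A\cdot\A$ collapsed onto a single copy of $\A$: the new transitions $\trexp$ are produced from pairs consisting of a final transition $\tuple{\trin,\emptyset}$ and an initial transition $\tuple{\set\iota,\trout}$ of $\A$. So an accepting run of $\A^+$ is a concatenation $R_1;R_2;\dots;R_n$ of accepting runs of $\A$, where consecutive runs $R_k$ and $R_{k+1}$ are glued by replacing the final transition ending $R_k$ and the initial transition starting $R_{k+1}$ with a single ``loop'' transition $\tuple{\trin_k,\trout_{k+1}}$; its trace is then $\Gr{R_1}\cdot\Gr{R_2}\cdots\Gr{R_n}$, which lies in $\Gr\A^+$ since $\tms{e^+}=\bigcup_{n>0}\{u_1\cdots u_n\mid u_i\in\tms e\}$ and series composition of traces corresponds to series composition of graphs.

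For the inclusion $\Gr{\A}^+\subseteq\Gr{\A^+}$: given $G=G_1\cdot G_2\cdots G_n$ with each $G_k=\Gr{R_k}$ for some accepting run $R_k$ of $\A$, I would splice the runs together. By Lemma~\ref{lem:cst2:finaltrans} each $R_k$ ends with a unique final transition $\tuple{\trin_k,\emptyset}$ and begins with an initial transition $\tuple{\set\iota,\trout_k}$. Using the inference rules for $\A^+$: each non-final transition of $R_k$ is still available in $\A^+$, and at each junction we fire the loop transition $\tuple{\trin_k,\trout_{k+1}}\in\T\cup\setcompr{\trexp}{\dots}$, which is enabled precisely because $R_k$ reaches a state from which $\tuple{\trin_k,\emptyset}$ was enabled, and leads to the state that $R_{k+1}$ reaches after its initial transition. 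Safety (Constraint~\ref{cstr:safety}) guarantees that the intermediate state before firing the final transition of $R_k$ is exactly $\trin_k$ (nothing else survives), so after the loop transition we are exactly in the post-initial state of $R_{k+1}$; this is where the remark about discrepancies between the inference systems and the formal definitions is relevant. One then checks that the trace of the spliced run is $\Gr{R_1}\cdots\Gr{R_n}=G$, by the same bookkeeping as in Lemma~\ref{lem:ok_seq}: each vertex/edge of $\Gr{R_k}$ is preserved, the former final vertex of $\Gr{R_k}$ and initial vertex of $\Gr{R_{k+1}}$ are merged at the loop transition, and the definition of $\nu$ and the edge set behaves correctly at the seam.

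For the converse inclusion $\Gr{\A^+}\subseteq\Gr{\A}^+$: given an accepting run $R=\tuple{\set\iota,\tr[0];\dots;\tr[m],\emptyset}$ of $\A^+$, I would segment it at the occurrences of loop transitions. Between two consecutive loop transitions (and before the first, and after the last) sits a block of transitions of $\A$; reading a loop transition $\tr=\tuple{\trin,\trout}$ ``as'' the pair $\tuple{\trin,\emptyset};\tuple{\set\iota,\trout}$ recovers from each block an accepting run $R_k$ of $\A$, with the states tracked exactly as in the extraction half of Lemma~\ref{lem:ok_seq} (safety again ensures the split state contains nothing but $\trin$ on the $\A$-before side). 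Then $\Gr R=\Gr{R_1}\cdots\Gr{R_n}$ and each $\Gr{R_k}\in\Gr\A$, so $\Gr R\in\Gr\A^+$ (using that $\Gr\A^+$ is by definition $\setcompr{G_1\cdots G_n}{n>0,\,G_i\in\Gr\A}$, which matches $\Gr{e^+}$). The main obstacle, as in Lemma~\ref{lem:ok_seq}, is the careful state-tracking across the glued transitions and verifying the trace computation at the seams commutes with series composition of graphs; everything else is routine unfolding. A cleaner alternative would be to simply observe $\A^+$ can be viewed through the $\A\cdot\A$ construction and invoke Lemma~\ref{lem:ok_seq} inductively, together with Lemma~\ref{lem:ok_union}-style reasoning, but since the places are shared rather than disjoint one still has to redo the argument by hand.
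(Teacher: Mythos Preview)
Your proposal is correct and follows exactly the scheme the paper intends: the paper's proof is literally ``The proof follows the same scheme as the previous one,'' i.e., the two-inclusion argument from Lemma~\ref{lem:ok_seq} adapted to iteration, which is precisely what you outline. One small correction: the fact that the state before a loop (or final) transition is exactly $\trin$ is not a consequence of safety (Constraint~\ref{cstr:safety}) but of Constraint~\ref{cstr:sp} via Lemma~\ref{lem:cst2:finaltrans} (and, for an accepting run ending in $\emptyset$, simply of the arithmetic $(S\setminus\trin)\cup\emptyset=\emptyset$); this is exactly the point of the paper's Remark after the series-composition construction, which you already cite.
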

\begin{proof}
  The proof follows the same scheme as the previous one.
\end{proof}

\subsubsection*{Parallel composition of Petri automata.}

Let $\A_1=\tuple{P_1,\T_1,\iota_1}$ and
$\A_2=\tuple{P_2,\T_2,\iota_2}$ be two Petri automata with disjoint
sets of places. To obtain a an automaton for $\Gr{\A_1}\cap\Gr{\A_2}$,
we merge the initial transitions of the two automata, and then we
merge their final transitions. This yields the following automaton:
\begin{align*}
  \A_1\cap \A_2\eqdef
  \tuple{P_1\cup P_2\cup\set\iota,
    \paren{\T_1\setminus\pset {P_1}\times\set{\emptyset}}\cup
    \paren{\T_1\setminus \pset {P_2}\times\set{\emptyset}}\cup
  \T^i\cup\T^f,\iota},
\end{align*}
\begin{align*}
\text{where }&\T^i\eqdef\setcompr{\tuple{\set\iota,\trout[1]\cup\trout[2]}}
{\tuple{\tuple{\set{\iota_1},\trout[1]},
               \tuple{\set{\iota_2},\trout[2]}}\in\T_1\times\T_2},\\
\text{and }&\T^f\eqdef\setcompr{\tuple{\trin[1]\cup\trin[2],\emptyset}}
{\tuple{\tuple{\trin[1],\emptyset},\tuple{\trin[2],\emptyset}
  }\in\T_1\times\T_2}.
\end{align*}
Notice that because $P_1$ and $P_2$ are of empty intersection, the set
of states of this automaton (\ie,  $\pset{P_1\cup P_2\cup\set\iota}$)
is isomorphic to $\pset{P_1}\times\pset{P_2}\times
\pset{\set\iota}$.
For clarity, we use the later notation in the sequel.
\begin{mathpar}
    \infer{\atrans[\tr][\A_1\cap\A_2]{\tuple{S,S',\emptyset}}
    {\tuple{T,S',\emptyset}}}
    {\atrans[\tr][\A_1]S T,
    &\trout\neq\emptyset}
    \and
    \infer{\atrans[\tr][\A_1\cap\A_2]{\tuple{S',S,\emptyset}}
      {\tuple{S',T,\emptyset}}}
      {\atrans[\tr][\A_2]S T,
        &\trout\neq\emptyset}
    \and
    \infer{\atrans[\tr][\A_1\cap\A_2]
    {\tuple{\emptyset,\emptyset,\set\iota}} {\tuple{S_1,S_2,\emptyset}}}
    {\atrans[\tuple{\set{\iota_1},
    \trout\setminus\paren{Y\times P_2}}][\A_1]
    {\set{\iota_1}} {S_1},                          
    &\atrans[\tuple{\set{\iota_2},
      \trout\setminus\paren{Y\times P_1}}]
      [\A_2]{\set{\iota_2}} {S_2}}
    \and
    \infer{\atrans[\tr][\A_1\cap\A_2] {\tuple{S_1,S_2,\emptyset}}
    {\tuple{\emptyset,\emptyset,\emptyset}}}
    {\atrans[\tuple{\trin\setminus P_2,\emptyset}][\A_1]
    {S_1}\emptyset,
    &\atrans[\tuple{\trin\setminus P_1,\emptyset}]
      [\A_2] {S_2} \emptyset,
    &\trout=\emptyset} 
\end{mathpar}

\begin{lem}\label{lem:ok_par}
  We have $\Gr{\A_1\cap\A_2}=\Gr{\A_1}\cap\Gr{\A_2}$.
\end{lem}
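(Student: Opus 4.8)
The plan is to follow the pattern of the proofs of Lemmas~\ref{lem:ok_union},~\ref{lem:ok_seq} and~\ref{lem:ok_it}, proving the two inclusions separately; throughout I use the standing assumption of this section that $\A_1$ and $\A_2$ satisfy Constraints~\ref{cstr:safety} and~\ref{cstr:sp}. Two structural observations drive the argument. First, since $P_1\cap P_2=\emptyset$ and $\iota\notin P_1\cup P_2$, a transition of $\A_1\cap\A_2$ inherited from $\T_1$ consumes and produces only places of $P_1$, one inherited from $\T_2$ only places of $P_2$, a $\T^i$-transition only removes $\iota$ while adding places of $P_1\cup P_2$, and a $\T^f$-transition only removes places of $P_1\cup P_2$; hence the $\A_1$-steps and the $\A_2$-steps of any run are independent and may be reordered. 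Second, the construction merges the initial transitions into a single one whose outgoing arcs carry $\trout[1]\cup\trout[2]$ and the final transitions into a single one whose incoming arcs come from $\trin[1]\cup\trin[2]$; combined with the fact (Lemma~\ref{lem:cst2:finaltrans}) that an accepting run is proper, hence has a unique initial transition in head position and a unique final transition in tail position, this will make the trace of an accepting run of $\A_1\cap\A_2$ decompose as the parallel composition of the traces of its $\A_1$- and $\A_2$-projections.

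For $\Gr{\A_1}\cap\Gr{\A_2}\subseteq\Gr{\A_1\cap\A_2}$, I start from $G=\Gr{R_1}\cap\Gr{R_2}$ with $R_j$ an accepting run of $\A_j$ for $j\in\set{1,2}$. By Lemma~\ref{lem:cst2:finaltrans} write $R_j=\tuple{\set{\iota_j},\alpha_j;\rho_j;\omega_j,\emptyset}$ with $\alpha_j=\tuple{\set{\iota_j},\trout[j]}$ the unique initial transition, $\omega_j=\tuple{\trin[j],\emptyset}$ the unique final transition, and $\rho_j$ a body with no final transition. I claim that
\[
  R\eqdef\tuple{\set\iota,\tuple{\set\iota,\trout[1]\cup\trout[2]};\rho_1;\rho_2;\tuple{\trin[1]\cup\trin[2],\emptyset},\emptyset}
\]
is an accepting run of $\A_1\cap\A_2$: its head is the $\T^i$-transition built from $\alpha_1,\alpha_2$ (third inference rule), its tail the $\T^f$-transition built from $\omega_1,\omega_2$ (fourth rule), and each body $\rho_j$ replays verbatim (first and second rules) because it only touches $P_j$, which the other body leaves untouched; the intermediate state reached after $\rho_1;\rho_2$ is exactly $\tuple{\trin[1],\trin[2],\emptyset}$, which enables the final transition. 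It then remains to unfold Definition~\ref{def:trace} and check that the first consumer in $R$ of a place of $P_1$ is an $\A_1$-transition of $\rho_1$ or the merged final transition, appearing in the same relative order as in $R_1$, and symmetrically for $P_2$; this shows that $\Gr R$ is the union of a copy of $\Gr{R_1}$ and a copy of $\Gr{R_2}$ with the two input vertices identified to the head and the two output vertices to the tail, that is, $\Gr R=\Gr{R_1}\cap\Gr{R_2}=G$.

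For the converse inclusion, let $R=\tuple{\set\iota,\tr[0];\dots;\tr[n],\emptyset}$ be an accepting run of $\A_1\cap\A_2$. As inputs of transitions are non-empty, the only transition enabled in $\set\iota$ is a $\T^i$-transition, and since no transition outputs to $\iota$, $\tr[0]$ is the unique $\T^i$-transition of $R$. Dually, the only transitions with empty output are those of $\T^f$ (a $\T^i$-transition with empty output would make $\A_1$ or $\A_2$ accept a one-vertex graph, violating Constraint~\ref{cstr:sp}), so $\tr[n]\in\T^f$; moreover a second $\T^f$-transition would, after the projection defined next, yield an accepting run of $\A_1$ (or $\A_2$) with two final transitions, contradicting Lemma~\ref{lem:cst2:finaltrans} — this is the phenomenon pointed out in the Remark following Lemma~\ref{lem:ok_seq}. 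Hence $\tr[n]$ is the unique $\T^f$-transition and every other $\tr[k]$ lies in $\T_1$ or $\T_2$. I then obtain $R_1$ by deleting the $\T_2$-transitions, replacing $\tr[0]$ by its $\A_1$-component $\tuple{\set{\iota_1},\trout[0]\setminus\paren{Y\times P_2}}\in\T_1$ and $\tr[n]$ by $\tuple{\trin[n]\setminus P_2,\emptyset}\in\T_1$, and $R_2$ symmetrically; disjointness of the place sets makes $R_1$ and $R_2$ accepting runs of $\A_1$ and $\A_2$, and the same trace analysis (the $\A_1$- and $\A_2$-transitions of $R$ are now interleaved, which is harmless since $P_1$- and $P_2$-flow are carried by disjoint sets of transitions) yields $\Gr R=\Gr{R_1}\cap\Gr{R_2}\in\Gr{\A_1}\cap\Gr{\A_2}$.

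The step I expect to be the main obstacle is the trace bookkeeping of Definition~\ref{def:trace}: one has to verify that for $p\in P_1$ the vertex $\min\paren{\nu\paren{k,p}}$ computed inside the combined run equals the one computed inside the projection $R_1$, which holds precisely because only $\A_1$-transitions and the merged final transition carry a place of $P_1$ among their inputs, the merged final transition playing the role of the output vertex of $\Gr{R_1}$, and symmetrically at the initial end; this is what makes the edge set of $\Gr R$ split exactly into a copy of the edge set of $\Gr{R_1}$ and a copy of that of $\Gr{R_2}$ sharing only the input and the output vertex. The other delicate point — the uniqueness and the positions of the $\T^i$- and $\T^f$-transitions in an accepting run — is again where Constraint~\ref{cstr:sp} and Lemma~\ref{lem:cst2:finaltrans} are indispensable; everything else is a routine interleaving argument.
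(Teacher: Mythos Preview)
Your proof is correct and follows essentially the same approach as the paper: build the combined run $R$ from $R_1,R_2$ by concatenating bodies between a merged initial and a merged final transition for one inclusion, and project onto the two components for the other. The paper is terser---it simply invokes the inference-system presentation (justified by the Remark you cite) to force the structure of accepting runs in $\A_1\cap\A_2$, whereas you spell out explicitly why $\tr[0]\in\T^i$ and $\tr[n]\in\T^f$ are unique and why the trace bookkeeping of Definition~\ref{def:trace} splits cleanly; but the underlying argument is the same.
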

\begin{proof}
  Let $G\in\Gr{\A_1}\cap\Gr{\A_2}$. There are two accepting runs in
  $\A_1$ and $\A_2$, that we call
  $R_i=\tuple{\set{\iota_i},\tr[0][i];\dots;\tr[n_i][i],\emptyset}$
  (with $i\in\set{1,2}$), such that $G=\Gr{R_1}\cap\Gr{R_2}$.
  We build the following run:
  \[R\eqdef
    \tuple{\set\iota,\tuple{\set\iota,\trout[0][1]\cup\trout[0][2]};
      \tr[1][1];\dots;\tr[n_1-1][1];\tr[1][2];\dots;\tr[n_2-1][2];
      \tuple{\trin[n_1][1]\cup\trin[n_2][2],\emptyset},
      \emptyset}.\]
  It is a simple exercise to check that indeed $R$ is an accepting run
  in $\A_1\cap\A_2$, and that its trace does satisfy
  ${\Gr R= G}$.

  For the other direction, the presentation as an inference system
  simplifies the reasoning: by projecting an accepting run in
  $\A_1\cap\A_2$ on the first (respectively second) component, we get
  an accepting run in $\A_1$ (resp. $\A_2$). From that remark, one can
  deduce that the parallel product of the traces of these two
  projected runs is isomorphic to the trace of the whole run.
\end{proof}

\subsubsection*{Conclusion.}
Now we have all ingredients required to associate a Petri automaton
$\A(e)$ with every expression $e$, by induction on $e$:
\begin{align*}
  \A\paren{e \sumtm f}&\eqdef \A\paren e\cup\A\paren f &
  \A\paren 0&\eqdef\underline 0 \\
  \A\paren{e\cdot f}&\eqdef \A\paren e\cdot\A\paren f &
  \A\paren a&\eqdef\underline a \\
  \A\paren{e\cap f}&\eqdef \A\paren e\cap\A\paren f &
  \A\paren{e^+}&\eqdef \A\paren e^+
\end{align*}
Lemmas~\ref{lem:ok_union}, \ref{lem:ok_seq}, \ref{lem:ok_it}
and~\ref{lem:ok_par} show that this construction is correct, whence
the first half of the Kleene theorem for Petri automata.
\begin{prop}
  \label{prop:ok_e_to_a}
  For every simple expression $e\in\SExp[Y]$, we have
  $\Gr{\A\paren e}=\Gr e$.
\end{prop}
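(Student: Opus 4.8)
The plan is to proceed by structural induction on the simple expression $e\in\SExp[Y]$, exactly mirroring the inductive definition of $\A(\argument)$. For the base cases, one checks directly that $\underline 0$ has no accepting run, so $\Gr{\underline 0}=\emptyset=\Gr 0$, and that the only accepting run of $\underline a$ has trace equal to the single-edge graph $\Gr a$, so $\Gr{\underline a}=\set{\Gr a}=\Gr a$. For the inductive steps, I would invoke the four correctness lemmas already proved: Lemma~\ref{lem:ok_union} for $e\sumtm f$, Lemma~\ref{lem:ok_seq} for $e\cdot f$, Lemma~\ref{lem:ok_it} for $e^+$, and Lemma~\ref{lem:ok_par} for $e\cap f$. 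Concretely, for the union case, $\Gr{\A(e\sumtm f)}=\Gr{\A(e)\cup\A(f)}=\Gr{\A(e)}\cup\Gr{\A(f)}$ by Lemma~\ref{lem:ok_union}, which equals $\Gr e\cup\Gr f=\Gr{e\sumtm f}$ by the induction hypothesis and the definition of the set of graphs of an expression (since $\tms{e\sumtm f}=\tms e\sumtm\tms f$). The composition, iteration, and intersection cases are handled identically, the only point to verify being that the set-theoretic operations on graph sets match the operations on $\tms\argument$: one has $\tms{e\cdot f}=\set{u\cdot v\mid u\in\tms e,\ v\in\tms f}$ and $\Gr{u\cdot v}=\Gr u\cdot\Gr v$ (series composition of graphs, from the homomorphism $\G$), and similarly for $\cap$ and for $e^+$ using $\tms{e^+}=\bigcup_{n>0}\set{u_1\cdots u_n\mid u_i\in\tms e}$.

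There is one subtlety worth isolating: the correctness lemmas for $\A_1\cdot\A_2$, $\A^+$ and $\A_1\cap\A_2$ were stated and proved under the implicit assumption that the component automata satisfy Constraints~\ref{cstr:safety} and~\ref{cstr:sp} (the proofs of Lemmas~\ref{lem:ok_seq} and~\ref{lem:ok_par} explicitly use Lemma~\ref{lem:cst2:finaltrans}, hence Constraint~\ref{cstr:sp}). So the induction should really establish the conjunction ``$\A(e)$ satisfies both constraints \emph{and} $\Gr{\A(e)}=\Gr e$''. The constraint-preservation part follows because $\Gr{\A(e)}=\Gr e$ is a set of series-parallel graphs (Lemma relating series-parallel graphs to simple terms), so Constraint~\ref{cstr:sp} holds automatically once the trace equality is known; and safety of $\underline 0$, $\underline a$, and of each of the four compositional constructions is a routine direct check on reachable states. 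I expect this bookkeeping — carrying the constraints through the induction so that the four lemmas are legitimately applicable — to be the only real obstacle, and it is a mild one.

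Thus the argument is essentially an assembly of already-established pieces: base cases by inspection, inductive cases by the matching correctness lemma together with the compatibility of $\G$ with series composition, parallel composition, and (implicitly, via finite unions) iteration. I would write it as a short induction, spelling out the union case in full and remarking that the remaining three cases are strictly analogous, with the proviso about the constraints handled up front.
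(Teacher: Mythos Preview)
Your proposal is correct and matches the paper's own approach: the paper simply states that Lemmas~\ref{lem:ok_union}, \ref{lem:ok_seq}, \ref{lem:ok_it} and~\ref{lem:ok_par} show the construction is correct, which is precisely the structural induction you outline. Your additional remark about carrying Constraints~\ref{cstr:safety} and~\ref{cstr:sp} through the induction is a useful piece of bookkeeping that the paper leaves implicit.
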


\begin{center}
  \begin{tikzcd}
    \SExp[Y]\arrow[rr,"\A"] \arrow[dr,"\G"']&& \PA[Y]\arrow[dl,"\G"]\\ &\pset{\SP[Y]}
  \end{tikzcd}
\end{center}

\begin{thm}
  \label{thm:kl1:sp}
 Regular sets of series-parallel graphs are recognisable.
\end{thm}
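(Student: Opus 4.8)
The plan is to read the statement as: every regular set $S\subseteq\SP[X]$ of series-parallel graphs equals $\Gr\A$ for some Petri automaton $\A$ (in the strong sense, i.e.\ one satisfying Constraints~\ref{cstr:safety} and~\ref{cstr:sp}). By Definition~\ref{def:reg:sp} there is a simple expression $e\in\SExp[X]$ with $\Gr e=S$, so I would take $\A\eqdef\A(e)$, the automaton produced by the structural construction of Section~\ref{sec:exp:pa}. Proposition~\ref{prop:ok_e_to_a} already delivers $\Gr{\A(e)}=\Gr e=S$, so the only thing left to do is to certify that $\A(e)$ genuinely is a Petri automaton, i.e.\ that it meets both constraints.

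For Constraint~\ref{cstr:sp} the argument is immediate: $\Gr{\A(e)}=\Gr e=\setcompr{\Gr u}{u\in\tms e}$, and each $\Gr u$ is series-parallel because $u$ is a simple term, by the characterisation of series-parallel graphs as graphs of simple terms recalled in Section~\ref{sec:sp}. For Constraint~\ref{cstr:safety} I would proceed by structural induction on $e$, following the case analysis of the construction: $\underline 0$ and $\underline a$ are trivially safe, and each combinator ($\sumtm$, $\cdot$, $\argument^+$, $\cap$) preserves safety. In every inductive case the key observation is that, up to the obvious bijection on places, each reachable state of the compound automaton restricts to reachable states of the component automata (the fresh initial place $\iota$, when there is one, holds at most one token and only before the first transition is fired), so a hypothetical safety violation in the compound would project down to one in a component, contradicting the induction hypothesis. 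For $\cdot$ and $\argument^+$ one also needs that a final transition of a component is fired only from a state that it empties completely; this is Lemma~\ref{lem:cst2:finaltrans} applied to that component, which presupposes Constraint~\ref{cstr:sp} for it---already available from the same induction.

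The main difficulty here is one of sequencing rather than mathematics: Proposition~\ref{prop:ok_e_to_a} rests on Lemmas~\ref{lem:ok_union}--\ref{lem:ok_par}, whose proofs themselves invoke Lemma~\ref{lem:cst2:finaltrans} and hence presuppose the constraints for the sub-automata. The clean route is therefore to strengthen the induction so that what is proved at each step is simultaneously that $\A(e)$ satisfies both constraints and that $\Gr{\A(e)}=\Gr e$, so that the induction hypothesis hands over both the validity and the correctness of the components at once. Among the cases, parallel composition is the fiddliest: the state space is identified with $\pset{P_1}\times\pset{P_2}\times\pset{\set\iota}$, and one must check that merging initial transitions and merging final transitions creates no place collision---which follows from $P_1\cap P_2=\emptyset$ together with the fact (Lemma~\ref{lem:cst2:finaltrans}) that the merged final transition is only ever enabled when its inputs coincide with the tokens still present in each component. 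Once both constraints are established, $\A(e)$ is a bona fide Petri automaton and $\Gr{\A(e)}=S$ exhibits $S$ as recognisable; this is exactly the easy half of the Kleene theorem for Petri automata, the converse direction being the content of Section~\ref{sec:pa:exp}.
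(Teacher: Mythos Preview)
Your proposal is correct and takes the same route as the paper, which derives the theorem immediately from Proposition~\ref{prop:ok_e_to_a} together with Definition~\ref{def:reg:sp} and offers no further argument. The additional work you do---verifying Constraints~\ref{cstr:safety} and~\ref{cstr:sp} for $\A(e)$ by strengthening the structural induction so that correctness and validity of the constraints are proved simultaneously---is sound and addresses a sequencing issue (the reliance of Lemmas~\ref{lem:ok_seq}--\ref{lem:ok_par} on Lemma~\ref{lem:cst2:finaltrans} for the sub-automata) that the paper leaves implicit.
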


\noindent When a Petri automaton is labelled in $\Xb$, it can be used
to produce graphs labelled in $X$ rather than series-parallel graphs
labelled in $\Xb$, using the function $\retype\argument$. Thus we define:
\begin{defi}
  A set $S\subseteq\Gph$ of graphs is \emph{recognisable} if there
  exists a Petri automaton $\A\in\PA[\Xb]$ such that $S=\retype{\Gr\A}$.
\end{defi}

\noindent We can combine Propositions~\ref{prop:type:graphs}
and~\ref{prop:ok_e_to_a} to obtain the following commutative diagram.
\begin{center}
  \begin{tikzcd}
    \Exp\arrow[rr,"\detype\argument"]\arrow[ddrr,"\G"']&&
    \SExp\arrow[rr,"\A"] \arrow[dr,"\G"']&&
    \PA[\Xb]\arrow[dl,"\G"]\\ 
    &&&\pset{\SP}\arrow[dl,"\retype\argument"]\\
    &&\pset{\Gph}
  \end{tikzcd}
\end{center}
\noindent Whence
\begin{thm}
  \label{thm:kl1}
  Regular sets of graphs are recognisable.
\end{thm}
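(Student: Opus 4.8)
This is an immediate corollary of the commutative diagram displayed just above, so the plan is simply to chase it. Let $S\subseteq\Gph$ be a regular set of graphs; by definition there is an expression $e\in\Exp$ with $\Gr e = S$. I would set
\[
  \A\eqdef\A\paren{\detype e}\in\PA[\Xb],
\]
that is: first push converses down to the leaves via $\detype\argument\colon\Exp\to\SExp[\Xb]$, then apply the inductive translation of Section~\ref{sec:exp:pa} (instantiated with alphabet $Y=\Xb$) to the resulting simple expression. The goal is to show $S=\retype{\Gr\A}$, which is exactly the definition of $S$ being recognisable.

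To establish this I would compose two results already in hand. First, Proposition~\ref{prop:ok_e_to_a} applied to the simple expression $\detype e\in\SExp[\Xb]$ gives $\Gr{\A\paren{\detype e}}=\Gr{\detype e}$ as sets of series-parallel graphs over $\Xb$. Second, the extension of Proposition~\ref{prop:type:graphs} to expressions and sets of graphs --- the left-hand commutative square displayed immediately after that proposition --- gives $\Gr e=\retype{\Gr{\detype e}}$. Chaining these,
\[
  S=\Gr e=\retype{\Gr{\detype e}}=\retype{\Gr{\A\paren{\detype e}}}=\retype{\Gr\A},
\]
so $S$ is recognisable.

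The theorem itself therefore carries essentially no new content; all the difficulty is upstream, in the two cited results. For Proposition~\ref{prop:ok_e_to_a} the work is the soundness and completeness of the four automata constructions --- union, series composition, strict iteration, parallel composition --- namely Lemmas~\ref{lem:ok_union}--\ref{lem:ok_par}, including the delicate point flagged in the remark after the series construction: the mismatch between the inference systems given for intuition and the formal definitions is harmless, since any run exploiting it would produce an accepting run of a component automaton with two final transitions, contradicting Lemma~\ref{lem:cst2:finaltrans}. As part of this I would also verify that each construction returns an automaton genuinely satisfying Constraints~\ref{cstr:safety} and~\ref{cstr:sp}, so that $\A\paren{\detype e}$ is a legitimate Petri automaton over $\Xb$. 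For the expression-level form of Proposition~\ref{prop:type:graphs}, the work is that the graph operation $\retype\argument$ (merging $1$-labelled edges, deleting $\top$-labelled edges, reversing $a'$-labelled edges) commutes with the term-set map $\tms\argument$ and with series and parallel composition; once it commutes on single terms it lifts to expressions by taking unions over $\tms e$. If those upstream lemmas were not already granted, the parallel-composition case is where I would expect to spend the most effort, because that is where the interplay between token splitting/merging and Constraint~\ref{cstr:sp} is most delicate.
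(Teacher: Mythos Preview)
Your proof is correct and matches the paper's approach exactly: the paper also derives the theorem as an immediate consequence of the commutative diagram obtained by combining Proposition~\ref{prop:ok_e_to_a} with (the expression-level extension of) Proposition~\ref{prop:type:graphs}. Your additional remarks about the upstream lemmas and verifying the constraints are accurate but go beyond what the paper records at this point.
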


\begin{rem}
  If $e$ is an expression without intersection, then the transitions
  in $\A(\detype e)$ are all of the form
  $\tuple{\set p,\set{\tuple{x,q}}}$, with only one input, one output
  and a label in $X\cup\set1$. As a consequence, the accessible
  configurations are singletons, and the resulting Petri automaton has
  the structure of a non-deterministic finite-state automaton with
  epsilon transitions (NFA). Actually, in that case, the construction
  we described above is just a variation on Thompson's
  construction~\cite{thompson68}.
\end{rem}

\section{From Petri automata to expressions}
\label{sec:pa:exp}

\noindent Now we prove the converse implications of
Theorems~\ref{thm:kl1:sp} and~\ref{thm:kl1}, thus resulting in full
Kleene theorems for Petri automata and expressions. Like previously,
most of the work is done with simple expressions and series-parallel
graphs; the extension to arbitrary expressions and graphs will follow
from Proposition~\ref{prop:type:graphs}. 

This section is technically more involved than the other parts of the
paper. The following sections (\ref{sec:reading} and beyond) do not
depend on the results and notions introduced here.

\subsection{More notions on graphs}

Given an oriented (but not necessarily labelled and pointed) graph
$G$, its \emph{sources} $\min G$ (resp.\ \emph{sinks} $\max G$) are
the vertices with no incoming (resp. outgoing) edge. A vertex $v$ is
\emph{reachable} (respectively \emph{co-reachable}) from another
vertex $v'$ if there is a path from $v'$ to $v$ (resp.\ from $v$ to
$v'$). A graph is \emph{connected} if there is a non-directed path
between any two vertices. A subgraph is called a \emph{connected
  component} if it is maximal amongst the connected subgraphs.

A \emph{(rooted) tree} is an acyclic graph $T$ such that either
$\min T$ or $\max T$ contains a single node, called the root, and for
any two nodes $x,y\in V$ there exists at most one path from $x$ to
$y$. If $\min T$ is a singleton, then $T$ is a \emph{top-down tree},
otherwise it is a \emph{bottom-up tree}. A tree is a \emph{proper
  tree} if its root has degree one and if it does not contain a node
with exactly one incoming edge and one outgoing edge. There is only a
finite number of proper trees with leaves chosen from a finite set.

The following technical lemma will be instrumental later on. We
illustrate it in Figure~\ref{fig:gluecenter}. 
\begin{lem}\label{lem:gluecenter}
  Let $T=\tuple{V_T,E_T}$ be a proper unlabelled top down tree with
  root $r$ and set of leaves $F\subseteq V_T$, and $G=\tuple{V,E}$ be
  a connected DAG. Let $\phi:F\rightharpoonup V$ be a partial function
  defined on $F'\subseteq F$. The \emph{gluing} of $T$ and $G$ along
  $\phi$ is the graph
  \begin{align*}
    T\cdot_\phi G\eqdef
    \tuple{V_T\cup V,E_T\cup E\cup\set{\tuple{f,\phi\paren f}}\mid f\in F'}.
  \end{align*}
  \noindent
  If $T\cdot_\phi G\spred[\star] T'$ and if $T'$ is a tree, then there
  is a node $c$ in $G$ such that for every
  $\tuple{f,M}\in F'\times\max G$ every path from $\phi\paren f$ to
  $M$ in $G$ visits $c$.
\end{lem}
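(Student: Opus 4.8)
The plan is to route the argument through the series--parallel decomposition of an auxiliary two-terminal graph. First note that $\spred$ changes neither the number of sources nor the number of sinks of a graph (rule~2 deletes an internal vertex, rule~1 deletes one of two parallel edges; in both cases every source and every sink survives). I may assume $F'\neq\emptyset$: otherwise the gluing is a disjoint union and cannot $\spred$-reduce to a connected tree, unless $G$ is empty, in which case there is nothing to prove; with $F'\neq\emptyset$ the graph $T\cdot_\phi G$ is connected and $G$ is non-empty. A tree has either a single source or a single sink. If $T'$ has a single sink, then invariance of the sink count gives $1=|F\setminus F'|+|\max G|$, hence $|\max G|=1$, and the unique sink of $G$ may be taken for $c$. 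So from now on $T'$ has a single source, and invariance of the source count forces $T\cdot_\phi G$ — hence the graph $\widehat H$ below — to have $r$ as its unique source.

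Next I pass to the graph $\widehat H$ obtained from $T\cdot_\phi G$ by adjoining a fresh vertex $\top$ together with an edge $M\to\top$ for every sink $M$ of $T\cdot_\phi G$ (those sinks are $\max G$ and the leaves of $T$ outside $F'$). I claim $\widehat H$ is two-terminal series--parallel with terminals $r,\top$, \ie\ $\widehat H$ $\spred$-reduces to the single edge $r\to\top$. Indeed $T\cdot_\phi G\spred[\star]T'$ gives $\widehat H\spred[\star](T')^\top$, and an easy induction on the tree $T'$ shows $(T')^\top$ is series--parallel: each node of $T'$ contributes the series composition of the edge from its parent with the parallel composition of the (inductively series--parallel) graphs attached to its children.

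I then work inside the normalized series--parallel decomposition of $\widehat H$, using the standard facts that each block $D$ (with terminals $s_D,t_D$) is attached to the rest of $\widehat H$ only at $s_D$ and $t_D$, that the vertices lying on every $s_D$-to-$t_D$ path of $D$ form a chain $s_D=x_0,\dots,x_k=t_D$ realizing the top-level series decomposition of $D$, and that a block with no internal cut vertex is a single edge or a parallel composition of at least two blocks. Let $D$ be the smallest block with $V(G)\subseteq V(D)$ (it exists because the blocks containing a fixed connected vertex set are linearly ordered by inclusion). Since each $\phi(f)$ has an incoming edge from $f\notin V(G)$ and each $M\in\max G$ has its only outgoing edge into $\top$, one checks: if $t_D\notin V(G)$ then $t_D=\top$; and, in that case, if \emph{no} $x_i$ lay in $V(G)$, then every path of $G$ from a $\phi(f)$ to an $M\in\max G$ — staying inside $V(G)$ — would avoid all cut vertices of $D$, which together with connectedness of $G$ and the edges $M\to\top$ (pinning $M$, hence all of $V(G)$, into the last block of $D$) would confine $V(G)$ to a single parallel branch strictly inside the last block of $D$, contradicting minimality of $D$. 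Hence some $c:=x_i\in V(G)$; if instead $t_D\in V(G)$ one shows $t_D$ is then the sole sink of $G$ and takes $c:=t_D$ (degenerate cases such as $|V(G)|=1$ are immediate). Finally, for $f\in F'$, $M\in\max G$ and any path $Q$ of $G$ from $\phi(f)$ to $M$: prepend a path from $s_D$ to $\phi(f)$ through vertices outside $V(G)$ except its endpoint $\phi(f)$ — either $\phi(f)=s_D$, or the glue edge into $\phi(f)$ comes from inside $D$ and one chases $T$-ancestors of $f$ up to $s_D$ — and append the edge $M\to\top$; the resulting $s_D$-to-$t_D$ path of $D$ must meet the cut vertex $c$, and since $c\in V(G)$ while the adjoined pieces avoid $V(G)$, $c$ lies on $Q$, as required.

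The main obstacles are the middle step — showing $(T\cdot_\phi G)^\top$ is series--parallel and marshalling the standard block-structure facts — and, inside the series--parallel world, pinning down via minimality of $D$ exactly which terminal or cut vertex plays the role of $c$; the closing path surgery and the treatment of degenerate cases are routine once that machinery is in place.
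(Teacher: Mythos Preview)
Your argument is correct and follows a genuinely different route from the paper's. The paper works directly in the tree $T'$: it observes that the sinks $\max G$ survive SP-rewriting as leaves of $T'$, then uses connectedness of $G$ to assert that there is a subtree of $T'$ whose leaves are \emph{exactly} $\max G$; the root $c$ of that subtree must come from $G$ (being reachable from the glued leaves $F'$), and since ``paths are preserved during SP-rewriting'', $c$ has the required property. This three-line argument is concise but leaves the reader to unpack why such a subtree exists and why its root lies in $G$. Your approach instead completes $T\cdot_\phi G$ to a two-terminal graph $\widehat H$ by adjoining a common sink $\top$, certifies that $\widehat H$ is series--parallel, and then extracts $c$ as a cut vertex of the minimal SP-block $D$ containing $V(G)$, finishing with a path-surgery argument. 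What you gain is that every step reduces to a standard fact about SP block decompositions (cut-vertex chains, minimal blocks, parallel branches), so nothing needs to be said about how SP-rewriting transports the connected subgraph $G$ into $T'$; what you pay is a considerably longer argument with more bookkeeping. One small point worth making explicit: in the case $t_D=\top$ your ancestor-chase ``up to $s_D$'' tacitly uses $s_D\notin V(G)$ when $\phi(f)\neq s_D$. This is harmless, because if $s_D\in V(G)$ the chase would force the global source $r$ into the interior of $D$ (each $T$-ancestor of $f$, lying in $V_T$, is distinct from both $s_D\in V(G)$ and $\top$, hence stays interior), contradicting that $r$ has no incoming edge; so that subcase cannot arise.
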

\begin{proof}
  The maximal elements of $G$ remain in $T'$ as leaves, and because
  $G$ is connected there will be a subtree of $T'$ whose leaves are
  exactly $\max G$. The root of this tree must be a node accessible
  from $F'$, hence coming from $G$. Because paths are preserved during
  SP-rewriting, this vertex has the desired property.
\end{proof}
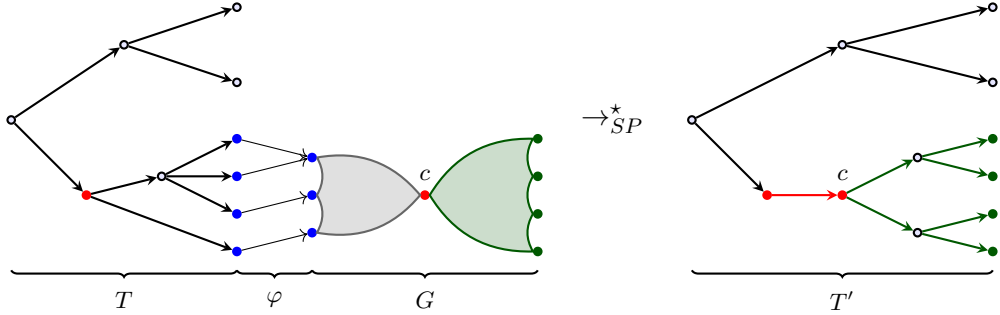
\begin{figure}[t]
  \centering
  \begin{tikzpicture}[baseline=(0.south)]
    \position (0) (0,0);
    \position (1) (1.5,1);
    \node[state,red] (2) at (1,-1) {};
    \position (3) (3,1.5);
    \position (4) (3,0.5);
    \position (5) (2,-.75);
    \node[state,blue] (6) at (3,-.75) {};
    \node[state,blue] (7) at (3,-.25) {};
    \node[state,blue] (8) at (3,-1.25) {};
    \node[state,blue] (9) at (3,-1.75) {};
    \edge (0)(1);\edge(0)(2);
    \edge (1)(3);\edge(1)(4);
    \edge (2)(5);\edge(2)(9);
    \edge (5)(6);\edge(5)(7);\edge(5)(8);
    
    \node[state,blue] (a) at (4,-.5) {};
    \node[state,blue] (b) at (4,-1) {};
    \node[state,blue] (c) at (4,-1.5) {};
    \node[state,red] (C) at (5.5,-1)[label=above:$c$] {};
    \node[state,dkgreen] (x) at (7,-.25) {};
    \node[state,dkgreen] (y) at (7,-.75) {};
    \node[state,dkgreen] (z) at (7,-1.25) {};
    \node[state,dkgreen] (t) at (7,-1.75) {};
    \draw[dimgray,fill,fill opacity=0.2,thick] 
    (a.east)
    to[bend left] (b.east)
    to[bend left] (c.east)
    to[bend right] (C.west)
    to[bend right] (a.east);
    \draw[dkgreen,fill,fill opacity=0.2,thick]
    (x.west)
    to[bend right] (y.west)
    to[bend right] (z.west)
    to[bend right] (t.west)
    to[bend left] (C.east)
    to[bend left] (x.west);
    
    \draw[->] (6) -- (a);
    \draw[->] (7) -- (a);
    \draw[->] (8) -- (b);
    \draw[->] (9) -- (c);

    \coordinate (j) at ($(9)-(0,.25)$);
    \draw[decorate,decoration=brace,thick] (j) to node[below,yshift=-4pt] {$T$} ($(0|-j)$);
    \coordinate (j') at ($(c|-j)$);
    \draw[decorate,decoration=brace,thick] ($(j'-|t)$) to node[below,yshift=-4pt] {$G$} (j');
    \draw[decorate,decoration=brace,thick] ($(j')$) to node[below,yshift=-4pt] {$\varphi$} (j);
    
  \end{tikzpicture}
  $\quad\spred[\star]\quad$
  \begin{tikzpicture}[baseline=(0.south)]
    \position (0) (0,0);
    \position (1) (2,1);
    \node[state,red] (2) at (1,-1) {};
    \position (3) (4,1.5);
    \position (4) (4,0.5);
    \edge (0)(1);\edge(0)(2);
    \edge (1)(3);\edge(1)(4);
    
    \node[state,red] (C) at (2,-1)[label=above:$c$] {};
    
    \position (5) (3,-.5);
    \position (6) (3,-1.5);

    \node[state,dkgreen] (x) at (4,-.25) {};
    \node[state,dkgreen] (y) at (4,-.75) {};
    \node[state,dkgreen] (z) at (4,-1.25) {};
    \node[state,dkgreen] (t) at (4,-1.75) {};
    
    \draw[arc,red] (2) -- (C);
    \draw[arc,dkgreen] (C) -- (5);
    \draw[arc,dkgreen] (C) -- (6);
    \draw[arc,dkgreen] (5) -- (x);
    \draw[arc,dkgreen] (5) -- (y);
    \draw[arc,dkgreen] (6) -- (z);
    \draw[arc,dkgreen] (6) -- (t);

    \coordinate (j) at ($(t)-(0,.25)$);
    \draw[decorate,decoration=brace,thick] (j) to node[below,yshift=-4pt] {$T'$} ($(0|-j)$);
    
  \end{tikzpicture}
  \caption{Illustration of Lemma~\ref{lem:gluecenter}}
  \label{fig:gluecenter}
\end{figure}

\subsection{Boxes}
\label{sec:boxes}

In the standard Kleene theorem for words, an important aspect is that
a partial execution in an automaton always denotes a word, even if the
path is not initial and final. This is not true with graphs and Petri
automata, where partial executions denote slices of a graph. To this
end, we will use \emph{boxes}. Recall the following run, that we used
as an example of run from the automaton displayed in
Figure~\ref{fig:auto}:
\begin{center}
      \begin{tikzpicture}[yscale=.7]
      \node (A) at (0,-0.5) {$A$};
      \node (B) at (2,0) {$B$};
      \node (G1) at (2,-1) {$G$};
      \node (C) at (4,0.5) {$C$};
      \node (E1) at (4,-0.5) {$E$};
      \node (G2) at (4,-1) {$G$};
      \node (D) at (6,0.5) {$D$};
      \node (E2) at (6,-0.5) {$E$};
      \node (G3) at (6,-1) {$G$};
      \node (C1) at (8,0.5) {$C$};
      \node (E3) at (8,-0.5) {$E$};
      \node (G4) at (8,-1) {$G$};
      \node (D1) at (10,0.5) {$D$};
      \node (E4) at (10,-0.5) {$E$};
      \node (G5) at (10,-1) {$G$};
      \node (F) at (12,0) {$F$};
      \node (G6) at (12,-1) {$G$};
      \trans[0] (0) (1,-0.5);
      \trans[1] (1) (3,0);
      \trans[2] (2) (5,0.5);
      \trans[3] (3) (7,0);
      \trans[2] (4) (9,0.5);
      \trans[4] (5) (11,0);
      \transf[7] (6) ($(F)!.5!(G6)+(1,0)$);
      \edge (A) (0);
      \edge[above] (0) (B)[b];
      \edge[below] (0) (G1)[a];

      \edge (B) (1);
      \draw[thick,dotted] (G1) to (G2);
      \edge[above] (1) (C)[c];
      \edge[below] (1) (E1)[b];

      \edge (C) (2);
      \draw[thick,dotted] (E1) to (E2);
      \draw[thick,dotted] (G2) to (G3);
      \edge[above] (2) (D)[a];

      \edge (D) (3);
      \edge (E2) (3);
      \draw[thick,dotted] (G3) to (G4);
      \edge[above] (3) (C1)[c];
      \edge[below] (3) (E3)[b];

      \edge (C1) (4);
      \draw[thick,dotted] (E3) to (E4);
      \draw[thick,dotted] (G4) to (G5);
      \edge[above] (4) (D1)[a];

      \edge (D1) (5);
      \edge (E4) (5);
      \draw[thick,dotted] (G5) to (G6);
      \edge[above] (5) (F)[d];

      \edge (F) (6);
      \edge (G6) (6);
      
    \end{tikzpicture}
\end{center}
We will abstract each transition by a box, and the run itself by the
sequential composition of those boxes:
\begin{align*}
  \begin{tikzpicture}[yscale=.7]
    \coordinate (step) at (.5,0);
    \coordinate (up) at (0,.5);
    \coordinate (down) at (0,-.5);
    \coordinate (transspace) at ($(step)+(.6,0)+(.6,0)$);
    \position (0) (1,-0.5);
    \position (B)($(0)+(step)+(up)$);
    \position (G1)($(0)+(step)+(down)$);
    \rect(G1)(0)[1](B)(B);
    \position (1) ($(0)+(up)+(transspace)$);
    \position (C) ($(1)+(step)+(up)$);
    \position (E1) ($(1)+(step)+(down)$);
    \position (G2) ($(1|-E1)!.5!(E1)+(down)$);
    \rect(G2)(1)(C)(C);
    \position (2) ($(1)+(up)+(transspace)$);
    \position (D) ($(2)+(step)$);
    \position (E2) ($(2)!.5!(D)+(down)+(down)$);
    \position (G3) ($(E2)+(down)$);
    \rect(G3)(2)(D)(D);
    \position (3) ($(2)+(down)+(transspace)$);
    \position (C1) ($(3)+(step)+(up)$);
    \position (E3) ($(3)+(step)+(down)$);
    \position (G4) ($(3|-E3)!.5!(E3)+(down)$);
    \rect(G4)(3)(C1)(C1);
    \position (4) ($(3)+(up)+(transspace)$);
    \position (D1) ($(4)+(step)$);
    \position (E4) ($(4)!.5!(D1)+(down)+(down)$);
    \position (G5) ($(E4)+(down)$);
    \rect(G5)(4)(D1)(D1);
    \position (5) ($(4)+(down)+(transspace)$);
    \position (F) ($(5)+(step)$);
    \position (G6) ($(5)!.5!(F)+(down)+(down)$);
    \rect(G6)(5)[1](F)(F);
    \position (6) ($(5)+(down)+(transspace)$);
    \rect[1](6)(6)[1.5](6)(6);
    \iport(A)(0);
    \oport(B)[b](B);\oport(G)[g1](G1);
    \draw[portarc](b)--(1);\draw[portarc](g1)--(G2);
    \oport(G)[g2](G2)[C];\oport(C)[c](C);\oport(E)[e1](E1);
    \draw[portarc](c)--(2);\draw[portarc](e1)--(E2);
    \draw[portarc](g2)--(G3);
    \oport(G)[g3](G3)[D];\oport(D)[d](D);\oport(E)[e2](E2)[D];
    \draw[portarc](d)--(3);\draw[portarc](e2)--(3);
    \draw[portarc](g3)--(G4);
    \oport(G)[g4](G4)[C1];\oport(C)[c1](C1);\oport(E)[e3](E3);
    \draw[portarc](c1)--(4);\draw[portarc](e3)--(E4);
    \draw[portarc](g4)--(G5);
    \oport(G)[g5](G5)[D1];\oport(D)[d1](D1);\oport(E)[e4](E4)[D1];
    \draw[portarc](d1)--(5);\draw[portarc](e4)--(5);
    \draw[portarc](g5)--(G6);
    \oport(F)[f](F);\oport(G)[g6](G6)[F];
    \draw[portarc](f)--(6);\draw[portarc](g6)--(6);
    \oport(\emptyset)[out](6);
    \edge[above] (0) (B)[b];
    \edge[below] (0) (G1)[a];
    \edge[above] (1) (C)[c];
    \edge[below] (1) (E1)[b];
    \edge[above] (2) (D)[a];
    \edge[above] (3) (C1)[c];
    \edge[below] (3) (E3)[b];
    \edge[above] (4) (D1)[a];
    \edge[above] (5) (F)[d];
  \end{tikzpicture}
\end{align*}
These boxes allow us to represent every subrun as a single box. For
instance, the subrun firing transitions $2$ and $3$ in sequence, and
looping from state $\set{C,E,G}$ back to itself can be represented as
the box:
\begin{align*}
  \begin{tikzpicture}
    \coordinate (step) at (1,0);
    \coordinate (up) at (0,.5);
    \coordinate (down) at (0,-.5);
    \coordinate (transspace) at ($(step)+(.6,0)+(.6,0)$);
    \position (2) (0,0);
    \position (D) ($(2)+(step)$);
    \position (E2) ($(2)!.5!(D)+(down)+(down)$);
    \position (G3) ($(E2)+(down)$);
    \rect(G3)(2)(D)(D);
    \position (3) ($(2)+(down)+(transspace)$);
    \position (C1) ($(3)+(step)+(up)$);
    \position (E3) ($(3)+(step)+(down)$);
    \position (G4) ($(3|-E3)!.5!(E3)+(down)$);
    \rect(G4)(3)(C1)(C1);
    \iport(G)[g2](G3)[2];\iport(C)[c](2);\iport(E)[e1](E2)[2];
    \oport(G)[g3](G3)[D];\oport(D)[d](D);\oport(E)[e2](E2)[D];
    \draw[portarc](d)--(3);\draw[portarc](e2)--(3);
    \draw[portarc](g3)--(G4);
    \oport(G)[g4](G4)[C1];\oport(C)[c1](C1);\oport(E)[e3](E3);
    \edge[above] (2) (D)[a];
    \edge[above] (3) (C1)[c];
    \edge[below] (3) (E3)[b];
  \end{tikzpicture}&&
  \begin{tikzpicture}
    \coordinate (step) at (1,0);
    \coordinate (up) at (0,.5);
    \coordinate (down) at (0,-.5);
    \position (2) (0,0);
    \position (3) ($(2)+(down)+(step)$);
    \position (C1) ($(3)+(step)+(up)$);
    \position (E3) ($(3)+(step)+(down)$);
    \position (G4) ($(2|-E3)!.5!(E3)+(down)$);
    \rect(G4)(2)(C1)(C1);
    \iport(G)[g2](G4)[2];\iport(C)[c](2);
    \node[portnode] (e) at ($(g2)+(up)$) {$E$};
    \draw[portarc,bend right](e)to(3);
    \oport(G)[g4](G4)[C1];\oport(C)[c1](C1);\oport(E)[e3](E3);
    \edge[above,bend left] (2) (3)[a];
    \edge[above] (3) (C1)[c];
    \edge[below] (3) (E3)[b];
  \end{tikzpicture}
\end{align*}

\medskip\noindent We start by defining those boxes formally, and we
study some of their properties.

\subsubsection{The category of boxes.}
\label{sec:categories-boxes}

We fix a finite set $P$ of ports, which will be instantiated in the
next section with the set of places of some Petri automaton.
\begin{defi}[Box]
  Let $S,{S'}\subseteq P$ be two sets of ports. A \emph{box} labelled
  over $Y$ from $S$ to ${S'}$ is a triple $\tuple{\pin,G,\pout}$ where
  $G$ is a directed acyclic graph (DAG) with edges labelled in $Y$,
  $\pin$ is a map from $S$ to the vertices of $G$, and $\pout$ is a
  bijective map from ${S'}$ to $\max G$.
  The set of boxes labelled by $Y$ is written $\boxset[Y]$, and the
  boxes from $S$ to ${S'}$ are denoted by $\boxsetb[Y] S {S'}$, or
  simply $\boxsetb S {S'}$ if the set of labels is clear from the
  context.
\end{defi}
\noindent
As for graphs, we consider boxes up to renaming of internal nodes. We
represent boxes graphically as in Figure~\ref{fig:ex-comp}. Inside the
rectangle is the DAG, with the input ports on the left-hand side and
the output ports on the right-hand side. The maps $\pin$ and $\pout$
are represented by the arrows going from the ports to vertices inside
the rectangle.
\begin{figure}[t]
  \begin{minipage}{.65\textwidth}
    \centering
    \begin{minipage}{.45\linewidth}
      \centering
      \begin{tikzpicture}
        \state (a0) (0,2);
        \state (a1) (1,2);
        \state (a2) (0,1);
        \state (a3) (0,0);
        \state (a4) (2,2.5);
        \state (a5) (2,1.5);
        \state (a6) (2,0.5);
        \edge (a0) (a1)[a];
        \edge (a1) (a4)[b];
        \edge[below] (a1) (a5)[c];
        \edge (a2) (a6)[c];
        \edge[below] (a3) (a6)[a];
        \clbox (a3) (a4);
        \iport[.5] (A) (a0);
        \iport[-.5] (B) (a0);
        \iport (C) (a2);\iport (D) (a3);
        \oport (E) (a4);\oport (F) (a5);\oport (G) (a6);
      \end{tikzpicture}
    \end{minipage}
    \begin{minipage}{.45\linewidth}
      \centering
      \begin{tikzpicture}
        \state (b1) (3,0.5);
        \state (b2) (4,0);
        \state (b3) (4,1);
        \state (b4) (3,2);
        \state (b5) (4,2);
        \edge[below] (b1) (b2)[b];
        \edge (b1) (b3)[a];
        \edge (b4) (b5)[b];
        \clbox[.75] (b1)[.75](b5);
        \iport[.5] (E) (b4);
        \iport[-.5] (F) (b4);
        \iport (G) (b1);
        \oport (A) (b5);
        \oport (D) (b3);
        \oport (C) (b2);
      \end{tikzpicture}    
    \end{minipage}
    \caption{Examples of boxes}
    \label{fig:ex-comp}
  \end{minipage}
  \begin{minipage}{.3\linewidth}
    \centering
    \begin{tikzpicture}
      \state (v1) (1,2.5);\state (v2) (1,2);\state (v3) (1,1.5);
      \state (v4) (1,1);\state (v5) (1,0.5);
      \rect (v5) (v5) (v1) (v1);
      \iport (A) (v1);\oport (A) (v1);\iport (B) (v2);\oport (B)(v2);
      \iport (C) (v3);\oport (C) (v3);\iport (D) (v4);\oport (D)(v4);
      \iport (E) (v5);\oport (E) (v5);
    \end{tikzpicture}
    \caption{$\bid\sigma$.}
    \label{fig:idsig}
  \end{minipage}
\end{figure}

%

\begin{defi}[Identity box]
  If $S=\set{p_1,\dots,p_n}\subseteq P$ is a set of ports, the
  \emph{identity box} on $S$ is defined as
  $\bid S \eqdef \tuple{[p_i\mapsto i],
    \tuple{\set{1,\dots,n},\emptyset}, [p_i\mapsto i]}.$
\end{defi}
\noindent
For instance, the box $\bid{\set{A,B,C,D,E}}$ is represented in
Figure~\ref{fig:idsig}.

Now we define how to compose boxes sequentially. Intuitively, if the
set $S$ of output ports of $\beta_1$ is equal to the set of input
ports of $\beta_2$, we may compose them by putting the graph of
$\beta_1$ to the left of the graph of $\beta_2$, and for every port
$p\in S$, we identify the node $\pout[1]\paren p$ with the node
$\pin[2]\paren p$.
\begin{defi}[Composition of boxes] Let $S_1,S_2,S_3\subseteq P$ and
  for $i\in\set{1,2}$, let $\beta_i$ be a box from $S_i$ to $S_{i+1}$,
  with $\beta_i=\tuple{\pin[i],\tuple{V_i,E_i},\pout[i]}$, such that
  $V_1\cap V_2=\emptyset$. The \emph{composition} of $\beta_1$ and
  $\beta_2$, written $\beta_1\odot\beta_2$ is defined as
  $\tuple{\pin,\tuple{V_1'\cup V_2,E_1'\cup E\cup E_2},\pout}$ with:
  \begin{itemize}
  \item $V_1'\eqdef V_1\setminus \pout[1]\paren{S_2}$, and
    $E_1'\eqdef E_1\cap\paren{V_1'\times Y\times V_1'}$;
  \item
    $E\eqdef\setcompr{\tuple{x,a,\pin[2]\paren{p}}}
    {\tuple{x,a,y}\in
      E_1,\ y=\pout[1](p)}$;
  \item $\pout\eqdef\pout[2]$ and $\pin(p)\eqdef\left\{
      \begin{array}{l@{,\quad}l}
        \pin[2](q)&\text{if }\pin[1](p)=\pout[1](q),\\
        \pin[1](p)&\text{otherwise}.
      \end{array}\right.$\qedhere
  \end{itemize}
\end{defi}

\noindent
For instance the two boxes in Figure~\ref{fig:ex-comp} can be
composed, and yield the following box:
\begin{align*}
  \begin{tikzpicture}
    \state (a0) (0,2);
    \state (a1) (1,2);
    \state (a2) (0,1);
    \state (a3) (0,0);
    \state (b1) (1.5,0.5);
    \state (b2) (3,0);
    \state (b3) (3,1);
    \state (b4) (2,2);
    \state (b5) (3,2);
    \edge (a0) (a1)[a];
    \edge[above,bend left] (a1) (b4)[b];
    \edge[below,bend right] (a1) (b4)[c];
    \edge (a2) (b1)[c];
    \edge[below] (a3) (b1)[a];
    \edge[below] (b1) (b2)[b];
    \edge (b1) (b3)[a];
    \edge (b4) (b5)[b];
    \clbox (a3)[.75] (b5);
    \iport[.5] (A) (a0);\iport[-.5] (B) (a0);
    \iport (C) (a2);\iport (D) (a3);
    \oport (A) (b5);\oport (D) (b3);\oport (C) (b2);
  \end{tikzpicture}    
\end{align*}
We obtain a category of sets of ports and boxes:
\begin{prop}\label{prop:catbox}
  We have a category $\catbox $ whose objects are subsets of $P$,
  and whose morphisms between $S$ and ${S'}$ are the boxes from $S$ to
  $S'$, \ie, $\boxsetb[Y]S {S'}$.
\end{prop}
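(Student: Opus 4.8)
The plan is to make the data of $\catbox$ explicit and then verify the three category axioms, the only delicate point being associativity of composition. First I would fix the data: the objects are the subsets of $P$; for $S,S'\subseteq P$ the set of morphisms from $S$ to $S'$ is $\boxsetb[Y]{S}{S'}$, two boxes being identified when there is a bijection between their vertex sets commuting with both port maps and preserving edge labels; the identity on $S$ is $\bid{S}$; and composition is $\odot$.

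Before checking the axioms one has to see that $\odot$ is a well-defined binary operation on morphisms. Given composable boxes $\beta_1\colon S_1\to S_2$ and $\beta_2\colon S_2\to S_3$, one can always rename the vertices of $\beta_2$ so as to meet the disjointness hypothesis $V_1\cap V_2=\emptyset$, and a transport-of-structure argument shows the result is independent, up to isomorphism of boxes, of the chosen disjoint representatives. It then remains to check that $\beta_1\odot\beta_2$ really is a box from $S_1$ to $S_3$: its graph is acyclic because each edge of $E_1'\cup E\cup E_2$ either stays inside $V_1'\eqdef V_1\setminus\pout[1]\paren{S_2}$, or stays inside $V_2$, or goes from $V_1'$ into $V_2$, so no path can ever return from $V_2$ to $V_1'$; and its set of sinks is exactly $\max G_2$, since every vertex of $V_1'$ still has an outgoing edge (the target of one of its $E_1$-edges, redirected or not, keeps it non-maximal) whereas a vertex of $V_2$ has the same outgoing edges as in $G_2$. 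Hence $\pout[2]$ is a bijection from $S_3$ onto the sinks of $\beta_1\odot\beta_2$, as the definition of a box requires.

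The identity laws are immediate by unfolding $\odot$. For $\bid{S}\odot\beta$, the graph of $\bid{S}$ has no edge and both its port maps are $[p_i\mapsto i]$, so $V_1'=E_1'=E=\emptyset$, the output map of the composite is that of $\beta$, and its input map sends each port $p$ to $\pin[\beta]\paren q$ precisely when the indices of $p$ and $q$ coincide, i.e.\ when $p=q$; thus $\bid{S}\odot\beta=\beta$ literally. For $\beta\odot\bid{S'}$, composition deletes the maximal vertices of $\beta$, retargets each of its edges to the corresponding vertex of $\bid{S'}$, and installs the port map of $\bid{S'}$ as the new output map; the net effect is merely to rename the maximal vertices of $\beta$ along a bijection, which is an isomorphism of boxes, so $\beta\odot\bid{S'}\cong\beta$.

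Associativity is the main obstacle, and the plan is to pin down a common normal form. For $\beta_i\colon S_i\to S_{i+1}$ ($i=1,2,3$) chosen with pairwise disjoint vertex sets, both $\paren{\beta_1\odot\beta_2}\odot\beta_3$ and $\beta_1\odot\paren{\beta_2\odot\beta_3}$ should be shown to coincide with the box whose vertex set is $\paren{V_1\setminus\pout[1]\paren{S_2}}\cup\paren{V_2\setminus\pout[2]\paren{S_3}}\cup V_3$, whose output map is $\pout[3]$, whose edge set is obtained by retargeting each edge $\tuple{x,a,\pout[1]\paren p}\in E_1$ to $\tuple{x,a,\pin[3]\paren q}$ when $\pin[2]\paren p=\pout[2]\paren q$ and to $\tuple{x,a,\pin[2]\paren p}$ otherwise, retargeting each edge $\tuple{x,a,\pout[2]\paren q}\in E_2$ to $\tuple{x,a,\pin[3]\paren q}$, and keeping every other edge, and whose input map sends $p\in S_1$ down the chain in the analogous fashion. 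Proving this is pure bookkeeping: in the left grouping the retargeting of an $E_1$-edge is done in two stages — first to $\pin[2]\paren p$ while forming $\beta_1\odot\beta_2$, then, if $\pin[2]\paren p=\pout[2]\paren q$, again to $\pin[3]\paren q$ while composing with $\beta_3$ — whereas in the right grouping the input map of $\beta_2\odot\beta_3$ already performs both stages at once; the $E_2$- and $E_3$-edges and the vertex sets are handled symmetrically, using the identity $\max\paren{\beta_1\odot\beta_2}=\max G_2$ established above so that no hidden relabelling of the intermediate box creeps in, and a short case analysis (whether $\pin[1]\paren p$ is maximal in $G_1$, and if so whether its $\pin[2]$-image is maximal in $G_2$) handles the input maps. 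The genuinely error-prone point is exactly keeping track of the maximal vertices of $G_1$ that are absorbed one level deep (into $V_2$) versus two levels deep (through $V_2$ into $V_3$), and of the corresponding edges; once the common normal form above is fixed, both parenthesisations visibly reduce to it. This is in fact the familiar fact that open graphs glued along their boundaries form a category, with $\odot$ amalgamating $G_1$ and $G_2$ over $S_2$.
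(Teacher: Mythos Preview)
Your proof is correct, but it takes a different route from the paper's for associativity. The paper factors the composition $\beta\odot\gamma$ into two steps: first form an intermediate box $\beta\xrightarrow{\epsilon}\gamma$ by taking the disjoint union and adding fresh $\epsilon$-labelled edges from each sink of $\beta$ to the corresponding $\pin[\gamma]$-vertex, then collapse all $\epsilon$-edges. Associativity then becomes a $2\times2$ commuting-squares argument, the crux being that $\epsilon$-collapse is confluent (its normal form is the quotient by the equivalence generated by $\epsilon$-edges). Your approach instead writes down an explicit three-fold normal form and checks by case analysis that both bracketings reduce to it.

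Both arguments work. The paper's decomposition is more conceptual and makes the ``gluing along boundaries'' structure visible without having to chase individual edges; it also scales to longer compositions without the case analysis growing. Your version is more hands-on, and as a bonus you explicitly verify something the paper leaves implicit, namely that $\beta_1\odot\beta_2$ is genuinely a box (the composite graph is acyclic and $\pout[2]$ is a bijection onto its sinks).
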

\begin{proof}
  To give a high level proof of the associativity of box composition,
  notice that the computation of $\beta\odot\gamma$ may be split in
  two steps:
  \begin{enumerate}
  \item compute an intermediate box $\beta\xrightarrow\epsilon\gamma$
    (with $\epsilon$ being a fresh label), build by keeping the input
    port map of $\beta$, the output map of $\gamma$, the disjoint
    union of their vertices and edges, and simply adding edges
    $x\xrightarrow\epsilon y$ whenever
    $\tuple{x,y}\in\max \beta\times\gamma$ such that
    $\pout[\beta](x)=\pin[\gamma](y)$.
  \item collapse all edges labelled by $\epsilon$ by identifying their
    source and target vertices. The effect of this will be the
    destruction of all the nodes from $\max\beta$, and the redirection
    of their incoming arrows to the corresponding vertices in
    $\gamma$.
  \end{enumerate}
  Suppose we have three boxes $\beta,\gamma,\delta$ with the
  appropriate input and output sets of ports (the input ports of
  $\gamma$ should be exactly the output ports of $\beta$...). We may
  now describe the two ways in which to compose them by the following
  diagram:
  \begin{center}
    \begin{tikzcd}
      \beta,\gamma,\delta\ar[r]\ar[d]\ar[rd,phantom,"(1)" description]&
      \beta,\gamma\xrightarrow{\epsilon_2}\delta
      \ar[r]\ar[d]\ar[rd,phantom,"(2)" description]&
      \beta,\gamma\odot\delta\ar[d]\\
      \beta\xrightarrow{\epsilon_1}\gamma,\delta
      \ar[r]\ar[d]\ar[rd,phantom,"(3)" description]&
      \beta\xrightarrow{\epsilon_1}\gamma\xrightarrow{\epsilon_2}\delta
      \ar[r]\ar[d]\ar[rd,phantom,"(4)" description]&
      \beta\xrightarrow{\epsilon_1}\gamma\odot\delta\ar[d]\\
      \beta\odot\gamma,\delta\ar[r]&
      \beta\odot\gamma\xrightarrow{\epsilon_2}\delta\ar[r]&
      \beta\odot\gamma\odot\delta
    \end{tikzcd}
  \end{center}
  The commutation of square $(1)$ is clear: the box
  $\beta\xrightarrow{\epsilon_1}\gamma\xrightarrow{\epsilon_2}\delta$
  could even be described in one step (just put edges with
  $\epsilon_1$ between $\max \beta$ and $\gamma$ and with
  $\epsilon_2$ between $\max\gamma$ and $\delta$.

  Squares $(2)$ and $(3)$ commute as well. For instance for square
  $(3)$, notice that collapsing edges $\epsilon_1$ in
  $\beta\xrightarrow{\epsilon_1}\gamma$ doesn't affect the set
  $\max\beta\xrightarrow{\epsilon_1}\gamma=\max\gamma$. Hence this has
  no bearing on the computation of
  $\beta\odot\gamma\xrightarrow{\epsilon_2}\delta$.

  Finally, for square $(4)$, an additional step could be added:
  replacing both labels $\epsilon_1$ and $\epsilon_2$ by a common
  label $\epsilon$, before collapsing the resulting graph along
  $\epsilon$-edges. This clearly produces the same result. The point
  is then to show that the collapsing operation is confluent. This is
  true, because one could compute the normal form from the beginning,
  as the set of equivalence classes of the smallest equivalence
  relation on vertices containing all pairs $x,y$ of vertices linked
  by an $\epsilon$-edge.

  The fact that for every box
  $\beta=\tuple{\pin,\tuple{V,E},\pout}\in\boxsetb A B$, we have
  $\bid A\odot \beta=\beta$ is straightforward from the
  definitions. 
  The vertices of the composite box are exactly those of $\beta$,
  because the image of $A$ through the output map of $\bid A$ is the
  set of vertices of $\bid A$ itself. As $\bid A$ has no edge, the
  edges of $\bid A\odot \beta$ are again simply those of $\beta$. By
  definition the output map is that of $\beta$, but so is the input
  map, because for every port $p_i\in A$, the images of $p_i$ via the
  input and output maps of $\bid A$ are both equal to $i$.  The fact
  that $\beta\odot\bid B$ is also equal to $\beta$ follows from a
  similar argument.
\end{proof}

\subsubsection{The category of typed boxes.}
\label{sec:categories-typed-boxes}

Since we want to represent slices of series-parallel graphs only, we
actually need to enforce a stronger typing discipline on boxes. Indeed
there are boxes in the above category that are slices of series
parallel graphs, but whose composition cannot be a part of a
series-parallel graph. For instance consider the four boxes below,
called $\beta_1$ through $\beta_4$:
\begin{equation*}
\begin{tikzpicture}[baseline=(pCo.south),yscale=.7]
  \position (0) (0,1);
  \position (1) (2,2);
  \position (2) (1,.5);
  \position (3) (2,1);
  \position (4) (2,0);
  \edge (0) (1)[a];\edge[below](0)(2)[b];\edge(2)(3)[c];\edge[below](2)(4)[d];
  \rect(4)(0)(1)(1);
  \iport(A)(0);\oport(B)(1);\oport(C)(3);\oport(D)(4);
\end{tikzpicture}~~~
\begin{tikzpicture}[baseline=(pCo.south),yscale=.7]
  \position (0) (0,1);
  \position (1) (1,1.5);
  \position (2) (2,0);
  \position (3) (2,2);
  \position (4) (2,1);
  \edge (0) (1)[a];\edge[below](0)(2)[b];\edge(1)(3)[c];\edge[below](1)(4)[d];
  \rect(2)(0)(3)(3);
  \iport(A)(0);\oport(B)(3);\oport(C)(4);\oport(D)(2);
\end{tikzpicture}~~~
\begin{tikzpicture}[baseline=(pCi.south),yscale=.7]
  \position (0) (0,1.5);
  \position (1) (1,0);
  \edge (0) (1)[e];
  \rect(1)(0)[1](0)(1);
  \iport[.5](B)(0);\iport[-.5](C)(0);\iport(D)(1)[0];\oport(E)(1);
\end{tikzpicture}~~~
\begin{tikzpicture}[baseline=(pCi.south),yscale=.7]
  \position (0) (0,0);
  \rect[1.5](0)(0)[1.5](0)(0);
  \iport[1](B)(0);\iport(C)(0);\iport[-1](D)(0);\oport(E)(0);
\end{tikzpicture}
\end{equation*}
\noindent
According to their interfaces, $\beta_1$ and $\beta_2$ can both
compose with $\beta_3$ and $\beta_4$. Indeed $\beta_1\odot\beta_4$,
$\beta_2\odot\beta_3$ and $\beta_2\odot\beta_4$ all yield boxes
containing series-parallel graphs. However the composition
$\beta_1\odot\beta_3$ produces the box shown in Figure~\ref{fig:bad}.
\begin{figure}[t]
  \centering
  $\begin{tikzpicture}[baseline=(pCo.south),yscale=.7]
    \position (0) (0,1);
    \position (1) (2,2);
    \position (2) (1,.5);
    \position (3) (2,1);
    \position (4) (2,0);
    \edge (0) (1)[a];\edge[below](0)(2)[b];\edge(2)(3)[c];\edge[below](2)(4)[d];
    \rect(4)(0)(1)(1);
    \iport(A)(0);\oport(B)(1);\oport(C)(3);\oport(D)(4);
  \end{tikzpicture}\odot
\begin{tikzpicture}[baseline=(pCi.south),yscale=.7]
  \position (0) (0,1.5);
  \position (1) (1,0);
  \edge (0) (1)[e];
  \rect(1)(0)[1](0)(1);
  \iport[.5](B)(0);\iport[-.5](C)(0);\iport(D)(1)[0];\oport(E)(1);
\end{tikzpicture}=
  \begin{tikzpicture}[baseline=(pAi.south),yscale=.7]
    \position (0) (0,1); \position (1) (1,2); \position (2) (1,0);
    \position (4) (2,1); \edge (0)(1)[a];\edge[below](0)(2)[b];
    \edge[right](2)(1)[c];\edge[below](2)(4)[d];\edge (1)(4)[e]; 
    \rect(2)(0)(1)(4); \iport(A)(0);\oport(E)(4);
  \end{tikzpicture}$
  \caption{An example of ``bad'' composition}
  \label{fig:bad}
\end{figure}
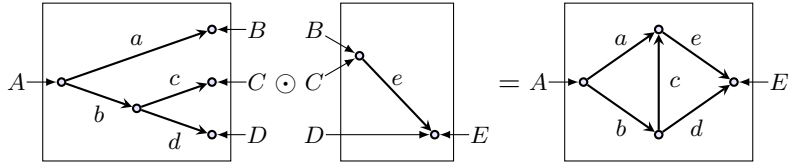
The graph in this box is not series-parallel, and is in fact the
forbidden subgraph of the class of series-parallel
graphs~\cite{Valdes79}. To prevent this situation statically, we
introduce a stronger notion of typing. The types are trees with
leaves labelled with ports:
\begin{defi}[Type]
  A \emph{type} over $S\subseteq P$ is a triple
  $\tau=\tuple{V,E,\lambda}$ such that $\tuple{V,E}$ is a proper
  unlabelled top-down tree, and $\lambda$ is a bijective function from
  $S$ to $\max\tuple{V,E}$.  The set of types over subsets of $P$ is
  written $\typ[P]$.
\end{defi}
\noindent
As before, types are considered up to bijective renaming of internal
vertices. It is then a simple observation to notice that $\typ[P]$ is
finite (recall that $P$ was assumed to be finite as well). We present
two examples of such types in Figure~\ref{fig:extyp}.

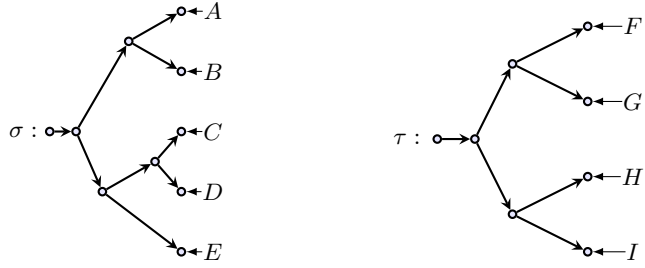
\begin{figure}[t]
  \centering
  \begin{tikzpicture}[yscale=.8,xscale=0.7]
        \state(t0) (0,2); \state(t1) (.5,2); \state(t2) (1.5,3.5);
        \state(t6) (1,1); \state(t9) (2,1.5); \state(t3) (2.5,4);
        \state(t4) (2.5,3); \state(t5) (2.5,2); \state(t7) (2.5,1);
        \state(t8) (2.5,0);
        \edge (t0) (t1); \edge (t1) (t2); \edge (t2) (t3);
        \edge (t2) (t4); \edge (t1) (t6); \edge (t6) (t9);
        \edge (t6) (t8); \edge (t9) (t5); \edge (t9) (t7);
        \oport (A) (t3); \oport (B) (t4); \oport (C) (t5);
        \oport (D) (t7); \oport (E) (t8);
        \node (lblt) at ($(t0) - (0.5,0)$) {$\sigma :$};
      \end{tikzpicture}\hspace{2cm}
      \begin{tikzpicture}
        \state(t0) (0,2); \state(t1) (.5,2);
        \state(t6) (1,1); \state(3)(2,1.5); \state(5)(2,0.5);
        \state(6)(1,3); \state(7)(2,2.5); \state(8)(2,3.5);
        \edge (t0) (t1); \edge (t1) (6); \edge (t1) (t6);
        \edge (t6) (3); \edge (t6) (5); \edge[below] (6) (7);
        \edge (6) (8);
        \oport (F) (8);\oport (G) (7);\oport (I) (5);\oport (H) (3);
        \node (lblt) at ($(t0) - (.4,0)$) {$\tau :$};
      \end{tikzpicture}
  \caption{Examples of types}
  \label{fig:extyp}
\end{figure}

We may forget about the label information in a box: the $a$-erasing of
a box $\beta$ is the box $\ers\beta$ where all labels are replaced by
an arbitrary letter $a\in Y$.  When the labels are not relevant
(or, in the next section when we label edges with expressions rather
than variables), we can define SP-reductions of boxes. Given a box
$\beta=\tuple{\pin,G,\pout}$, if $G\spred G'$, and if no vertex in the
image of $\pin$ was erased in the reduction, we write
$\beta\spred\tuple{\pin,G',\pout}$. Composition commutes with
SP-reductions: if $\beta\spred\beta'$, then
$\beta\odot\gamma\spred\beta'\odot\gamma$ and
$\gamma\odot\beta\spred\gamma\odot\beta'$.

A type $\tau=\tuple{V,E,\lambda}$ over $S$ may be seen as a box from
$\set\iota$ to $S$: if $r$ is the root of $\tau$ and $a\in Y$, we
can build the box $\typtobox{\tau}$ from~$\set\iota$ to~$S$ as
$\tuple{[\iota\mapsto r],\tuple{V,E'},\lambda}$, with
$E'=\setcompr{\tuple{x,a,y}}{\tuple{x,y}\in E}$.
For every type $\tau$, $\typtobox\tau=\ers{\typtobox\tau}$; for all
boxes $\beta\in\boxsetb {S_1} {S_2}$ and
$\gamma\in\boxsetb {S_2} {S_3}$, we have
$\ers{\beta\odot\gamma}=\ers\beta\odot\ers\gamma$.

Now we can define typed boxes.
\begin{defi}[Typed boxes]\label{sec:definitions-3}
  Let $\beta\in\boxsetb S {S'}$ be a box, $\sigma$ and $\tau$ be types
  respectively over $S$ and ${S'}$. \emph{$\beta$ has the type
    $\sigma\to\tau$} if
  $\ers{\typtobox\sigma \odot \beta}\spred[\star]\typtobox\tau$. We
  write $\tboxsetb[Y]\sigma\tau$ for the set of boxes over $Y$ of
  type $\sigma\to\tau$, and $\tboxset[Y]$ for the set of typed boxes
  over $Y$.
\end{defi}

\begin{exa}\label{ex:typing}
  Consider the following type $\sigma$ and box $\beta$:
  \begin{align*}
    \begin{tikzpicture}[yscale=.8,xscale=0.7]
      \begin{scope}
        \state(t0) (0,2);
        \state(t1) (.5,2);
        \state(t2) (1.5,3.5);
        \state(t6) (1,1);
        \state(t9) (2,1.5);
        \state(t3) (2.5,4);
        \state(t4) (2.5,3);
        \state(t5) (2.5,2);
        \state(t7) (2.5,1);
        \state(t8) (2.5,0);
        \edge (t0) (t1);
        \edge (t1) (t2);
        \edge (t2) (t3);
        \edge (t2) (t4);
        \edge (t1) (t6);
        \edge (t6) (t9);
        \edge (t6) (t8);
        \edge (t9) (t5);
        \edge (t9) (t7);
        \oport (A) (t3);
        \oport (B) (t4);
        \oport (C) (t5);
        \oport (D) (t7);
        \oport (E) (t8);
        \node (lblt) at ($(t0) - (0.5,0)$) {$\sigma :$};
      \end{scope}
      \begin{scope}[xshift=5cm]
        \state(0)(1,1);
        \state(1)(1,2);
        \state(2)(2,1.5);
        \state(3)(3,1.5);
        \state(4)(1,0);
        \state(5)(3,0);
        \state(6)(1,3.5);
        \state(7)(3,3);
        \state(8)(3,4);
        \clbox (4) (8);
        \edge[below] (0) (2)[a];
        \edge (1) (2)[b];
        \edge (2) (3)[c];
        \edge (4) (5)[d];
        \edge[below] (6) (7)[a];
        \edge (6) (8)[b];       
        \iport[.5] (A) (6);\iport[-.5] (B) (6);
        \iport (C) (1);\iport (D) (0);\iport (E) (4);
        \oport (F) (8);\oport (G) (7);\oport (I) (5);\oport (H) (3);
        \node (lblt) at ($(pCi) - (0.5,0)$) {$\beta :$};
      \end{scope}
    \end{tikzpicture}
  \end{align*}
  First, we glue them together, and erase the labels. This yields the
  graph:
  \begin{align*}
    \begin{tikzpicture}
      \state(t0) (0,2);
      \state(t1) (.5,2);
      \state(t2) (1.5,3);
      \state(t6) (1,1);
      \state(t9) (2,1.5);
      \state(0)(3,1);
      \state(1)(3,2);
      \state(2)(4,1.5);
      \state(3)(5,1.5);
      \state(4)(3,0.5);
      \state(5)(5,0.5);
      \state(6)(3,3);
      \state(7)(5,2.5);
      \state(8)(5,3.5);
      \edge (t0) (t1);
      \edge (t1) (t2);
      \edge[bend left] (t2) (6);
      \edge[bend right] (t2) (6);
      \edge (t1) (t6);
      \edge (t6) (t9);
      \edge (t6) (4);
      \edge (t9) (1);
      \edge (t9) (0);
      \edge[below] (0) (2);
      \edge (1) (2);
      \edge (2) (3);
      \edge (4) (5);
      \edge[below] (6) (7);
      \edge (6) (8);
      \oport (F) (8);\oport (G) (7);\oport (I) (5);\oport (H) (3);
    \end{tikzpicture}
  \end{align*}
  Then, we apply as many reductions steps as much as possible:
  \begin{align*}
    \begin{tikzpicture}[scale=.7]
      \begin{scope}
        \state(t0) (0,2);
        \state(t1) (.5,2);
        \state(t2) (1.5,3);
        \state(t6) (1,1);
        \state(t9) (2,1.5);
        \state(0)(3,1);
        \state(1)(3,2);
        \state(2)(4,1.5);
        \state(3)(5,1.5);
        \state(4)(3,0.5);
        \state(5)(5,0.5);
        \state(6)(3,3);
        \state(7)(5,2.5);
        \state(8)(5,3.5);
        \edge (t0) (t1);
        \edge (t1) (t2);
        \edge[bend left] (t2) (6);
        \edge[bend right] (t2) (6);
        \edge (t1) (t6);
        \edge (t6) (t9);
        \edge (t6) (4);
        \edge (t9) (1);
        \edge (t9) (0);
        \edge[below] (0) (2);
        \edge (1) (2);
        \edge (2) (3);
        \edge (4) (5);
        \edge[below] (6) (7);
        \edge (6) (8);
        \oport (F) (8);\oport (G) (7);\oport (I) (5);\oport (H) (3);
      \end{scope}
      \begin{scope}[shift={(8cm,0cm)}]
        \state(t0) (0,2);
        \state(t1) (.5,2);
        \state(t6) (1,1);
        \state(3)(2,1.5);
        \state(5)(2,0.5);
        \state(6)(1,3);
        \state(7)(2,2.5);
        \state(8)(2,3.5);
        \edge (t0) (t1);
        \edge (t1) (6);
        \edge (t1) (t6);
        \edge (t6) (3);
        \edge (t6) (5);
        \edge[below] (6) (7);
        \edge (6) (8);
        \oport (F) (8);\oport (G) (7);\oport (I) (5);\oport (H) (3);
        \node (lblt) at ($(t0) - (1,0)$) {$\spred[\star]$};
      \end{scope}
    \end{tikzpicture}
  \end{align*}
  The last graph is the box corresponding to a tree $\tau$. Hence we
  can state that $\beta$ has the type $\sigma\rightarrow\tau$.
\end{exa}


\begin{rem}
  Notice that the type of a box is not unique: a single box may have
  multiple types. However, given an input type $\sigma$ and a box
  $\beta$, there is at most one output type $\tau$ such
  that~$\beta\in\tboxsetb\sigma\tau$.
\end{rem}

\noindent When $\sigma$ is a type over $S$, we set
$\bid\sigma\eqdef\bid S$.
\begin{prop}\label{prop:typcat}
  We have a category $\tcatbox$ of typed boxes, with $\typ[P]$ as the
  set of objects, and $\tboxsetb[Y]\tau\sigma$ as the set of
  morphisms from $\tau$ to $\sigma$.
\end{prop}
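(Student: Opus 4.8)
The plan is to build $\tcatbox$ directly on top of the category $\catbox$ of untyped boxes from Proposition~\ref{prop:catbox}: a typed box is just an ordinary box carrying a compatible source and target type, its composite is the underlying $\odot$, and its identity at a type $\sigma$ over a port set $S$ is $\bid\sigma=\bid S$. With this reading, the associativity equation $(\beta\odot\gamma)\odot\delta=\beta\odot(\gamma\odot\delta)$ and the unit equations $\bid\sigma\odot\beta=\beta=\beta\odot\bid\tau$ for $\beta\in\tboxsetb\sigma\tau$ are literally instances of the corresponding equations of $\catbox$, so nothing has to be re-proved there. The two genuine obligations are: \emph{(a)} that $\bid\sigma$ has type $\sigma\to\sigma$; and \emph{(b)} that $\odot$ restricts to typed boxes, i.e.\ that $\beta\in\tboxsetb\sigma\tau$ and $\gamma\in\tboxsetb\tau\rho$ imply $\beta\odot\gamma\in\tboxsetb\sigma\rho$. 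For \emph{(b)} note first that $\beta\odot\gamma$ is indeed defined in $\catbox$: $\tau$ is a type over a single port set $S'$, which is simultaneously the set of output ports of $\beta$ and the set of input ports of $\gamma$. And \emph{(a)} is immediate: Proposition~\ref{prop:catbox} gives $\typtobox\sigma\odot\bid S=\typtobox\sigma$, and since $\typtobox\sigma=\ers{\typtobox\sigma}$ we get $\ers{\typtobox\sigma\odot\bid\sigma}=\typtobox\sigma$, which SP-reduces to $\typtobox\sigma$ in zero steps.

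For \emph{(b)} I would simply chain the two given reductions, using only the compatibilities recorded just before Definition~\ref{sec:definitions-3}. By associativity of $\odot$ and the identity $\ers{\beta'\odot\gamma'}=\ers{\beta'}\odot\ers{\gamma'}$, we have $\ers{\typtobox\sigma\odot\beta\odot\gamma}=\ers{\typtobox\sigma\odot\beta}\odot\ers\gamma$. Since $\beta$ has type $\sigma\to\tau$, $\ers{\typtobox\sigma\odot\beta}\spred[\star]\typtobox\tau$; and since $\odot$ commutes with SP-reduction in its left argument (iterate the single-step property), this lifts to $\ers{\typtobox\sigma\odot\beta}\odot\ers\gamma\spred[\star]\typtobox\tau\odot\ers\gamma$. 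Next, $\typtobox\tau\odot\ers\gamma=\ers{\typtobox\tau}\odot\ers\gamma=\ers{\typtobox\tau\odot\gamma}$, which $\spred[\star]\typtobox\rho$ because $\gamma$ has type $\tau\to\rho$. Concatenating the two reduction sequences yields $\ers{\typtobox\sigma\odot\beta\odot\gamma}\spred[\star]\typtobox\rho$, that is, $\beta\odot\gamma\in\tboxsetb\sigma\rho$.

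The only place calling for care is \emph{(b)}, where one has to keep straight the interplay of the label-erasing operation $\ers\cdot$, the composition $\odot$, and SP-reduction $\spred$; but all the commutation facts that are needed ($\typtobox\tau=\ers{\typtobox\tau}$, $\ers{\beta\odot\gamma}=\ers\beta\odot\ers\gamma$, and compatibility of $\odot$ with $\spred$) have already been established, so this is a short diagram chase rather than fresh combinatorics. It is also worth observing that the argument for \emph{(b)} uses no confluence of SP-rewriting — we merely exhibit one reduction witness; confluence is needed only to know, as in the remark following Example~\ref{ex:typing}, that $\beta$ together with its input type determines the output type uniquely, which is what makes each $\tboxsetb\sigma\tau$ behave as a genuine hom-set.
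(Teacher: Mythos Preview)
Your proof is correct and essentially identical to the paper's. The paper makes exactly the same two observations: that $\ers{\typtobox\sigma\odot\bid S}=\ers{\typtobox\sigma}=\typtobox\sigma$ types the identity, and then chains $\ers{\typtobox\sigma\odot(\beta\odot\gamma)}=\ers{\typtobox\sigma\odot\beta}\odot\ers\gamma\spred[\star]\typtobox\tau\odot\ers\gamma=\ers{\typtobox\tau\odot\gamma}\spred[\star]\typtobox\theta$ to type the composite, with the associativity and unit laws inherited silently from $\catbox$.
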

\begin{proof}
  It suffices to show that the identity box can be typed:
  $\ers{\typtobox\sigma \odot \bid
    S}=\ers{\typtobox\sigma}=\typtobox\sigma$;
  and that composition of typed boxes is a typed box. Let
  ${\beta,\gamma\in\tboxsetb \sigma\tau
    \times\tboxsetb{\tau}{\theta}}$, we show that 
  $\beta\odot\gamma\in\tboxsetb{\sigma}{\theta}$:
  \begin{align*}
    \ers{\typtobox\sigma \odot (\beta\odot\gamma)}
    =\ers{\typtobox\sigma \odot \beta}\odot\ers{\gamma}
    \spred[\star]\typtobox\tau\odot\ers{\gamma}
    =\ers{\typtobox\tau\odot\gamma}
    \spred[\star]\typtobox\theta
  \end{align*}
\end{proof}

\begin{rem}
  \label{rem:tkleene}
  Given a category $\mathcal C$, one can form a Kleene category
  $\mathcal K$ (also called typed Kleene
  algebra~\cite{kozen1998typed,moller1999typed}) whose
  types are the objects of the category, and where the morphisms from
  $a$ to $b$ are subsets of the set $\C\bracket{a,b}$ of morphisms
  from $a$ to $b$ in $\C$. The product in this algebra is the
  pointwise composition of morphisms, and sum is union. In particular,
  $\pset{\boxset}$ and $\pset{\tboxset}$ are Kleene categories.
\end{rem}

\subsubsection{Templates.}
\label{sec:generalised-boxes}

Another important notion we need for the second direction of our
Kleene theorem is that of template. In the proof of the classical
Kleene theorem, one moves from automata to generalised automata, which
are labelled with regular expressions rather than with letters. This
allows for the label of a single transition to represent many paths in
the original automaton. Finite templates will serve this function in
our proof. A finite template is a finite set of boxes sharing the same
input and output ports, and labelled with simple expressions.
\begin{defi}[(Finite) Template]
  Let $S$ and $S'$ be sets of of ports. A \emph{template}
  (respectively \emph{finite template}) from $S$ to ${S'}$ is a set
  (resp. finite set) $\Gamma\subseteq\boxsetb[{{\SExp[Y]}}] S {S'}$ of
  boxes from $S$ to ${S'}$ labelled with simple expressions.  If
  furthermore $\sigma$ is a type over $S$, $\tau$ is a type over $S'$
  and $\Gamma\subseteq\tboxsetb[{{\SExp[Y]}}]\sigma\tau$, we say that
  $\Gamma$ has type $\sigma\to\tau$.  We write $\gboxset S {S'}$ for
  the set of finite templates from $S$ to ${S'}$, and
  $\gtboxset \sigma\tau$ for finite templates of type $\sigma\to\tau$.
\end{defi}
\noindent One extracts boxes out of templates as follows.
\begin{defi}[Boxes generated by a template]
  Let $\Gamma\in\gboxset S {S'}$ be a template from $S$ to
  ${S'}$. Then $\beta\in\boxsetb[Y] S {S'}$ is \emph{generated} by
  $\Gamma$ if it can be obtained from a box
  $\tuple{\pin,G,\pout}\in\Gamma$ by replacing each edge
  $x\xrightarrow{~e~}y$ by a series-parallel graph $G'\in\Gr e$ with
  input vertex $x$ and output vertex $y$. We write $\rlang \Gamma$ for
  the set of boxes generated by $\Gamma$.
\end{defi}

\noindent
For instance in the example below the template $\Gamma$ generates
all boxes of the shapes $\beta_n$ and $\delta_n$, for $n>0$:
\begin{align*}
  \Gamma=&\left\{
  \begin{tikzpicture}[scale=.8,baseline=(2.north)]
    \state (0) (1,0.2);\state (1) (1,1);\state (2) (2,.6);
    \state (3) (3,.6);
    \state (4) (1,2);\state (5) (3,2);
    \edge[below] (0) (2)[a];\edge (1) (2)[b];
    \edge (2) (3)[a\sumtm b];\edge (4) (5)[a^+\cdot b];
    \rect[.3](0)(0)(5)(5);
    \iport (A) (4);\iport (B) (1);\iport (C) (0);
    \oport (D) (5);\oport (E) (3);
  \end{tikzpicture}\right\}
  &\beta_n=~&
  \begin{tikzpicture}[yscale=.8,baseline=(2.north),xscale=1.1]
    \state (0) (1,0.2);\state (1) (1,1);\state (2) (2,.6);
    \state (3) (3,.6);
    \state (4) (1,2);\state (41) (1.5,2);\state (43) (2,2);
    \state (44) (2.5,2);
    \state (5) (3,2);
    \edge[below] (0) (2)[a];\edge (1) (2)[b];\edge (2) (3)[a];
    \edge (4) (41)[a];
    \edge (43) (44)[a];\edge (44) (5)[b];
    \rect[.3](0)(0)(5)(5);
    \iport (A) (4);\iport (B) (1);\iport (C) (0);
    \oport (D) (5);\oport (E) (3);
    \node()at($(41)!.5!(43)$){$\dots$};
    \draw [decorate,decoration={brace,amplitude=1.5ex}] (44) -- (4)
    node[midway,below,yshift=-1ex]{$n$};
  \end{tikzpicture}
  &\delta_n=~&
  \begin{tikzpicture}[yscale=.8,baseline=(2.north),xscale=1.1]
    \state (0) (1,0.2);\state (1) (1,1);\state (2) (2,.6);
    \state (3) (3,.6);
    \state (4) (1,2);\state (41) (1.5,2);\state (43) (2,2);
    \state (44) (2.5,2);
    \state (5) (3,2);
    \edge[below] (0) (2)[a];\edge (1) (2)[b];\edge (2) (3)[b];
    \edge (4) (41)[a];
    \edge (43) (44)[a];\edge (44) (5)[b];
    \rect[.3](0)(0)(5)(5);
    \iport (A) (4);\iport (B) (1);\iport (C) (0);
    \oport (D) (5);\oport (E) (3);
    \node()at($(41)!.5!(43)$){$\dots$};
    \draw [decorate,decoration={brace,amplitude=1.5ex}] (44) -- (4)
    node[midway,below,yshift=-1ex]{$n$};
  \end{tikzpicture}
\end{align*}
\begin{fact}
  If $\Gamma$ has type $\sigma\to\tau$, then
  $\rlang \Gamma\subseteq \tboxsetb[Y]\sigma\tau$.
\end{fact}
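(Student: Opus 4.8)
The plan is to fix a template $\Gamma$ of type $\sigma\to\tau$ and an arbitrary box $\gamma\in\rlang\Gamma$, and to show directly that $\ers{\typtobox\sigma\odot\gamma}\spred[\star]\typtobox\tau$, which is precisely the statement $\gamma\in\tboxsetb[Y]\sigma\tau$. By the definition of $\rlang\Gamma$, such a $\gamma$ arises from some box $\beta=\tuple{\pin,G,\pout}\in\Gamma$ by replacing each edge $x\xrightarrow{~e~}y$ of $G$ by a series-parallel graph $G_e\in\Gr e$ with input $x$ and output $y$, leaving $\pin$ and $\pout$ untouched. Since $\Gamma$ has type $\sigma\to\tau$ we already know $\ers{\typtobox\sigma\odot\beta}\spred[\star]\typtobox\tau$, so it suffices to relate $\gamma$ to $\beta$ after erasing labels.

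The key lemma I would isolate is that $\ers\gamma\spred[\star]\ers\beta$. To prove it, note first that $\beta$ is labelled with \emph{simple} expressions, so each $e$ above lies in $\SExp[Y]$; hence $\tms e\subseteq\STrm[Y]$ and every graph in $\Gr e$ is series-parallel (by the lemma identifying series-parallel graphs with graphs of simple terms). In particular each $G_e$ is series-parallel with input $x$ and output $y$, so $\ers{G_e}$ SP-reduces to the single edge $x\xrightarrow{~a~}y$. In $\ers\gamma$ the inserted copy of $\ers{G_e}$ meets the rest of the graph only at $x$ and $y$, and every vertex it contributes is fresh and internal — in particular not in the image of $\pin$, and (lying on a directed path to $y$ inside $G_e$, since an acyclic graph with a unique sink routes every vertex to that sink) not a sink of $\gamma$. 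Consequently each of these SP-reductions can be carried out inside $\ers\gamma$ without erasing a port vertex, and performing them for every edge of $G$ collapses $\ers\gamma$ to $\ers\beta$.

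With this in hand the conclusion follows by a short computation using three facts already recorded in the text: $\ers{\typtobox\sigma}=\typtobox\sigma$, the homomorphism property $\ers{\beta_1\odot\beta_2}=\ers{\beta_1}\odot\ers{\beta_2}$, and the fact that $\odot$ commutes with SP-reductions (iterated to $\spred[\star]$). Namely,
\[
\ers{\typtobox\sigma\odot\gamma}
=\typtobox\sigma\odot\ers\gamma
\spred[\star]\typtobox\sigma\odot\ers\beta
=\ers{\typtobox\sigma\odot\beta}
\spred[\star]\typtobox\tau,
\]
so $\gamma\in\tboxsetb[Y]\sigma\tau$; as $\gamma\in\rlang\Gamma$ was arbitrary, $\rlang\Gamma\subseteq\tboxsetb[Y]\sigma\tau$.

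I expect the only genuine subtlety — the "main obstacle" in an otherwise routine argument — to be the locality claim in the second paragraph: one must check that when $G_e$ is glued into the ambient graph it touches it only at $x$ and $y$, so that its internal vertices retain exactly the degree and adjacency properties that make the SP-rewriting rules applicable as they would be on $G_e$ in isolation (in particular the series rule, which forbids the intermediate vertex from being input, output, or having an extra adjacent edge), and that no erased vertex is an image of $\pin$ or a sink, so that the reduction is legitimate as a reduction of \emph{boxes}. Everything else is bookkeeping with the already-established identities.
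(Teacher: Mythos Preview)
Your proof is correct and the approach is exactly the natural one. Note, however, that the paper does not actually prove this statement: it is recorded as a \emph{Fact} and left without argument. So there is no ``paper's own proof'' to compare against; you have supplied the missing details.

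Your key lemma $\ers\gamma\spred[\star]\ers\beta$ is the right observation, and your justification for it is sound: each inserted $G_e$ is series-parallel (since the edge labels in $\beta$ are simple expressions), its internal vertices are fresh and lie strictly between $x$ and $y$, so the SP-reductions collapsing $G_e$ to a single edge are legitimate box reductions (no vertex in the image of $\pin$ is erased, and no sink is touched). The final chain of equalities and reductions then mirrors precisely the computation the paper uses in the proof of Proposition~\ref{prop:typcat}, so it fits the surrounding formalism.

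One small remark: in the equality $\typtobox\sigma\odot\ers\gamma\spred[\star]\typtobox\sigma\odot\ers\beta$, the labels produced by the SP-rewriting rules are expressions like $a\cdot a$ or $a\cap a$ rather than the bare letter $a$, so the reduct is only $\ers\beta$ after re-erasing. The paper itself is tacit on this point (the same phenomenon occurs in its proof of Proposition~\ref{prop:typcat}), so you are in good company; but if you want to be fully precise you could either note that the comparison is up to labels, or insert an $\ers{\cdot}$ after the reduction step.
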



\noindent
According to Remark~\ref{rem:tkleene}, templates and typed templates
form Kleene categories. The function $\rlang\argument$ used to extract
boxes from templates is a functor from the Kleene category $\gboxsetc$
to the Kleene category $\pset\boxset$, and it preserves the regular
operations.

If $\Gamma,\Delta$ are finite templates of appropriate sort, then
$\Gamma\sumtm\Delta$ and $\Gamma\cdot\Delta$ are also finite templates.
However the Kleene star of a finite set of boxes yields an infinite
set of boxes. Even worse, this set of boxes cannot always be generated
by a finite template. For instance, the star of the box $\delta$ below
yields all boxes $\delta^n$, for $n\in\Nat$:
\begin{align*}
  \begin{tikzpicture}[yscale=.6,baseline=(A.south)]
    \state (x1) (0.5,1);\state (x0) (0.5,-0.5);
    \state (x2) (1.5,1);\state (x3) (1.5,-0.5);
    \edge (x1) (x2)[a];\edge (x0) (x3)[b];
    \rect[.75] (x0) (x1)[.75] (x2) (x3); 
    \iport (A)[i] (x1);\iport (B)[j] (x0);
    \oport (A) (x2); \oport (B) (x3);
    \node () at ($(i)!.5!(j)-(.5,0)$) {$\delta=$};
  \end{tikzpicture}\hspace{2cm}
  \begin{tikzpicture}[yscale=.6,baseline=(A.south)]
    \state (a1) (0.5,1); \state (b1) (0.5,-0.5);
    \state (a2) ($(a1)+(1,0)$);\state (b2) ($(b1)+(1,0)$);
    \state (a3) ($(a2)+(1,0)$);\state (b3) ($(b2)+(1,0)$);
    \state (a4) ($(a3)+(1,0)$);\state (b4) ($(b3)+(1,0)$);
    \state (a5) ($(a4)+(1,0)$);\state (b5) ($(b4)+(1,0)$);
    \edge (a1) (a2)[a];\edge (a2) (a3)[a];\edge (a4) (a5)[a];
    \node () at ($(a3)!.5!(a4)$) {$\dots$};
    \draw [decorate,decoration={brace,amplitude=1.5ex}] (a5) -- (a1)
    node[midway,below,yshift=-1ex]{$n$};
    \edge (b1) (b2)[b];\edge (b2) (b3)[b];\edge (b4) (b5)[b];
    \node () at ($(b3)!.5!(b4)$) {$\dots$};
    \draw [decorate,decoration={brace,amplitude=1.5ex}] (b5) -- (b1)
    node[midway,below,yshift=-1ex]{$n$};
    \rect[.75] (b1) (b1)[.75] (a5) (a5); 
    \iport (A)[i] (a1);\iport (B)[j] (b1);
    \oport (A) (a5); \oport (B) (b5);
    \node () at ($(i)!.5!(j)-(.5,0)$) {$\delta^n=$};
  \end{tikzpicture}
\end{align*}
This set of boxes cannot be represented by a finite number of boxes,
as the two branches have no way to synchronise. (This means that we
cannot ensure there will be exactly the same number of iterations on
both branches.) Fortunately, it is possible to define Kleene star for
a rich enough class of templates, namely for \emph{atomic} templates.

\subsubsection{Atomic boxes and templates.}
\label{sec:atomic-boxes}
\begin{defi}[Support]
  The \emph{support} of a box $\beta\in\boxsetb S S$ is the following set
  \begin{align*}
    \supp\beta\eqdef\setcompr {p} {\pin\paren p\neq\pout\paren p}
  \end{align*}
  The support of a template $\Gamma\in\gboxset S S$ is defined as
  $\bigcup_{\beta\in\Gamma}\supp\beta$.
\end{defi}
\noindent
Intuitively, the support constitutes the irreflexive part of a
box. In particular, ${\supp{\bid\sigma}}$ is always empty. 

\begin{fact}
  If $\Gamma,\Gamma'$ are templates of type $\sigma\to\sigma$ with
  disjoint support, then $\Gamma\odot\Gamma'=\Gamma'\odot\Gamma$.
\end{fact}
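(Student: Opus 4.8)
The plan is to first reduce the statement to single boxes and then isolate a structural fact about typed boxes. By Remark~\ref{rem:tkleene}, composition in the Kleene category of templates is pointwise, so $\Gamma\odot\Gamma'=\setcompr{\beta\odot\beta'}{\beta\in\Gamma,\ \beta'\in\Gamma'}$ and symmetrically for $\Gamma'\odot\Gamma$; it therefore suffices to show $\beta\odot\beta'=\beta'\odot\beta$ for a fixed $\beta\in\Gamma$ and $\beta'\in\Gamma'$, keeping in mind that $\supp\beta$ and $\supp\beta'$ are disjoint since they are contained in $\supp\Gamma$ and $\supp\Gamma'$. I would write $A\eqdef\supp\beta$, $B\eqdef\supp\beta'$ and $C\eqdef S\setminus(A\cup B)$, where $S$ is the common port set, and think of the three groups of ports $A$, $B$, $C$ as the strands on which, respectively, only $\beta$ is active, only $\beta'$ is active, and neither is.

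The key step — and, I expect, the main obstacle — is the following claim about a single box: \emph{if $\beta$ is a typed box, say of type $\sigma\to\tau$, and $p\notin\supp\beta$, then the vertex $v_p\eqdef\pin(p)=\pout(p)$ is isolated in the underlying graph of $\beta$.} Since $v_p\in\max$ of that graph, it has no outgoing edge; the content is that it has no incoming edge either. I would argue by contradiction: if $x\to v_p$ were an edge of $\beta$'s graph, then in $\ers{\typtobox\sigma\odot\beta}$ — which, by definition of the type, SP-reduces to the tree $\typtobox\tau$ — the sink $v_p$ would have two incoming edges, one ultimately coming from $x$ and one from the tree part (the redirected edge entering the leaf labelled $p$ of $\sigma$). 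Since SP-reductions never merge two distinct vertices, these two in-edges of the persistent sink $v_p$ can only disappear through a parallel reduction, which requires them first to become parallel, hence to be traced through series reductions back to a common source; but tracing the tree-side stays in the tree, while tracing the $x$-side stays in the box graph until it possibly reaches an attachment vertex $\pin(q)$ (with $q\neq p$, else $v_p$ would be an ancestor of $x$), after which the two sides can only meet at the least common ancestor of the leaves $p,q$ in $\sigma$, a vertex of out-degree at least two that no series reduction erases. Either this common ancestor never appears — the topmost ancestor of $x$ is then a second source of the graph — or it appears and blocks the merge; in both cases the graph cannot reduce to the tree $\typtobox\tau$, in which the leaf $p$ has in-degree one. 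This rules out the edge $x\to v_p$. Getting this tracing argument right, in particular handling the possibility that $\beta$'s graph is disconnected, is the delicate part; an alternative I would try is to derive the claim from Lemma~\ref{lem:gluecenter} applied componentwise to the gluing of $\typtobox\sigma$ with the connected components of $\beta$'s graph.

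Granting the claim, the conclusion is a direct computation. Using the $\epsilon$-edge presentation of $\odot$ from the proof of Proposition~\ref{prop:catbox}, $\beta\odot\beta'$ is the disjoint union of the two underlying graphs with an edge $\pout(p)\to\pin(p)$ added at each port $p$ (the first endpoint in $\beta$, the second in $\beta'$) and these edges then contracted. By the claim: for $p\in A$ the vertex $\pin(p)=\pout(p)$ of $\beta'$ is isolated, so the contraction merely absorbs it into the vertex $\pout(p)$ of $\beta$; for $p\in B$ the isolated vertex of $\beta$ is absorbed into the vertex $\pin(p)$ of $\beta'$; and for $p\in C$ two isolated vertices, one from each graph, are fused into one. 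Hence $\beta\odot\beta'$ equals the box obtained by placing side by side the active part of $\beta$ on the strands $A$, the active part of $\beta'$ on the strands $B$, and $|C|$ isolated vertices for the strands $C$, with input and output maps inherited from $\beta$ on $A$, from $\beta'$ on $B$, and the obvious ones on $C$. This description is visibly symmetric under exchanging $(\beta,A)$ with $(\beta',B)$, so repeating the computation for $\beta'\odot\beta$ yields the very same box; therefore $\beta\odot\beta'=\beta'\odot\beta$, and consequently $\Gamma\odot\Gamma'=\Gamma'\odot\Gamma$.
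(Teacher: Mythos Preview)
The paper states this as a \emph{Fact} without proof, so there is no paper argument to compare against; your proposal is therefore doing strictly more than the paper, and the question is only whether it is correct.

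Your overall strategy is sound: reduce to single boxes by pointwise composition, isolate the structural claim that for a typed box $\beta\in\tboxsetb\sigma\sigma$ every port $p\notin\supp\beta$ has $v_p=\pin(p)=\pout(p)$ isolated in the underlying graph, and then read off commutativity from the $\epsilon$-edge description of composition in Proposition~\ref{prop:catbox}. The final computation is correct and clean; once the isolation claim is granted, the two compositions $\beta\odot\beta'$ and $\beta'\odot\beta$ are visibly the same box.

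The isolation claim is true, and your contradiction argument has the right skeleton: sinks persist under SP-reduction, so $v_p$ survives to the final tree $\typtobox\sigma$, where it must have in-degree one; hence any extra in-edge $x\to v_p$ must eventually be merged with the tree edge $q\to v_p$ (where $q$ is the parent of leaf $p$ in $\sigma$) by a parallel step, which in turn requires their sources to coincide after series steps. Your tracing argument here is the weak point. You should make explicit that $q$, being an internal node of a proper tree, has out-degree at least two in the composite and is therefore never erased by a series step, so the tree-side edge into $v_p$ keeps source $q$ throughout. On the $x$-side, repeated series steps can only push the source back along a chain of in/out-degree-one vertices, and you must argue that this chain cannot terminate at $q$ without forcing the forbidden-subgraph (N-shape) configuration among $q$, $v_p$ and a sibling leaf. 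Your alternative of appealing to Lemma~\ref{lem:gluecenter} componentwise is in fact the cleaner route: apply it to the connected component of $\beta$'s graph containing $v_p$, glued to $\sigma$ along the relevant leaves; the lemma yields a cut vertex $c$ through which all paths to the maxima of that component pass, and since $v_p$ is one of those maxima and also a glue point, one quickly derives that $v_p$ can have no predecessor inside $\beta$. I would recommend you carry out that version rather than the ad hoc tracing.

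One small omission worth patching: isolation of $v_p$ alone is not quite enough for the $\epsilon$-collapse description you give; you also use implicitly that no \emph{other} input port $q\neq p$ has $\pin(q)=v_p$. This follows from the same typing argument (two tree leaves redirected to the same sink of $\beta$ again produce an in-degree-two leaf that cannot reduce to a tree), but you should state and justify it.
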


\begin{defi}[Atomic box, atomic template]
  A box $\alpha=\tuple{\pin,G,\pout}\in\boxsetb S S$ is \emph{atomic}
  if its graph has a single non-trivial connected component $C$, and
  if for every vertex $v$ outside $C$, there is a unique port $p\in S$
  such that $\pin\paren p=\pout\paren p=v$. An \emph{atomic template}
  is a finite template exclusively composed of atomic boxes.
\end{defi}

\begin{figure}[t]
  \centering
  \begin{tikzpicture}
    \position (0) (0,1.75);
    \position (1) (0,1);
    \position (2) (.5,1.5);
    \position (x) (1.25,1.75);
    \position (3) (2,2);
    \position (4) (2,1.5);
    \position (5) (2,1);
    \position (7) ($(1)!.5!(5)-(0,.5)$);
    \position (8) ($(7)-(0,.5)$);
    \rect(8)(1)(3)(3);
    \edge(0)(2)[a];\edge[below](1)(2)[b];\edge(2)(x)[c];
    \edge[below](2)(5)[a];\edge(x)(3)[b];\edge[below](x)(4)[a];
    \iport[.25](A)(0);\iport[-.25](B)(0);
    \iport(C)(1);\iport(D)(7)[1];\iport(E)[ai](8)[1];
    \oport(A)(3);\oport(B)(4);\oport(C)(5);
    \oport(D)(7)[5];\oport(E)[af](8)[5];

    \state (x1) (4,1.75);\state (x0) (4,0.25);
    \state (x2) (5,1.75);\state (x3) (5,0.25);
    \edge (x1) (x2)[a];\edge (x0) (x3)[b];
    \rect[.75] (x0) (x1)[.75] (x2) (x3); 
    \iport (A)[nai] (x1);\iport (B) (x0);
    \oport (A) (x2); \oport (B) (x3);

    \state (y1) (7,2);\state (y2) (8,2);
    \state(y2') ($(y2)-(0,.5)$);
    \state(y3) ($(y1)!.5!(y2)-(0,1)$);
    \state(y4) ($(y3)-(0,.5)$);
    \state(y5) ($(y4)-(0,.5)$);
    \edge (y1) (y2)[a];\edge[below](y1)(y2')[b];
    \rect (y5) (y1) (y2) (y2); 
    \iport (A) (y1);\iport[.5](B) (y3)[y1];
    \iport(C) (y3)[y1];\iport (D) (y4)[y1];
    \iport(E)(y5)[y1];
    \oport (A) (y2); \oport (B) (y2');
    \oport(C) (y3)[y2];\oport (D) (y4)[y2];
    \oport(E)(y5)[y2];

    \state (z1) (10,1.75);\state (z2) (11,2);
    \state(z2') ($(z2)-(0,.5)$);
    \state(z3) ($(z1|-z2')!.5!(z2')-(0,.5)$);
    \state(z4) ($(z3)-(0,.5)$);
    \state(z5) ($(z4)-(0,.5)$);
    \edge (z1) (z2)[a];\edge[below](z1)(z2')[b];
    \rect (z5) (z1) (z2) (z2); 
    \iport[.25] (A) (z1);\iport[-.25](B) (z1);
    \iport(C) (z3)[z1];\iport (D) (z4)[z1];
    \iport(E)(z5)[z1];
    \oport (A) (z2); \oport (B) (z2');
    \oport(D) (z3)[z2];\oport (C) (z4)[z2];
    \oport(E)[naf](z5)[z2];

    \draw[decorate,decoration=brace,thick] 
    ($(af)-(0,1)$) to node[below,yshift=-4pt] {atomic}
    ($(ai)-(0,1)$);
    \draw[decorate,decoration=brace,thick] 
    ($(naf)-(0,1)$) to node[below,yshift=-4pt] {non-atomic}
    ($(nai|-naf)-(0,1)$);

  \end{tikzpicture}
  \caption{Atomic and non-atomic boxes}
  \label{fig:ill:atomic}
\end{figure}
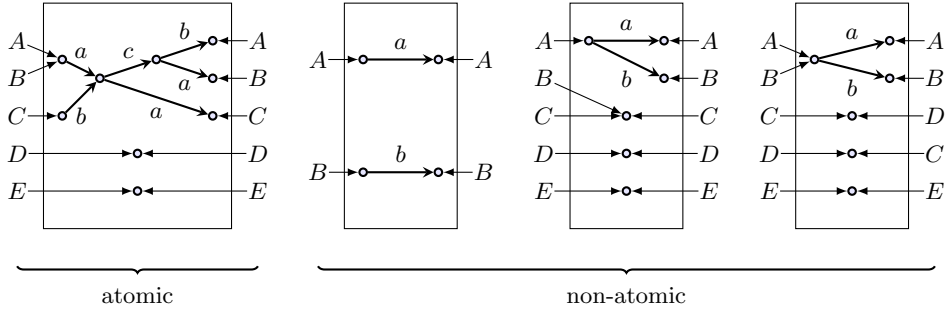

Atomic templates with singleton support can be easily
iterated. Indeed, the non-trivial connected component of an atomic box
with singleton support is series-parallel, and may thus be SP-reduced
to a graph with only two vertices, joined by a single edge labelled
with some simple expression $e$. If we then replace $e$ with $e^+$,
and put the resulting box $\alpha'$ in a template
$\alpha^\star\eqdef\set{\bid \sigma,\alpha'}$, we easily get
$\rlang{\alpha^\star}=\rlang\alpha^\star$. Now if
$\Gamma=\set{\alpha_1,\dots,\alpha_n}$ is an atomic template with
singleton support, it must be that every $\alpha_i$ has the same
singleton support. We may thus reduce their graphs, yielding
expressions $e_1,\dots,e_i$, and use the label
$\paren{e_1\sumtm\dots\sumtm e_i}^+$ to do the same construction.
\begin{lem}\label{lem:typboxbowtie}
  The non-trivial connected component of an atomic box of type
  $\sigma\to\sigma$ always contains a vertex $c$, such that for every
  port $p$ mapped inside that component, all paths from $\pin\paren p$
  to a maximal vertex visit $c$.
\end{lem}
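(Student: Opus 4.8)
The plan is to deduce this from Lemma~\ref{lem:gluecenter}: that lemma already produces exactly such a ``center'' vertex whenever a proper top-down tree glued on top of a connected DAG SP-reduces back to a tree, so it suffices to present the situation at hand as an instance of it, taking the connected DAG to be the non-trivial component $C$ and building the tree from the type $\sigma$.

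First I would fix notation: write $\alpha=\tuple{\pin,G_\alpha,\pout}$ for the atomic box (from $S$ to $S$), $C$ for its unique non-trivial connected component, and $S_C\eqdef\setcompr{p\in S}{\pin(p)\in C}$ for the set of ports mapped inside $C$, so that $\supp\alpha\subseteq S_C$. I would record two easy observations: $C$ is a connected DAG; and, since $C$ is a connected component of $G_\alpha$, every path of $G_\alpha$ leaving a vertex of $C$ stays in $C$, so every maximal vertex of $G_\alpha$ reachable from some $\pin(p)$ with $p\in S_C$ already lies in $\max C$. Hence it is enough to find $c\in C$ such that, for all $p\in S_C$ and all $M\in\max C$, every path from $\pin(p)$ to $M$ in $C$ visits $c$.

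Then I would instantiate Lemma~\ref{lem:gluecenter} with $G:=C$, with $T$ the proper unlabelled top-down tree underlying the type $\sigma$ (root $r$, leaves $F=\lambda(S)$), with $F':=\lambda(S_C)$, and with $\phi:F\rightharpoonup V(C)$ given by $\phi(\lambda(p)):=\pin(p)$ for $p\in S_C$. The one hypothesis to check is that $T\cdot_\phi C\spred[\star]T'$ for some tree $T'$. For this I would argue that $T\cdot_\phi C\spred[\star]\ers{\typtobox\sigma\odot\alpha}$: in $T\cdot_\phi C$ each glued leaf $f=\lambda(p)$ has exactly one incoming edge (a leaf of the tree $T$) and exactly one outgoing edge ($f\to\phi(f)$), hence is removed by the second SP-rule, and performing all these collapses redirects each tree edge into $\lambda(p)$ onto $\pin(p)$ --- which, unfolding the definition of box composition, is precisely the graph underlying $\ers{\typtobox\sigma\odot\alpha}$ (the leaves indexed by $S\setminus S_C$, which $\alpha$ sends to isolated vertices via $\pin=\pout$, being carried along unchanged). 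Since $\alpha$ has type $\sigma\to\sigma$, Definition~\ref{sec:definitions-3} gives $\ers{\typtobox\sigma\odot\alpha}\spred[\star]\typtobox\sigma$, hence $T\cdot_\phi C\spred[\star]\typtobox\sigma$, which is a tree. Lemma~\ref{lem:gluecenter} then supplies a vertex $c\in C$ such that every path from $\phi(f)=\pin(p)$ (for $f=\lambda(p)\in F'$, that is, $p\in S_C$) to any $M\in\max C$ visits $c$; combined with the observation of the second paragraph, this is the asserted property.

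The main obstacle I expect is getting the dictionary between the purely graph-theoretic gluing $T\cdot_\phi C$ and the typed-box composition $\typtobox\sigma\odot\alpha$ exactly right --- in particular that the leaves of $T$ indexed by ports outside $S_C$ correspond bijectively to the isolated vertices of $G_\alpha$ and attach to the rest exactly as those ports dictate. This is where the typing hypothesis $\alpha\in\tboxsetb\sigma\sigma$ does real work: it rules out the degenerate configuration in which a port is sent into $C$ by $\pout$ but to an isolated vertex by $\pin$ (such a configuration would force two distinct tree leaves to glue onto one isolated vertex, and the resulting graph could not SP-reduce to the tree $\typtobox\sigma$). Once that bookkeeping is settled, the remainder is a direct application of Lemma~\ref{lem:gluecenter} together with the already-recorded facts that composition commutes with SP-reductions and that a tree leaf becomes collapsible as soon as a single outgoing edge is attached to it.
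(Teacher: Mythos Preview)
Your proposal is correct and follows exactly the paper's approach: the paper's proof is the single line ``This is a direct consequence of Lemma~\ref{lem:gluecenter}'', and what you have written is precisely the unpacking of that consequence, instantiating $G$ with the non-trivial component, $T$ with the tree underlying $\sigma$, and using the typing hypothesis $\alpha\in\tboxsetb\sigma\sigma$ to obtain the required SP-reduction to a tree. Your bookkeeping about the isolated vertices versus the unglued leaves is the right way to match $T\cdot_\phi C$ (after collapsing the glued leaves) with $\ers{\typtobox\sigma\odot\alpha}$ up to the obvious renaming.
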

\begin{proof}
  This is a direct consequence of Lemma~\ref{lem:gluecenter}
\end{proof}
\noindent
This lemma allows us to split an atomic box into the product
$\alpha=\alpha^1\odot\alpha^2$ of two typed boxes such that
$\alpha^2\odot\alpha^1$ has singleton support.  Furthermore if
$\supp\Gamma\subseteq\supp\alpha$ then
$\set{\alpha^2}\odot\Gamma\odot\set{\alpha^1}$ has singleton support.

Another important property is that the supports of atomic boxes of the
same type are either disjoint or comparable:
\begin{lem}
  For all atomic boxes $\beta,\gamma\in\tboxsetb\sigma\sigma$,
  we have either
  \begin{itemize}
  \item $\supp\beta\subseteq\supp\gamma$, or
  \item $\supp\gamma\subseteq\supp\beta$, or
  \item $\supp\beta\cap\supp\gamma=\emptyset$.
  \end{itemize}
\end{lem}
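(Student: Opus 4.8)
The plan is to prove one structural fact and then read off the statement. For an atomic box $\beta$ of type $\sigma\to\sigma$ over a port set $S$, writing $\lambda$ for the port labelling of the type $\sigma$, I claim there is a node $n_\beta$ of the tree $\sigma$ such that $\supp\beta$ is precisely the set of ports $p$ whose leaf $\lambda(p)$ is a descendant of $n_\beta$ in $\sigma$; in other words, $\supp\beta$ is the leaf-set of a subtree of $\sigma$. Granting this, the lemma is pure tree combinatorics. Given $\beta,\gamma\in\tboxsetb\sigma\sigma$ with associated nodes $n_\beta,n_\gamma$, if $n_\beta$ and $n_\gamma$ are comparable in the ancestor order of $\sigma$ then the subtree rooted at the lower one is included in the subtree rooted at the higher one, so one of $\supp\beta,\supp\gamma$ is contained in the other; and if $n_\beta$ and $n_\gamma$ are incomparable then no leaf is a descendant of both, so $\supp\beta\cap\supp\gamma=\emptyset$.

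The first ingredient is a description of the support on the graph of $\beta$. Let $C$ be the unique non-trivial connected component of $\beta$. I would show that $\supp\beta=\setcompr{p\in S}{\pin\paren p\in C\text{ or }\pout\paren p\in C}$: if $p\notin\supp\beta$ then $\pin\paren p=\pout\paren p=v$ is a common fixed point of the two maps, hence by atomicity $v$ lies outside $C$; conversely, if both $\pin\paren p$ and $\pout\paren p$ lie outside $C$, then since $\pout$ is a bijection onto the sinks of the graph and every non-$C$ vertex carries a unique port on which $\pin$ and $\pout$ agree, a short argument forces $\pin\paren p=\pout\paren p$. The same case analysis shows that when $p\in\supp\beta$ with $\pin\paren p\notin C$ one must have $\pout\paren p\in C$, a degenerate situation in which $\pin\paren p$ is shared with exactly one non-support port whose leaf is a sibling of $\lambda(p)$ in $\sigma$; this case is harmless for what follows but needs to be carried along.

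The core of the proof — and the step I expect to be the main obstacle — is to derive the subtree property from the typing hypothesis $\ers{\typtobox\sigma\odot\beta}\spred[\star]\typtobox\sigma$. Form the composite $\typtobox\sigma\odot\ers\beta$: its graph is obtained by taking the internal nodes of the tree $\sigma$ and the vertices of $\beta$ and, for each port $p$, redirecting the edge from the parent of $\lambda(p)$ into the vertex $\pin\paren p$; the only edges between the ``tree part'' and the ``$\beta$ part'' are these redirected edges, and they all point \emph{into} the $\beta$ part. For $p\notin\supp\beta$ the target $\pin\paren p$ is an isolated vertex of $\beta$, a dead end, which survives the SP-reduction as a leaf; consequently the rewriting $\typtobox\sigma\odot\ers\beta\spred[\star]\typtobox\sigma$ must contract the component $C$ down onto the region of $\sigma$ below the leaves $\lambda(p)$ with $p\in\supp\beta$. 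Since $C$ is connected, is attached to the tree part only by edges pointing into it, and — by Lemma~\ref{lem:typboxbowtie} — has a centre $c$ through which every path from a $\pin$-image to a sink of $C$ must pass, an application of Lemma~\ref{lem:gluecenter} (with the tree part playing the role of $T$, the component $C$ together with its isolated siblings playing the role of the connected DAG $G$, and the redirection playing the role of $\phi$) shows that the tree nodes having an edge into $C$ all lie on a single root-to-leaf path of $\sigma$. The rewriting can then only produce a tree if that path ends at a well-defined lowest node $n_\beta$, and the leaves that get absorbed into $C$ are exactly those below $n_\beta$, i.e.\ $\supp\beta=\setcompr{p\in S}{\lambda(p)\text{ is a descendant of }n_\beta}$. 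The delicate points in this last step are the careful bookkeeping of the edges created by box composition and the handling of the degenerate port configurations of the previous paragraph; everything else is routine.
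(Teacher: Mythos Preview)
Your approach is exactly the paper's: both reduce the lemma to the observation that the support of any atomic box of type $\sigma\to\sigma$ is the set of leaves reachable from (i.e., below) some fixed vertex of $\sigma$, after which the trichotomy is pure tree combinatorics. The paper records this observation as a single sentence without further justification; your detailed derivation of it is largely sound, though your invocation of Lemma~\ref{lem:gluecenter} is slightly miscast, since that lemma requires $G$ to be connected and ``$C$ together with its isolated siblings'' is not---applying it with $G=C$ alone already yields the centre you need.
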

\begin{proof}
  It suffices to observe that the support of every atomic box
  $\beta\in\tboxsetb\sigma\sigma$ has to be the set of leaves
  reachable from some vertex in $\sigma$.
\end{proof}

\begin{prop}
  For every atomic template $\Gamma$ with type $\sigma\to\sigma$, there
  is a finite template~$\Gamma^\star$ with the same type such that
  $\rlang\Gamma^\star=\rlang{\Gamma^\star}$.
\end{prop}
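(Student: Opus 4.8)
The plan is to build $\Gamma^\star$ by a well-founded induction on $|\supp\Gamma|$, reducing the general case to the iteration of \emph{singleton-support} templates, which is handled by the construction described just before the statement.

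\emph{Reduction to a single maximal support.} If $\supp\Gamma=\emptyset$ then $\Gamma\subseteq\set{\bid\sigma}$ and we take $\Gamma^\star\eqdef\set{\bid\sigma}$. Otherwise, by the preceding lemma on comparability of supports the family $\setcompr{\supp\beta}{\beta\in\Gamma}$ is laminar, so its maximal elements $s_1,\dots,s_k$ are pairwise disjoint and cover $\supp\Gamma$. Partitioning $\Gamma$ into the atomic sub-templates $\Gamma_i\eqdef\setcompr{\beta\in\Gamma}{\supp\beta\subseteq s_i}$, which pairwise commute since their supports are disjoint, we get $\Gamma^\star=\Gamma_1^\star\odot\dots\odot\Gamma_k^\star$ in the Kleene category $\pset\boxset$; if $k\ge2$, each $\Gamma_i^\star$ exists by the induction hypothesis ($s_i\subsetneq\supp\Gamma$). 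So we may assume $k=1$, i.e.\ some $\alpha\in\Gamma$ has $\supp\alpha=\supp\Gamma=:s$. Split $\Gamma=\Gamma_0\cup\Gamma_1$ with $\Gamma_0=\setcompr{\beta\in\Gamma}{\supp\beta=s}$ and $\Gamma_1=\setcompr{\beta\in\Gamma}{\supp\beta\subsetneq s}$: the template $\Gamma_1^\star$ is obtained from the induction hypothesis (directly if $\supp\Gamma_1\subsetneq s$, and otherwise by re-applying the splitting above, since then $\Gamma_1$ has several maximal supports), and by the Kleene identity $(\Gamma_0\sumtm\Gamma_1)^\star=(\Gamma_1^\star\odot\Gamma_0)^\star\odot\Gamma_1^\star$ it remains to build $\Psi^\star$ for the finite template $\Psi\eqdef\Gamma_1^\star\odot\Gamma_0$ of type $\sigma\to\sigma$.

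\emph{A common bottleneck for $\Psi$.} Here $s$ is the set of leaves of $\sigma$ below some vertex $r$ (cf.\ the proof of the previous lemma), each $\delta\in\Gamma_0$ is atomic of support $s$, and composing with a box of $\Gamma_1^\star$ (support $\subseteq s$) keeps the non-trivial component connected; hence every $\psi\in\Psi$ is atomic of support $s$. By Lemma~\ref{lem:typboxbowtie}, via Lemma~\ref{lem:gluecenter}, and the splitting discussion preceding the statement, each such $\psi$ factors as $\psi=\psi^1\odot\psi^2$ with $\psi^1\colon\sigma\to\rho$ and $\psi^2\colon\rho\to\sigma$, where the bottleneck type $\rho$ — obtained from $\sigma$ by collapsing the subtree at $r$ into a single leaf $c$ — depends only on $\sigma$ and $s$, hence is the \emph{same} for all $\psi\in\Psi$. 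Consequently, for any $\psi,\psi'\in\Psi$ the ``interface'' box $\psi^2\odot\psi'^1\colon\rho\to\rho$ has singleton support (a mild strengthening of the stated fact, valid precisely because $\rho$ is common), so its non-trivial part is a single series-parallel component and SP-reduces to one edge $c\xrightarrow{\,e(\psi,\psi')\,}c'$ with $e(\psi,\psi')\in\SExp[Y]$.

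\emph{Solving a finite automaton, and the main obstacle.} A nonempty product $\psi_1\odot\dots\odot\psi_m$ of boxes of $\Psi$ equals $\psi_1^1\odot(\psi_1^2\odot\psi_2^1)\odot\dots\odot(\psi_{m-1}^2\odot\psi_m^1)\odot\psi_m^2$. Viewing $\Psi$ as the state set of a finite automaton with a transition $\psi\xrightarrow{e(\psi,\psi')}\psi'$ for each pair, and noting that composing interface boxes along a path concatenates their labels, the possible ``middle parts'' of length $\ge1$ between endpoints $\psi,\psi'$ are exactly the single-edge boxes $c\xrightarrow{w}c'$ for $w$ ranging over the language of nonempty $\psi$-to-$\psi'$ paths; that language is regular (classical Kleene theorem) and $\epsilon$-free, hence denoted by some $E_{\psi\psi'}\in\SExp[Y]$ (built with $\sumtm,\cdot,{}^{+}$, no unit needed). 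Using functoriality of $\rlang\argument$ on the Kleene categories of Remark~\ref{rem:tkleene} (it preserves $\odot$, $\sumtm$, $\G$ and SP-reduction of labels), one checks that
\[
  \Gamma^\star\ \eqdef\ \Bigl(\set{\bid\sigma}\cup\Psi\cup\bigcup_{\psi,\psi'\in\Psi}\set{\psi^1}\odot\set{c\xrightarrow{\,E_{\psi\psi'}\,}c'}\odot\set{\psi'^2}\Bigr)\odot\Gamma_1^\star
\]
is a finite template of type $\sigma\to\sigma$ (the summand $\Psi$ handles length-$1$ products) with $\rlang{\Gamma^\star}=\rlang\Gamma^\star$. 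The delicate point here is the \emph{chimera} phenomenon: splitting each $\psi$ into $\psi^1\odot\psi^2$ and iterating a single $\rho\to\rho$ template of interface boxes would also generate ``mismatched'' boxes $\psi^1\odot\psi'^2$ with $\psi\ne\psi'$, which need not lie in $\rlang\Gamma^\star$ — this is exactly why the iteration must be organised as an automaton \emph{indexed by the boxes of $\Psi$}. The supporting structural claims of the third paragraph (that $\Psi$ is atomic with uniform support $s$ and a common bottleneck $\rho$, and that the interface boxes stay singleton-supported even when their halves come from different boxes), together with checking that the induction on $|\supp\Gamma|$ is genuinely well-founded, are the remaining technical work.
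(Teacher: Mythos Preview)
Your argument is correct, but the paper reaches the same conclusion by a simpler route that sidesteps the chimera problem entirely.

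Instead of inducting on $|\supp\Gamma|$ and treating all maximal-support boxes at once, the paper orders the boxes $\alpha_1,\dots,\alpha_n$ so that $i<j$ implies $\supp{\alpha_i}\subseteq\supp{\alpha_j}$ or $\supp{\alpha_i}\cap\supp{\alpha_j}=\emptyset$, and inducts on~$k$. At step~$k$ it splits $\set{\alpha_1,\dots,\alpha_{k-1}}$ into $\Gamma_1$ (supports $\subseteq\supp{\alpha_k}$) and $\Gamma_2$ (supports disjoint from $\supp{\alpha_k}$), and uses the Kleene identity
\[
\rlang{\set{\alpha_1,\dots,\alpha_k}}^\star
=\rlang{\Gamma_1^\star}\Bigl(\rlang{\gbid\sigma}\cup
\rlang{\alpha_k^1}\,\rlang{\alpha_k^2\,\Gamma_1^\star\,\alpha_k^1}^\star\,
\rlang{\alpha_k^2\,\Gamma_1^\star}\Bigr)\odot\rlang{\Gamma_2^\star},
\]
where $\alpha_k=\alpha_k^1\odot\alpha_k^2$ is the bottleneck split of the \emph{single} new box. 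Because only one box is ever split, no chimera can arise; and since $\supp{\Gamma_1}\subseteq\supp{\alpha_k}$, the inner template $\alpha_k^2\,\Gamma_1^\star\,\alpha_k^1$ automatically has singleton support, so its star is handled directly by the base construction. Your auxiliary automaton on~$\Psi$ and the path-language expressions $E_{\psi\psi'}$ are therefore unnecessary machinery: the one-box-at-a-time induction absorbs any earlier boxes of the same support into~$\Gamma_1^\star$, which is already available. What your approach buys is a cleaner induction measure and an explicit treatment of the case $|\Gamma_0|>1$; what the paper's approach buys is avoiding the second invocation of Kleene's theorem altogether.
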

\begin{proof}
  Let $\Gamma=\set{\alpha_1,\dots,\alpha_n}$ be an atomic template of
  type $\sigma\to\sigma$, indexed in such a way that $i < j$ entails
  either $\supp{\alpha_i}\subseteq\supp{\alpha_j}$ or
  $\supp{\alpha_i}\cap\supp{\alpha_j}=\emptyset$. We define
  $\emptyset^\star=\gbid\sigma$. Then for every $k\leqslant n$, we
  split $\set{\alpha_1,\dots,\alpha_{k-1}}$ into $\Gamma_1$ and
  $\Gamma_2$, such that $\supp{\Gamma_1}\subseteq\supp{\alpha_k}$ and
  $\supp{\Gamma_2}\cap\supp{\alpha_k}=\emptyset$. We obtain:
  \begin{align}
    \rlang{\set{\alpha_1,\dots,\alpha_{k-1},\alpha_k}}^\star
    =&\rlang{\paren{\Gamma_1\sumtm\alpha_k}\sumtm\Gamma_2}^\star\tag{commutativity}\\
    =&\paren{\rlang{\Gamma_1}\cup\rlang{\alpha_k}}^\star
       \odot\rlang{\Gamma_2}^\star
       \tag{$\supp{\Gamma_2}\cap\supp{\Gamma_1\sumtm\alpha_k}=\emptyset$}\\
    =&\rlang{\Gamma_1^\star}\rlang{\alpha_k\cdot\Gamma_1^\star}^\star
       \odot\rlang{\Gamma_2^\star}\tag{regular laws}\\ 
    =&\rlang{\Gamma_1^\star}\paren{\rlang{\gbid\sigma}
       \cup\rlang{\alpha_k\Gamma_1^\star}\rlang{\alpha_k\Gamma_1^\star}^\star}
       \odot\rlang{\Gamma_2^\star}\tag{regular laws}\\
    =&\rlang{\Gamma_1^\star}\paren{\rlang{\gbid\sigma}
       \cup\rlang{\alpha^1_k\alpha^2_k\Gamma_1^\star}\rlang{\alpha^1_k\alpha^2_k\Gamma_1^\star}^\star}
       \odot\rlang{\Gamma_2^\star}\tag{$\alpha_k=\alpha^1_k\alpha^2_k$}\\
    =&\rlang{\Gamma_1^\star}\paren{\rlang{\gbid\sigma}
       \cup\rlang{\alpha^1_k}\rlang{\alpha^2_k\Gamma_1^\star\alpha^1_k}^\star\rlang{\alpha^2_k\Gamma_1^\star}}
       \odot\rlang{\Gamma_2^\star}\tag{regular laws}
  \end{align}
  As we noticed earlier, because we have
  $\supp{\Gamma_1}\subseteq\supp{\alpha_k}$ the template
  $\alpha^2_k\Gamma_1^\star\alpha^1_k$ has a singleton support. This
  means we can compute its star, thus reduce the last expression into
  a single finite template.
\end{proof}

\subsection{Extracting simple expressions from Petri Automata}
\label{sec:main-theorem}

Now we have enough material to embark in the proof that recognisable
languages of series-parallel graphs are regular.

Let us fix an automaton $\A=\tuple{P,\T,\iota}$. Assume that $\iota$
never appears in the output of a transition and that $P$ contains a
place $f$ not connected to any transition. (It is easy to modify $\A$
to enforce these two properties).

We start by building a finite state automaton whose states are types,
and transitions are typed boxes. Then, using a procedure similar to
the proof of the classical Kleene's Theorem, we reduce it into a
single box template from which we can extract a simple expression.

\subsubsection{A regular language of runs.}
\label{sec:kleene-theorem}

We can associate a box with every transition of a proper run, as
illustrated in Example~\ref{ex:boxes-run}.

\begin{defi}[Box of a transition]
  Suppose $\tr\in\T$ is a non-final transition in $\A$, and
  $S,{S'}\subseteq P$ are two states such that $\atrans S {S'}$.  The
  \emph{box of $\tr$ from $S$} (written~$\boxtrans\tr S$) is built as
  follows. Its set of input ports (respectively output ports) is $S$
  (resp. ${S'}$). The vertices of its graph are the places in ${S'}$,
  together with an additional node $*$. Places in $\trin$ are mapped
  by $\pin$ to $*$, and the others are mapped to themselves. $\pout$
  sends every place in ${S'}$ (seen as a port) to itself (seen as a
  vertex). Finally, we put edges $\tuple{*,a,q}$ whenever
  $\tuple{a,q}\in \pout$.
  For final transitions~$\tr=\tuple{\trin,\emptyset}$ we adapt the
  construction from state~$\trin$ to reach the state $\set f$, by
  defining $\boxtrans\tr \trin$ to be
  $\tuple{[\_ \mapsto *],\tuple{\set *,\emptyset},[f\mapsto *]}$.
\end{defi}

We extend this construction to runs in a straightforward way: if
$S_0,\dots,S_n$ are states and $R=\tuple{S_0,\tr[1];\dots;\tr[n],S_n}$
is a proper run in $\A$ we define
$\boxrun R\eqdef
\boxtrans{\tr[1]}{S_0}\odot\boxtrans{\tr[2]}{S_1}\odot
\cdots\odot\boxtrans{\tr[n]}{S_{n-1}}$. With this definition,
accepting runs yield boxes in $\boxsetb{\set\iota}{\set f}$.
This actually amounts to computing the trace of $R$:
\begin{lem}\label{lem:boxtrace}
  If $\boxrun R = \tuple{\pin,G,\pout}$, then $G$ is isomorphic to $\Gr{R}$.
\end{lem}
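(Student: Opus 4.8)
My plan is to establish, by induction on the number $n$ of transitions of $R$, a statement strictly stronger than the lemma, namely an explicit description of the whole triple $\boxrun R=\tuple{\pin,G,\pout}$, the lemma being the part concerning $G$ alone. Re-indexing so that $R=\tuple{S_0,\tr[1];\dots;\tr[n],S_n}$, and using properness to know that every $\tr[i]$ with $i<n$ has non-empty output (so that $\boxtrans{\tr[i]}{S_{i-1}}$ is the ordinary box and not the one-vertex box reserved for a final transition), I would prove: if $R$ does not end with a final transition then $\boxrun R=\tuple{\phi_R,\Gr R,\pout_R}$, where $\Gr R$ is the trace of Definition~\ref{def:trace} with vertex set $\set{1,\dots,n}\cup S_n$ (transition $\tr[k]$ giving the vertex $k$, a leftover token $p$ giving the vertex $p$), $\pout_R$ is the inclusion of $S_n$ as the sinks of $\Gr R$, and $\phi_R(p)$ equals $\min\setcompr{k\le n}{p\in\trin[k]}$ when that set is non-empty and equals $p$ otherwise; and if $R$ does end with a final transition $\tr[n]=\tuple{\trin[n],\emptyset}$ --- in which case properness forces $S_n=\emptyset$ and $\trin[n]=S_{n-1}$, and makes $n$ the unique sink of $\Gr R$ --- then $\boxrun R=\tuple{\phi_R,\Gr R,[f\mapsto n]}$.

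The base case $n=0$ (empty composition $=\bid{S_0}$; equivalently one checks $n=1$ by directly unfolding the definition of the box of a transition) is immediate, both sides being the edgeless graph on $S_0$. For the inductive step I write $R=R';\tr[n]$ with $R'=\tuple{S_0,\tr[1];\dots;\tr[n-1],S_{n-1}}$; the run $R'$ is proper and, for $n\ge 2$, does not end with a final transition, so the induction hypothesis describes $\boxrun{R'}$: its graph is $\Gr{R'}$, its input map is $\phi_{R'}$, and its output map is the inclusion of $S_{n-1}$ onto the sinks of $\Gr{R'}$. Since $\boxrun R=\boxrun{R'}\odot\boxtrans{\tr[n]}{S_{n-1}}$, I would then unfold the definition of $\odot$ against the explicit box $\boxtrans{\tr[n]}{S_{n-1}}$, whose vertices are $S_n\cup\set{*_n}$, whose input map sends $\trin[n]$ to $*_n$ and fixes $S_{n-1}\setminus\trin[n]$, whose output map is the inclusion of $S_n$, and whose edges are the $\tuple{*_n,a,q}$ with $\tuple{a,q}\in\trout[n]$ (in the final-transition case it is instead $\tuple{[\_\mapsto*_n],\tuple{\set{*_n},\emptyset},[f\mapsto*_n]}$, which fits the same description with $\trin[n]=S_{n-1}$, $\trout[n]=\emptyset$, and output port $f$). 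Because the output map of $\boxrun{R'}$ is the inclusion of $S_{n-1}$, the composition deletes exactly the sink vertices $S_{n-1}$ of $\Gr{R'}$ (leaving the transition vertices $\set{1,\dots,n-1}$), keeps the edges of $\Gr{R'}$ that lie inside $\set{1,\dots,n-1}$, redirects each edge of $\Gr{R'}$ that had pointed to a deleted vertex $p\in S_{n-1}$ to $*_n$ if $p\in\trin[n]$ and to $p$ otherwise, and adds the edges $\tuple{*_n,a,q}$; after renaming $*_n$ to $n$ the vertex set becomes $\set{1,\dots,n}\cup S_n$, the output map is the inclusion of $S_n$ (resp.\ $[f\mapsto n]$), and a short case split on whether $\phi_{R'}(p)$ is a transition vertex or a leftover token identifies the new input map with $\phi_R$.

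The substance of the proof, and I expect the main obstacle --- though it is entirely routine --- is checking that the ``kept $\cup$ redirected $\cup$ new'' edge set produced by $\odot$ is literally $E_R$ of Definition~\ref{def:trace}. One argues by cases on an edge arising from some $\tuple{a,p}\in\trout[k]$. If $k\le n-1$ and $\setcompr{l\le n-1}{l>k,\ p\in\trin[l]}$ is non-empty, the corresponding edge of $\Gr{R'}$ already lands on the transition $l$ equal to its minimum, the set $\nu(k,p)$ computed inside $R$ has the same minimum, so the edge is kept unchanged and is correct. If that set is empty, the edge of $\Gr{R'}$ lands on the leftover-token vertex $p\in S_{n-1}$, while inside $R$ one has $\nu(k,p)\subseteq\set n$, equal to $\set n$ exactly when $p\in\trin[n]$ --- which is precisely when $\odot$ redirects the edge to $*_n=n$ rather than to $p\in S_n$. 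Finally each $\tuple{a,q}\in\trout[n]$ has $\nu(n,q)=\emptyset$, so it yields exactly the edge $\tuple{n,a,q}$ (with $q\in S_n$) that $\odot$ adds; and in the final-transition subcase $\trout[n]=\emptyset$ adds no edge, leaving $n$ as the unique sink of $\Gr R$, consistent with the single-sink argument behind Lemma~\ref{lem:cst2:finaltrans}. The only extra care needed is the standard renaming of internal vertices so that the disjointness hypothesis $V_1\cap V_2=\emptyset$ of box composition is met before $*_n$ is identified with $n$.
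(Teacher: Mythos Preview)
Your proof is correct and considerably more detailed than the paper's own. The paper's proof is a single sentence: it simply names the vertex bijection (``the vertex $k$ in $\Gr R$ is equivalent to the vertex $*$ coming from the same transition'') and stops there, leaving the verification of the edge sets and of the port maps entirely to the reader. Your argument is the fully worked-out version of that same idea: you make the induction on the length of the run explicit, you strengthen the induction hypothesis to track $\pin$ and $\pout$ as well as $G$, and you carry out the three-case edge analysis (kept, redirected, new) against Definition~\ref{def:trace}. So the approach is the same in spirit---the same vertex correspondence $k\leftrightarrow *_k$---but what you have written is the actual proof rather than a sketch; in particular, your handling of the final-transition case via properness (forcing $S_{n-1}=\trin[n]$ and $n$ the unique sink) and your remark on the disjointness renaming are points the paper does not spell out at all.
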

\begin{proof}
  The vertex $k$ in $\Gr R$ (produced by $\tr[k]$) is equivalent to
  the vertex $*$ coming from the same transition.
\end{proof}
\begin{exa}\label{ex:boxes-run}
  Recall the accepting run from Example~\ref{ex:run}:
  \begin{align*}
    \begin{tikzpicture}[scale=.8]
      \node (A) at (0,-0.5) {$A$};
      \node (B) at (2,0) {$B$};
      \node (G1) at (2,-1) {$G$};
      \node (C) at (4,0.5) {$C$};
      \node (E1) at (4,-0.5) {$E$};
      \node (G2) at (4,-1) {$G$};
      \node (D) at (6,0.5) {$D$};
      \node (E2) at (6,-0.5) {$E$};
      \node (G3) at (6,-1) {$G$};
      \node (C1) at (8,0.5) {$C$};
      \node (E3) at (8,-0.5) {$E$};
      \node (G4) at (8,-1) {$G$};
      \node (D1) at (10,0.5) {$D$};
      \node (E4) at (10,-0.5) {$E$};
      \node (G5) at (10,-1) {$G$};
      \node (F) at (12,0) {$F$};
      \node (G6) at (12,-1) {$G$};
      \trans[0] (0) (1,-0.5);
      \trans[1] (1) (3,0);
      \trans[2] (2) (5,0.5);
      \trans[3] (3) (7,0);
      \trans[2] (4) (9,0.5);
      \trans[4] (5) (11,0);
      \transf[7] (6) ($(F)!.5!(G6)+(1,0)$);
      \edge (A) (0);
      \edge[above] (0) (B)[b];
      \edge[below] (0) (G1)[a];
      \edge (B) (1);
      \draw[thick,dotted] (G1) to (G2);
      \edge[above] (1) (C)[c];
      \edge[below] (1) (E1)[b];
      \edge (C) (2);
      \draw[thick,dotted] (E1) to (E2);
      \draw[thick,dotted] (G2) to (G3);
      \edge[above] (2) (D)[a];
      \edge (D) (3);
      \edge (E2) (3);
      \draw[thick,dotted] (G3) to (G4);
      \edge[above] (3) (C1)[c];
      \edge[below] (3) (E3)[b];
      \edge (C1) (4);
      \draw[thick,dotted] (E3) to (E4);
      \draw[thick,dotted] (G4) to (G5);
      \edge[above] (4) (D1)[a];
      \edge (D1) (5);
      \edge (E4) (5);
      \draw[thick,dotted] (G5) to (G6);
      \edge[above] (5) (F)[d];
      \edge (F) (6);
      \edge (G6) (6);
    \end{tikzpicture}
  \end{align*}
  The transitions of this run yield the following
  boxes:
  \begin{align*}
    \boxtrans 0 {\set A}=
    &\quad\begin{tikzpicture}[baseline=(0)]
      \state[*_1] (0) (0,0);
      \state[B] (b0) (1,.75);
      \state[G] (g0) (1,-.75);
      \edge (0) (b0)[b];
      \edge[below] (0) (g0)[a];
      \rect (g0) (0) (b0) (b0);
      \iport (A) (0);
      \oport (B) (b0);
      \oport (G) (g0);
    \end{tikzpicture}
    ;&\boxtrans 1 {\set{B,G}}=
    &\quad\begin{tikzpicture}[baseline=(e1)]
      \state[*_2] (1) (0,0);
      \state[C] (c1) (1,.5);
      \state[E] (e1) (1,-.5);
      \state[G] (g1) ($(1|-e1)!.5!(e1)-(0,1)$);
      \edge (1) (c1)[c];
      \edge[below] (1) (e1)[b];
      \rect (g1) (1) (c1) (c1);
      \iport (B) (1);
      \oport (C) (c1);
      \oport (E) (e1);
      \iport (G) (g1)[1];
      \oport (G) (g1)[e1];
    \end{tikzpicture}\\
    \boxtrans 2 {\set{C,E,G}}\footnotemark=
    &\quad\begin{tikzpicture}[baseline=(e2)]
      \state[*_i] (2) (0,0);
      \state[D] (d2) (1,0);
      \state[E] (e2) ($(2)!.5!(d2)-(0,1)$);
      \state[G] (g2) ($(e2)-(0,1)$);
      \edge (2) (d2)[a];
      \rect (g2) (2) (d2) (d2);
      \iport (C) (2);
      \oport (D) (d2);
      \iport (E) (e2)[2];
      \oport (E) (e2)[d2];
      \iport (G) (g2)[2];
      \oport (G) (g2)[d2];
    \end{tikzpicture}
    &\boxtrans 3 {\set{D,E,G}}=
    &\quad\begin{tikzpicture}[baseline=(e3)]
      \state[*_4] (3) (0,0);
      \state[C] (c3) (1,.5);
      \state[E] (e3) (1,-.5);
      \state[G] (g3) ($(1|-e3)!.5!(e3)-(0,1)$);
      \edge (3) (c3)[c];
      \edge[below] (3) (e3)[b];
      \rect (g3) (3) (c3) (c3);
      \iport[.5] (D) (3);
      \iport[-.5] (E) (3);
      \oport (C) (c3);
      \oport (E) (e3);
      \iport (G) (g3)[3];
      \oport (G) (g3)[e3];
    \end{tikzpicture}\\
    \boxtrans 4 {\set{D,E,G}}=
    &\quad\begin{tikzpicture}[baseline=(anchor)]
      \state[*_6] (4) (0,0);
      \state[F] (f4) (1,0);
      \state[G] (g4) ($(1|-f4)!.5!(f4)-(0,1.5)$);
      \edge (4) (f4)[d];
      \rect (g4) (4) (f4) (f4);
      \iport[.5] (D) (4);
      \iport[-.5] (E) (4);
      \oport (F) (f4);
      \iport (G) (g4)[4];
      \oport (G) (g4)[f4];
      \coordinate (anchor) at ($(f4)!.5!(g4)$);
    \end{tikzpicture}
     &\boxtrans 7 {\set{F,G}}=
    &\quad\begin{tikzpicture}[baseline=(7)]
      \state[*_7] (7) (0,0);
      \rect[1.25] (7)(7)[1.25](7)(7);
      \iport[.75] (F) (7);
      \iport[-.75] (G) (7);
      \oport (f) (7);
    \end{tikzpicture}
  \end{align*}
  \footnotetext{Two instances of this box have to be used, one with
    $i=3$ and one with $i=5$.}

  \noindent Accordingly, the box of the run is the one below.
  \begin{align*}
    \boxrun R=
    & \begin{tikzpicture}[yscale=.5,baseline=(0.south)]
      \state[*_1] (0) (0,0);
      \state[*_2] (1) (1,1);
      \state[*_3] (2) (2,2);
      \state[*_4] (3) (3,1);
      \state[*_5] (2') (4,2);
      \state[*_6] (4) (5,1);
      \state[*_7] (7) (6,0);
      \edge[bend left,above] (0) (1)[b];
      \edge[bend right,below] (0) (7)[a];
      \edge[bend left,above] (1) (2)[c];
      \edge[bend right,below] (1) (3)[b];
      \edge[bend left,above] (2) (3)[a];
      \edge[bend left,above] (3) (2')[c];
      \edge[bend right,below] (3) (4)[b];
      \edge[bend left,above] (2') (4)[a];
      \edge[bend left,above] (4) (7)[d];
      \rect[2] (7)(0)[1](2)(7);
      \iport (A) (0);
      \oport (f) (7);
    \end{tikzpicture}
  \end{align*}
  The graph in this box is exactly the trace of the run
  (Figure~\ref{fig:ex:trace}). Also notice that the only vertices in
  the composite box are the $*_i$ vertices, one for each transition.
\end{exa}


This equivalence with traces yields another property: because of
Constraint~\ref{cstr:sp}, we know that every trace in $\A$ is
series-parallel. This means that the box of every run can be typed as
going from
$\tau_\iota=\begin{tikzpicture}[baseline=(0.south)] \state (0)
  (0,0);\state (1) (1,0); \edge (0) (1);\oport (\iota)[i] (1);
\end{tikzpicture}$
to
type~$\tau_f=\begin{tikzpicture}[baseline=(0.south)] \state (0)
  (0,0);\state (1) (1,0); \edge (0) (1);\oport (f)[f] (1);
\end{tikzpicture}$.
\begin{lem}\label{lem:runboxesaretypable}
  Let $R=\tuple{\set\iota,\tr[1];\dots;\tr[n],\emptyset}$ be an
  accepting run, with intermediary
  states ${\set\iota=S_0,S_1,\dots,S_{n-1},S_n=\emptyset}$. There
  exists a sequence $\tau_0,\dots,\tau_n$ of types over~$P$ such that
  $\tau_0=\tau_\iota$, $\tau_n=\tau_f$ and
  $\forall 1\leqslant i\leqslant n$,
  $\boxtrans{\tr[i]}{S_{i-1}}\in\tboxsetb{\tau_{i-1}}{\tau_i}$
\end{lem}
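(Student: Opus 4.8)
The plan is to proceed by induction on the prefixes of $R$, building the types $\tau_0,\dots,\tau_n$ one transition at a time. For $0\leqslant i\leqslant n$ let $R_i=\tuple{\set\iota,\tr[1];\dots;\tr[i],S_i}$ be the length-$i$ prefix of $R$ and, for $i<n$, let $R'_i=\tuple{S_i,\tr[i+1];\dots;\tr[n],\emptyset}$ be the matching suffix; both are proper, being sub-runs of the proper run $R$ (Lemma~\ref{lem:cst2:finaltrans}), so the boxes $\boxrun{R_i}$ and $\boxrun{R'_i}$ are defined, $\boxrun{R_i}\in\boxsetb{\set\iota}{S_i}$ (with the convention $S_n=\set f$, since final transitions output to the fresh place $f$), and $\boxrun R=\boxrun{R_i}\odot\boxrun{R'_i}$ by associativity of $\odot$ (Proposition~\ref{prop:catbox}). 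By Lemma~\ref{lem:boxtrace} the graph underlying $\boxrun R$ is $\Gr R$, which is series-parallel by Constraint~\ref{cstr:sp}; hence $\ers{\typtobox{\tau_\iota}\odot\boxrun R}$, i.e.\ the DAG $\Gr R$ with one extra edge prepended at its input and all labels erased, SP-reduces to a single edge, namely $\typtobox{\tau_f}$, so that $\boxrun R\in\tboxsetb{\tau_\iota}{\tau_f}$. I will strengthen this into: for each $i$ there is a type $\tau_i$ over $S_i$ (over $\set f$ for $i=n$) with $\boxrun{R_i}\in\tboxsetb{\tau_\iota}{\tau_i}$ and $\boxtrans{\tr[i]}{S_{i-1}}\in\tboxsetb{\tau_{i-1}}{\tau_i}$ for all $i\geqslant 1$. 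The second conjunct is the statement of the lemma; $\tau_0=\tau_\iota$ holds since $\boxrun{R_0}=\bid{\set\iota}$, and $\tau_n=\tau_f$ follows because $\tr[n]$ is the unique final transition (Lemma~\ref{lem:cst2:finaltrans}), so $\boxtrans{\tr[n]}{S_{n-1}}$ has output port set $\set f$ and $\tau_f$ is the only type over $\set f$.

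The inductive step reduces to one claim: if $\boxrun{R_{i-1}}\in\tboxsetb{\tau_\iota}{\tau_{i-1}}$, then $\boxtrans{\tr[i]}{S_{i-1}}$ is typable from input type $\tau_{i-1}$. Granting this with output type $\tau_i$ (which is unique once the input type is fixed), functoriality of composition in $\tcatbox$ (Proposition~\ref{prop:typcat}) gives $\boxrun{R_i}=\boxrun{R_{i-1}}\odot\boxtrans{\tr[i]}{S_{i-1}}\in\tboxsetb{\tau_\iota}{\tau_i}$, closing the induction. Unfolding the claim, $\boxtrans{\tr[i]}{S_{i-1}}$ is typable from $\tau_{i-1}$ iff $\typtobox{\tau_{i-1}}\odot\ers{\boxtrans{\tr[i]}{S_{i-1}}}\spred[\star]$ some proper tree; and this composite is exactly the tree $\typtobox{\tau_{i-1}}$ in which all leaves belonging to $\trin[i]$ have been identified to a single vertex $*$, with $|\settrout[i]|$ fresh sinks (one per element of $\settrout[i]$) attached at $*$. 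I claim that this reduces to a tree precisely when $\trin[i]$ is the leaf set of some subtree of $\typtobox{\tau_{i-1}}$; when it is, one reads off $\tau_i$ by collapsing that subtree to the single node $*$ and fanning it out to the leaves $\settrout[i]$ (for the final transition one simply gets the one-leaf tree $\tau_f$). The elementary SP-rewriting computation behind ``$\trin[i]$ spans a subtree $\Rightarrow$ the composite reduces to a tree'' I would relegate to a figure.

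The heart of the matter, and the step I expect to be the main obstacle, is the structural fact that $\trin[i]$ \emph{does} span a subtree of $\typtobox{\tau_{i-1}}$. I would prove this by contradiction, using both constraints on $\A$. First, safety (Constraint~\ref{cstr:safety}) forces every token still present in $S_{i-1}$ to have in-degree one in the trace $\Gr{R_{i-1}}$, which — together with connectedness of $\Gr{R_{i-1}}$, every transition vertex being reachable from the unique initial transition $\tr[1]$ — is what makes $\ers{\typtobox{\tau_\iota}\odot\boxrun{R_{i-1}}}$ reduce to the tree $\typtobox{\tau_{i-1}}$ in the first place. Second, since $\ers{\typtobox{\tau_\iota}\odot\boxrun R}\spred[\star]\typtobox{\tau_f}$ and $\ers{\typtobox{\tau_\iota}\odot\boxrun{R_{i-1}}}\spred[\star]\typtobox{\tau_{i-1}}$, confluence of SP-rewriting and commutation of $\odot$ with $\spred$ give $\typtobox{\tau_{i-1}}\odot\ers{\boxtrans{\tr[i]}{S_{i-1}}\odot\boxrun{R'_i}}\spred[\star]\typtobox{\tau_f}$, a single edge, hence a tree; this box is obtained by gluing the proper tree $\typtobox{\tau_{i-1}}$ (leaves $S_{i-1}$) onto the connected single-sink DAG underlying $\boxtrans{\tr[i]}{S_{i-1}}\odot\boxrun{R'_i}$ along the leaf-port map (the glue-with-edges version SP-reduces to the glue-with-identification version, as each bridging edge turns its leaf into a removable degree-$(1,1)$ vertex), so Lemma~\ref{lem:gluecenter} applies and supplies a vertex $c$ in that DAG through which every directed path from an image of $S_{i-1}$ to the unique sink must pass. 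Combining this bottleneck with the in-degree-one property of the tokens of $S_{i-1}$ and tracking where the tokens of $\trin[i]$ versus $S_{i-1}\setminus\trin[i]$ flow after $\tr[i]$, one sees that if $\trin[i]$ met two incomparable subtrees of $\typtobox{\tau_{i-1}}$, then a leaf sitting under their common ancestor but outside $\trin[i]$ would either have to bypass $c$, or would create the forbidden four-vertex non-series-parallel pattern of Figure~\ref{fig:bad} inside a reduct of $\ers{\typtobox{\tau_\iota}\odot\boxrun R}$ — contradicting that the latter reduces to a single edge. Everything else (Church--Rosser for SP-rewriting, $\ers{\beta\odot\gamma}=\ers\beta\odot\ers\gamma$, commutation of $\odot$ with $\spred$, and the arithmetic of the two special boxes $\typtobox{\tau_\iota}$ and the final box) is routine bookkeeping.
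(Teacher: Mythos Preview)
Your overall plan (induction on prefixes of $R$, define $\tau_i$ as the SP-normal form of $\ers{\typtobox{\tau_\iota}\odot\boxrun{R_i}}$, and use confluence to propagate the typing) is exactly the paper's approach, and the routine bookkeeping you list at the end is correct. The problem is the intermediate characterisation you introduce: you claim that $\typtobox{\tau_{i-1}}\odot\ers{\boxtrans{\tr[i]}{S_{i-1}}}$ SP-reduces to a tree \emph{precisely when} $\trin[i]$ is the leaf set of some subtree of~$\tau_{i-1}$, and you then aim to prove that $\trin[i]$ always spans such a subtree. Both the equivalence and the structural fact are false.

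Take the automaton with $\tr[1]=\tuple{\set\iota,\set{(a,p),(b,q),(c,r)}}$, $\tr[2]=\tuple{\set{p,q},\set{(d,s)}}$, $\tr[3]=\tuple{\set{s,r},\emptyset}$. Its unique trace is the graph of $((a\cap b)\cdot d)\cap c$, which is series-parallel, so both constraints hold. Here $\tau_1$ is the proper tree with root, one internal node, and three leaves $p,q,r$; but $\trin[2]=\set{p,q}$ is \emph{not} the leaf set of any subtree of~$\tau_1$. Nevertheless $\typtobox{\tau_1}\odot\ers{\boxtrans{\tr[2]}{\set{p,q,r}}}$ does SP-reduce to a proper tree (parallel-reduce the two edges into $*$, then series-reduce $*$), so $\boxtrans{\tr[2]}{\set{p,q,r}}$ \emph{is} typable from~$\tau_1$. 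The correct local condition is weaker: writing $u$ for the lowest common ancestor of $\trin[i]$ in~$\tau_{i-1}$, one needs that every child of $u$ whose subtree meets~$\trin[i]$ has all its leaves in~$\trin[i]$; ``full subtree'' is the special case where every child of $u$ meets~$\trin[i]$. Your contradiction argument via Lemma~\ref{lem:gluecenter} therefore cannot succeed as stated---it would have to rule out the automaton above.

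The paper sidesteps this by never isolating a combinatorial condition on $\trin[i]$: it argues directly that, because $\Gr R$ is series-parallel, the prefix trace $\Gr{R_k}$ (equivalently $\ers{\typtobox{\tau_\iota}\odot\boxrun{R_k}}$, by Lemma~\ref{lem:boxtrace}) must SP-reduce to a tree for every~$k$, and these trees are the~$\tau_k$. Your induction and your use of confluence and of $\odot$ commuting with $\spred$ are the right tools for making that direct claim precise; just drop the ``spans a subtree'' detour.
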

\begin{proof}
  Simply put, since $\Gr R$ is series-parallel, then for every $k$ it
  must be the case that
  $\Gr{\tuple{\set\iota,\tr[1];\dots;\tr[k],S_k}}$ SP-reduces to a
  tree $\tau_k$. Using Lemma~\ref{lem:boxtrace} we can use these trees
  to type the boxes of every transition in $R$.
 
  The fact
  that~$\tau_n=\tau_f=\begin{tikzpicture}[baseline=(0.south),xscale=.7]
    \state (0) (0,0);\state (1) (1,0); \edge (0) (1);\oport (f) (1);
  \end{tikzpicture}$
  is straightforward, as this is the only type over~$\set f$.
\end{proof}
Consider the finite state automaton $Aut$ with states $\typ$ (the set
of types over $P$), initial state~$\tau_\iota$, a single final
state~$\tau_f$ and two kinds of transitions. For every non-final
transition $\tr$, states $S,{S'}$ and types $\tau,\sigma$ such that
$\atrans S {S'}$ and $\boxtrans\tr S\in\tboxsetb\tau\sigma$, there is
a transition $\tuple{\tau,\boxtrans \tr S,\sigma}$ in~$Aut$. There are
also transitions $\tuple{\tau,\boxtrans\tr \trin,\tau_f}$ for every
final transition $\tuple{\trin,\emptyset}$ and every type $\tau$ over
$\trin$.

This automaton captures exactly the accepting runs of $\A$. Indeed,
Lemma~\ref{lem:runboxesaretypable} assures us that for every accepting
run in $\A$ we can find a corresponding accepting run in $Aut$. On the
other hand every accepting run in $Aut$ stems by construction from an
accepting run in~$\A$.

This means we can extract a regular expression $e$ with letters in
$\tboxset$ from this automaton using the standard Kleene theorem. We
can also get the language of $\A$ by extracting the graphs of boxes
$\beta_1\odot\cdots\odot\beta_n$, whenever $\beta_1\dots\beta_n$ is a
word in the language of $e$.  However, this is not yet what we are
looking for, that is, a simple expression over $Y$. To get this
expression, we replay the standard proof of Kleene's theorem, with
some modifications to suit our needs.

\begin{rem}\label{sec:regul-lang-runs}
  The automaton above is constructed from the state $\tau_\iota$, by
  exploring all initial runs. This construction fails if either
  Constraint~\ref{cstr:safety} or Constraint~\ref{cstr:sp} are not
  satisfied. This shows that the two constraints we imposed are
  decidable.
\end{rem}

\subsubsection{Computing the expression.}
\label{sec:comp-expr}

A standard way of proving Kleene's theorem consists in moving from
automata to \emph{generalised automata}, that is automata with regular
expressions labelling transitions. We perform a similar step here, by
replacing boxes with box templates in transition labels. The
transformation is straightforward: for every pair of
states~$\sigma,\tau$, we put in the generalised automaton a transition
labelled with~$\setcompr\beta{\tuple{\sigma,\beta,\tau}\in Aut}$. This
ensures that there is exactly one transition between every pair of
states. Notice that in the resulting generalised automaton, if the
transition from~$\sigma$ to~$\tau$ is labelled with~$\Gamma$, all
boxes in~$\Gamma$ have type~$\sigma\to\tau$, therefore~$\Gamma$ itself
has this type.

Then, we proceed to remove non-initial non-final states, in an
arbitrary order. Notice that the automaton~$Aut$ we are considering
has a single initial state and a single final state,
respectively~$\tau_\iota$ and~$\tau_f$. These states are distinct,
there is no outgoing transition from~$\tau_f$, nor is there an
incoming transition in~$\tau_\iota$ (remember that~$\iota$ does not
appear in the output of transitions). At the end of this procedure,
the only remaining states will thus be~$\tau_\iota$
and~$\tau_f$. There will be exactly one transition from~$\tau_\iota$
to~$\tau_f$ of the
shape~$\tuple{\tau_\iota,\set{\beta_1,\cdots,\beta_n},\tau_f}$. As
every~$\beta_i$ has type~$\tau_\iota\to\tau_f$, its graph~$G_i$ must be
series-parallel, thus we will get that
$e\eqdef\trm{G_1}\sumtm\dots\sumtm\trm{G_n}$ is a simple expression such that
$\rlang{\set{\beta_1,\cdots,\beta_n}}=\Gr{e}$.

We maintain an invariant throughout the construction: the boxes
generated by templates labelling each transition of the automaton
should stem from runs in $\A$. More precisely, we require every
template labelling a transition to be $\A$-valid:
\begin{defi}[$\A$-validity]
  A template $\gamma\in\gtboxset\tau\sigma$ is \emph{$\A$-valid} if
  for every $\beta\in\rlang{\gamma}$ there exists a run $R$ in $\A$
  such that $\boxrun{R}=\beta$.
\end{defi}

Now we only need to show how to remove one state, preserving the
language of the automaton. The idea when removing a state $\tau$ is to
add transitions to replace every run going through $\tau$. For every
pair of states $\sigma,\chi$ with transitions labelled with
$\beta,\delta,\gamma$ and $\Gamma'$ as below, we will define a
template $\Gamma$. We will then replace $\Gamma'$ with
$\Gamma\sumtm\Gamma'$ on the transition going from $\sigma$ to $\chi$.
\begin{center}
  \begin{tikzpicture}[yscale=.6,xscale=1.2]
    \state[\sigma] (s) (0,0);
    \state[\tau] (t) (1,1);
    \coordinate (loop) at ($(t)+(0,1)$);
    \node[above] () at (loop) {$\gamma$};
    \state[\chi] (c) (2,0);
    \edge (s) (t)[\beta];
    \draw[arc] (t) to[out=110,in=180] (loop)
    to[out=0,in=70] (t);
    \edge (t) (c)[\delta];
    \edge[below] (s) (c)[\Gamma'];

    \state[\sigma] (s1) (4,0);
    \state[\chi] (c1) (6,0);
    \edge (s1) (c1)[\Gamma\sumtm\Gamma'];

    \node () at ($(c)!.5!(s1)$) {$\mapsto$};
  \end{tikzpicture}
\end{center}
We would like $\Gamma$ to be
$\beta\cdot\gamma^\star\cdot\delta$. However, remember that we can
only compute the star of atomic templates.  But in this case, we can
approximate $\gamma$ with a good-enough atomic template:
\begin{lem}\label{lem:exists-star}
  For every $\A$-valid template $\gamma\in\gtboxset\tau\tau$ there
  exists an $\A$-valid atomic template $At(\gamma)\in\gtboxset\tau\tau$
  such that $\rlang{\gamma}\subseteq \rlang{At(\gamma)^\star}$.
\end{lem}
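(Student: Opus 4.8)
The strategy is to \emph{desynchronise} each box of $\gamma$: cut it into its connected pieces — which are automatically atomic — collect all of these into a single atomic template $At(\gamma)$, and observe that iterating $At(\gamma)$ recovers (with extra slack) every box of $\gamma$. Concretely, write $\gamma=\set{\beta_1,\dots,\beta_k}$ and fix $\beta=\tuple{\pin,G,\pout}\in\gamma$ with $C_1,\dots,C_m$ the non-trivial connected components of $G$. I would let $\alpha_{\beta,j}$ be the box whose graph is $C_j$ together with one fresh isolated vertex $u_p$ for every port $p$ with $\pout(p)\notin C_j$, keeping $\pin,\pout$ as in $\beta$ on the ports landing in $C_j$ and sending every remaining port $p$ to $u_p$. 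Each $\alpha_{\beta,j}$ is atomic by construction (its only non-trivial component is $C_j$, and every outside vertex is the common $\pin$/$\pout$-image of a unique port); and since these boxes have pairwise disjoint support and the trivial components of $G$ are passthrough ports shared by all of them, $\beta=\alpha_{\beta,1}\odot\cdots\odot\alpha_{\beta,m}$ (in any order, by the commutation fact for disjoint supports). I then set $At(\gamma)\eqdef\setcompr{\alpha_{\beta,j}}{\beta\in\gamma,\ 1\le j\le m}$, a finite template.

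The first point to check is that $At(\gamma)$ has type $\tau\to\tau$, so that its star is available via the preceding proposition on iterating atomic templates. For this I would invoke the structure results around Lemma~\ref{lem:typboxbowtie}: the support of an atomic box of type $\sigma\to\sigma$ is the set of leaves of $\sigma$ lying below some internal node, and distinct connected components of a box of type $\tau\to\tau$ occupy disjoint such leaf-sets, so the SP-reduction witnessing $\beta:\tau\to\tau$ splits into independent per-component reductions; each of these, combined with uniqueness of the output type of a box (the remark after Definition~\ref{sec:definitions-3}), forces $\alpha_{\beta,j}:\tau\to\tau$. Hence $\rlang{At(\gamma)}\subseteq\tboxsetb\tau\tau$, and $At(\gamma)^\star$ is a well-defined finite template of type $\tau\to\tau$ with $\rlang{At(\gamma)^\star}=\rlang{At(\gamma)}^\star$.

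Next, $\rlang\gamma\subseteq\rlang{At(\gamma)^\star}$: any $\beta'\in\rlang\gamma$ is obtained from some $\beta\in\gamma$ by replacing each edge with a series-parallel graph drawn from its expression label, an operation confined to a single connected component and therefore commuting with the decomposition above; thus $\beta'=\alpha'_{\beta,1}\odot\cdots\odot\alpha'_{\beta,m}$ with each $\alpha'_{\beta,j}\in\rlang{\alpha_{\beta,j}}\subseteq\rlang{At(\gamma)}$, whence $\beta'\in\rlang{At(\gamma)}^\star=\rlang{At(\gamma)^\star}$.

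The real work is the $\A$-validity of $At(\gamma)$. Take $\alpha'\in\rlang{At(\gamma)}$ coming from a component $C_j$ of $\beta\in\gamma$; by $\A$-validity of $\gamma$ the corresponding fully expanded box $\beta'$ equals $\boxrun R$ for a run $R$ of $\A$, and since $\beta':\tau\to\tau$ this $R$ is a loop from some state $S$ to $S$. Via Lemma~\ref{lem:boxtrace} the transitions of $R$ split according to the connected component of $\Gr R$ containing their $*$-vertex, and I would argue that the transitions attached to the copy of $C_j$ use only places in $\supp{\alpha_{\beta,j}}$, that these place-sets are pairwise disjoint across components, and that firing exactly those transitions in their original relative order yields a valid proper sub-run $R_j$ from $S$ to $S$ with $\boxrun{R_j}=\alpha'$ — disjointness of the place-sets being what guarantees that dropping the other transitions does not disturb enabledness along $R_j$. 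A short induction then lifts $\A$-validity from $At(\gamma)$ to $At(\gamma)^\star$. I expect this last step — realising one connected component of the trace of a loop by a genuine sub-run of $\A$ — to be the main obstacle; everything else is bookkeeping about connected components, supports, and the already-developed calculus of typed boxes and atomic templates.
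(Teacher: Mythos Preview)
Your proposal is correct and follows exactly the paper's approach: decompose each box of $\gamma$ into its non-trivial connected components to obtain atomic boxes, and take $At(\gamma)$ to be the collection of all of these. The paper's proof is three sentences and leaves the type-checking, the per-component sub-run extraction, and the $\A$-validity argument implicit; you have simply written out those details (and a bit more than needed, since the lemma only asks for $\A$-validity of $At(\gamma)$, not of $At(\gamma)^\star$).
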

\begin{proof}
  From each connected component of the graph of each box in $\gamma$
  stems an $\A$-valid atomic box. Every box in $\gamma$ is equal to
  the product (in any order) of the boxes corresponding to its
  connected components. We then take $At(\gamma)$ to be the set of all
  these atomic boxes.
\end{proof}
We then define $\Gamma$ to be
$\beta\cdot At(\gamma)^\star\cdot\delta$.  Hence we get
$\rlang{\beta}\odot\rlang{\gamma}^\star\odot\rlang\delta\subseteq
\rlang{\Gamma}$. The lemma also ensures that for every graph produced
by a run in the automaton where $\tau$ is removed, there is a run in
$\A$ yielding the same graph. Both theses properties allow us to
conclude that this step is valid, preserving both the invariant and
the language of the automaton.

\begin{prop}
  \label{prop:ok_a_to_e}
  There exists a function $\E\colon \PA[Y]\to\SExp[Y]$ such that 
  for every Petri automaton $\A$, we have
  $\Gr{\E\paren\A}=\Gr\A$.
\end{prop}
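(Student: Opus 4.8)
The plan is to obtain $\E$ from a modified Kleene state‑elimination on the finite automaton $Aut$ built above, whose states are types over $P$, whose transitions are typed boxes, and which (by Lemma~\ref{lem:runboxesaretypable} and the converse noted just above) captures exactly the accepting runs of $\A$: every accepting run $R$ yields an accepting path of $Aut$ whose composed label is $\boxrun R$, and conversely every accepting path of $Aut$ comes from such a run. By Lemma~\ref{lem:boxtrace} the graph carried by $\boxrun R$ is isomorphic to $\Gr R$, so $\Gr\A$ is exactly the set of graphs carried by the boxes $\boxrun R$ for $R\in\aRun$.

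First I would pass to a generalised automaton: for each ordered pair of types $(\sigma,\tau)$ keep a single transition from $\sigma$ to $\tau$, labelled by the finite template $\setcompr\beta{\tuple{\sigma,\beta,\tau}\in Aut}$ (possibly the empty template), which has type $\sigma\to\tau$. I would maintain the invariant that every template labelling a transition is $\A$‑valid, and then delete the non‑initial, non‑final states one at a time. To remove a state $\tau$, for every pair $\sigma,\chi$ with transitions $\beta$ (from $\sigma$ to $\tau$), $\gamma$ (self‑loop at $\tau$), $\delta$ (from $\tau$ to $\chi$) and $\Gamma'$ (from $\sigma$ to $\chi$), I would replace $\Gamma'$ by $\Gamma\sumtm\Gamma'$, where $\Gamma\eqdef\beta\cdot At(\gamma)^\star\cdot\delta$ and $At(\gamma)$ is the $\A$‑valid atomic template with $\rlang\gamma\subseteq\rlang{At(\gamma)^\star}$ provided by Lemma~\ref{lem:exists-star}. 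The star $At(\gamma)^\star$ exists by the construction of the Kleene star for atomic templates established above; $\Gamma$ is again $\A$‑valid; and $\rlang\beta\odot\rlang\gamma^\star\odot\rlang\delta\subseteq\rlang\Gamma$. Hence this step preserves the invariant, and — the inclusion being in the right direction — it never loses any box coming from a run of $\A$.

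When no intermediate state remains, the only transition left goes from $\tau_\iota$ to $\tau_f$ and is labelled by some finite template $\set{\beta_1,\dots,\beta_n}$. Each $\beta_i$ has type $\tau_\iota\to\tau_f$, hence its underlying graph $G_i$ is series‑parallel, and we set $\E\paren\A\eqdef\trm{G_1}\sumtm\dots\sumtm\trm{G_n}\in\SExp[Y]$ (with the empty sum read as $0$); by the identification of typed boxes of this shape with simple expressions established above, $\Gr{\E\paren\A}=\rlang{\set{\beta_1,\dots,\beta_n}}$. The inclusion $\Gr{\E\paren\A}\subseteq\Gr\A$ then follows from $\A$‑validity: any $G\in\Gr{\E\paren\A}$ is the graph carried by some $\beta\in\rlang{\set{\beta_1,\dots,\beta_n}}$ of type $\tau_\iota\to\tau_f$, so $\beta=\boxrun R$ for a run $R$ of $\A$, and the typing forces $R$ to be accepting, whence $G\cong\Gr R\in\Gr\A$ by Lemma~\ref{lem:boxtrace}. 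Conversely, every accepting run $R$ produces a box $\boxrun R$ recognised by $Aut$ (Lemma~\ref{lem:runboxesaretypable}), which therefore survives the elimination, so $\Gr R$ lies in $\Gr{\E\paren\A}$.

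The main obstacle is that Kleene star is available only for atomic templates, so $\Gamma=\beta\cdot At(\gamma)^\star\cdot\delta$ is merely an over‑approximation of $\beta\cdot\gamma^\star\cdot\delta$: we obtain an inclusion, not an equality, of generated box sets. The delicate part is therefore not the state‑elimination bookkeeping but showing this over‑approximation is harmless, which is precisely the purpose of the $\A$‑validity invariant — one must check that $\A$‑validity is preserved by all the operations used on templates ($\sumtm$, $\cdot$, $\odot$ and the atomic star), so that every box ever introduced still arises as $\boxrun R$ for an actual run of $\A$; this relies on Lemma~\ref{lem:exists-star} and on the support‑based construction of the atomic star.
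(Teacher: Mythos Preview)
Your proposal is correct and follows essentially the same approach as the paper: pass from $Aut$ to a generalised automaton labelled by finite templates, maintain the $\A$-validity invariant, eliminate intermediate states using $\Gamma=\beta\cdot At(\gamma)^\star\cdot\delta$ via Lemma~\ref{lem:exists-star}, and read off the simple expression from the final $\tau_\iota\to\tau_f$ template. You make the two inclusions of $\Gr{\E(\A)}=\Gr\A$ and the role of typing in forcing $R$ to be accepting more explicit than the paper does, but the argument is the same.
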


\begin{center}
  \begin{tikzcd}
    \SExp[Y] \arrow[dr,"\G"']&& \PA[Y]\arrow[dl,"\G"]\arrow[ll,"\E"']\\ &\pset{\SP[Y]}
  \end{tikzcd}
\end{center}

\begin{thm}
  \label{thm:kl2:sp}
  Recognisable sets of series-parallel graphs are regular.
\end{thm}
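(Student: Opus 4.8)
The plan is to read Theorem~\ref{thm:kl2:sp} as an immediate consequence of Proposition~\ref{prop:ok_a_to_e}. Unfolding the definitions: a set $S\subseteq\SP[Y]$ is recognisable precisely when $S=\Gr\A$ for some $\A\in\PA[Y]$; feeding such an $\A$ to the function $\E$ of Proposition~\ref{prop:ok_a_to_e} produces a simple expression $\E(\A)\in\SExp[Y]$ with $\Gr{\E(\A)}=\Gr\A=S$, which is exactly the statement that $S$ is regular. So the substance of the argument is the construction of $\E$, and the theorem is merely its packaging; the first step is just to say this.

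To build $\E(\A)$, I would first normalise $\A$ so that $\iota$ never occurs in the output of a transition and so that a fresh isolated place $f$ is available. By Lemma~\ref{lem:boxtrace}, the box $\boxrun R$ of an accepting run $R$ carries exactly the trace $\Gr R$; by Constraint~\ref{cstr:sp} and Lemma~\ref{lem:runboxesaretypable}, the box of each transition of an accepting run can be typed, using the SP-reductions of the successive partial traces as types. This turns the accepting runs of $\A$ into the accepting paths of a \emph{finite}-state automaton $Aut$ whose states are the (finitely many) types over $P$, whose alphabet is the (finite) set of typed boxes, with initial state $\tau_\iota$ and final state $\tau_f$. I would then run the standard Kleene/state-elimination procedure on $Aut$, but with transitions labelled by finite box \emph{templates} instead of regular expressions over boxes, and maintaining the invariant that every template labelling a transition is $\A$-valid. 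Eliminating all intermediate states leaves a single transition $\tau_\iota\to\tau_f$ labelled by a finite template $\set{\beta_1,\dots,\beta_n}$; each $\beta_i$ has type $\tau_\iota\to\tau_f$, hence a series-parallel graph $G_i$, and $\E(\A)\eqdef\trm{G_1}\sumtm\dots\sumtm\trm{G_n}$ satisfies $\Gr{\E(\A)}=\rlang{\set{\beta_1,\dots,\beta_n}}$. This equals $\Gr\A$: the invariant gives that every generated box stems from a run of $\A$ (inclusion $\subseteq$), while $Aut$ captures all accepting runs and the elimination preserves the recognised language (inclusion $\supseteq$).

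The step I expect to be the main obstacle is computing the Kleene star inside the state-elimination procedure: the star of a finite set of boxes is in general infinite and not even the image of a finite template, because two parallel branches of a loop cannot be forced to iterate in lockstep. The resolution is to work with atomic templates: a loop template $\gamma$ of type $\tau\to\tau$ is approximated by an $\A$-valid \emph{atomic} template $At(\gamma)$ with $\rlang\gamma\subseteq\rlang{At(\gamma)^\star}$ (Lemma~\ref{lem:exists-star}), and Lemma~\ref{lem:typboxbowtie} — itself a consequence of the gluing lemma, Lemma~\ref{lem:gluecenter} — together with the support lemmas lets one factor an atomic box $\alpha=\alpha^1\odot\alpha^2$ so that the star computation reduces to the easy singleton-support case. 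One must also check that replacing a loop $\gamma$ by $At(\gamma)^\star$ does not enlarge the recognised language beyond $\Gr\A$; this is precisely what $\A$-validity buys, which is why that invariant has to be threaded through the whole construction.
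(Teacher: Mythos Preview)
Your proposal is correct and follows essentially the same approach as the paper: Theorem~\ref{thm:kl2:sp} is indeed stated as an immediate corollary of Proposition~\ref{prop:ok_a_to_e}, and your sketch of the construction of $\E$ (normalise $\A$, build the typed NFA $Aut$, run state elimination with finite templates under the $\A$-validity invariant, and handle stars via atomic templates using Lemmas~\ref{lem:gluecenter}, \ref{lem:typboxbowtie}, and~\ref{lem:exists-star}) matches Section~\ref{sec:main-theorem} faithfully.
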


\noindent As in Section~\ref{sec:exp:pa}, we can use
Proposition~\ref{prop:type:graphs} to extend this result to sets of
arbitrary graphs: the following diagram commutes.
\begin{center}
  \begin{tikzcd}
    \Exp\arrow[ddrr,"\G"']&&
    \SExp\arrow[ll,"\retype\argument"'] \arrow[dr,"\G"']&&
    \PA[\Xb]\arrow[dl,"\G"]\arrow[ll,"\E"']\\ 
    &&&\pset{\SP}\arrow[dl,"\retype\argument"]\\
    &&\pset{\Gph}
  \end{tikzcd}
\end{center}
\noindent Whence
\begin{thm}
  \label{thm:kl2}
  Recognisable sets of graphs are regular.
\end{thm}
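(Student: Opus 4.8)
The plan is to transport the second half of the series-parallel Kleene theorem (Theorem~\ref{thm:kl2:sp}, equivalently Proposition~\ref{prop:ok_a_to_e}) across the typing correspondence of Proposition~\ref{prop:type:graphs}, exactly as summarised by the commutative diagram displayed just above. Concretely, let $S\subseteq\Gph$ be a recognisable set of graphs, so that $S=\retype{\Gr\A}$ for some Petri automaton $\A\in\PA[\Xb]$. Applying Proposition~\ref{prop:ok_a_to_e} with $Y\eqdef\Xb$, the simple expression $f\eqdef\E\paren\A\in\SExp[\Xb]$ satisfies $\Gr f=\Gr\A$. I would then set $e\eqdef\retype f\in\Exp$ and show $\Gr e=S$, which establishes that $S$ is regular with witness $e=\retype{\E\paren\A}$.

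The remaining point is the identity $\Gr{\retype f}=\retype{\Gr f}$ for simple expressions $f$, i.e.\ the extension of Proposition~\ref{prop:type:graphs} from terms and single graphs to expressions and sets of graphs (the second commutative square displayed after that proposition). Granting this, the computation closes immediately: $\Gr e=\Gr{\retype f}=\retype{\Gr f}=\retype{\Gr\A}=S$. To justify the extension, I would unfold $\Gr e=\setcompr{\Gr u}{u\in\tms e}$ and the fact that $\tms{\retype f}=\retype{\tms f}$ (a straightforward structural induction, using that $\retype\argument$ is a homomorphism commuting with $\cdot$, $\cap$, $\sumtm$ and $\argument^+$, together with the definition of $\tms\argument$ on $\argument^+$), then apply the term-level equality $\Gr{\retype u}=\retype{\Gr u}$ from Proposition~\ref{prop:type:graphs} pointwise; one also uses that $\retype\argument$ on graphs is compatible with unions of graph semantics, which is immediate from its definition. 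All of these facts are already recorded in Section~\ref{sec:sp}, so this step is short.

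I do not expect a genuine obstacle here: the real content has already been discharged in Section~\ref{sec:pa:exp}, where $\E$ is built via boxes, typed boxes, templates and the star of atomic templates, and in the earlier sections establishing the graph-language characterisation. The present theorem is purely the composition of Theorem~\ref{thm:kl2:sp} with the $\retype\argument$/$\detype\argument$ bookkeeping, so the only care required is to make sure the two commuting squares are invoked with the correct instantiation $Y=\Xb$ and that $\retype\argument$ is applied consistently at the level of expressions and of sets of graphs.
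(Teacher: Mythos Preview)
Your proposal is correct and follows essentially the same route as the paper: instantiate Proposition~\ref{prop:ok_a_to_e} at $Y=\Xb$, then transport along the $\retype\argument$ side of the commutative square extending Proposition~\ref{prop:type:graphs} to expressions and sets of graphs. Your explicit unfolding via $\tms{\retype f}=\retype{\tms f}$ is a reasonable way to justify that square, which the paper merely asserts.
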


\section{Reading graphs modulo homomorphism}
\label{sec:reading}

In Section~\ref{sec:exp:pa} we have shown how to associate a Petri
automaton $\A\paren e$ with every expression $e$, such that
$\Gr{\A\paren e}=\Gr e$. However, remember from
Section~\ref{sec:graphs} that the set of graphs we are interested in
for representatble Kleene allegories is not directly $\Gr e$, but its
downward closure $\clgr{\Gr e}$ w.r.t.\ homomorphism
(Theorem~\ref{thm:interlang}). Given a Petri automaton $\A\in\PA$, we
show how to read the graphs in $\clgr{\retype {\Gr\A}}$ in a local
and incremental way.

\begin{defi}[Reading, language of a run]\label{def:read}
  A \emph{reading} of $G=\tuple{V,E,\iota,o}$ along a run
  $\run=\tuple{S_0,\tr[0];\dots;\tr[n],S_{n+1}}$ (with intermediary
  states $S_1,\dots$) is a sequence
  $(\rho_k)_{0\leqslant k\leqslant {n+1}}$ such that for all $k$,
  $\rho_k$ is a map from $S_k$ to $V$, $\rho_0(S_0)=\set {\iota}$, and
  $\forall 0\leqslant k\leqslant n$, the following holds:
  \begin{itemize}
  \item all tokens in the input of the transition are mapped to the
    same vertex in the graph:
    \[\forall p,q\in \trin[k],\ \rho_{k}(p)= \rho_{k}(q);\]
  \item a final transition may only be fired from the output of a graph:
    \[\trout[k]=\emptyset\Rightarrow \forall
      p\in\trin[k],\rho_{k}(p)=o;\]
  \item the images of tokens in $S_{k+1}$ that are not in the input of
    the transition are unchanged:
    \[\forall p\in S_{k+1} \setminus \trin[k],\ \rho_{k}(p)=
      \rho_{k+1}(p);\]
  \item each pair in the output of the transition can be validated
    by the graph: 
    \begin{align*}
      \forall p\in \trin[k],~\forall \tuple{x,q}\in \trout[k],~
      \tuple{\rho_{k}(p),x, \rho_{k+1}(q)}\in E.
    \end{align*}
  \end{itemize}
  The \emph{language of a run} $\run$, denoted by $\Ln\run$, is the set
  of graphs that can be read along~$\run$.
\end{defi}

\begin{exa}
  The following automaton on the left-hand side can read the
  graph on the right-hand side.
  \begin{center}
    \begin{tikzpicture}[xscale=2.5]
      \state[A](A)(0,0);\state[B](B)(1,1);
      \state[C](C)(1,0);\state[D](D)(1,-1);
      \state[E](E)(2,1);\state[F](F)(2,-.5);
      \transb[t_0](0)(0.4,0);\transb[t_1](1)(1.4,1);\transb[t_2](2)(1.4,-.5);
      \transfb[t_3](3)(2.5,.25);
      \edge(A)(0);\edge[bend left,above](0)(B)[a];\edge(0)(C)[b];
      \edge[bend right,below](0)(D)[c];
      \edge(B)(1);\edge(1)(E)[\top];
      \edge[bend left](C)(2);\edge[bend right](D)(2);\edge(2)(F)[a'];
      \edge[bend left](E)(3);\edge[bend right](F)(3);
      \edge($(A)-(.4,0)$)(A);
    \end{tikzpicture}
    \hspace{2cm}
    \begin{tikzpicture}
      \state[1](1)(0,0);\state[2](2)(0,2);
      \initst(1);\fnst(1);
      \edge[bend left,left](1)(2)[a];
      \edge[][fill=white](1)(2)[b];\edge[bend right,right](1)(2)[c];
    \end{tikzpicture}
  \end{center}
  Let $R$ be the run $\tuple{\set A,t_0;t_1;t_2;t_3,\emptyset}$. To
  read the graph above along $R$, one should use the following reading:
  \begin{equation*}
    \left[A\mapsto 1\right]; 
    \left[
      \begin{array}{c}
        B\mapsto 2\\
        C\mapsto 2\\
        D\mapsto 2
      \end{array}
    \right]; 
    \left[
      \begin{array}{c}
        E\mapsto 1\\
        C\mapsto 2\\
        D\mapsto 2
      \end{array}
    \right]; 
    \left[
      \begin{array}{c}
        E\mapsto 1\\
        F\mapsto 1
      \end{array}
    \right];
    \left[\right].
  \end{equation*}
  One can easily check that this is indeed a valid reading along $R$.
\end{exa}

\noindent
The language of a Petri automaton is finally obtained by considering
all accepting runs.

\begin{defi}[Language of a Petri automaton]\label{def:lang}
  The \emph{language of $\A$}, written $\Ln\A$, is the set of graphs
  readable from accepting runs:
  \begin{align*}
    \Ln\A&\eqdef\bigcup_{R\in\aRun}\Ln\run.\qedhere
  \end{align*}
\end{defi}

\noindent 
To avoid confusions between the languages $\Ln\A$ and $\Gr\A$, we
write ``$G$ is produced by $\A$'' when $G\in\Gr\A$, reserving language
theoretic terminology like ``$G$ is accepted by $\A$'' to cases where
we mean $G\in\Ln\A$.

\begin{lem}\label{lem:langex}
  For every accepting run $\run$, we have $G\in\Ln\run$ if and only if
  $G\lessgr\retype{\Gr\run}$.
\end{lem}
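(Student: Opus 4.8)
The plan is to build, for a fixed accepting run $\run=\tuple{\set\iota,\tr[0];\dots;\tr[n],\emptyset}$ with intermediary states $\set\iota=S_0,\dots,S_{n+1}=\emptyset$, an explicit correspondence between readings of $G$ along $\run$ and graph homomorphisms from $\retype{\Gr\run}$ to $G$; since $G\lessgr\retype{\Gr\run}$ says exactly that such a homomorphism exists, the lemma follows. Write $H=\Gr\run$. By Lemma~\ref{lem:cst2:finaltrans} the run ends with a final transition, so $V_H=\set{0,\dots,n}$ with input $0$ and output $n$, and $\tuple{k,x,l}\in E_H$ precisely when $\tuple{x,p}\in\trout[k]$ and $l=\min\nu\paren{k,p}$; note $\nu\paren{k,p}\neq\emptyset$ always, since for an accepting run every token is consumed before the end. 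Recall that $\retype H$ has vertices $V_H/{\equiv}$ for $\equiv$ generated by the pairs $\tuple{k,l}$ with $\tuple{k,1,l}\in E_H$, input $[0]$, output $[n]$, and an $X$-edge $\tuple{[k],a,[l]}$ whenever some representatives satisfy $\tuple{k',a,l'}\in E_H$ or $\tuple{l',a',k'}\in E_H$; and recall that in a reading an output pair $\tuple{x,q}$ of $\tr[k]$ is validated, for each $p\in\trin[k]$, by a straight edge $\tuple{\rho_k(p),a,\rho_{k+1}(q)}\in E$ when $x=a\in X$, by a reversed edge when $x=a'$, by the equality $\rho_k(p)=\rho_{k+1}(q)$ when $x=1$, and trivially when $x=\top$ — the four cases mirroring the four clauses defining $\retype\argument$.

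For the ``if'' direction I take a homomorphism $\phi\colon\retype H\to G$ and set $\rho_k(p)\eqdef\phi([\ell_k(p)])$, where $\ell_k(p)$ is the index of the first transition at or after state $S_k$ that consumes the token in $p$ (well-defined since the run is accepting). The reading conditions then hold by inspection: all inputs of $\tr[k]$ get the common value $\phi([k])$ since $\ell_k(p)=k$ for $p\in\trin[k]$; a final transition is fired at $o$ since $\phi$ sends the output class $[n]$ to $o$; a token surviving step $k$ keeps its image since it has the same next consumer whether computed from $S_k$ or from $S_{k+1}$ (safety, Constraint~\ref{cstr:safety}, is used here to separate surviving from newly created tokens); and an output pair $\tuple{x,q}$ with $\ell_{k+1}(q)=\min\nu\paren{k,q}=l$ is validated since $\tuple{k,x,l}\in E_H$ becomes, in $\retype H$, an $X$-edge, a reversed edge, the identification $[k]=[l]$, or nothing according to $x$, and $\phi$ carries each of these to exactly what is required. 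Conversely, from a reading $(\rho_k)$ I define $\psi\colon V_H\to V$ by $\psi(k)\eqdef\rho_k(p)$ for any $p\in\trin[k]$; this is well-defined because inputs are non-empty and all of them share their image. It descends through $\equiv$: if $\tuple{k,1,l}\in E_H$ arises from an output $\tuple{1,q}$ of $\tr[k]$, stability of the token in $q$ gives $\rho_{k+1}(q)=\rho_l(q)=\psi(l)$ while the incidence condition for $\tuple{1,q}$ gives $\rho_k(p)=\rho_{k+1}(q)$, hence $\psi(k)=\psi(l)$; so $\psi$ induces a map $\phi\colon\retype H\to G$. It preserves input and output by the reading clauses $\rho_0(S_0)=\set{\iota}$ (giving $\phi([0])=\iota$) and $\rho_n(p)=o$ for all $p\in\trin[n]$ with $\tr[n]$ final (giving $\phi([n])=o$), and edge preservation is the previous computation run backwards, splitting on whether a $\retype H$-edge comes from a straight or a reversed $H$-edge. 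Finally the two constructions are mutually inverse, which yields the correspondence, hence the equivalence.

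Conceptually there is nothing deep here: both directions are a single dictionary read two ways, ``vertex $k$ of $\Gr\run$ $\leftrightarrow$ the common $\rho_k$-image of the places of $\trin[k]$''. The parts needing genuine care — which I would flag as the main obstacle — are (a) keeping the bookkeeping of which places survive each transition consistent with the safety constraint, so that the chains $\rho_{k+1}(q)=\rho_{k+2}(q)=\dots=\rho_l(q)$ used on both sides are legitimate; and (b) matching the cases $x\in X$, $x=a'$, $x=1$, $x=\top$ precisely with the definition of $\retype\argument$, since an off-by-one in the reversal convention or a mismatched treatment of the constants $1$ and $\top$ is exactly where the argument would break.
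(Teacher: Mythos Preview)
Your proof is correct and follows essentially the same approach as the paper's: the paper too defines $\Nu(k,p)$ (your $\ell_k(p)$), builds $\rho_k(p)\eqdef\phi([\Nu(k,p)])$ from a homomorphism, and conversely defines $\phi([k])\eqdef\rho_k(p)$ for $p\in\trin[k]$, checking well-definedness on $\equiv$-classes and the same case split on $x\in X$, $x=a'$, $x=1$. Your treatment is in fact slightly more explicit than the paper's in handling the $\top$ case and in flagging where safety is used.
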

\begin{proof} Let us fix a Petri automaton~$\A=\tuple{P,\T,\iota_\A}$.
  Let $G=\tuple{V,E,\iota,o}$ and
  $\run=\tuple{\set\iota,\tr[0];\dots;\tr[n],\emptyset}$, with
  intermediary states $S_1,\dots$.

  Remember that for every $k$ and~$p\in\settrout[k]$, we defined
  (Definition~\ref{def:trace}):
  $$\nu\paren{k,p}=\setcompr{l}{l>k\text{ and }p\in\trin[l]}.$$
  $\Gr\run$, is then the graph with vertices~$\set{0,\dots,n}$
  and the set of edges defined by:
  $$E_\run=\setcompr{\tuple{k,a,l}}{\tuple{a,p}\in\trout[k]
    \text{ and }l=\min{\nu\paren{k,p}}}. $$
  Finally, we get
  $\retype{\Gr\run}=\tuple{\setcompr{[i]}{0\leqslant i\leqslant
      j},E',[0],[n]}$.
  (It is easy to check that the source of $\Gr\run$ is the vertex $0$,
  and that its sink is $n$.)
   
  It will prove convenient in the following to use the notation
  $\Nu(k,p)=\min\nu(k-1,p)=\min\setcompr{l}{l\geqslant k\text{ and
    }p\in\trin[l]}$. Notice that
  $\forall k,p\in S_k, k\leqslant\Nu(k,p)\leqslant n$, and that
  whenever $p\in\trin[k]$, we have $\Nu(k,p)=k$.

  Suppose there exists a graph homomorphism $\phi$ from
  $\retype{\Gr{\run}}$ to $G$. We build a reading $(\rho_k)_k$ of $G$
  along $\run$ by letting
  $\rho_k(p)\eqdef \phi([\Nu(k,p)])$ for
  $0\leqslant k\leqslant n\text{ and }p\in S_k$. We now have to
  check that $\rho$ is truly a reading of $G$ in $\A$:
  \begin{itemize}
  \item for the initialisation of the reading:
    \begin{align*}
      \rho_0(\iota_\A)
      &=\phi([\Nu(0,\iota_\A)])\tag{by definition}\\
      &=\phi([0])=\set\iota\tag{$\phi$ is a homomorphism}
    \end{align*}
  \item for the final transition:
    \begin{align*}
      p\in S_{n},~\rho_{n}(p)
      &=\phi([\Nu(n,p)])\\
      &=\phi([n])=\set{o}.\tag{$\phi$ is a homomorphism}
    \end{align*} 
  \item for all
    $p\in \trin[k],\rho_{k}(p)=\phi([\Nu(k,p)])=\phi([k])$
    which does not depend on $p$.
  \item for all $p\in S_{k+1}\setminus \trin[k]$, we have
    $\Nu(k,p)=\Nu(k+1,p)$ (since $p\notin \trin[k]$). Hence 
    \begin{align*}
      \rho_{k}(p)=\phi([\Nu(k,p)])
      &=\phi([\Nu(k+1,p)])\\
      &=\rho_{k+1}(p).
    \end{align*}

  \item for all $p\in \trin[k]$ and $\tuple{x,q}\in \trout[k]$, we know that
    $\rho_{k}(p)=\phi([k])$ and that
    $\tuple{k,x,\Nu(k+1,q)}\in E_{\run}$.
    \begin{itemize}
    \item If $x\in X$, we also have
      $\tuple{[k],x,[\Nu(k+1,q)]}\in E'$. Because $\phi$ is a
      homomorphism we can deduce that:
      \[\tuple{\phi([k]),x,\phi([\Nu(k+1,q)])}\in E,\]
      which can be rewritten
      $\tuple{\rho_k(p),x,\rho_{k+1}(q)}\in E$.
    \item If $x=y',\ y\in X$, we also have
      $\tuple{[\Nu(k+1,q)],y,[k]}\in E'$. Because $\phi$ is a
      homomorphism we can get like before
      $\tuple{\rho_{k+1}(q),y,\rho_k(p)}\in E$.
    \item If finally $x=1$, then we know that
      $k\equiv\Nu(k+1,q)$, thus proving
      that
      \[\rho_k(p)=\phi([k])=\phi([\Nu(k+1,q)])=\rho_{k+1}(q).\]
    \end{itemize}

  \end{itemize}

  If on the other hand we have a reading
  $(\rho_k)_{0\leqslant k\leqslant n}$ of $G$, we define
  $\phi: \set{0,\dots,n}\to V$ by $\phi([k])\eqdef\rho_k(p)$ for
  all $p\in \trin[k]$. As $(\rho_k)_k$ is a reading, $\phi$ is well
  defined\footnote{It is not difficult to check that
    $k\equiv l\Rightarrow \forall \tuple{p,q}\in \trin[k]\times
    \trin[l], \rho_k(p)=\rho_l(q)$.}. Let us check that $\phi$ is a
  homomorphism from $\retype{\Gr\run}$ to $G$:
  \begin{itemize}
  \item $\phi([0])=\rho_0(\iota_\A)=\iota$;
  \item $\phi([n])=\rho_n(p)$ with $p\in S_n$, and since
    $(\rho_k)_k$ is a reading and $\tr[n]$ is final, $\rho_n(p)=o$.
  \item if $\tuple{[k],x,[l]}\in E'$ is an edge of $\retype{\Gr\run}$, then
    it was produced from some edge $(i,y,j)\in E_\run$, with either
    $x=y$ and $\tuple{i,j}\in[k]\times[l]$ or $y=x'$ and
    $\tuple{i,j}\in[l]\times[k]$. There is some $p\in \trin[{i}]$ and
    $q$ such that $\tuple{y,q}\in \trout[{j}]$ and $j=\Nu(i+1,q)$.

    By definition of $\Nu$ we know that $\forall{i+1\leqslant}m <j,
    q\notin \trin[m]$. Thus, because $(\rho_k)_k$ is a reading,
    $\rho_{i+1}(q)=\rho_j(q)$ and either
    $\tuple{\rho_i(p),x,\rho_{i+1}(q)}\in E$ or
    $\tuple{\rho_{i+1}(q),x,\rho_i(p)}\in E$, and thus
    $\tuple{\phi([k]),x,\phi([l])}\in E$.\qedhere
  \end{itemize}
\end{proof}

\noindent 
As an immediate consequence, we obtain the following characterisations of
the language of a Petri automaton.
\begin{thm}\label{thm:petri-automata}
  For every Petri automaton $\A\in\PA$, we have $\Ln\A=\clgr{\retype{\Gr\A}}$.
\end{thm}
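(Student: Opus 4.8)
The plan is to obtain the statement as an essentially immediate consequence of Lemma~\ref{lem:langex}, combined with the definition of $\Ln\A$ (Definition~\ref{def:lang}), the definition of $\Gr\A$, and the definition of downward closure. First I would unfold $\Ln\A=\bigcup_{\run\in\aRun}\Ln\run$. Then, for each accepting run $\run$, Lemma~\ref{lem:langex} states precisely that $\Ln\run=\setcompr{G}{G\lessgr\retype{\Gr\run}}$, that is, $\Ln\run$ is exactly the $\lessgr$-downward closure of the singleton $\set{\retype{\Gr\run}}$.

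Next I would invoke the fact that downward closure commutes with arbitrary unions, $\clgr{\bigcup_i S_i}=\bigcup_i\clgr{S_i}$; specialising $S_i$ to the singletons $\set{\retype{\Gr\run}}$ for $\run$ ranging over $\aRun$, this gives $\bigcup_{\run\in\aRun}\clgr{\set{\retype{\Gr\run}}}=\clgr{\setcompr{\retype{\Gr\run}}{\run\in\aRun}}$. Finally I would observe that, since $\Gr\A=\setcompr{\Gr\run}{\run\in\aRun}$ by definition and $\retype\argument$ acts pointwise on a set of graphs, we have $\setcompr{\retype{\Gr\run}}{\run\in\aRun}=\retype{\Gr\A}$. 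Chaining $\Ln\A=\bigcup_{\run\in\aRun}\Ln\run=\bigcup_{\run\in\aRun}\clgr{\set{\retype{\Gr\run}}}=\clgr{\retype{\Gr\A}}$ concludes.

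There is no genuine obstacle at this point: all the substance of the theorem is already packaged inside Lemma~\ref{lem:langex} — the explicit back-and-forth translation between readings of $G$ along $\run$ and graph homomorphisms into $G$, including the treatment of edges labelled by $1$ and $\top$ via the quotient underlying $\retype\argument$. The only items that deserve to be written out, and even then only briefly, are the two bookkeeping facts used above: that $\retype{\Gr\A}$ is the pointwise image of $\retype\argument$ over the traces of accepting runs, and that $\clgr{\argument}$ distributes over unions. Both are immediate from the definitions of Section~\ref{sec:graphs}.
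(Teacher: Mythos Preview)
Your proposal is correct and matches the paper's approach exactly: the paper presents this theorem as ``an immediate consequence'' of Lemma~\ref{lem:langex} without writing out any further proof. Your unfolding via $\Ln\A=\bigcup_{\run\in\aRun}\Ln\run$, the per-run application of Lemma~\ref{lem:langex}, and the commutation of $\clgr\argument$ with unions is precisely the intended (trivial) derivation.
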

\begin{thm}\label{thm:petri-automata:simple}
  For every Petri automaton $\A\in\PA[X]$, we have $\Ln\A=\clgr{\Gr\A}$.
\end{thm}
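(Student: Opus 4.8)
The plan is to obtain this as an immediate corollary of Theorem~\ref{thm:petri-automata}, the key point being that the relabelling operation $\retype\argument$ behaves as the identity on graphs whose labels all lie in $X$. Since $X\subseteq\Xb$ we have $\PA[X]\subseteq\PA$, so Theorem~\ref{thm:petri-automata} applies to any $\A\in\PA[X]$ and yields $\Ln\A=\clgr{\retype{\Gr\A}}$. Moreover, by Constraint~\ref{cstr:sp} every graph $G\in\Gr\A$ is series-parallel, and since the outputs of the transitions of $\A$ carry labels from $X$ only, each such $G$ is labelled in $X$, i.e.\ $\Gr\A\subseteq\SP[X]$. It therefore suffices to check that $\retype G=G$, up to isomorphism, for every $G\in\Gph$ labelled in $X$; from this $\retype{\Gr\A}=\Gr\A$ and hence $\Ln\A=\clgr{\retype{\Gr\A}}=\clgr{\Gr\A}$ follow at once.

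The remaining verification is a routine unfolding of the definition of $\retype\argument$. Let $G=\tuple{V,E,\iota,o}$ with $E\subseteq V\times X\times V$. The equivalence $\equiv$ is generated by the pairs $\tuple{i,j}$ with $\tuple{i,1,j}\in E$; as $1\notin X$ there are no such edges, so $\equiv$ is the identity relation and $[i]=\set i$ for all $i\in V$. Thus $V/{\equiv}$ is identified with $V$, and $[\iota],[o]$ with $\iota,o$. For the edge set, a triple $\tuple{[i],x,[j]}=\tuple{i,x,j}$ with $x\in X$ lies in $E'$ iff there is $\tuple{k,l}\in[i]\times[j]=\set{\tuple{i,j}}$ such that $\tuple{k,x,l}\in E$ or $\tuple{l,x',k}\in E$; since $E$ contains no edge labelled by a primed letter, the second alternative never applies, and the condition reduces to $\tuple{i,x,j}\in E$. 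Hence $E'=E$ and $\retype G=G$ as claimed. Since $\clgr\argument$ absorbs isomorphism, the chain of equalities
\begin{align*}
  \Ln\A=\clgr{\retype{\Gr\A}}=\clgr{\setcompr{\retype G}{G\in\Gr\A}}=\clgr{\Gr\A}
\end{align*}
is justified.

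No genuine obstacle is expected here: the only point needing care is the bookkeeping that $\retype\argument$ indeed collapses to the identity on $X$-labelled graphs, which is exactly the computation above. One could equivalently read this off the commuting diagram relating $\Exp$, $\SExp$, $\PA[\Xb]$ and the graph power-sets (noting that $\retype\argument$ restricted to $X$-labelled graphs is the identity there as well), but the direct argument is shorter.
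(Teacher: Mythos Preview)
Your argument is correct and matches the paper's treatment: both Theorem~\ref{thm:petri-automata} and Theorem~\ref{thm:petri-automata:simple} are stated as immediate consequences of Lemma~\ref{lem:langex}, and the simple case follows exactly because $\retype\argument$ acts as the identity on $X$-labelled graphs, which is precisely the computation you carry out.
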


\noindent The left-hand side language is defined through readings
along accepting runs, which is a local and incremental notion and
which allows us to define \emph{simulations} in the following
section. By contrast, the right-hand side language is defined
globally.

\section{Comparing Petri automata modulo homomorphism}
\label{sec:compare}

While we can use Petri automata over $\Xb$ to read arbitrary graphs,
and thus reason about arbitrary expressions, we do not know how to
compare the languages of these automata in general. We thus restrict
to Petri automata over $X$ and to simple expressions. At the algebraic
level, this means we consider identity-free Kleene lattices.  We prove
in this section that the containment problem is decidable.

Assembling the results from Sections~\ref{sec:graphs},
\ref{sec:exp:pa} and~\ref{sec:reading}, we have the following
proposition.
\begin{prop}
  \label{prop:reduce}
  For all simple expressions $e,f\in\SExp[X]$, the following are
  equivalent:
  \begin{enumerate}[label=(\roman*)]
  \item\label{i2:rel} $\Rel\models e \leq f$,
  \item\label{i2:aut} $\Gr{\A(e)} \subseteq \Ln{\A(f)}$.
  \end{enumerate}
  For all Petri automata $\A,\B\in\PA[X]$, the following are
  equivalent:
  \begin{enumerate}[label=(\roman*)]
  \item\label{i3:rel} $\Ln\A\subseteq\Ln\B$,
  \item\label{i3:aut} $\Gr\A\subseteq\Ln\B$.
  \end{enumerate}
\end{prop}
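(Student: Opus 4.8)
The plan is to assemble the proposition from three results already established: Proposition~\ref{prop:ok_e_to_a}, which gives $\Gr{\A(e)}=\Gr e$ for every simple expression $e$; Theorem~\ref{thm:petri-automata:simple}, which gives $\Ln\A=\clgr{\Gr\A}$ for every Petri automaton over $X$; and Theorem~\ref{thm:interlang}, which states that $\Rel\models e\leq f$ is equivalent to $\Gr e\subseteq\clgr{\Gr f}$. The only auxiliary ingredient is that $\clgr\argument$ is a closure operator on sets of graphs, so that $S\subseteq\clgr T$ holds if and only if $\clgr S\subseteq\clgr T$; this is precisely the observation recorded in the remark following Theorem~\ref{thm:interlang}.

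For the first equivalence, I would simply rewrite condition~\ref{i2:aut} using the known identities. By Proposition~\ref{prop:ok_e_to_a} we have $\Gr{\A(e)}=\Gr e$ and $\Gr{\A(f)}=\Gr f$; and since $\A(f)\in\PA[X]$, Theorem~\ref{thm:petri-automata:simple} gives $\Ln{\A(f)}=\clgr{\Gr{\A(f)}}=\clgr{\Gr f}$. Hence condition~\ref{i2:aut} is literally the statement $\Gr e\subseteq\clgr{\Gr f}$, which is condition~\ref{i:gl} of Theorem~\ref{thm:interlang}, equivalent there to $\Rel\models e\leq f$, i.e.\ to condition~\ref{i2:rel}.

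For the second equivalence, the implication \ref{i3:rel}$\Rightarrow$\ref{i3:aut} is immediate: by Theorem~\ref{thm:petri-automata:simple}, $\Gr\A\subseteq\clgr{\Gr\A}=\Ln\A\subseteq\Ln\B$. Conversely, assume $\Gr\A\subseteq\Ln\B$. Since $\B\in\PA[X]$, the set $\Ln\B=\clgr{\Gr\B}$ is downward-closed, so applying the closure operator to both sides of the inclusion $\Gr\A\subseteq\Ln\B$ and using its idempotence yields $\Ln\A=\clgr{\Gr\A}\subseteq\clgr{\Ln\B}=\Ln\B$, which is condition~\ref{i3:rel}.

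I do not expect any genuine obstacle here: both equivalences are bookkeeping on top of the substantial work of Sections~\ref{sec:graphs}, \ref{sec:exp:pa} and~\ref{sec:reading}. The single point that deserves a moment's attention is the invocation of the closure-operator property of $\clgr\argument$ in the harder direction of each equivalence, which is what lets us pass from an inclusion of the ``base'' graph sets to an inclusion of the downward-closed languages they generate, without any run-by-run reasoning.
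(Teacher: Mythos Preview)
Your proof is correct and follows essentially the same route as the paper: the first equivalence is obtained by chaining Theorem~\ref{thm:interlang}, Proposition~\ref{prop:ok_e_to_a}, and Theorem~\ref{thm:petri-automata:simple}, and the second equivalence reduces to the closure-operator fact that $S\subseteq\clgr T$ iff $\clgr S\subseteq\clgr T$ combined with Theorem~\ref{thm:petri-automata:simple}. The paper presents the first part as a displayed chain of biconditionals and dispatches the second part in one sentence, but the content is identical to what you wrote.
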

\begin{proof}
  We have
  \begin{align*}
    & \Rel\models e \leq f\\
    \Leftrightarrow~ \tag{Theorem~\ref{thm:interlang}}
    & \Gr e \subseteq \clgr{\Gr f} \\
    \Leftrightarrow~ \tag{Proposition~\ref{prop:ok_e_to_a}}
    & \Gr{\A(e)} \subseteq \clgr{\Gr{\A(f)}} \\
    \Leftrightarrow~ \tag{Theorem~\ref{thm:petri-automata:simple}}
    & \Gr{\A(e)} \subseteq \Ln{\A(f)}
  \end{align*}
  The second point follows from
  Theorem~\ref{thm:petri-automata:simple} and the fact that
  $S \subseteq \clgr T$ iff $\clgr S \subseteq \clgr T$.
\end{proof}
\noindent
As a consequence, to decide the (in)equational theory of identity-free
Kleene lattices, or to compare two Petri automata, it suffices to find
a way to decide whether $\Gr\A\subseteq\Ln\B$, given two Petri
automata $\A$ and $\B$.

\subsection{Intuition}
\label{sec:intuitions}

In this section, we show how the notion of simulation relation, that
allows one to compare NFA, can be adapted to handle simple Petri
automata. Consider two automata $\A_1=\tuple{P_1,\T_1,\iota_1}$ and
$\A_2=\tuple{P_2,\T_2,\iota_2}$, we try to show that for every accepting
run $\run$ in $\A_1$, $\Gr{\run}$ is recognised by some accepting run
$\run'$ in $\A_2$. Leaving non-determinism aside, the first idea that
comes to mind is to find a relation between the states in $\A_1$ and
the states in $\A_2$, that satisfy some conditions on the initial and
final states, and such that if $S_k\duck S'_k$ and
$\atrans[\tr][\A_1]{S_k}{S_{k+1}}$, then there is a state $S'_{k+1}$
in $\A_2$ such that $S_{k+1}\duck S'_{k+1}$,
$\atrans[\tr*][\A_2]{S'_k}{S'_{k+1}}$, and these transition steps are
compatible in some sense. However, such a definition will not give us
the result we are looking for. Consider these two runs:
\begin{center}
  \begin{tikzpicture}[baseline=(a.north)]
    \node (a) at (0,1) {$A$};
    \node (b1) at (2,1.5) {$B$};
    \node (b2) at (4,1.5) {$B$};
    \node (c) at (2,0.5) {$C$};
    \node (d) at (4,0.5) {$D$};
    \trans[1] (1) (1,1);
    \trans[2] (2) (3,0.5);
    \edge (a) (1);
    \edge[above] (1) (b1)[a];
    \edge[below] (1) (c)[b];
    \edge (c) (2);
    \draw[thick,dotted] (b1) to (b2);
    \edge[above] (2) (d)[c];
    \draw[dotted] (-0.5,0) -- (4.5,0);
    \node (w) at (0,-1) {$W$};
    \node (y1) at (2,-1.5) {$Y$};
    \node (y2) at (4,-1.5) {$Y$};
    \node (x) at (2,-0.5) {$X$};
    \node (z) at (4,-0.5) {$Z$};
    \trans[1'] (3) (1,-1);
    \trans[2'] (4) (3,-0.5);
    \edge (w) (3);
    \edge[below] (3) (y1)[b];
    \edge[above] (3) (x)[a];
    \edge (x) (4);
    \draw[thick,dotted] (y1) to (y2);
    \edge[above] (4) (z)[c];
    \transf[3] (f1) ($(b2)!.5!(d)+(1,0)$);
    \transf[3'] (f2) ($(y2)!.5!(z)+(1,0)$);
    \edge(b2) (f1);\edge(d)(f1);
    \edge(y2) (f2);\edge(z)(f2);
  \end{tikzpicture}
\end{center}
The graphs produced by the first and the second runs correspond
respectively to the simple terms $a\cap(b\cdot c)$ and
$(a\cdot c)\cap b$. These two terms are incomparable, but the relation
$\duck$ depicted below satisfies the previously stated conditions.

\begin{center}
  \begin{tikzpicture}
    \node (A) at (0,0) {$\set A$};
    \node[right=of A] (BC) {$\set{B,C}$};
    \node[right=of BC] (BD) {$\set{B,D}$};
    \node[right=of BD] (f1) {$\emptyset$};
    \node[below=of A] (W) {$\set W$};
    \node[right=of W] (XY) {$\set{X,Y}$};
    \node[right=of XY] (YZ) {$\set{Y,Z}$};
    \node[right=of YZ] (f2) {$\emptyset$};
    \draw[dotted] (A) to node[midway,fill=white] {$\duck$} (W);
    \draw[dotted] (BC) to node[midway,fill=white] {$\duck$} (XY);
    \draw[dotted] (BD) to node[midway,fill=white] {$\duck$} (YZ);
    \draw[dotted] (f1) to node[midway,fill=white] {$\duck$} (f2);
    \path[->] (A) edge node[midway,trans] {$1$} (BC)
    (W) edge node[midway,trans] {$1'$} (XY)
    (BC) edge node[midway,trans] {$2$} (BD)
    (XY) edge node[midway,trans] {$2'$} (YZ)
    (BD) edge node[midway,trans] {$3$} (f1)
    (YZ) edge node[midway,trans] {$3'$} (f2);
  \end{tikzpicture}
\end{center}

The problem here is that in Petri automata, runs are token firing
games. To adequately compare two runs, we need to closely track the
tokens. For this reason, we will relate a state $S_k$ in
$\A_1$ not only to a state $S'_k$ in $\A_2$, but to a map
$\eta_k$ from $S'_k$ to $S_k$. This will enable us to associate
with each token situated on some place in $P_2$ another token placed
on $\A_1$.

We want to find a reading of $\Gr{\run}$ in $\A_2$, \ie,~a run in
$\A_2$ together with a sequence of maps associating places in $\A_2$
with vertices in $\Gr{\run}$. Consider the picture below. Since we
already have a reading of $\Gr\run$ along $\run$ (by defining
$\rho_k(p)=\Nu(k,p)$, as in the proof of Lemma~\ref{lem:langex}), it
suffices to find maps from the places in $\A_2$ to the places in
$\A_1$ (the maps $\eta_k$): the reading of $\Gr{\run}$ in $\A_2$ will
be obtained by composing $\eta_k$ with $\rho_k$.
\begin{center}
  \begin{tikzpicture}
    \node (a0) at (0,0) {$S_0$};
    \node[right=of a0] (a1) {$S_1$};
    \node[right=of a1] (a2) {$\cdots$};
    \node[right=of a2] (a3) {$S_n$};
    \node[right=of a3] (a4) {$S_{n+1}$};
    \node[below=of a0] (b0) {$S'_0$};
    \node[right=of b0] (b1) {$S'_1$};
    \node[right=of b1] (b2) {$\cdots$};
    \node[right=of b2] (b3) {$S'_n$};
    \node[right=of b3] (b4) {$S'_{n+1}$};
    \edge[below] (a0)(a1)[\tr[0]];
    \edge[below] (a1)(a2)[\tr[1]];
    \edge[below] (a2)(a3)[\tr[n-1]];
    \edge[below] (a3)(a4)[\tr[n]];
    \edge[below] (b0)(b1)[\tr*[0]];
    \edge[below] (b1)(b2)[\tr*[1]];
    \edge[below] (b2)(b3)[\tr*[n-1]];
    \edge[below] (b3)(b4)[\tr*[n]];
    \edge[left](b0)(a0)[\eta_0];
    \edge[left](b1)(a1)[\eta_1];
    \edge[right](b3)(a3)[\eta_{n}];
    \edge[right](b4)(a4)[\eta_{n+1}];

    \node[draw,dotted,ellipse,minimum width=5cm] 
    (G) at ($(a0)!.5!(a4)+(0,1.5)$) {$\Gr\run$};
    \edge[above left] (a0)(G)[\rho_0];
    \edge[below right] (a1)(G)[\rho_1];
    \edge[below left] (a3)(G)[\rho_{n-1}];
    \edge[above right] (a4)(G)[\rho_n];
  \end{tikzpicture}
\end{center}


We need to impose some constraints on the maps $(\eta_k$) to ensure
that $\paren{\rho_k\circ\eta_k}_{0\leqslant k\leqslant n}$ is indeed a
correct reading in $\A_2$. First, we need to ascertain that a
transition $\tr*[k]$ in $\A_2$ may be fired from the reading state
$\rho_k\circ\eta_k$ to reach the reading state
$\rho_{k+1}\circ\eta_{k+1}$. Furthermore, as for NFA, we want
transitions $\tr[k]$ and $\tr*[k]$ to be related: specifically, we
require $\tr*[k]$ to be included (via the homomorphisms $\eta_k$ and
$\eta_{k+1}$) in the transition $\tr[k]$. This is meaningful because
transition $\tr[k]$ contains a lot of information about the vertex $k$
of $\Gr{\run}$ and about $\rho$: the labels of the outgoing edges from
$k$ are the labels on the output of $\tr[k]$, and the only places that
will ever be mapped to $k$ in the reading $\rho$ are exactly the
places in the input of~$\tr[k]$.

This already shows an important difference between the simulations for
NFA and Petri automata. For NFA, we relate a transition
$p\xrightarrow a p'$ to a transition $q\xrightarrow a q'$ with the
same label $a$. Here the transitions
${\atrans[\tr[k]][\A_1]{S_k}{S_{k+1}}}$ and
${\atrans[\tr*[k]][\A_2]{S'_k}{S'_{k+1}}}$ may have different
labels. Consider the step represented below, corresponding to a square
in the above diagram.  The output of transition ${0}$ has a label $b$
that does not appear in ${0'}$, and ${0'}$ has two outputs labelled by
$a$. Nevertheless this satisfies the conditions informally stated
above, indeed, $a\cap b\leq a\cap a$ holds.
\begin{center}
  \begin{tikzpicture}[baseline=(a.north)]
    \node (a) at (0.2,1) {$A$}; \node (b) at (2,1.5) {$B$}; \node (c)
    at (2,0.5) {$C$}; \trans[0] (1) (1,1); \edge (a) (1); \edge[above]
    (1) (b)[a]; \edge[below] (1) (c)[b]; \draw[dotted] (0,0) --
    (2.2,0); \node (x) at (0.2,-1) {$X$}; \node (y) at (2,-0.5) {$Y$};
    \node (z) at (2,-1.5) {$Z$}; \trans[0'] (3) (1,-1); \edge (x) (3);
    \edge[below] (3) (z)[a]; \edge[above] (3) (y)[a];

    \draw[arc,dotted,color=red] (x) to[out=110,in=-110] (a) ;
    \draw[arc,dotted,color=red] (y) to[out=110,in=-110] (b) ;
    \draw[arc,dotted,color=red] (z) to[out=70,in=-70] (b) ;
  \end{tikzpicture}
\end{center}

However this definition is not yet satisfactory. Consider the two
runs below:
\begin{center}
  \begin{tikzpicture}[baseline=(a.north)]
    \node (a) at (0,1) {$A$};
    \node (b) at (2,1) {$B$};
    \node (c) at (4,1.5) {$C$};
    \node (d) at (4,0.5) {$D$};
    \trans[0] (0) (1,1);
    \trans[1] (1) (3,1);
    \edge (a) (0);
    \edge[above] (0) (b)[a];
    \edge (b) (1);
    \edge[above] (1) (c)[b];
    \edge[below] (1) (d)[c];
    \draw[dotted] (-0.5,0) -- (6.5,0);
    \node (x) at (0,-1) {$X$};
    \node (y) at (2,-0.5) {$Y$};
    \node (z1) at (2,-1.5) {$Z$};
    \node (z2) at (4,-1.5) {$Z$};
    \node (t1) at (4,-0.5) {$T$};
    \node (t2) at (6,-0.5) {$T$};
    \node (u) at (6,-1.5) {$U$};
    \trans[0'] (2) (1,-1);
    \trans[1'] (3) (3,-0.5);
    \trans[2'] (4) (5,-1.5);
    \edge (x) (2);
    \edge[below] (2) (z1)[a];
    \edge[above] (2) (y)[a];

    \edge (y) (3);
    \draw[thick,dotted] (z1) to (z2);
    \edge[above] (3) (t1)[b];

    \edge (z2) (4);
    \draw[thick,dotted] (t1) to (t2);
    \edge[above] (4) (u)[c];
  \end{tikzpicture}
\end{center}
Their produced graphs correspond respectively to the simple terms
$a\cdot(b\cap c)$ and $(a\cdot b)\cap (a\cdot c)$. The problem is that
$a\cdot(b\cap c)\leq (a\cdot b)\cap (a\cdot c)$, but with the previous
definition, we cannot relate these runs: they do not have the same
length. The solution here consists in grouping the transitions ${1'}$
and ${2'}$ together, and considering these two steps as a single step
in a \emph{parallel run}. This last modification gives us a notion of
simulation that suits our needs.

\subsection{Simulations}
\label{sec:simulations}

Before getting to the notion of simulation, we need to define what is
a parallel run, and a parallel reading.

A set of transitions $T \subseteq\T$ is \emph{compatible} if their
inputs are pairwise disjoint. If furthermore all transitions in $T$
are enabled in a state $S$, one can observe that the
state $S'$ reached after firing them successively does not
depend on the order in which they are fired. In that case we write
$\atrans[T]S{S'}$.

A \emph{parallel run} is a sequence
$\mathcal R=\tuple{S_0,T_0;\dots;T_n,S_{n+1}}$, where the
$T_k\subseteq \T$ are compatible sets of transitions such that
$\atrans[T_k]{S_{k}}{S_{k+1}}$. We define a \emph{parallel reading}
$\rho$ along some parallel run
$\left.\mathcal R=\tuple{S_0,T_0;\dots;T_n,\emptyset}\right.$ by
requiring that: $\rho_0(S_0)=\set {\iota}$, $\rho_n(S_{n})=\set{o}$,
and $\forall k\leqslant n$ the following holds:
\begin{itemize}
\item $\forall p\in S_k \setminus \bigcup_{\tr\in
    T_k}\trin,\ \rho_{k+1}(p)= \rho_k(p)$;
\item $\forall \tr\in T_k,\forall p,q\in \trin, 
  \rho_{k}(p)= \rho_{k}(q)$;
\item $\forall \tr\in T_k,\forall p\in \trin,\forall \tuple{x,q}\in \trout, \tuple{\rho_k(p),x, \rho_{k+1}(q)}\in E$
\end{itemize}
\begin{defi}[Simulation]\label{def:sim}
  A relation
  ${\duck}\subseteq \pset{P_1}\times\pset{P_2\rightharpoonup P_1}$
  between the states of $\A_1$ and the partial maps from the
  places of $\A_2$ to the places of $\A_1$ is called a
  \emph{simulation} between $\A_1$ and $\A_2$ if:
  \begin{itemize}
  \item if $S\duck E$ and $\eta\in E$ then the range of $\eta$ must
    be included in $S$;
  \item $\set{\iota_1}\duck\set{[\iota_2\mapsto \iota_1]}$;
  \item if $S\duck E$ and
    $\atrans[\tr][\A_1]S{S'}$, then $S'\duck E'$
    where $E'$ is the set of all $\eta'$ such that there is some
    $\eta\in E$ and a compatible set of transitions $T\subseteq\T_2$
    such that:
    \begin{itemize}
    \item $\atrans[T][\A_2]{\dom\eta}{\dom{\eta'}}$;
    \item $\forall \tr*\in T, \eta(\trin*)\subseteq \trin$ and
      ${\forall \tuple{x,q}\in \trout*,} {\tuple{x,\eta'(q)}\in \trout}$;
    \item $\forall p\in\dom{\eta},\paren{\forall \tr*\in
        T, p\notin \trin*} \Rightarrow \eta(p)=\eta'(p)$.
    \end{itemize}
  \item if $S\duck E$ and $S=\emptyset$, then there must be some
    $\eta\in E$ such that $\dom\eta=\emptyset$. \qedhere
  \end{itemize}
\end{defi}

We will now prove that the language of $\A_1$ is contained in the
language of $\A_2$ if and only if there exists such a simulation.  We
first introduce the following notion of embedding.
\begin{defi}[Embedding]\label{def:embed}
  Let $\run=\tuple{S_0,\tr[0];\dots;\tr[n-1],S_n}$ be a run in $\A_1$,
  and let ${\mathcal R=\tuple{S'_0,T_0;\dots;T_{n-1},S'_n}}$ be a
  parallel run in $\A_2$.  An \emph{embedding} of $\mathcal R$ into
  $\run$ is a sequence $\paren{\eta_i}_{0\leqslant i\leqslant n}$ of
  maps such that for all $i<n$, we have:

  \noindent
  \begin{minipage}{.78\linewidth}
    \begin{itemize}
    \item $\eta_i$ is a map from $S'_i$ to $S_i$;
    \item the image of $T_i$ by $\eta_i$ is included in $\tr[i]$,
      meaning that for all $\tr\in T_i$, for all $p\in \trin$ and
      $\tuple{x,q}\in \trout$, $\eta_i(p)$ is contained in the input of
      $\tr[i]$ and $\tuple{x,\eta_{i+1}(q)}$ is in the output of $\tr[i]$;
    \item the image of the tokens in $S_i$ that do not appear in the
      input of $T_i$ are preserved ($\eta_i(p)=\eta_{i+1}(p)$) and
      their image is not in the input of $\tr[i]$.
    \end{itemize}
  \end{minipage}%
  \begin{minipage}{.22\linewidth}\centering
    \begin{tikzpicture}
      \node (0) at (0,0) {$S_i$}; \node[right=of 0] (1) {$S_{i+1}$};
      \node[below=of 0] (2) {$S'_i$}; \node[right=of 2] (3)
      {$S'_{i+1}$}; \edge (0) (1)[\tr[i]]; \edge[below] (2)
      (3)[T_i]; \edge[left] (2) (0)[\eta_i]; \edge[right] (3)
      (1)[\eta_{i+1}];
    \end{tikzpicture}
  \end{minipage}
\end{defi}

\noindent
Figure~\ref{fig:embed}~illustrates the embedding of some parallel run, 
into the run presented in Figure~\ref{fig:run}. Notice that it is
necessary to have a parallel run instead of a simple one: to find
something that matches the second transition in the upper run, we need
to fire two transitions in parallel in the lower run.

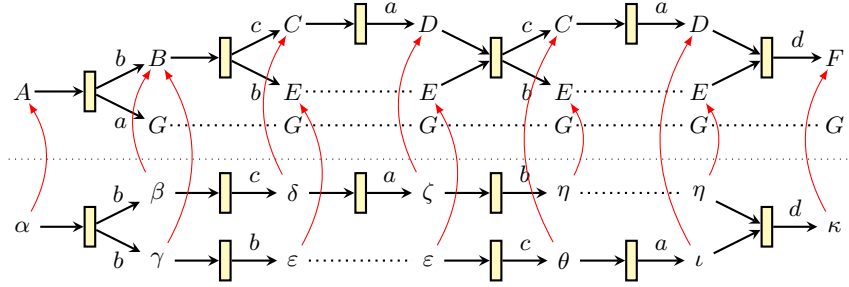
\begin{figure}[t]
  \centering 

  \begin{tikzpicture}[scale=0.9]
    \node[inner sep=1] (A) at (0,-0.5) {$A$};
    \node[inner sep=1] (B) at (2,0) {$B$};
    \node[inner sep=1] (G1) at (2,-1) {$G$};
    \node[inner sep=1] (C) at (4,0.5) {$C$};
    \node[inner sep=1] (E1) at (4,-0.5) {$E$};
    \node[inner sep=1] (G2) at (4,-1) {$G$};
    \node[inner sep=1] (D) at (6,0.5) {$D$};
    \node[inner sep=1] (E2) at (6,-0.5) {$E$};
    \node[inner sep=1] (G3) at (6,-1) {$G$};
    \node[inner sep=1] (C1) at (8,0.5) {$C$};
    \node[inner sep=1] (E3) at (8,-0.5) {$E$};
    \node[inner sep=1] (G4) at (8,-1) {$G$};
    \node[inner sep=1] (D1) at (10,0.5) {$D$};
    \node[inner sep=1] (E4) at (10,-0.5) {$E$};
    \node[inner sep=1] (G5) at (10,-1) {$G$};
    \node[inner sep=1] (F) at (12,0) {$F$};
    \node[inner sep=1] (G6) at (12,-1) {$G$};
    \transb (0) (1,-0.5);
    \transb (1) (3,0);
    \transb (2) (5,0.5);
    \transb (3) (7,0);
    \transb (4) (9,0.5);
    \transb (5) (11,0);
    \edge (A) (0);
    \edge[above] (0) (B)[b];
    \edge[below] (0) (G1)[a];

    \edge (B) (1);
    \draw[thick,dotted] (G1) to (G2);
    \edge[above] (1) (C)[c];
    \edge[below] (1) (E1)[b];

    \edge (C) (2);
    \draw[thick,dotted] (E1) to (E2);
    \draw[thick,dotted] (G2) to (G3);
    \edge[above] (2) (D)[a];

    \edge (D) (3);
    \edge (E2) (3);
    \draw[thick,dotted] (G3) to (G4);
    \edge[above] (3) (C1)[c];
    \edge[below] (3) (E3)[b];

    \edge (C1) (4);
    \draw[thick,dotted] (E3) to (E4);
    \draw[thick,dotted] (G4) to (G5);
    \edge[above] (4) (D1)[a];

    \edge (D1) (5);
    \edge (E4) (5);
    \draw[thick,dotted] (G5) to (G6);
    \edge[above] (5) (F)[d];
    
    \draw[dotted] (-0.2,-1.5) -- (12.2,-1.5);

    \node (a) at (0,-2.5) {$\alpha$};
    \node (b) at (2,-2) {$\beta$};
    \node (c) at (2,-3) {$\gamma$};
    \node (d) at (4,-2) {$\delta$};
    \node (e1) at (4,-3) {$\epsilon$};
    \node (f) at (6,-2) {$\zeta$};
    \node (e2) at (6,-3) {$\epsilon$};
    \node (g1) at (8,-2) {$\eta$};
    \node (h) at (8,-3) {$\theta$};
    \node (g2) at (10,-2) {$\eta$};
    \node (i) at (10,-3) {$\iota$};
    \node (j) at (12,-2.5) {$\kappa$};
    \trans (k) (1,-2.5);
    \trans (l) (3,-2);
    \trans (m) (3,-3);
    \trans (n) (5,-2);
    \trans (o) (7,-2);
    \trans (p) (7,-3);
    \trans (q) (9,-3);
    \trans (r) (11,-2.5);

    \draw[thick,dotted] (e1) to (e2);
    \draw[thick,dotted] (g1) to (g2);
    
    \edge (a) (k);
    \edge (b) (l);
    \edge (c) (m);
    \edge (d) (n);
    \edge (f) (o);
    \edge (e2) (p);
    \edge (h) (q);
    \edge (g2) (r);
    \edge (i) (r);

    \edge[above] (k) (b)[b];
    \edge[below] (k) (c)[b];
    \edge[above] (l) (d)[c];
    \edge[above] (m) (e1)[b];
    \edge[above] (n) (f)[a];
    \edge[above] (o) (g1)[b];
    \edge[above] (p) (h)[c];
    \edge[above] (q) (i)[a];
    \edge[above] (r) (j)[d];

    \draw[->,>=latex,color=red] (a) to[bend right] (A);
    \draw[->,>=latex,color=red] (b) to[bend left] (B);
    \draw[->,>=latex,color=red] (c) to[bend right] (B);
    \draw[->,>=latex,color=red] (d) to[bend left] (C);
    \draw[->,>=latex,color=red] (e1) to[bend right] (E1);
    \draw[->,>=latex,color=red] (f) to[bend left] (D);
    \draw[->,>=latex,color=red] (e2) to[bend right] (E2);
    \draw[->,>=latex,color=red] (g1) to[bend right] (E3);
    \draw[->,>=latex,color=red] (h) to[bend left] (C1);
    \draw[->,>=latex,color=red] (g2) to[bend right] (E4);
    \draw[->,>=latex,color=red] (i) to[bend left] (D1);
    \draw[->,>=latex,color=red] (j) to[bend left] (F);

  \end{tikzpicture}
  \caption{Embedding of a parallel run into the run
    from Figure~\ref{fig:run}.}
  \label{fig:embed}
\end{figure}

There is a close relationship between simulations and embeddings:
\begin{lem}\label{lem:comparing-automata:sim_vs_emb}
  Let $\A_1$ and $\A_2$ be two Petri automata, the following are
  equivalent:
  \begin{enumerate}[label=(\roman*)]
  \item there exists a simulation $\duck$ between $\A_1$ and $\A_2$;
  \item\label{item:1} for every accepting run $\run$ in $\A_1$, there is
    an accepting parallel run $\mathcal R$ in $\A_2$ that can be
    embedded into $\run$.\qedhere
  \end{enumerate}
\end{lem}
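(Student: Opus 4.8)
The plan is to prove the two implications of the equivalence separately: the direction ``simulation $\Rightarrow$ embeddings'' by a direct forward construction that traces a chosen accepting run of $\A_1$ step by step through the simulation, and the converse by exhibiting a canonical simulation assembled out of embeddings.

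\textbf{From a simulation to embeddings.} Fix a simulation $\duck$ between $\A_1$ and $\A_2$ and an accepting run $\run=\tuple{\set{\iota_1},\tr[0];\dots;\tr[n-1],\emptyset}$ of $\A_1$, with intermediary states $\set{\iota_1}=S_0,\dots,S_n=\emptyset$. I would set $E_0\eqdef\set{[\iota_2\mapsto\iota_1]}$ and, inductively, take $E_{k+1}$ to be \emph{exactly} the set of maps that the third clause of Definition~\ref{def:sim} associates with $E_k$ and the step $\atrans[\tr[k]][\A_1]{S_k}{S_{k+1}}$. Since $\duck$ is a simulation and $S_0\duck E_0$, an immediate induction gives $S_k\duck E_k$ for all $k$. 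As $S_n=\emptyset$, the last clause yields $\eta_n\in E_n$ with $\dom\eta_n=\emptyset$; unwinding the construction of the $E_k$'s from $k=n$ down to $k=0$ recovers, for each step, a witness pair $\tuple{\eta_k,T_k}$ with $\eta_k\in E_k$, $T_k\subseteq\T_2$ compatible, $\atrans[T_k][\A_2]{\dom\eta_k}{\dom\eta_{k+1}}$, and $\eta_0=[\iota_2\mapsto\iota_1]$. Then $\mathcal R\eqdef\tuple{\dom\eta_0,T_0;\dots;T_{n-1},\dom\eta_n}$ is an accepting parallel run of $\A_2$, and $(\eta_k)_{0\leqslant k\leqslant n}$ is an embedding of $\mathcal R$ into $\run$: most requirements of Definition~\ref{def:embed} are literally the side conditions recorded at each step, and the codomain requirement follows from the first simulation clause.

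\textbf{From embeddings to a simulation.} I would define the \emph{canonical} relation $\duck$ by: $S\duck E$ holds whenever there is an initial run $\run$ of $\A_1$ reaching $S$, of some length $m$, such that $E$ is \emph{precisely} the set of terminal maps $\eta_m$ ranging over all embeddings $(\eta_k)_{0\leqslant k\leqslant m}$ of length-$m$ parallel runs of $\A_2$ into $\run$. The empty run witnesses $\set{\iota_1}\duck\set{[\iota_2\mapsto\iota_1]}$, and the first clause holds because an embedding has codomain $S_m=S$. For the third clause, given $S\duck E$ via $\run$ and a step $\atrans[\tr][\A_1]{S}{S'}$, one extends each parallel run embeddable into $\run$ by one more (possibly empty, possibly multi-transition) step and checks Definition~\ref{def:embed} for that step; this shows that the set of terminal maps of embeddings into $\run$ followed by $\tr$ is exactly the set prescribed by Definition~\ref{def:sim} from $E$ and $\tr$, so $S'\duck E'$ for that $E'$. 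Finally, if $\emptyset\duck E$ via $\run$, then $\run$ is accepting, so hypothesis (ii) provides an accepting parallel run of $\A_2$ embeddable into $\run$, whose terminal map has empty domain and lies in $E$; that is the last clause.

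\textbf{Main obstacle.} The delicate point, common to both directions, is the clause of Definition~\ref{def:embed} asking that a token of $\A_2$ left untouched by the matched transition set be sent by $\eta$ to a place \emph{outside} the input of the corresponding transition $\tr[i]$ of $\run$ --- equivalently, that the induced map $\rho_i\circ\eta_i$ remains an honest reading of $\Gr\run$. The simulation conditions of Definition~\ref{def:sim} do not record this explicitly, so in the forward direction it has to be derived from the fact that the ranges of $\eta_k$ and $\eta_{k+1}$ lie in $S_k$ and $S_{k+1}$ respectively, using the safety constraint (Constraint~\ref{cstr:safety}) and the series-parallel constraint (Constraint~\ref{cstr:sp}); in the backward direction the same clause is what makes the converse inclusion in ``$E'$ is \emph{exactly} $\dots$'' hold, i.e.\ that every recipe-produced $\eta'$ genuinely completes an embedding. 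Maintaining the invariant ``every map in each $E$ stems from a genuine embedding'' across these steps is where most of the work sits. A further, purely bookkeeping, point --- illustrated by Figure~\ref{fig:embed} --- is that one transition of $\run$ may have to be matched by several transitions of $\A_2$ fired simultaneously; this is precisely why parallel runs, and arbitrary compatible (possibly empty) transition sets $T$, are used rather than ordinary runs.
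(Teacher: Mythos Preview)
Your forward direction is exactly the paper's: build the sequence $(E_k)$ from the simulation clauses, use the final clause to pick $\eta_n$ with empty domain, and unwind to extract $(\eta_k,T_k)$ forming the embedded accepting parallel run.

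For the converse, your construction differs from the paper's. The paper fixes an \emph{accepting} run $\run$ of $\A_1$ and an index $k_0$, and lets $E$ be the set of $k_0$-th maps of embeddings of \emph{accepting} parallel runs of $\A_2$ into the whole of $\run$. You instead fix an \emph{initial} run (a prefix) and let $E$ be the set of terminal maps of embeddings of arbitrary parallel runs into that prefix. Both are natural; yours makes the progress clause cleaner (extend the prefix by one step and every prefix-embedding extends by one parallel step), at the price of deferring the appeal to hypothesis~(ii) entirely to the final clause. The paper's choice bakes hypothesis~(ii) into the definition of $E$, which makes the final clause immediate but makes the progress clause less obviously an equality (one has to argue that a recipe-produced $\eta'$ still sits inside some embedding of an \emph{accepting} parallel run into the same $\run$). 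The paper does not spell this out; it simply asserts both directions are ``direct'' or ``immediate''.

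On your flagged obstacle: you are right that Definition~\ref{def:embed} carries the extra clause ``and their image is not in the input of $\tr[i]$'' which Definition~\ref{def:sim} does not record, and that this is precisely what is needed so that $\rho_i\circ\eta_i$ stays a reading. The paper does not discuss this mismatch at all. Your proposed fix via Constraints~\ref{cstr:safety} and~\ref{cstr:sp} is plausible but not obviously sufficient in general: safety only gives $(S\setminus\trin)\cap\settrout=\emptyset$, which does not by itself exclude $\eta(p)\in\trin\cap\settrout$ for an untouched $p$. If you pursue your version, you should either (a) observe that for the automata actually produced by $\A(\cdot)$ no transition has $\trin\cap\settrout\neq\emptyset$, or (b) argue that any such $\eta'$ still yields a legitimate embedding step because the corresponding place is re-created by $\tr$ and the induced reading value is unchanged. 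Either way this is a genuine gap that both your write-up and the paper's sketch leave open; naming it, as you do, is the right move.
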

\begin{proof}
  If we have a simulation $\duck$, let
  $\run=\tuple{S_0,\tr[0];\dots;\tr[n],\emptyset}$ be an accepting run
  in $\A_1$. By the definition of simulation, we can find a sequence
  of sets of maps $\paren{E_k}_{0\leqslant k\leqslant n+1}$ such that
  $E_0=\set{[\iota_2\mapsto \iota_1]}$ and $\forall k, S_k\duck E_k$.
  Furthermore, we can extract from this a sequence of maps
  $\paren{\eta_k}_{0\leqslant k\leqslant n+1}$ and a sequence of
  parallel transitions $\paren{T_k}_{0\leqslant k\leqslant n}$ such
  that:
  \begin{itemize}
  \item $\forall k, \atrans[T_k][\A_2]{\dom{\eta_k}}{\dom{\eta_{k+1}}}$
  \item the parallel run
    $\tuple{\dom{\eta_0},T_0;\dots;T_n;\dom{\eta_{n+1}}}$ is accepting,
    and can be embedded into $\run$ using $\paren{\eta_k}$.
  \end{itemize}
  This follows directly from the definitions of embedding and
  simulation.

  On the other hand, if we have property~(\ref{item:1}), then we can
  define a relation $\duck$ by saying that $S\duck E$ if there is an
  accepting run $\run=\tuple{S_0,\tr[0];\dots;\tr[n],S_{n+1}}$ in
  $\A_1$ such that there is an index $k_0$: $S=S_{k_0}$; and the
  following holds: $\eta\in E$ if there is an accepting parallel run
  $\mathcal R$ in $\A_2$ visiting successively the states
  $S'_0,\dots,S'_{n+1}$ and
  $\paren{\eta'_k}_{0\leqslant k\leqslant n+1}$ an embedding of
  $\mathcal R$ into $\run$ such that $\eta=\eta'_{k_0}$.  It is then
  immediate to check that $\duck$ is indeed a simulation.
\end{proof}

\noindent
If $\eta$ is an embedding of $\mathcal R$ into $\run$, and $\rho$ is a
reading of $\Gr{\run}$ along $\run$, then we can easily check that
$\paren{\rho_i\circ\eta_i}_{0\leqslant i\leqslant n}$ is a parallel
reading of $\Gr{\run}$ along $\mathcal R$ in $\A_2$.  Thus, it is
clear that once we have such a run $\mathcal \run$ with the sequence
of maps $\eta$, we have that $\Gr{\run}$ is indeed in the language of
$\A_2$. The more difficult question is the completeness of this
approach: if $\Gr{\run}$ is recognised by $\A_2$, is it always the
case that we can find a run $\mathcal R$ that may be embedded into
$\run$?  The answer is affirmative, thanks to Lemma~\ref{lem:exord}
below. If $(\rho_j)_{0\leqslant j\leqslant n+1}$ is a reading of $G$
along $\run=\tuple{S_0,\tr[0];\dots;\tr[n],S_{n+1}}$, we write
$active(j)$ for the only position in
$\rho_{j}(\trin[j])$\footnote{Recall that if
  $(\rho_j)_{0\leqslant j\leqslant n+1}$ is a reading along $\run$
  then for all $p,q\in \trin[j]$, we have
  $\rho_{j}(p) = \rho_{j}(q)$.}. A binary relation $\sqsubseteq$ is a
\emph{topological ordering} on $G=\tuple{V,E,\iota,o}$ if
$\tuple{V,\sqsubseteq}$ is a linear order and $(p,x,q)\in E$ entails
$p\sqsubseteq q$.

\begin{lem}\label{lem:exord}
  Let $G\in \Ln{\A_2}$ and let $\sqsubseteq$ be a topological ordering
  on $G$. Then there exists a run $\run$ and a reading
  $(\rho_j)_{0\leqslant j\leqslant n+1}$ of $G$ along $\run$ such that
  $\forall k,active(k)\sqsubseteq active(k+1)$.
\end{lem}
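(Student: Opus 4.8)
The plan is to start from an arbitrary witness of $G\in\Ln{\A_2}$ and reorder its transitions by successive swaps of adjacent, independent transitions until the sequence of active vertices is $\sqsubseteq$-sorted. So first I would pick, using Definitions~\ref{def:read} and~\ref{def:lang}, an accepting run $\run=\tuple{S_0,\tr[0];\dots;\tr[n],S_{n+1}}$ (with $S_0=\set{\iota}$ and $S_{n+1}=\emptyset$) together with a reading $(\rho_j)_{0\le j\le n+1}$ of $G$ along $\run$; by Lemma~\ref{lem:cst2:finaltrans} the run is proper. As a termination measure I would use the number of \emph{inversions}, namely the cardinality of $\setcompr{\tuple{i,j}}{i<j\text{ and }active(i)\not\sqsubseteq active(j)}$; since $\sqsubseteq$ is a linear order, this number is $0$ exactly when $active(k)\sqsubseteq active(k+1)$ holds for all $k$, which is the desired conclusion.

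The crucial observation is a locality statement: if $active(k)\not\sqsubseteq active(k+1)$, then $\tr[k]$ and $\tr[k+1]$ are independent, i.e.\ $\trin[k]\cap\trin[k+1]=\emptyset$ and no token produced by $\tr[k]$ is consumed by $\tr[k+1]$. Indeed, a place $p\in\settrout[k]\cap\trin[k+1]$ would come with a pair $\tuple{x,p}\in\trout[k]$, and the last clause of Definition~\ref{def:read} would yield $\tuple{active(k),x,active(k+1)}\in E$, hence $active(k)\sqsubseteq active(k+1)$ as $\sqsubseteq$ is a topological ordering — contradicting our hypothesis; and $\trin[k]\cap\trin[k+1]\ne\emptyset$ forces such a $p$ as well, since a shared input place of $\tr[k]$ must be re-created by $\tr[k]$ (hence lie in $\settrout[k]$) for $\tr[k+1]$ to be enabled at $S_{k+1}$. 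Consequently $\tr[k+1]$ is already enabled at the reachable state $S_k$, and Constraint~\ref{cstr:safety}, applied both to $S_k$ (at which $\tr[k+1]$ is enabled) and to $S_{k+1}$, rules out the remaining overlaps ($\settrout[k+1]$ with $\trin[k]$, and $\settrout[k+1]$ with $\settrout[k]$); a direct check then shows that firing $\tr[k+1]$ and then $\tr[k]$ from $S_k$ is valid, reaches $S_{k+2}$ again, and passes through the new intermediate state $\widetilde S\eqdef\paren{S_k\setminus\trin[k+1]}\cup\settrout[k+1]$.

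Next I would transport the reading across this exchange: keep $\rho_0,\dots,\rho_k,\rho_{k+2},\dots,\rho_{n+1}$ and replace $\rho_{k+1}$ by $\widetilde\rho_{k+1}$ defined by $\widetilde\rho_{k+1}(p)\eqdef\rho_k(p)$ for $p\in S_k\setminus\trin[k+1]$ and $\widetilde\rho_{k+1}(q)\eqdef\rho_{k+2}(q)$ for $q\in\settrout[k+1]$. Checking that $(\rho_0,\dots,\rho_k,\widetilde\rho_{k+1},\rho_{k+2},\dots,\rho_{n+1})$ is again a reading of the \emph{same} graph $G$ along the modified run is the one delicate point: one uses that the tokens of $\trin[k+1]$ have equal images under $\rho_k$ and $\rho_{k+1}$ (they lie outside $\trin[k]$), so the outputs of $\tr[k+1]$ are validated by the same edges of $E$ as before; and that the tokens of $\settrout[k]$ have equal images under $\rho_{k+1}$ and $\rho_{k+2}$ (they lie outside $\trin[k+1]$), so the outputs of $\tr[k]$ are validated by the same edges. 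The remaining clauses of Definition~\ref{def:read} are immediate.

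Finally, the swap leaves every state other than $S_{k+1}$ and every untouched $\rho_j$ in place, and it merely exchanges the active vertices attached to positions $k$ and $k+1$; so the inversion count drops by exactly one. It moreover never concerns $\tr[0]$ or $\tr[n]$: since $\trin[0]=\set{\iota}$ the state after $\tr[0]$ equals $\settrout[0]\supseteq\trin[1]$, so there is a token flow from $\tr[0]$ to $\tr[1]$; and since $\run$ is proper and accepting, $\tr[n]$ empties $S_n$, which contains the non-empty output $\settrout[n-1]$, so there is a flow from $\tr[n-1]$ to $\tr[n]$ — in both cases the locality statement shows the adjacent pair is already correctly ordered, hence never swapped (an easy induction shows this persists along the process, so the run stays accepting and proper). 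Therefore the procedure terminates, and outputs a run and a reading of $G$ with $active(k)\sqsubseteq active(k+1)$ for all $k$. I expect the reading-transport bookkeeping of the third paragraph — re-deriving all four clauses of Definition~\ref{def:read} after an exchange — to be the only step requiring genuine care.
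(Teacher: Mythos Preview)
Your proof is correct and follows essentially the same approach as the paper: start from an arbitrary accepting run and reading, then bubble-sort the transitions by swapping adjacent out-of-order pairs while transporting the reading across each swap. The paper factors the swap step into two auxiliary lemmas (Lemma~\ref{lem:exread} for the reading transport and Lemma~\ref{sec:omitted-proof:-lemmaexord} for exchangeability from $active(k+1)\sqsubseteq active(k)$), whereas you do both inline and add an explicit inversion-count termination measure; your closing discussion of $\tr[0]$ and $\tr[n]$ is correct but not strictly needed, since any swap leaves $S_0$ and $S_{n+1}$ unchanged and hence the run remains accepting.
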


\noindent
The proof of this result is achieved by taking a run $\run$
accepting $G$, and then exchanging transitions in $\run$ according to
$\sqsubseteq$, while preserving the existence of a reading. However we
need to introduce some lemmas first.

Let us fix $\A=\tuple{P,\T,\iota}$ a Petri automaton, and
$\run=\tuple{S_0,\tr[0];\dots;\tr[n],S_{n+1}}$ a run of $\A$.
\begin{defi}[Exchangeable transitions]
  Two transitions $\tr[k]$ and $\tr[k+1]$ are \emph{exchangeable} in
  $\run$ if for all $p\in S_k$, $p$ is in $\trin[k+1]$ implies that
  there is no $x\in X$ such that $(x,p)\in \trout[k]$.
\end{defi}
\noindent
As the name might suggest, two exchangeable transitions may be
exchanged in a run, and every graph read along the initial run can still
be read along the permuted run.
\begin{lem}\label{lem:exread}
  Suppose $\tr[k]$ and $\tr[k+1]$ are exchangeable for some
  $0\leqslant k<n$. We write
  $C'=S_{k}\setminus \trin[{k+1}]\cup\settrout[{k+1}]$. Then
  $\atrans[\tr[k+1]]{S_{k}} C'$ and $\atrans[\tr[k]]{C'}{S_{k+2}}$.
  Furthermore, for every graph $G$, if $G\in \Ln\run$, then
  $G\in \Ln{\run[k\leftrightarrow k+1]}$, where:
  \[
    \run[k\leftrightarrow k+1]\eqdef\tuple{S_0,
      \tr[0];\cdots;\tr[k+1];\tr[k];\cdots;\tr[n],S_n}.\]
\end{lem}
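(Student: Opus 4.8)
The plan is to split the statement into its two halves---first that $\run[k\leftrightarrow k+1]$ is again a valid run with the announced intermediary states, then that every graph read along $\run$ is still read along $\run[k\leftrightarrow k+1]$---and to treat both by elementary set manipulations once the right independence facts about $\tr[k]$ and $\tr[k+1]$ are isolated.

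First I would record the consequences of exchangeability: it says that no token consumed by $\tr[k+1]$ is produced by $\tr[k]$, so $\trin[k+1]\cap\settrout[k]=\emptyset$, and together with $\trin[k+1]\subseteq S_{k+1}=(S_k\setminus\trin[k])\cup\settrout[k]$ this already gives $\trin[k+1]\subseteq S_k\setminus\trin[k]$, hence both $\trin[k+1]\subseteq S_k$ and $\trin[k]\cap\trin[k+1]=\emptyset$. Applying safety (Constraint~\ref{cstr:safety}) at $S_k$---reachable, since the lemma is used on accepting runs---to the now-enabled transition $\tr[k+1]$ yields $(S_k\setminus\trin[k+1])\cap\settrout[k+1]=\emptyset$, from which I read off $\trin[k]\cap\settrout[k+1]=\emptyset$ (because $\trin[k]\subseteq S_k\setminus\trin[k+1]$) and the disjoint decomposition $C'=(S_k\setminus\trin[k+1])\sqcup\settrout[k+1]$; safety at $S_{k+1}$ gives $\settrout[k]\cap\settrout[k+1]=\emptyset$. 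With these four disjointness identities in hand, part one is a short computation: $\tr[k+1]$ is enabled in $S_k$ and leads to $C'$; then $\trin[k]\subseteq S_k\setminus\trin[k+1]\subseteq C'$, so $\tr[k]$ is enabled in $C'$; and expanding both sides shows $(C'\setminus\trin[k])\cup\settrout[k]=S_{k+2}$.

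For part two, given a reading $(\rho_j)_{0\leqslant j\leqslant n+1}$ of $G$ along $\run$, the plan is to define $(\rho'_j)_j$ along $\run[k\leftrightarrow k+1]$ by $\rho'_j:=\rho_j$ for $j\neq k+1$---the two runs have the same intermediary states except that $S_{k+1}$ is replaced by $C'$ at position $k+1$---and on $C'=(S_k\setminus\trin[k+1])\sqcup\settrout[k+1]$ by $\rho'_{k+1}(p):=\rho_k(p)$ for $p\in S_k\setminus\trin[k+1]$ and $\rho'_{k+1}(p):=\rho_{k+2}(p)$ for $p\in\settrout[k+1]$. Verifying the clauses of Definition~\ref{def:read} for the two relocated transitions then reduces, clause by clause, to the corresponding clauses for $\tr[k+1]$ and $\tr[k]$ in $\run$, the bridge being the standard fact that a reading leaves untouched the image of any token a transition does not consume: $\rho_k$ and $\rho_{k+1}$ agree on $\trin[k+1]\subseteq S_k\setminus\trin[k]$, while $\rho_{k+1}$ and $\rho_{k+2}$ agree on $\settrout[k]$ and on $(S_k\setminus\trin[k])\setminus\trin[k+1]$, since both sets lie in $S_{k+1}\setminus\trin[k+1]$. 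For instance the output clause for $\tr[k+1]$ fired from $S_k$ with $\rho_k\to\rho'_{k+1}$ becomes, after these substitutions, precisely the output clause for $\tr[k+1]$ fired from $S_{k+1}$ with $\rho_{k+1}\to\rho_{k+2}$, which holds by hypothesis.

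I expect the main difficulty to be bookkeeping rather than ideas: pinning down which disjointness identities come from exchangeability and which from safety, and making sure every $\rho$-value used is actually defined---the one non-obvious point being $(S_k\setminus\trin[k+1])\cap\settrout[k+1]=\emptyset$, which requires applying safety to the transition that becomes the first one fired. The only conceptual subtlety is the possibility that $\tr[k]$ or $\tr[k+1]$ is a final transition, but by Lemma~\ref{lem:cst2:finaltrans} an accepting run contains its final transition only in last position, so this never happens in the intended use of the lemma (inside the proof of Lemma~\ref{lem:exord}), and it can be excluded at the outset.
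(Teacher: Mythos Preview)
Your proposal is correct and follows essentially the same approach as the paper: the paper's proof defines $\rho'_{k+1}$ by exactly the same case split ($\rho_{k+2}(p)$ for $p\in\settrout[k+1]$, $\rho_k(p)$ otherwise) and declares the first part ``trivial to check, with the definition of exchangeable.'' Your version is considerably more explicit---in particular your observation that safety at $S_k$ is what makes the two pieces of $C'$ disjoint (and hence $\rho'_{k+1}$ well-defined) is a point the paper silently assumes---but the underlying argument is the same.
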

\begin{proof}
  The fact that $S_{k}\xrightarrow{\tr[k+1]} _\A C'$ and
  $C'\xrightarrow{\tr[k]} _\A S_{k+2}$ is trivial to check, with the
  definition of exchangeable.

  Let $\paren{\rho_j}_{0\leqslant j\leqslant n+1}$ be a reading of $G$
  along $\run$. If $\paren{\rho'_j}_{0\leqslant j\leqslant n+1}$ is
  defined by:
  \begin{align*}
    \rho'_j(p)&=
    \begin{cases}
      \rho_j(p)&\text{if }j\neq k+1,\\
      \rho_{k+2}(p)&\text{if }j=k+1\text{ and }(x,p)\in \trout[{k+1}]
      \text{ for some }x,\\
      \rho_{k}(p)&\text{otherwise}.
    \end{cases}
  \end{align*}
  Then $\paren{\rho'_j}_{0\leqslant j\leqslant n+1}$ is a reading of $G$
  along $\run[k\leftrightarrow k+1]$.
\end{proof}

\noindent
Recall that if $(\rho_j)_{0\leqslant j\leqslant n}$ is a reading of
$G$ along $\xi$ we write $active(j)$ for the only position in
$\rho_{j}(\trin[j])$.

\begin{lem}\label{sec:omitted-proof:-lemmaexord}
  Let $\sqsubseteq$ be a topological ordering 
  on $G$.  If $(\rho_j)_{0\leqslant j\leqslant n+1}$ is a reading of
  $G$ along $\run$, and if ${active(k+1)\sqsubseteq active(k)}$ for
  some $k$, then $\tr[k]$ and $\tr[k+1]$ are exchangeable.
\end{lem}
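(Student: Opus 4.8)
The plan is to argue by contraposition: assuming that $\tr[k]$ and $\tr[k+1]$ are \emph{not} exchangeable, I would produce an edge of $G$ witnessing $active(k)\sqsubseteq active(k+1)$, which contradicts the assumed inequality $active(k+1)\sqsubseteq active(k)$ since $\sqsubseteq$ is a linear order.

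Concretely, failure of exchangeability provides a place $p\in S_k$ together with a letter $x\in X$ such that $p\in\trin[k+1]$ and $\tuple{x,p}\in\trout[k]$. First I would fix some $p_0\in\trin[k]$ --- legitimate because the input of any transition is non-empty --- and instantiate the fourth clause of the definition of a reading (Definition~\ref{def:read}) at step $k$ with the input place $p_0$ and the output pair $\tuple{x,p}$; this yields $\tuple{\rho_k(p_0),x,\rho_{k+1}(p)}\in E$. Next I would rewrite the two endpoints by means of the first clause of that definition: applied at step $k$ it gives $\rho_k(p_0)=active(k)$ (as $p_0\in\trin[k]$), and applied at step $k+1$ it gives $\rho_{k+1}(p)=active(k+1)$ (as $p\in\trin[k+1]$). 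Putting these together, $\tuple{active(k),x,active(k+1)}\in E$.

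Because $\sqsubseteq$ is a topological ordering on $G$, this edge forces $active(k)\sqsubseteq active(k+1)$; combined with the hypothesis $active(k+1)\sqsubseteq active(k)$ and antisymmetry of the linear order, it follows that $active(k)=active(k+1)$, so that $\tuple{active(k),x,active(k)}\in E$ would be a loop on $G$. This is impossible, since $G$ carries a topological ordering and is therefore acyclic. The contradiction proves the lemma.

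I expect the only delicate point to be this last step, namely ruling out $active(k)=active(k+1)$: it is where one actually uses that $G$ (being equipped with a topological ordering) has no loop, and it deserves to be made explicit rather than glossed over. The rest is a mechanical unfolding of Definition~\ref{def:read} and of the notation $active(\cdot)$, and it is entirely local --- it only inspects steps $k$ and $k+1$ of the run, with no reference to the remainder of $\run$ or of the reading.
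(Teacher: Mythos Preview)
Your proof is correct and follows essentially the same line as the paper's: both use the reading condition to produce an edge $\tuple{active(k),x,\rho_{k+1}(p)}\in E$ and then invoke the topological ordering to conclude that $p\notin\trin[k+1]$. The only cosmetic difference is that the paper argues directly (for every $\tuple{x,p}\in\trout[k]$ it shows $p\notin\trin[k+1]$) whereas you argue by contraposition; your explicit treatment of the degenerate case $active(k)=active(k+1)$ is actually more careful than the paper, which silently writes $active(k)\sqsubset\rho_{k+1}(p)$.
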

\begin{proof} 
  Let $G=\tuple{V,E,\iota,o}$.  since
  $(\rho_i)_{0\leqslant j\leqslant n+1}$ is a reading, for every
  $\tuple{x,p}\in \trout[k]$, we have
  $\tuple{active(k),x,\rho_{k+1}(p)}\in E$, and thus
  \[active(k)\sqsubset\rho_{k+1}(p).\]
  We know that $active(k+1)\sqsubseteq active(k)$, meaning by
  transitivity that $active(k+1)\sqsubset\rho_{k+1}(p)$.
  Hence \[active(k+1)\neq \rho_{k+1}(p)\]
  and because $(\rho_i)_{0\leqslant j\leqslant n+1}$ is a reading we
  can infer that $p\notin \trin[{k+1}]$, thus proving that $\tr[k]$
  and $\tr[{k+1}]$ are exchangeable.
\end{proof}

\begin{proof}[Proof of Lemma~\ref{lem:exord}]
  Because $G$ is in $\Ln\A$, we can find a reading $\rho'$ along some
  run $\run$. If that reading is not in the correct order, then by
  Section~\ref{sec:omitted-proof:-lemmaexord} we can exchange two transitions
  and Lemma~\ref{lem:exread} ensures that we can find a corresponding
  reading. We repeat this process until we get a reading in the
  correct order.
\end{proof}
\noindent
(Notice that if $G$ contains cycles, this lemma cannot apply because 
of the lack of a topological ordering.)

Lemma~\ref{lem:exord} enables us to build an embedding from every
reading of $\Gr{\run}$ in $\A_2$.
\begin{lem}\label{lem:recplong} 
  Let $\run$ be an accepting run of $\A_1$. Then
  $\Gr{\run}\in\Ln{\A_2}$ if and only if there is an accepting
  parallel run in $\A_2$ that can be embedded into $\run$.
\end{lem}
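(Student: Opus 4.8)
The plan is to prove the two implications separately. The ``if'' direction is essentially contained in the discussion preceding the statement, so most of the work is in the ``only if'' direction, which is where Lemma~\ref{lem:exord} is used.

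For the ``if'' direction, suppose we are given an accepting parallel run $\mathcal R=\tuple{\set{\iota_2},T_0;\dots;T_n,\emptyset}$ in $\A_2$ together with an embedding $(\eta_i)_i$ into $\run$. I would first note that, since $\A_1$ is labelled over $X$, $\retype\argument$ acts as the identity on $\Gr\run$, so Lemma~\ref{lem:langex} (applied with the identity homomorphism) gives $\Gr\run\in\Ln\run$; concretely, $\rho_k(p)\eqdef\Nu(k,p)$ is a reading of $\Gr\run$ along $\run$, as already observed in Section~\ref{sec:intuitions}. By the remark made just before the statement, $(\rho_i\circ\eta_i)_i$ is then a parallel reading of $\Gr\run$ along $\mathcal R$. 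It remains to turn this into ordinary data: one serialises $\mathcal R$ by firing the transitions of each compatible enabled set $T_k$ one at a time in an arbitrary order (the reached state does not depend on the order, and safety guarantees that output places of one transition do not clash with inputs of another), and refines the parallel reading into an ordinary reading along the resulting ordinary accepting run of $\A_2$. This yields $\Gr\run\in\Ln{\A_2}$.

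For the ``only if'' direction, assume $\Gr\run\in\Ln{\A_2}$ and write $\run=\tuple{\set\iota,\tr[0];\dots;\tr[n],\emptyset}$, so that $\Gr\run$ has vertex set $\set{0,\dots,n}$ and $\tuple{k,a,l}\in E_\run$ forces $k<l$; hence the natural order of $\set{0,\dots,n}$ is a topological ordering on $\Gr\run$. Applying Lemma~\ref{lem:exord} with this ordering produces an accepting run $\run'$ of $\A_2$ and a reading $(\rho'_j)_j$ of $\Gr\run$ along $\run'$ with $active(j)\leqslant active(j+1)$ for all $j$. I would then group the transitions of $\run'$ into blocks $T_0,\dots,T_n$ by putting $\tr*[j]$ into $T_i$ when $active(j)=i$; by monotonicity of $active$, each $T_i$ is a (possibly empty) contiguous factor of $\run'$. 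The crucial observation is that no transition of a block $T_i$ consumes a token that another transition of $T_i$ has produced: such a pair would create in $\Gr\run$ an edge out of the vertex at which the producing transition is read --- namely $i$ --- into the vertex at which the consumed token is read; but that token, being an input of a transition of $T_i$, is also read at $i$, so the edge would be a self-loop, contradicting the acyclicity of series-parallel graphs. Consequently the transitions of each $T_i$ have pairwise disjoint inputs and are simultaneously enabled at the state preceding the block, so $\mathcal R\eqdef\tuple{\set{\iota_2},T_0;\dots;T_n,\emptyset}$ is an accepting parallel run of $\A_2$, with one step per vertex of $\Gr\run$.

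Finally I would construct an embedding of $\mathcal R$ into $\run$. The maps $\eta_i\colon S'_i\to S_i$ (between the states of $\mathcal R$ and of $\run$) are built by induction on $i$, maintaining the invariant that for every token $p\in S'_i$, $\Nu(i,\eta_i(p))$ equals the $\Gr\run$-vertex at which $p$ is read by $(\rho'_j)_j$ (a value that is constant while $p$ survives in $\run'$). Start with $\eta_0=[\iota_2\mapsto\iota]$. When processing block $i$: a token $p$ consumed by some $\tr*\in T_i$ is read at $i$, so the invariant forces $\Nu(i,\eta_i(p))=i$, i.e.\ $\eta_i(p)\in\trin[i]$; a token $q$ produced by some $\tr*\in T_i$ via $\tuple{x,q}\in\trout*$ is, since $(\rho'_j)_j$ is a reading, read at the endpoint of an outgoing $x$-edge of vertex $i$, hence at $\Nu(i+1,q')$ for some $\tuple{x,q'}\in\trout[i]$; set $\eta_{i+1}(q)\eqdef q'$ for one such $q'$, and leave $\eta$ unchanged on the tokens surviving block $i$. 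The invariant persists, and the three clauses of Definition~\ref{def:embed} follow: the inputs of $T_i$ map into $\trin[i]$ and the output pairs $\tuple{x,\eta_{i+1}(q)}$ into $\trout[i]$ by construction, while a token surviving block $i$ is read at a vertex $>i$ (it is consumed only in a later block), so the invariant places its image outside $\trin[i]$. I expect the main obstacle to be the block-decomposition step --- proving that the $T_i$ really are compatible sets of simultaneously enabled transitions --- where acyclicity of series-parallel graphs is exactly what forbids the problematic configuration; once that is in place, the construction of the embedding is routine bookkeeping driven by the invariant.
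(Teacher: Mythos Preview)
Your proof is correct and follows essentially the same approach as the paper: both directions use the canonical reading $\rho_k(p)=\Nu(k,p)$, compose with the embedding for the ``if'' direction, and for the ``only if'' direction invoke Lemma~\ref{lem:exord} with the natural order on $\{0,\dots,n\}$, group transitions of $\run'$ into blocks $T_i=\{\tr*[j]\mid active(j)=i\}$, and construct $\eta$ inductively via the invariant $\Nu(i,\eta_i(p))=\rho'_{\,\cdot}(p)$. You are in fact more careful than the paper on one point: you explicitly argue that each $T_i$ is a compatible set of simultaneously enabled transitions (via the self-loop/acyclicity argument), a step the paper's proof passes over in silence; and you spell out the serialisation of the parallel reading back into an ordinary reading, which the paper also leaves implicit.
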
 
\begin{proof}
  Let $\run=\tuple{S_0,\tr[0];\dots;\tr[n],S_{n+1}}$ be a run.
  For all indexes $k$ and places $p\in S_k$, define
  $\rho^1_{k}(p)\eqdef\Nu(k,p)=\min\setcompr{l}{l\geqslant k\text{ and
    }p\in\trin[l]}$. We have that
  $\paren{\rho^1_k}_{0\leqslant k\leqslant n+1}$ is a reading of
  $\Gr\run$ along $\run$.

  \begin{itemize}
  \item Assume an embedding $\paren{\eta_k}_{0\leqslant k\leqslant n+1}$
    of an accepting parallel run $\mathcal R$ into $\run$. We define a
    parallel reading $(\rho^2_k)$ of $\Gr\run$ in $\A_2$ by letting
    $\rho^2_k(p)\eqdef \rho^1_k(\eta_k(p))$.
  \item On the other hand, notice that the natural ordering on $\Nat$
    is a topological ordering on $\Gr{\run}$, and that
    $\forall 0\leqslant k\leqslant n$, $\rho^1_{k}(\trin[k])=\set k$.
    By Lemma~\ref{lem:exord} we gather that $\Gr{\run}$ is in $\Ln{\A_2}$
    if and only if there exists a reading
    $(\rho^2_j)_{0\leqslant j\leqslant n'}$ of $\Gr{\run}$ along some
    run $R'=\tuple{S'_0,\tr*[0];\dots;\tr*[n'],S'_{n'+1}}$ such
    that $\forall j,active(j)\leqslant active(j+1)$ (with $active(j)$
    the only position in $\rho^2_{j}\paren{\trin[j]}$).

    Now, suppose we have such a reading; we can build an embedding
    $(\eta_k)_{0\leqslant k\leqslant n+1}$ as follows. For
    $k\leqslant n$, define
    $T_k\eqdef\setcompr{\tr*[j]}{active(j)=k}.$
    We describe the construction incrementally:
    \begin{itemize}
    \item $\eta_0=[\iota_2\mapsto \iota_1]$.
    \item For all
      $p\in \dom{\eta_{k}}\setminus\bigcup_{\tr\in T_k} \trin$ we
      simply set $\eta_k(p)=\eta_{k-1}(p)$.
    \item Otherwise, $\forall \tr*[j]\in T_k$, let
      $q\in \trin*[j]$.  Then, for all $\tuple{x,p}$ in $\trout*[j]$,
      because $\rho^2$ is a reading and by construction of $\Gr{\run}$
      we also know that there is some $p'\in S_{k+1}$ that satisfies
      $\tuple{x,p'}\in \trout[k]$ and $\rho^1_{k+1}(p')=\rho^2_{j+1}(p)$.
      That $p'$ is a good choice for $p$, hence we define
      $\eta_k(p)=p'$.
    \end{itemize}
    It is then routine to check that~$\paren{\eta_k}_{0\leqslant
      k\leqslant n+1}$ is indeed an embedding.
  \end{itemize}
  \noindent
  For the if direction, we build a parallel reading from the
  embedding, as explained above. For the other direction, we consider
  a reading of $\Gr\run$ in $\A_2$ along some run $\run'$. Notice that
  the natural ordering on $\Nat$ is a topological ordering on
  $\Gr{\run}$; we may thus change the order of the transitions in
  $\run'$ (using Lemma~\ref{lem:exord}) and group them adequately to obtain
  a parallel reading $\mathcal R$ that embeds in $\run$.
\end{proof}
\noindent
So we know that the existence of embeddings is equivalent to the
inclusion of languages, and we previously established that it is also
equivalent to the existence of a simulation relation. Hence, the
following characterisation holds:
\begin{prop}\label{thm:inclsim}
  Let $\A,\B\in\PA[X]$ be two Petri automata. We have
  $\Gr{\A}\subseteq \Ln{\B}$ if and only if there exists a simulation
  relation $\duck$ between $\A$ and $\B$.
\end{prop}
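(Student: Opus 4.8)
The plan is to obtain the statement as an immediate consequence of the two preceding lemmas. First I would unfold the definition of the set of graphs produced by $\A$: since $\Gr\A=\setcompr{\Gr\run}{\run\in\aRun}$, the inclusion $\Gr\A\subseteq\Ln\B$ holds if and only if for every accepting run $\run$ of $\A$ the trace $\Gr\run$ belongs to $\Ln\B$.

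Then I would apply Lemma~\ref{lem:recplong} to each accepting run $\run$ of $\A$ separately (taking $\A_1$ to be $\A$ and $\A_2$ to be $\B$): it tells us that $\Gr\run\in\Ln\B$ if and only if there is an accepting parallel run in $\B$ that can be embedded into $\run$. Combining this with the previous observation and quantifying over all accepting runs of $\A$, the inclusion $\Gr\A\subseteq\Ln\B$ is equivalent to item~\ref{item:1} of Lemma~\ref{lem:comparing-automata:sim_vs_emb}, namely: for every accepting run $\run$ in $\A$ there is an accepting parallel run in $\B$ that can be embedded into $\run$. Finally, Lemma~\ref{lem:comparing-automata:sim_vs_emb} states that this condition is equivalent to the existence of a simulation relation $\duck$ between $\A$ and $\B$, which is exactly the claim; note that, since both lemmas are ``if and only if'' statements, the two directions of the proposition come for free, with no asymmetry left to handle at this level.

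I do not expect any genuine obstacle here: once Lemmas~\ref{lem:comparing-automata:sim_vs_emb} and~\ref{lem:recplong} are in place, the proposition is just their composition together with the definition of $\Gr\A$, so the proof fits in a short chain of equivalences. The real difficulty lies upstream, in Lemma~\ref{lem:recplong}, whose completeness direction requires the transition-exchange toolkit (Lemmas~\ref{lem:exread}, \ref{sec:omitted-proof:-lemmaexord} and~\ref{lem:exord}) to permute and regroup the transitions of an arbitrary accepting run of $\B$ into a parallel run whose successive $active$ positions follow a topological ordering of $\Gr\run$, thereby producing a step-by-step embedding into $\run$; and in Lemma~\ref{lem:comparing-automata:sim_vs_emb}, where relating a state of $\A$ to a \emph{set} of partial maps $P_2\rightharpoonup P_1$ (rather than to a single state of $\B$) is precisely what makes the simulation and the embedding notions interchangeable.
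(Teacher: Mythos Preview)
Your proposal is correct and matches the paper's own proof exactly: the paper simply writes ``By Lemmas~\ref{lem:comparing-automata:sim_vs_emb} and~\ref{lem:recplong}'', and your unfolding of $\Gr\A$ together with the chain of equivalences through those two lemmas is precisely the intended argument.
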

\begin{proof}
  By Lemmas~\ref{lem:comparing-automata:sim_vs_emb}
  and~\ref{lem:recplong}.
\end{proof}
\noindent
As Petri automata are finite, there are finitely many relations in
$\pset{\pset{P_1}\times\pset{P_2\rightharpoonup P_1}}$. The existence
of a simulation thus is decidable, allowing us to prove the main
result:
\begin{thm}\label{thm:sim:pre} Given two Petri automata
  $\A,\B\in\PA[X]$, testing whether $\Gr\A\subseteq \Ln\B$ is
  decidable.
\end{thm}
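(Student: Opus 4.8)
The plan is to reduce the question to the existence of a simulation and then observe that the latter is a finite search. By Proposition~\ref{thm:inclsim}, we have $\Gr\A\subseteq\Ln\B$ if and only if there is a simulation $\duck$ between $\A$ and $\B$. By Definition~\ref{def:sim}, a simulation is a relation ${\duck}\subseteq\pset{P_1}\times\pset{P_2\rightharpoonup P_1}$. Since $\A$ and $\B$ are finite, $P_1$ and $P_2$ are finite sets; hence there are finitely many partial maps $P_2\rightharpoonup P_1$, finitely many subsets of $\pset{P_2\rightharpoonup P_1}$, and therefore finitely many candidate relations $\duck$. It thus suffices to show that, given such a candidate, one can effectively decide whether it satisfies the four clauses of Definition~\ref{def:sim}.

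First I would handle the first, second and fourth clauses: these are immediate, since they only range over the (finitely many) pairs $S\duck E$ and over the elements $\eta$ of the (finite) sets $E$. The third clause is the only one requiring some care, because it asserts that $S'\duck E'$ holds for a \emph{specific} set $E'$, itself defined by a nested quantification: $E'$ is the set of all $\eta'$ for which there exist $\eta\in E$ and a compatible set of transitions $T\subseteq\T_2$ satisfying three side conditions. Here again everything is finite: there are finitely many $S'$ with $\atrans[\tr][\A_1]S{S'}$, finitely many $\eta\in E$, finitely many subsets $T\subseteq\T_2$ (and checking that $T$ is compatible, that $\atrans[T][\A_2]{\dom\eta}{\dom{\eta'}}$, and that the inclusion conditions on inputs and outputs hold is straightforward), and finitely many partial maps $\eta'$ to test for membership in $E'$. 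Consequently $E'$ can be computed outright from $E$, $\tr$ and $S'$, and one then merely checks whether the pair $(S',E')$ belongs to the candidate relation.

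Putting this together: enumerate all relations ${\duck}\subseteq\pset{P_1}\times\pset{P_2\rightharpoonup P_1}$; for each, test the four clauses of Definition~\ref{def:sim} as above; answer ``yes'' if some relation passes all tests, and ``no'' otherwise. By Proposition~\ref{thm:inclsim} this decides whether $\Gr\A\subseteq\Ln\B$. I expect the only mildly delicate point to be the bookkeeping in the third clause — in particular, checking $S'\duck E'$ against the exact set $E'$ prescribed by the definition rather than some over- or under-approximation — but since all the sets involved are finite and explicitly given, this is a routine verification. A more efficient alternative, which I would mention for completeness and which parallels the classical NFA case, is to compute the greatest simulation directly as a greatest fixpoint: start from the full relation and repeatedly delete any pair violating one of the closure clauses; the process is monotone, the ambient set is finite, so it terminates, after which one checks the initialisation clause $\set{\iota_1}\duck\set{[\iota_2\mapsto\iota_1]}$. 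Refined complexity bounds are the subject of Section~\ref{sec:complexity}.
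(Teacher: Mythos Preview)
Your proposal is correct and follows exactly the paper's approach: invoke Proposition~\ref{thm:inclsim} to reduce the question to the existence of a simulation, then observe that there are only finitely many candidate relations in $\pset{\pset{P_1}\times\pset{P_2\rightharpoonup P_1}}$, so existence is decidable. The paper's proof is in fact this one-line finiteness argument; your additional detail on checking each clause and your remark on the greatest-fixpoint computation are fine elaborations, but not needed for the result.
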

\noindent
In practice, we can build the simulation on-the-fly, starting from the
pair $\tuple{\set{\iota_1},\set{\left[\iota_2\mapsto\iota_1\right]}}$
and progressing from there. We have implemented this algorithm in
\textsc{OCaml}~\cite{rklm:web}. Even though its theoretical worst case
time complexity is huge\footnote{A quick analysis gives a
  $\O{2^{n+\paren{n+1}^{m}}}$ complexity bound, where $n$ and $m$ are
  the numbers of places of the automata.}, we get a result almost
instantaneously on simple one-line examples.

By Proposition~\ref{prop:reduce}, the equational theory of
identity-free Kleene lattices is thus decidable, and so is language
inclusion of Petri automata.

\subsection{The problems with converse, \texorpdfstring{$1$}{unit},
  and \texorpdfstring{$\top$}{top}}
\label{ssec:problems:cycles}

The previous algorithm is not complete in presence of converse, $1$,
or $\top$. More precisely, Lemma~\ref{lem:recplong} does not hold for
general automata. Indeed, it is not possible to compare two runs just
by relating the tokens at each step, and checking each transition
independently. Consider the automaton from Figure~\ref{fig:1etab+c}.
\begin{figure}[t]
 \centering
 \begin{tikzpicture}[scale=0.8]
   \state[A] (a) (0,0);
   \state[B] (b) (2,0.5);
   \state[C] (c) (4,0.5);
   \state[D] (d) (6.2,0.5);
   \state[E] (e) (6.2,-0.5);
   \trans (0) (1,0);
   \trans (1) (3,0.5);
   \trans (2) (4,1.3);
   \trans (3) (5,0.5);
   \transf (4) ($(e)!.5!(d)+(2,0)$);
   \initst (a);
   \draw[arc] (a) to (0);
   \draw[arc] (b) to (1);
   \draw[arc] (c) to[out=130,in=-130] (2.west);
   \draw[arc] (c) to (3);
   \edge[above,bend left] (0)(b)[a];
   \edge[below,out=-20,in=180] (0)(e)[1];
   \edge[above] (1)(c)[b];
   \edge[right,out=-50,in=50] (2.east)(c)[b];
   \edge[above] (3)(d)[c];
   \edge[out=0,in=-140] (e) (4);
   \edge[out=0,in=140] (d) (4);
 \end{tikzpicture}
 \caption{A Petri automaton for $1\cap a\cdot b^+\cdot c$.}
 \label{fig:1etab+c}
\end{figure}
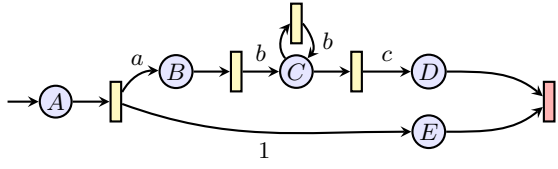
This automaton has in particular an accepting run recognising $1\cap
abc$. 
Let us try to test if this is smaller than the following runs from
another automaton (we represent the transitions simply as arrows,
because they only have a single input and a single output):
\begin{center}
  \begin{tikzpicture}[scale=.6]
    \node (x0) at (0,0) {$x_0$};
    \node (x1) at (2,0) {$x_1$};
    \node (x2) at (4,0) {$x_2$};
    \node (x3) at (6,0) {$x_3$};
    \node (x4) at (8,0) {$x_4$};
    \node (x5) at (10,0) {$x_5$};
    \node (x6) at (12,0) {$x_6$};
    \edge[above] (x0)(x1)[a];
    \edge[above] (x1)(x2)[b];
    \edge[above] (x2)(x3)[c];
    \edge[above] (x3)(x4)[a];
    \edge[above] (x4)(x5)[b];
    \edge[above] (x5)(x6)[c];
    \node (y0) at (0,-1) {$y_0$};
    \node (y1) at (2,-1) {$y_1$};
    \node (y2) at (4,-1) {$y_2$};
    \node (y3) at (6,-1) {$y_3$};
    \node (y4) at (8,-1) {$y_4$};
    \node (y5) at (10,-1) {$y_5$};
    \node (y6) at (12,-1) {$y_6$};
    \node (y7) at (14,-1) {$y_7$};
    \edge[above] (y0)(y1)[a];
    \edge[above] (y1)(y2)[b];
    \edge[above] (y2)(y3)[c];
    \edge[above] (y3)(y4)[a];
    \edge[above] (y4)(y5)[b];
    \edge[above] (y5)(y6)[b];
    \edge[above] (y6)(y7)[c];
  \end{tikzpicture}
\end{center}
It stands to reason that we would reach a point where:
\begin{itemize}
\item for the first run: $\set{D,E}\duck \set{[x_3\mapsto D]}$;
\item for the second run: $\set{D,E}\duck \set{[y_3\mapsto D]}$.
\end{itemize}
So if it were possible to relate the end of the runs just with this
information, they should both be bigger than $1\cap abc$ or both
smaller or incomparable. But in fact the first run (recognising
$abcabc$) is bigger than $1\cap abc$ but the second (recognising
$abcabbc$) is not. This highlights the need for having some memory
of previously fired transition when trying to compare runs of
general Petri automata, thus preventing our local approach to bear
fruits. The same kind of example can be found with the converse
operation or $\top$ instead of $1$.

\section{Complexity}
\label{sec:complexity}

The notion of simulation from the previous section actually allows us
to decide language inclusion of Petri automata in \expspace. We will
eventually show that this problem is \expspace-complete.

\begin{prop}
  \label{prop:complexity-1}
  Comparing Petri automata is in \expspace.
\end{prop}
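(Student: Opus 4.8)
The plan is to recast Proposition~\ref{thm:inclsim} as a reachability question in an exponentially large (but finite) transition system, and then to invoke Savitch's theorem. By the second part of Proposition~\ref{prop:reduce}, comparing two Petri automata $\A=\tuple{P_1,\T_1,\iota_1}$ and $\B=\tuple{P_2,\T_2,\iota_2}$ --- i.e.\ deciding $\Ln\A\subseteq\Ln\B$ --- amounts to deciding $\Gr\A\subseteq\Ln\B$, which by Proposition~\ref{thm:inclsim} holds iff there is a simulation $\duck$ between $\A$ and $\B$. The key point is that the third clause of Definition~\ref{def:sim} is \emph{deterministic}: from a pair $(S,E)$ and a transition $\atrans[\tr][\A]{S}{S'}$ of $\A$, the companion set $E'$ is determined by $S$, $E$ and $\tr$; write $(S',E')=\delta(S,E,\tr)$, a total function on configurations. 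Let $\mathcal R$ be the least set of pairs containing $(\set{\iota_1},\set{[\iota_2\mapsto\iota_1]})$ and closed under $(S,E)\mapsto(S',\delta(S,E,\tr))$ for every transition $\tr$ of $\A$ enabled at $S$; it is finite. Call a configuration $(S,E)$ \emph{bad} if some $\eta\in E$ has range not included in $S$, or if $S=\emptyset$ and every $\eta\in E$ has non-empty domain. Then a simulation exists iff $\mathcal R$ contains no bad pair: any simulation must contain $\mathcal R$ by determinism and closure, so a bad pair of $\mathcal R$ would violate clause~1 or clause~4 of that simulation; conversely, if $\mathcal R$ has no bad pair it satisfies clauses~1 and~4, while clauses~2 and~3 hold by construction, so $\mathcal R$ is itself a simulation.

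It then remains to decide whether a bad pair is reachable in this system. I would use a non-deterministic procedure that keeps only the current pair $(S,E)$: at each step it guesses a transition of $\A$ enabled at $S$, replaces $(S,E)$ by $\delta(S,E,\tr)$, accepts as soon as the current pair is bad, and uses a binary counter to stop after visiting at most as many pairs as the system contains. The space used is exponential: writing $n$ and $m$ for the numbers of places of $\A$ and $\B$, a state $S\subseteq P_1$ needs $n$ bits; each $\eta\in E$ is one of the $(n+1)^m$ partial maps $P_2\rightharpoonup P_1$, so $E$ needs $(n+1)^m$ bits; the counter needs at most $n+(n+1)^m$ bits; and computing $\delta(S,E,\tr)$ --- ranging over $\eta\in E$, over compatible sets of transitions of $\B$ enabled at $\dom\eta$, and over the finitely many candidate maps $\eta'$, checking the conditions of Definition~\ref{def:sim} --- fits in exponential space as well. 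Hence reachability of a bad pair is in non-deterministic exponential space; by Savitch's theorem this coincides with $\expspace$, which is closed under complement, so the \emph{absence} of a bad pair --- equivalently $\Gr\A\subseteq\Ln\B$, equivalently $\Ln\A\subseteq\Ln\B$ --- is in $\expspace$.

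The main difficulty is not the algorithm but keeping the space bound at a single exponential: the naive procedure that materialises the whole relation $\mathcal R$ uses doubly-exponential space, since there may be $2^{n+(n+1)^m}$ pairs. Isolating the deterministic successor map $\delta$ is precisely what lets us avoid storing the relation, holding only one configuration at a time and thus turning ``find a simulation'' into plain reachability. The remaining, routine, work is to verify that $\delta$ is indeed a function producing valid configurations and that it can be computed within the claimed space.
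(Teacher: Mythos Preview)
Your proposal is correct and follows essentially the same approach as the paper: both exploit that the update $E\mapsto E'$ in Definition~\ref{def:sim} is deterministic, walk the pair $(S,E)$ non-deterministically along transitions of $\A$, keep only the current configuration (exponential size), and invoke Savitch's theorem. Your version is somewhat more explicit than the paper's---you spell out why existence of a simulation is equivalent to non-reachability of a bad configuration via the least closed set $\mathcal R$, you add the step counter to guarantee termination, and you also check clause~1 of Definition~\ref{def:sim} rather than only clause~4---but the underlying idea is the same.
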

\begin{proof}
  Our measure for the size of an automaton here is its number of places
  (the number of transitions is at most exponential in this
  number). Here is a non-deterministic semi-algorithm that tries to
  refute the existence of a simulation relation between $\A_1$ and
  $\A_2$.
  \begin{enumerate}[label=\arabic*:]
  \item start with $S\eqdef\set{\iota_1}$ and
    $E\eqdef\set{[\iota_2\mapsto\iota_1]}$;
  \item\label{step2} if $S=\emptyset$, check if there is some
    $\eta\in E$ such that $\dom\eta=\emptyset$, if not return
    $\mathtt{FALSE}$;
  \item choose non-deterministically a transition $\tr\in\T_1$ such
    that $\trin\subseteq S$;
  \item fire $\tr$, which means that
    $S\eqdef S\setminus \trin \cup\settrout$;
  \item have $E$ progress along $\tr$ as well, according to the
    conditions from Definition~\ref{def:sim}.
  \item go to step \ref{step2}.
  \end{enumerate}
  All these computations can be done in exponential space. In
  particular as $S$ is a set of places in $P_1$, it can be stored in
  space $|P_1|\times\log(|P_1|)$. Similarly, $E$, being a set of
  partial functions from $P_2$ to $P_1$, each of which of size
  $|P_2|\times \log(|P_1|+1)$, can be stored in space
  $|P_1+1|^{|P_2|}\times|P_2|\times\log(|P_1|+1)$.  This
  non-deterministic \expspace\ semi-algorithm can then be turned into
  an \expspace\ algorithm by Savitch' theorem~\cite{savitch70}.
\end{proof}

\begin{prop}
  \label{prop:exp:hard}
  The (in)equational theory of representable identity-free Kleene
  lattices is \expspace-hard.
\end{prop}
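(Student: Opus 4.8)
The plan is to give a polynomial-time many-one reduction from the \emph{non-universality problem for identity-free semi-extended regular expressions}. An instance is a simple expression $r\in\SExp$ over a finite set $X$ of letters---so $r$ is built from $\sumtm$, $\cdot$, ${}^+$, $\cap$, $0$, and the letters of $X$---and the question is whether $\Ln r\neq X^+$, where here $\Ln r$ denotes the language of $r$ read over formal languages, interpreting $\cap$ as language intersection and ${}^+$ as strict iteration. This problem is \expspace-hard: one adapts F\"urer's classical reduction for the (in)equivalence problem of regular expressions with intersection, which encodes a computation of a $2^n$-space bounded Turing machine as a word, the position within a configuration being recorded by an $n$-bit binary counter carried letter by letter; the facts that the counter increments correctly, that successive configurations are consistent on every local window, and that an accepting state is eventually reached then become a conjunction of polynomially many regular conditions, detecting ``flawed'' words. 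Framing every candidate word between two fresh sentinel letters makes each ``arbitrary block'' occurring in these conditions a nonempty block, so that every $X^\star$ may be replaced by $X^+$ and neither $1$ nor $\varepsilon$-containing sublanguages are ever needed; one checks that the resulting $r$ stays polynomial. This first step is where all the real work lies.

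The bridge to our graph-theoretic setting is the following elementary lemma: for every $r\in\SExp$ over $X$ and every nonempty word $w\in X^+$, writing $\Gr w$ for the associated path graph, $\Gr w\in\clgr{\Gr r}$ if and only if $w\in\Ln r$. I would prove it in two routine inductive steps. First, for a simple \emph{term} $v$, an induction on $v$ shows $\setcompr{w\in X^+}{\Gr v\lessgr\Gr w}=\Ln v$: when $v=a$ this holds because $\Gr w$ is a simple directed path, which carries a direct edge from input to output only for $w=a$; when $v=u\cdot u'$, from $\Gr v=\Gr u\cdot\Gr{u'}$ a homomorphism onto a path $\Gr w$ factors $w$ at the image of the join vertex, restricting to homomorphisms of $\Gr u$ and $\Gr{u'}$ onto the two factors; when $v=u\cap u'$, from $\Gr v=\Gr u\cap\Gr{u'}$ a homomorphism onto $\Gr w$ restricts to homomorphisms of $\Gr u$ and of $\Gr{u'}$ onto $\Gr w$ that agree on input and output, and two such homomorphisms glue back together. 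Second, an induction on $r$, using the shape of $\tms r$ together with the identities $\bigcup_{i,j}(A_i\cap B_j)=(\bigcup_iA_i)\cap(\bigcup_jB_j)$ and $\bigcup_{i,j}(A_i\cdot B_j)=(\bigcup_iA_i)\cdot(\bigcup_jB_j)$ (and their iterates for ${}^+$), yields $\bigcup_{v\in\tms r}\setcompr{w}{\Gr v\lessgr\Gr w}=\Ln r$. Since $\Gr w\in\clgr{\Gr r}$ unfolds to $\exists v\in\tms r.\ \Gr v\lessgr\Gr w$, the lemma follows.

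To assemble the reduction, set $e\eqdef\bigl(\sum_{a\in X}a\bigr)^+$ and $f\eqdef r$; both are identity-free expressions, of size polynomial in $|X|$ and $|r|$. Then $\Gr e=\setcompr{\Gr w}{w\in X^+}$, so Theorem~\ref{thm:interlang} together with the lemma gives \[\Rel\models e\leq f\ \Longleftrightarrow\ \Gr e\subseteq\clgr{\Gr f}\ \Longleftrightarrow\ \forall w\in X^+,\ w\in\Ln r\ \Longleftrightarrow\ \Ln r=X^+.\] Hence $r$ is a yes-instance of non-universality exactly when $\Rel\not\models e\leq f$; since \expspace\ is closed under complement, deciding $\Rel\models e\leq f$ is \expspace-hard. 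The hardness transfers to the equational theory via $\Rel\models e\leq f\iff\Rel\models e\sumtm f=f$, with $e\sumtm f$ again identity-free. The main obstacle is thus the first paragraph: transporting the known \expspace\ lower bound into the identity-free, $\top$-free, star-free fragment while keeping the instance polynomial; the lemma and the final combination are straightforward.
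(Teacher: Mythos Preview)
Your proposal is correct and follows essentially the same approach as the paper: reduce from (non-)universality of regular expressions with intersection via F\"urer, observe that the instance can be made identity-free, prove the bridge lemma that for words $w\in X^+$ one has $w\in L(r)\iff \Gr w\in\clgr{\Gr r}$ (equivalently $w\in\cltm{\tms r}$), and conclude via Theorem~\ref{thm:interlang}. The paper states the identity-free modification of F\"urer's construction and the inductive bridge lemma more tersely, whereas you spell out the sentinel trick and the two-level induction (terms first, then expressions); these are elaborations, not a different route.
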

\begin{proof}
  We perform a reduction from the problem of universality of regular
  expressions with intersection. 
  Fürer showed that given a regular expression with intersection $e$
  over the alphabet $\Sigma$, checking whether the language denoted by
  $e$ is equal to $\Sigma^\star$ requires exponential
  space~\cite{Furer1980}. In fact, by slightly modifying Fürer's
  proof, one can strengthen the result and use expressions
  without~$1$, \ie, over the syntax of $\SExp[\Sigma]$. We write
  $L(e)$ for the language (set of strings) denoted by such an
  expression.

  The reduction relies on the fact that for every expression $e$,
  $\tms{\Sigma^+}\subseteq\cltm{\tms e}$ if and only if
  $L\paren{\Sigma^+}\subseteq L\paren e$. First notice that that
  $\Sigma^+=L\paren{\Sigma^+}=\tms{\Sigma^+}$. Then, we can show by
  induction on $e$ that for every word $w\in\Sigma^+$ we have
  $w\in L\paren e$ if and only if $w\in\cltm{\tms e}$.

  This means that to answer the question of universality of the
  regular expression with intersection $e$, we can ask instead whether
  $\tms{\Sigma^+}\subseteq\cltm{\tms e}$ which is equivalent by
  Theorem~\ref{thm:interlang} to $\Rel\models \Sigma^+ \leq e$.
\end{proof}

\noindent
Consider the automaton $\A(e)$ we associate with an expression $e$.  The
number of places in $\A(e)$ is linear in the size of the simple
expression $e$. (And the exponential upper-bound on the number of
transitions is asymptotically reached, consider for instance the
automaton for
${(a_1\sumtm b_1)\cap(a_2\sumtm b_2)\cap\dots\cap(a_n\sumtm b_n)}$.)

Therefore, Propositions~\ref{prop:complexity-1}
and~\ref{prop:exp:hard} finally lead to
\begin{thm}
  \label{thm:cplx}
  The problem of comparing Petri automata, as well as the
  (in)equational theory of representable identity-free Kleene lattices
  are \expspace-complete.
\end{thm}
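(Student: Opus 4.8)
The plan is to obtain matching upper and lower bounds for both problems, reusing the results established so far. For the upper bound on Petri automata comparison, I would invoke Proposition~\ref{prop:complexity-1} directly, which already places the problem in \expspace\ via the on-the-fly simulation search of Section~\ref{sec:simulations} and Savitch's theorem. For the upper bound on the equational theory of representable identity-free Kleene lattices, I would go through Proposition~\ref{prop:reduce}: deciding $\Rel\models e\leq f$ is equivalent to deciding $\Gr{\A(e)}\subseteq\Ln{\A(f)}$. The crucial observation is that the translation $e\mapsto\A(e)$ from Section~\ref{sec:exp:pa} produces an automaton whose number of places is linear in the size of $e$ (as noted just above the theorem statement), so this is a polynomial-time reduction to the comparison problem; composing it with Proposition~\ref{prop:complexity-1} yields membership in \expspace. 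The exponential blow-up in the number of transitions is harmless, since the complexity measure used in Proposition~\ref{prop:complexity-1} is the number of places, not the number of transitions.

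For the lower bounds, I would start from Proposition~\ref{prop:exp:hard}, which gives \expspace-hardness of the equational theory via Fürer's universality problem for regular expressions with intersection (in fact over the syntax of $\SExp[\Sigma]$, i.e., without unit). This settles \expspace-hardness of the equational theory outright. To transfer hardness to the automata comparison problem, I would again use the polynomial-size translation $e\mapsto\A(e)$: a hard instance ``$\Rel\models\Sigma^+\leq e$?'' is mapped in polynomial time, via Proposition~\ref{prop:reduce}, to the instance ``$\Gr{\A(\Sigma^+)}\subseteq\Ln{\A(e)}$?'' of the comparison problem, so comparing Petri automata is \expspace-hard as well.

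Combining the two directions gives \expspace-completeness of both problems, which is exactly the statement. No single step here is a genuine obstacle: the real work has been front-loaded into Propositions~\ref{prop:complexity-1} and~\ref{prop:exp:hard} and into the correctness of the constructions of Sections~\ref{sec:exp:pa} and~\ref{sec:compare}. The one point that deserves explicit care is checking that the reduction $e\mapsto\A(e)$ is genuinely polynomial-time and that the measure of ``size of an automaton'' is consistent between the hardness argument and the \expspace\ algorithm; once these are aligned (places, not transitions), the assembly is routine.
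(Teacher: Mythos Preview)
Your proposal is correct and follows essentially the same approach as the paper: the paper also derives the theorem directly from Propositions~\ref{prop:complexity-1} and~\ref{prop:exp:hard}, together with the observation that $\A(e)$ has linearly many places in the size of~$e$. You have simply spelled out more explicitly how the upper and lower bounds transfer between the two problems via Proposition~\ref{prop:reduce} and the polynomial-size translation, which the paper leaves implicit.
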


\noindent Table~\ref{tab:cplx} summarises the various complexity
results. The first two lines correspond to Theorem~\ref{thm:cplx}.
The upper-bound in the third line is Theorem~\ref{thm:jatmey} from the
following section, it is a corollary from a result by Jategaonkar and
Meyer~\cite{Jategaonkar96}; the lower-bound in this line follows from
the Kleene algebra fragment, which is known to be
\pspace-complete~\cite{MS73}. The fourth and fifth lines are
equivalent by Theorem~\ref{thm:interlang}; the lower bound follows
from Proposition~\ref{prop:exp:hard}, for the upper-bound it suffices
to enumerate all graphs: one can easily decide whether a given graph
is accepted by a Petri automaton. For the last line, the problem also
contains Kleene algebra, whence the lower bound; it is unclear however
whether this last problem is co-recursively enumerable: given a graph
$G$ and a Petri automaton, we need to decide whether $G=\retype{H}$
for some series-parallel graph $H$ produced by the automaton. A bound
on the size of such a graph $H$ w.r.t. $G$ would suffice; this seems
plausible but we do not have a proof.
\begin{table}[t]
  \centering
  \begin{tabular}{c|c|c|c}
    Inputs&Problem&Lower bound&Upper bound\\\hline
    $e,f\in\SExp[X]$ & $\Rel\models e=f$ & \expspace & \expspace\\
    \multirow{2}{*}{$\A,\B\in\PA[X]$} & $\Ln\A=\Ln\B$ & \expspace & \expspace\\
          & $\Gr\A=\Gr\B$ & \pspace & \expspace\\\hline
    $e,f\in\Exp[X]$ & $\Rel\models e=f$ & \expspace & co-r.e.\\
    \multirow{2}{*}{$\A,\B\in\PA[\Xb]$} & $\Ln\A=\Ln\B$ & \expspace & co-r.e.\\
          & $\retype{\Gr\A}=\retype{\Gr\B}$ & \pspace & -\\
  \end{tabular}
  \caption{Summary of the complexity results}
  \label{tab:cplx}
\end{table}

The table is presented with equations and equalities; the same table
holds for inequations and inclusions. Also remember that
$\Ln\A\subseteq\Ln\B$ is equivalent to $\Gr\A\subseteq\Ln\B$, but not
to $\Gr\A\subseteq\Gr\B$.

\section{Relationship with standard Petri net notions}
\label{sec:relat-with-stand}

Our notion of Petri automaton is really close to the standard notion
of labelled (safe) Petri net, where the transitions themselves are
labelled, rather than their outputs. We motivate this design choice,
and we relate some of the notions we introduced to the standard
ones~\cite{Murata}. 

A Petri automaton can be translated into a safe Petri net whose
transitions are labelled by $X\uplus \set\tau$, the additional label
$\tau$ standing for silent actions.  For each automaton transition
$\tuple{\set{p_1,\dots,p_n},\set{\tuple{x_1,q_1},\dots,\tuple{x_m,q_m}}}$
with $m>1$, we introduce $m$ fresh places $r_1,\dots,r_m$ and $m+1$
transitions:
\begin{itemize}
\item a silent transition $t_0$ with preset $\set{p_1,\dots,p_n}$ and
  postset $\set{r_1,\dots,r_m}$;
\item and for each $1\leqslant k\leqslant m$ a transition $t_k$
  labelled by $x_k$, with preset $\set{r_k}$ and postset $\set{q_k}$.
\end{itemize}

The inductive construction from Section~\ref{sec:exp:pa} is actually
simpler to write using such labelled Petri nets, as one can freely use
$\tau$-labelled transitions to assemble automata into larger ones, one
does not need to perform the $\tau$-elimination steps on the fly.

On the other hand, we could not define an appropriate notion of
simulation for Petri nets: we need to fire several transitions at once
in the small net, to provide enough information for the larger net to
answer; delimiting which transitions to group and which to separate is
non-trivial; similarly, defining a notion of parallel step is delicate
in presence of $\tau$-transitions.  By switching to our notion of
Petri automata, we impose strong constraints about how those
$\tau$-transitions should be used, resulting in a more fitted model.

To describe a run in a Petri net $N$, one may use a \emph{process}
$p\colon K\rightarrow N$, where $K$ is an \emph{occurrence net} (a
partially ordered Petri net)~\cite{goltz1983}. The graphical
representation~(Figures~\ref{fig:run} and~\ref{fig:embed}) we used to
describe runs in an automaton are in fact a mere adaptation of this
notion to our setting (with labels on arcs rather than on
transitions).

Our notion $\Gr\run$ of trace of a run actually corresponds to the
standard notion of \emph{pomset-trace}\cite{Jategaonkar96}, via
dualisation. Let $\run$ be a run in a Petri automaton, and let $\run'$
be the corresponding run in the corresponding labelled Petri net. Let
$\Gr\run=\tuple{V,E,\iota,o}$. It is not difficult to check that
the pomset-trace of $\run'$ is isomorphic to $\tuple{E,<_E}$, where
$<_E$ is the transitive closure of the relation $<$ defined on $E$ by
$\forall x,y,z\in V,~a,b\in X,\tuple{x,a,y}<\tuple{y,b,z}$.

The correspondence is even stronger: two graphs produced by accepting
runs in (possibly different) automata
$\Gr{\run_1}=\tuple{V_1,E_1,\iota_1,o_1}$ and
$\Gr{\run_2}=\tuple{V_2,E_2,\iota_2,o_2}$ are isomorphic if and only
if their pomset-traces $(E_1,<_{E_1})$ and $(E_2,<_{E_2})$ are
isomorphic. The proof of this relies on the fact that accepting runs
produce graphs satisfying the following properties:
\begin{align*}
 \forall \tuple{x,a,y}\in E,&~ x\neq\iota\Leftrightarrow 
 \paren{\exists \tuple{z,b}\in V\times X:\tuple{z,b,x}\in E},\\
 \forall \tuple{x,a,y}\in E,&~y\neq o\Leftrightarrow 
 \paren{\exists \tuple{z,b}\in V\times X:\tuple{y,b,z}\in E}.
\end{align*}
Hence, pomset-trace language equivalence corresponds exactly to
equivalence of the sets of produced graphs in our setting (up to
isomorphism).

Jategaonkar and Meyer showed that the pomset-trace equivalence problem
for safe Petri nets is \expspace-complete~\cite{Jategaonkar96}. As a
consequence, comparing the sets of series-parallel graphs produced by
Petri automata is in \expspace:
\begin{thm}
  \label{thm:jatmey}
  Deciding whether $\Gr\A=\Gr\B$ for two Petri automata
  $\A,\B\in\PA[X]$ is in \expspace.  
\end{thm}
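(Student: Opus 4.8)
The plan is to reduce the problem $\Gr\A = \Gr\B$ for Petri automata to the pomset-trace equivalence problem for safe labelled Petri nets, which Jategaonkar and Meyer~\cite{Jategaonkar96} showed to be in \expspace. The two ingredients needed are: (i) a translation from Petri automata to safe labelled Petri nets that is polynomial (or at least exponential-space computable), and (ii) the fact, already established in the discussion above, that $\Gr\A$ determines and is determined, up to isomorphism, by the pomset-trace language of the associated Petri net. Ingredient~(i) is exactly the construction described at the beginning of this section: each transition $\tuple{\set{p_1,\dots,p_n},\set{\tuple{x_1,q_1},\dots,\tuple{x_m,q_m}}}$ with $m>1$ is split into a silent transition followed by $m$ labelled transitions using $m$ fresh places; transitions with $m\le 1$ are kept essentially as-is (relabelling the output label onto the transition, and using a $\tau$-labelled transition when $m=0$, or simply adjusting the final-marking condition). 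This blows up the number of places only linearly in the size of the automaton, so it is certainly computable within the stated bounds.

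First I would make the translation $\A \mapsto N_\A$ precise and observe that accepting runs of $\A$ correspond bijectively (up to insertion of the silent $\tau$-steps) to firing sequences of $N_\A$ reaching the empty marking, and that this correspondence preserves traces in the appropriate sense. Concretely, I would invoke the paragraph above establishing that the pomset-trace of the Petri-net run $\run'$ corresponding to an automaton run $\run$ is isomorphic to $\tuple{E,<_E}$ where $\Gr\run = \tuple{V,E,\iota,o}$ and $<_E$ is the transitive closure of the edge-adjacency relation. Then I would use the stronger statement recorded there: two graphs $\Gr{\run_1}$ and $\Gr{\run_2}$ produced by \emph{accepting} runs are isomorphic if and only if their pomset-traces are isomorphic — this equivalence holds precisely because graphs produced by accepting runs satisfy the two displayed source/sink characterisation properties, so the graph can be recovered from the pomset by re-deriving which events are minimal/maximal. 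Consequently $\Gr\A$ (a set of graphs up to isomorphism) and the pomset-trace language of $N_\A$ (a set of pomsets up to isomorphism) carry exactly the same information.

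Putting this together: $\Gr\A = \Gr\B$ holds if and only if $N_\A$ and $N_\B$ have the same pomset-trace language. Since $N_\A$ and $N_\B$ are safe Petri nets of size polynomial in the sizes of $\A$ and $\B$, and since pomset-trace equivalence of safe Petri nets is in \expspace by~\cite{Jategaonkar96}, we conclude that deciding $\Gr\A = \Gr\B$ is in \expspace.

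The main obstacle I anticipate is not the complexity bookkeeping — the translation is clearly size-preserving up to polynomial factors — but making airtight the claim that the graph $\Gr\run$ of an accepting run is faithfully recoverable from its pomset-trace, i.e.\ that passing to the pomset loses no information relevant to graph-isomorphism. This hinges on the two boundary properties displayed above (an edge's source is $\iota$ iff it has no predecessor edge; an edge's target is $o$ iff it has no successor edge), which pin down the input and output vertices from the order structure alone; one must also check that the internal vertices of $\Gr\run$ — which are transition-events, hence in bijection with the elements of $E$ only because each non-final transition has at least one output — are correctly identified. I would therefore spend the bulk of the argument verifying these properties carefully for accepting runs (using Constraint~\ref{cstr:sp} and Lemma~\ref{lem:cst2:finaltrans} to control the final transition), and only then invoke~\cite{Jategaonkar96} as a black box.
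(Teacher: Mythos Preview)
Your proposal is correct and follows essentially the same approach as the paper: the paper presents this theorem as an immediate consequence of the correspondence between $\Gr\run$ and pomset-traces established in Section~\ref{sec:relat-with-stand}, together with the Jategaonkar--Meyer \expspace{} result for pomset-trace equivalence of safe Petri nets. If anything, you are more explicit than the paper about the translation and about why the graph can be recovered from the pomset (via the displayed source/sink properties), which the paper only sketches.
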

As explained before the corresponding equivalence does not coincide
with the one discussed in the present paper, where we compare the sets
of graphs modulo homomorphism (see, e.g.,
Proposition~\ref{prop:reduce}).  Also note that the above result is
for series-parallel graphs only: it does not explain how to compare
the sets of graphs $\retype{\Gr\A}$ and $\retype{\Gr\B}$ when given
two Petri automata $\A,\B\in\PA[\Xb]$ labelled in $\Xb$. This latter
question remains open.

\section{Branching Automata}
\label{sec:relat-with-branch}

There are several notions of regular/recognisable/definable sets of
graphs in the literature, and several Kleene-like theorems. See for
instance the whole line of research that followed from Courcelle's
work~\cite{courcelle:hal-00646514}, where monadic second order logic
plays a central role. Courcelle's conjecture that recognisabiliy
coincides with definability on classes of graphs of bounded treewidth
has been proved recently~\cite{BojanczykP16}---whether it also
coincides with the notions of recognisability and regularity we use
here is unclear. In particular, since we mainly restrict to
series-parallel graphs, our syntax for expressions greatly departs
from Courcelle's one, and we can work with a simpler automata
model. The work of Bossut et al.~\cite{bossut1995kleene} on planar
directed acyclic graphs is also worth mentioning, but again the
syntax and automata they use are really different from ours: their
graphs are hypergraphs and they are not necessarily series-parallel.

Much closer to the present work is that of Lodaya and
Weil~\cite{Lodaya1998,LODAYA2000347,LODAYA2001269}, who introduced
another kind of automata to recognise series-parallel graphs, called
``branching automata''. They obtained a Kleene Theorem for this model,
using the same notion of regularity as the one we use here
(Definition~\ref{def:reg:sp}). In this section we recall the
definition of branching automata and describe precisely the
relationship between their result and our own.

\subsection{Definitions and Kleene Theorem}
\label{sec:definitions-2}

\begin{defi}[Branching automaton]
  A \emph{branching automaton} over the alphabet $X$ is a tuple
  $\tuple{Q,T_{seq},T_{fork},T_{join},I,F}$, where $Q$ is a set of
  states, $I$ and $F$ are subsets of $Q$, respectively the input and
  output states, and the transitions are split in three sets:
  \begin{itemize}
  \item $T_{seq}\subseteq Q\times X\times Q$ is the set of sequential transitions;
  \item $T_{fork}\subseteq Q\times\Mns Q$ is the set of opening transitions;
  \item $T_{seq}\subseteq \Mns Q\times Q$ is the set of closing
    transitions.
  \end{itemize}
  (Here $\Mns Q$ represents the set of multisets over $Q$ with
  cardinality at least 2.)
\end{defi}

\begin{figure}[t]
  \centering
  \begin{tikzpicture}[xscale=1.5]
    \etat (0) (0,0.5);
    \etat (1) (1,1);
    \etat (2) (1,0);
    \etat (3) (6,0);
    \etat (4) (2,1.5);
    \etat (5) (2,.5);
    \etat (6) (3,1.5);
    \etat (7) (4,.5);
    \etat (8) (4,1.5);
    \etat (9) (5,1);
    \etat (10) (6,1);
    \etat (11) (7,0.5);
    \initst (0);\fnst(11);
    \draw[arc,bend left] (0) to node[inner sep=0,pos=.4](x1) {} (1);
    \draw[arc,bend right] (0) to node[inner sep=0,pos=.4](x2) {} (2);
    \draw[bend left] (x1) to (x2);
    \draw[arc,bend left] (1) to node[inner sep=0,pos=.4](y1) {} (4);
    \draw[arc,bend right] (1) to node[inner sep=0,pos=.4](y2) {} (5);
    \draw[bend left] (y1) to (y2);
    \edge[below](2)(3)[a];
    \edge(4)(6)[c];
    \edge[below](5)(7)[b];
    \edge(6)(8)[a];
    \draw[arc,bend left] (8) to node[inner sep=0,pos=.6](z1) {} (9);
    \draw[arc,bend right] (7) to node[inner sep=0,pos=.6](z2) {} (9);
    \draw[bend right] (z1) to (z2);
    \edge(9)(10)[d];
    \draw[arc,bend left] (10) to node[inner sep=0,pos=.6](t1) {} (11);
    \draw[arc,bend right] (3) to node[inner sep=0,pos=.6](t2) {} (11);
    \draw[bend right] (t1) to (t2);
  \end{tikzpicture}
  \caption{Example of a branching automaton}
  \label{fig:branch-aut}
\end{figure}
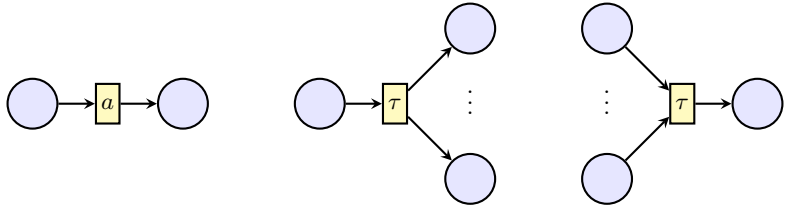
\begin{figure}[t]
  \centering
  \begin{tikzpicture}[baseline=(0.south),xscale=2]
    \etat (0) (0,0);\etat (1)(1,0);\trans[a](a)(.5,0);
    \edge (0)(a);\edge(a)(1);
  \end{tikzpicture}\hspace{1cm}
  \begin{tikzpicture}[baseline=(0.south),xscale=2]
    \etat (0) (0,0);\etat (1)(1,1);\etat (2)(1,-1);
    \node () at ($(1.south)!.5!(2.north)$) {$\rvdots$};
    \trans[\tau](a)(.5,0);
    \edge (0)(a);\edge(a)(1);\edge(a)(2);
  \end{tikzpicture}\hspace{1cm}
  \begin{tikzpicture}[baseline=(0.south),xscale=2]
    \etat (0) (1,0);\etat (1)(0,1);\etat (2)(0,-1);
    \node () at ($(1)!.5!(2)$) {$\rvdots$};\trans[\tau](a)(.5,0);
    \edge (a)(0);\edge(1)(a);\edge(2)(a);
  \end{tikzpicture}
  \caption{Prescribed transitions of branching automata}
  \label{fig:branch-trans}
\end{figure}
An example of such an automaton is given in Figure~\ref{fig:branch-aut}.
These automata can be seen as labelled Petri nets of a particular
shape: transitions are restricted to the three types described
on Figure~\ref{fig:branch-trans}.

Lodaya and Weil formally use these automata on simple terms quotiented
by associativity of $\cdot$ and $\cap$, and commutativity
of~$\cap$. This is equivalent to working with series-parallel graphs
modulo isomorphism.

\begin{defi}[Runs and language of a branching automaton]
  Let $t$ be a simple term, there is a \emph{run} on $t$ from state
  $p$ to state $q$ if:
  \begin{itemize}
  \item $t=a\in X$ and $\tuple{p,a,q}\in T_{seq}$;
  \item $t=t_1\cap\dots\cap t_n$ with $n\geqslant 2$, there are two transitions
    $\tuple{p,\left[p_1,\dots, p_n\right]}\in T_{fork}$ and
    $\tuple{\left[q_1,\dots, q_n\right],q}\in T_{join}$, and for every
    $1\leqslant i\leqslant n$ there is a run on $t_i$ from $p_i$ to $q_i$;
  \item $t=t_1\cdot \dots \cdot t_n$ with $n\geqslant 2$, there are states
    $p=p_0,p_1,\dots,p_n=q$, and for every $0\leqslant i< n$
    there is a run on $t_i$ from $p_i$ to $p_{i+1}$.
  \end{itemize}

  The \emph{language} of a branching automaton $\B$, written $\Gr\B$
  is defined as the set of series-parallel graphs $\Gr t$ such that
  there exists a pair of states $\tuple{q_i,q_f}\in I\times F$, and a
  run on $t$ from $q_i$ to $q_f$, .
\end{defi}

Lodaya and Weil impose restrictions on the runs of the automaton, that
correspond to a safety constraint over the underlying Petri net, much
like Constraint~\ref{cstr:safety}. Here we only consider branching
automata implicitly satisfying those constraints.

\begin{thm}[Kleene Theorem for branching
  automata~\cite{LODAYA2000347}]\label{thm:ktba}
  For every set $S$ of series-parallel graphs the following are equivalent:
  \begin{enumerate}[label=(\roman*)]
  \item there is a simple expression $e\in\SExp[X]$ such that
    $S=\Gr e$;
  \item there is a branching automaton $\B$ such that
    $S=\Gr \B$.
  \end{enumerate}
\end{thm}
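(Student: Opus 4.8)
The plan is to prove this equivalence by exhibiting effective translations in both directions between branching automata and Petri automata, and then invoking the Kleene theorem we have already established for Petri automata (Theorems~\ref{thm:kl1:sp} and~\ref{thm:kl2:sp}, i.e.\ Propositions~\ref{prop:ok_e_to_a} and~\ref{prop:ok_a_to_e}). Concretely, it suffices to show: (a) for every branching automaton $\B$ there is a Petri automaton $\A$ over $X$ with $\Gr\B=\Gr\A$; and (b) for every Petri automaton $\A$ over $X$ there is a branching automaton $\B$ with $\Gr\A=\Gr\B$. Once both are in place, the implication $\ref{thm:ktba}\,(i)\Rightarrow\ref{thm:ktba}\,(ii)$ follows by composing Proposition~\ref{prop:ok_e_to_a} (expression $\to$ Petri automaton) with translation~(b), and $\ref{thm:ktba}\,(ii)\Rightarrow\ref{thm:ktba}\,(i)$ follows by composing translation~(a) with Proposition~\ref{prop:ok_a_to_e} (Petri automaton $\to$ expression).

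For direction~(a) I would encode a branching automaton directly as a Petri net of the restricted shape depicted in Figure~\ref{fig:branch-trans}, which is already almost a Petri automaton. A sequential transition $\tuple{p,a,q}\in T_{seq}$ becomes a Petri-automaton transition $\tuple{\set p,\set{\tuple{a,q}}}$. A fork $\tuple{p,[p_1,\dots,p_n]}\in T_{fork}$ followed by the reading of the $a_i$ must be handled so that the labels sit on outputs of transitions rather than on transitions themselves: one option is to introduce one fresh place $r$ and let the fork create a token in $r$, then for each branch create the first edge of that branch on the output of the transition consuming $r$. A join $\tuple{[q_1,\dots,q_n],q}\in T_{join}$ becomes a transition $\tuple{\set{q_1,\dots,q_n},\set{\tuple{a,q}}}$ if the subsequent step is labelled, or an $\epsilon$-style merge handled via a fresh place otherwise. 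Since branching automata are required (by Lodaya--Weil) to satisfy a safety constraint on runs, the resulting net will be safe, so Constraint~\ref{cstr:safety} holds; and Constraint~\ref{cstr:sp} holds because a run on a simple term produces, by the inductive definition of runs, precisely the series-parallel graph $\Gr t$. The bookkeeping to ensure $\Gr\B=\Gr\A$ is a structural induction on simple terms $t$, matching runs on $t$ in $\B$ with accepting runs in $\A$ whose trace is $\Gr t$.

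For direction~(b) I would go the other way: a Petri automaton with the $\tau$-free layered structure is essentially a branching automaton, but a general transition $\tuple{\set{p_1,\dots,p_n},\set{\tuple{x_1,q_1},\dots,\tuple{x_m,q_m}}}$ mixes a join (consuming $p_1,\dots,p_n$), a labelling of $m$ edges, and a fork (producing $q_1,\dots,q_m$). I would split such a transition, exactly as in Section~\ref{sec:relat-with-stand}: introduce fresh states $r_1,\dots,r_m$, a join transition $\tuple{[p_1,\dots,p_n],r}$ (or a sequential transition if $n=1$), then $m$ sequential transitions $\tuple{r_k,x_k,q_k}$, possibly after a fork $\tuple{r,[r_1,\dots,r_m]}$ when $m>1$. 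The trace of a run is unchanged by this decomposition since the decomposition only inserts silent structural steps; the key lemma is that this preserves $\Gr\argument$, proved by tracking how the edges of $\Gr\run$ (one per non-silent transition) are in bijection before and after. I expect the main obstacle to be getting the interface conventions exactly right: branching automata use a single initial and a single final state and a fork/join discipline with multisets, whereas Petri automata use an initial place $\iota$ and accept on the empty marking, and the boundary between ``labelled sequential step'' and ``silent structural step'' has to be managed carefully so that no spurious or missing edges appear in the traces — in particular ensuring the final-transition / output-vertex handling (cf.\ Lemma~\ref{lem:cst2:finaltrans}) matches the single final state of a branching automaton. All of this is routine once the conventions are fixed, so no genuinely hard new idea is required: the theorem is essentially a corollary of our Kleene theorem together with the observation that both automata models are restricted safe labelled Petri nets producing the same series-parallel traces.
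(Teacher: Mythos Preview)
The paper does not prove this theorem: it is cited as a result of Lodaya and Weil. The paper does, in its comparison section, sketch exactly the two translations you propose and remarks that one \emph{could} rederive Theorem~\ref{thm:ktba} from its own Kleene theorem via such translations. So your overall strategy is sound in spirit, but both of your translations have genuine gaps that the paper itself flags.

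For direction~(b) (Petri automaton $\to$ branching automaton), your claim that ``the trace of a run is unchanged by this decomposition since the decomposition only inserts silent structural steps'' is false for the branching-automaton semantics. The paper gives an explicit counterexample: applying your splitting to the Petri automaton for $\paren{\paren{a\cap b}\cdot d}\cap c$ produces a branching automaton with a single ternary fork $\tuple{A,[B^t,C^t,D^t]}$ and two binary joins; its pomset-trace language is the intended singleton, but its \emph{branching-automaton} language is empty, because no factorisation of the term has three parallel subterms at the top level. Runs in branching automata must match each fork with a join of the same arity on the same group of branches, so the naive decomposition does not preserve $\Gr\argument$. The fix the paper describes is a saturation step: for every fork transition and every proper tree over its output multiset, add the corresponding cascade of binary/smaller forks (and dually for joins). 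Without this step your translation is simply wrong.

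For direction~(a) (branching automaton $\to$ Petri automaton), your sketch underestimates the difficulty. A fork in a branching automaton is silent and may be followed immediately by another fork (or by a sequential transition on only some of the branches), so you cannot in general fuse it with ``the first edge of that branch'' as you suggest. What is needed is a genuine $\epsilon$-elimination that also respects the fork/join matching discipline; the paper explicitly says it believes this can be done using the box machinery of Section~\ref{sec:pa:exp} but that it does \emph{not} have a precise formulation. Your one-paragraph description does not supply one either.
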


\subsection{Comparison with Petri automata}
\label{sec:comp-with-petri}

At first glance the two Kleene theorems and the fact that both
branching automata and Petri automata are Petri net-based seem to mean
they are completely equivalent. Indeed the same set of regular-like
expressions may be used to describe their semantics.  However they
still exhibit some deep differences.

\begin{figure}[t]
  \centering
  \begin{tikzpicture}[scale=1.5]
    \etat[0](0)(0,0);
    \etat[1](1)(1,.5);
    \etat[2](2)(1,-.5);
    \etat[3](3)(2,.75);
    \etat[4](4)(2,.25);
    \etat[5](5)(2,-.25);
    \etat[6](6)(2,-.75);
    \etat[13](13)(5,0);
    \etat[12](12)(4,.5);
    \etat[11](11)(4,-.5);
    \etat[10](10)(3,.75);
    \etat[9](9)(3,.25);
    \etat[8](8)(3,-.25);
    \etat[7](7)(3,-.75);
    \initst(0);\fnst(13);
    \draw[arc,bend left] (0) to node[inner sep=0,pos=.4](x1) {} (1);
    \draw[arc,bend right] (0) to node[inner sep=0,pos=.4](x2) {} (2);
    \draw[bend left] (x1) to (x2);
    \draw[arc,in=180,out=40] (1) to node[inner sep=0,pos=.4](x1) {} (3);
    \draw[arc,in=180,out=-40] (1) to node[inner sep=0,pos=.4](x2) {} (4);
    \draw[bend left] (x1) to (x2);
    \draw[arc,in=180,out=40] (2) to node[inner sep=0,pos=.4](x1) {} (5);
    \draw[arc,in=180,out=-40] (2) to node[inner sep=0,pos=.4](x2) {} (6);
    \draw[bend left] (x1) to (x2);
    \edge (3)(10)[a];\edge(4)(9)[b];\edge(5)(8)[c];\edge(6)(7)[d];
    \draw[arc,out=0,in=140] (8) to node[inner sep=0,pos=.6](t1) {} (11);
    \draw[arc,out=0,in=-140] (7) to node[inner sep=0,pos=.6](t2) {} (11);
    \draw[bend right] (t1) to (t2);
    \draw[arc,out=0,in=140] (10) to node[inner sep=0,pos=.6](t1) {} (12);
    \draw[arc,out=0,in=-140] (9) to node[inner sep=0,pos=.6](t2) {} (12);
    \draw[bend right] (t1) to (t2);
    \draw[arc,bend left] (12) to node[inner sep=0,pos=.6](t1) {} (13);
    \draw[arc,bend right] (11) to node[inner sep=0,pos=.6](t2) {} (13);
    \draw[bend right] (t1) to (t2);
  \end{tikzpicture}
  \caption{Example of branching automaton}
  \label{fig:exbaglob}
\end{figure}
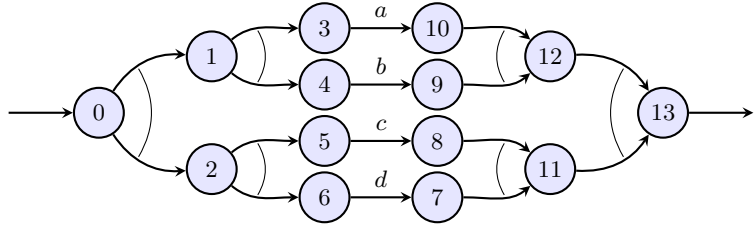

The first difference comes from the runs in the two models. In some
sense, the runs in a branching automaton require a ``global'' view of
the term being read. Consider the branching automaton in
Figure~\ref{fig:exbaglob}, and the term $t=b\cap c\cap a\cap d$. $t$
is accepted by this automaton, but in order to reach that conclusion,
one must: 1) refactor $t$ as $\paren{a\cap b}\cap\paren{c\cap d}$; 2)
``match'' the opening transition $\tuple{0,[1,2]}$ with the closing
transition $\tuple{[11,12],13}$; 3) read the subterms
$\paren{a\cap b}$ and $\paren{c\cap d}$ respectively from state $1$ to
state $12$ and from $2$ to $11$.  This means that the run is built as
a nesting of runs (rather than a sequential process), and that it
needs to manipulate the term as a whole (rather than using a partial,
local view of it).

By contrast, to fire a transition in a Petri automaton, one simply
needs to see one vertex of the graph and the edges coming out of
it. Furthermore, the matching of transitions is somewhat automatic in
our model, and knowledge of it is not needed to compute a run. For
these reasons, our notion of language of a run could be defined using
standard Petri net notions (namely \emph{pomset-traces}, see
Section~\ref{sec:relat-with-stand}), whereas runs in a branching
automaton rely crucially on the term representation.

Another difference stems from the way automata are labelled. Branching
automata do not label their opening and closing transitions, making
these ``silent'' transitions. This highly complicates the task of
comparing automata using simulation-based methods. In our case
instead, every transition is labelled, and this helps us to obtain an
algorithm.

These differences make the task of converting from one model to the
other rather subtle. To translate a branching automaton into a Petri
automaton, one would need some kind of epsilon elimination
procedure. We believe such a procedure could be devised using boxes
from Section~\ref{sec:pa:exp} to keep track of the order in which
opening and closing transitions are combined. However we do not have a
precise formulation of this translation yet.

The other direction is slightly easier. Starting from a Petri
automaton, first modify the transitions as sketched below.
\begin{align*}
  \begin{tikzpicture}
    \etat[p_1](p1)(0,1);
    \etat[p_n](p2)(0,-1);
    \node () at ($(p1.south)!.5!(p2.north)$) {$\rvdots$};
    \etat[q_1](q1)(2,1);
    \etat[q_m](q2)(2,-1);
    \node (out1) at ($(q1.south)!.5!(q2.north)$) {$\rvdots$};
    \trans[t](t) (1,0);
    \edge (p1) (t);\edge(p2)(t);
    \edge(t)(q1)[a_1];\edge[below](t)(q2)[a_m];
    \etat[p_1](xp1)(5,1);
    \etat[p_n](xp2)(5,-1);
    \node (in2) at ($(xp1.south)!.5!(xp2.north)$) {$\rvdots$};
    \etat[q_1^i](q1i)(9,1);
    \etat[q_m^i](q2i)(9,-1);
    \node () at ($(q1i.south)!.5!(q2i.north)$) {$\rvdots$};
    \etat[q_1](q1o)(10.5,1);
    \etat[q_m](q2o)(10.5,-1);
    \node () at ($(q1o.south)!.5!(q2o.north)$) {$\rvdots$};
    \etat[t](tp) (7,0);
    \draw[arc,bend left] (xp1) to node[inner sep=0,pos=.6](x1) {} (tp);
    \draw[arc,bend right] (xp2) to node[inner sep=0,pos=.6](x2) {} (tp);
    \draw[bend right] (x1) to (x2);
    \draw[arc,bend left] (tp) to node[inner sep=0,pos=.4](y1) {} (q1i);
    \draw[arc,bend right] (tp) to node[inner sep=0,pos=.4](y2) {} (q2i);
    \draw[bend left] (y1) to (y2);
    \edge(q1i)(q1o)[a_1];\edge[below](q2i)(q2o)[a_m];
    \node () at ($(out1)!.5!(in2)$) {$\mapsto$};
  \end{tikzpicture}
\end{align*}
\noindent
This yields an automaton whose pomset-trace language (when seen as a
Petri net) is the language of the original Petri automaton. However,
its branching automaton language is not the same, as illustrated by
the following example:
\begin{align*}
  \begin{tikzpicture}
  \etat[A](0)(0,0);\trans[0](t0)(1,0);\initst(0);
  \etat[B](1)(2,1);\etat[C](2)(2,0);\etat[D](3)(2,-1);
  \trans[1](t1)(3,.5);\etat[E](4)(4,.5);\transf[2](t2)(5,-.25);
  \edge(0)(t0);\edge(t0)(2)[b];\edge[bend left,above](t0)(1)[a];
  \edge[bend right,below](t0)(3)[c];
  \edge[bend left](1)(t1);\edge[bend right](2)(t1);\edge(t1)(4)[d];
  \edge[bend left](4)(t2);\edge[out=0,in=-135](3)(t2);
  \end{tikzpicture}
  &&
  \begin{tikzpicture}
  \etat[A](0)(0,0);\initst(0);
  \etat[B^t](1i)(1,1);\etat[C^t](2i)(1,0);\etat[D^t](3i)(1,-1);
  \etat[B](1)(2,1);\etat[C](2)(2,0);\etat[D](3)(2,-1);
  \etat[1](t1)(3,.5);\etat[E](4)(4,.5);\etat[2](t2)(5,-.25);\fnst(t2);
  \draw[arc] (0) to node[pos=.4] (x2) {} (2i);
  \draw[arc] (0) to[bend left] node[pos=.4] (x1) {} (1i);
  \draw[arc] (0) to[bend right] node[pos=.4] (x3) {} (3i);
  \draw[bend left] (x1) to (x3);
  \edge(2i)(2)[b];\edge(1i)(1)[a];\edge[below](3i)(3)[c];
  \draw[arc] (1) to[bend left] node[pos=.6] (y1) {} (t1);
  \draw[arc] (2) to[bend right] node[pos=.6] (y2) {} (t1);
  \draw[bend right] (y1) to (y2);
  \edge(t1)(4)[d];
  \draw[arc] (4) to[bend left] node[pos=.6] (z1) {} (t2);
  \draw[arc] (3) to[out=0,in=-145] node[pos=.9] (z2) {} (t2);
  \draw[bend right] (z1) to (z2);
  \end{tikzpicture}
\end{align*}
The pomset-trace language of the branching automaton corresponds to
the singleton set~$\set{\paren{\paren{a\cap b}\cdot d}\cap c}$, but
its language is empty, as no factorisation of that term has three
parallel subterms. To solve this problem, one needs to saturate the
automaton by splitting opening and closing transitions to allow for
every factorisation. Formally, it means that for an opening transition
$\tuple{p,[q_1,\dots,q_n]}$, for every tree\footnote{We consider here
  trees where no vertex has out-degree one.} with $n$ leaves labelled
with $[q_1,\dots,q_n]$ there should be a sequence of opening
transitions of that shape.

In the case of the above example, one would add three states
$BC^t,BD^t$ and $CD^t$, and six transitions:
\begin{align*}
  \begin{tikzpicture}
  \etat[A](0)(-1,0);
  \etat[B^t](1i)(1,1);\etat[C^t](2i)(1,0);\etat[D^t](3i)(1,-1);
  \etat[BC^t](g)(0,.5);
  \draw[arc] (0) to[bend left] node[pos=.4] (x1) {} (g);
  \draw[arc] (0) to[bend right] node[pos=.1] (x2) {} (3i);
  \draw[bend left] (x1) to (x2);
  \draw[arc] (g) to[bend left] node[pos=.4] (y1) {} (1i);
  \draw[arc] (g) to[bend right] node[pos=.4] (y2) {} (2i);
  \draw[bend left] (y1) to (y2);
  \end{tikzpicture}&&
  \begin{tikzpicture}
  \etat[A](0)(-1,0);
  \etat[B^t](1i)(1,1);\etat[C^t](2i)(1,0);\etat[D^t](3i)(1,-1);
  \etat[CD^t](g)(0,-.5);
  \draw[arc] (0) to[bend right] node[pos=.4] (x1) {} (g);
  \draw[arc] (0) to[bend left] node[pos=.1] (x2) {} (1i);
  \draw[bend left] (x2) to (x1);
  \draw[arc] (g) to[bend left] node[pos=.4] (y1) {} (2i);
  \draw[arc] (g) to[bend right] node[pos=.4] (y2) {} (3i);
  \draw[bend left] (y1) to (y2);
  \end{tikzpicture}&&
  \begin{tikzpicture}
  \etat[A](0)(-1,0);
  \etat[B^t](1i)(1,1);\etat[D^t](2i)(1,0);\etat[C^t](3i)(1,-1);
  \etat[BD^t](g)(0,.5);
  \draw[arc] (0) to[bend left] node[pos=.4] (x1) {} (g);
  \draw[arc] (0) to[bend right] node[pos=.1] (x2) {} (3i);
  \draw[bend left] (x1) to (x2);
  \draw[arc] (g) to[bend left] node[pos=.4] (y1) {} (1i);
  \draw[arc] (g) to[bend right] node[pos=.4] (y2) {} (2i);
  \draw[bend left] (y1) to (y2);
  \end{tikzpicture}
\end{align*}
After this transformation, the pomset-trace and the branching automata
languages coincide: we obtained a branching automaton recognising the
same language as the original Petri automaton.

This translation gives rise to a completely different proof of
Theorem~\ref{thm:kl2:sp}: first translate the Petri automaton into an
equivalent branching automaton, then use Theorem~\ref{thm:ktba} to
obtain an expression describing its language. Conversely, one could
obtain Theorem~\ref{thm:ktba} through a translation from branching
automata to Petri automata and Theorem~\ref{thm:kl2:sp}.


\bibliographystyle{abbrvurl}
\bibliography{long,main,pous}

\begin{thebibliography}{10}

\bibitem{AMN11}
H.~Andr{\'e}ka, S.~Mikul{\'a}s, and I.~N{\'e}meti.
\newblock \href {http://dx.doi.org/10.1016/j.tcs.2011.09.024} {The equational
  theory of {K}leene lattices}.
\newblock {\em Theoretical Computer Science}, 412(52):7099--7108, 2011.

\bibitem{AB95}
H.~Andréka and D.~Bredikhin.
\newblock \href {http://dx.doi.org/10.1007/BF01225472} {The equational theory
  of union-free algebras of relations}.
\newblock {\em Algebra Universalis}, 33(4):516--532, 1995.

\bibitem{BES95}
S.~L. Bloom, Z.~{\'E}sik, and G.~Stefanescu.
\newblock \href {http://dx.doi.org/10.1007/BF01190768} {Notes on equational
  theories of relations}.
\newblock {\em Algebra Universalis}, 33(1):98--126, 1995.

\bibitem{Boffa95}
M.~Boffa.
\newblock \href {http://archive.numdam.org/.../ITA_1995__29_6_515_0.pdf} {Une
  condition impliquant toutes les identités rationnelles}.
\newblock {\em Informatique Théorique et Applications}, 29(6):515--518, 1995.

\bibitem{BojanczykP16}
M.~Boja{\'{n}}czyk and M.~Pilipczuk.
\newblock \href {http://dx.doi.org/10.1145/2933575.2934508} {Definability
  equals recognizability for graphs of bounded treewidth}.
\newblock In {\em Proc.}\ {\em LICS}, pages 407--416. ACM, 2016.

\bibitem{bossut1995kleene}
F.~Bossut, M.~Dauchet, and B.~Warin.
\newblock \href {http://dx.doi.org/10.1006/inco.1995.1043} {A {K}leene theorem
  for a class of planar acyclic graphs}.
\newblock {\em Information and Computation}, 117(2):251--265, 1995.

\bibitem{rklm:web}
P.~Brunet.
\newblock {RKLM} software, 2014.
\newblock \url{http://paul.brunet-zamansky.fr/rklm.php}.

\bibitem{bp:ramics14:kac}
P.~Brunet and D.~Pous.
\newblock \href {http://dx.doi.org/10.1007/978-3-319-06251-8_7} {{K}leene
  algebra with converse}.
\newblock In {\em Proc.}\ {\em RAMiCS}, volume 8428 of {\em LNCS}, pages
  101--118. Springer, 2014.

\bibitem{bp:jlamp15:kac}
P.~Brunet and D.~Pous.
\newblock \href {http://dx.doi.org/10.1016/j.jlamp.2015.07.005} {Algorithms for
  {K}leene algebra with converse}.
\newblock {\em Journal of Logical and Algebraic Methods in Programming},
  85(4):574--594, 2015.

\bibitem{bp:lics15:paka}
P.~Brunet and D.~Pous.
\newblock \href {http://dx.doi.org/10.1109/LICS.2015.17} {Petri automata for
  {K}leene allegories}.
\newblock In {\em Proc.}\ {\em LICS}, pages 68--79. ACM, 2015.

\bibitem{Conway71}
J.~H. Conway.
\newblock {\em Regular algebra and finite machines}.
\newblock Chapman and Hall, 1971.

\bibitem{courcelle:hal-00646514}
B.~Courcelle and J.~Engelfriet.
\newblock \href {https://hal.archives-ouvertes.fr/hal-00646514} {{\em {Graph
  structure and monadic second-order logic. A language-theoretic approach.}}}
\newblock Encyclopedia of Mathematics and its applications, Vol. 138. Cambridge
  University Press, June 2012.
\newblock Collection Encyclopedia of Mathematics and Applications, Vol. 138.

\bibitem{EB95}
Z.~{\'E}sik and L.~Bern{\'a}tsky.
\newblock \href {http://dx.doi.org/10.1016/0304-3975(94)00041-G} {Equational
  properties of {K}leene algebras of relations with conversion}.
\newblock {\em Theoretical Computer Science}, 137(2):237--251, 1995.

\bibitem{FS90}
P.~Freyd and A.~Scedrov.
\newblock {\em Categories, Allegories}.
\newblock North Holland, 1990.

\bibitem{Furer1980}
M.~F{\"u}rer.
\newblock \href {http://dx.doi.org/10.1007/3-540-10003-2_74} {The complexity of
  the inequivalence problem for regular expressions with intersection}.
\newblock In {\em Proc.}\ {\em ICALP}, pages 234--245. Springer Verlag, 1980.

\bibitem{goltz1983}
U.~Goltz and W.~Reisig.
\newblock \href {http://dx.doi.org/10.1016/S0019-9958(83)80040-0} {The
  non-sequential behaviour of {P}etri nets}.
\newblock {\em Information and Control}, 57(2):125--147, 1983.

\bibitem{Jategaonkar96}
L.~Jategaonkar and A.~R. Meyer.
\newblock \href {http://dx.doi.org/10.1016/0304-3975(95)00132-8} {Deciding true
  concurrency equivalences on safe, finite nets}.
\newblock {\em Theoretical Computer Science}, 154(1):107--143, 1996.

\bibitem{Kleene}
S.~C. Kleene.
\newblock \href
  {http://www.rand.org/content/dam/rand/pubs/research_memoranda/2008/RM704.pdf}
  {{\em Representation of Events in Nerve Nets and Finite Automata}}.
\newblock Memorandum. Rand Corporation, 1951.

\bibitem{Kozen91}
D.~Kozen.
\newblock \href {http://dx.doi.org/10.1109/LICS.1991.151646} {A completeness
  theorem for {K}leene algebras and the algebra of regular events}.
\newblock In {\em Proc.}\ {\em LICS}, pages 214--225. IEEE, 1991.

\bibitem{kozen1998typed}
D.~Kozen.
\newblock \href {http://www.cs.cornell.edu/~kozen/papers/typed.pdf} {Typed
  {K}leene algebra}.
\newblock Technical Report TR98-1669, CS Dpt., Cornell University, 1998.

\bibitem{Krob90}
D.~Krob.
\newblock \href {http://dx.doi.org/10.1007/BFb0032022} {{A Complete System of
  {B}-Rational Identities}}.
\newblock In {\em Proc.}\ {\em ICALP}, volume 443 of {\em Lecture Notes in
  Computer Science}, pages 60--73. Springer Verlag, 1990.

\bibitem{Lodaya1998}
K.~Lodaya and P.~Weil.
\newblock \href {http://dx.doi.org/10.1007/BFb0028590} {{\em Series-parallel
  posets: Algebra, automata and languages}}, pages 555--565.
\newblock Springer Verlag, 1998.

\bibitem{LODAYA2000347}
K.~Lodaya and P.~Weil.
\newblock \href {http://dx.doi.org/10.1016/S0304-3975(00)00031-1}
  {Series–parallel languages and the bounded-width property}.
\newblock {\em Theoretical Computer Science}, 237(1):347--380, 2000.

\bibitem{LODAYA2001269}
K.~Lodaya and P.~Weil.
\newblock \href {http://dx.doi.org/10.1006/inco.2001.3077} {Rationality in
  algebras with a series operation}.
\newblock {\em Information and Computation}, 171(2):269--293, 2001.

\bibitem{MS72}
A.~Meyer and L.~J. Stockmeyer.
\newblock \href {http://dx.doi.org/10.1109/SWAT.1972.29} {The equivalence
  problem for regular expressions with squaring requires exponential space}.
\newblock In {\em Proc.}\ {\em SWAT}, pages 125--129. IEEE, 1972.

\bibitem{MS73}
A.~Meyer and L.~J. Stockmeyer.
\newblock \href {http://dx.doi.org/10.1145/800125.804029} {Word problems
  requiring exponential time}.
\newblock In {\em Proc.}\ {\em STOC}, pages 1--9. ACM, 1973.

\bibitem{moller1999typed}
B.~M{\"o}ller.
\newblock Typed {K}leene algebras.
\newblock In {\em Proc.}\ {\em MPC}, Lecture Notes in Computer Science.
  Citeseer, 1999.

\bibitem{Murata}
T.~Murata.
\newblock \href {http://dx.doi.org/10.1109/5.24143} {{P}etri nets: Properties,
  analysis and applications}.
\newblock {\em Proceedings of the IEEE}, 77(4):541--580, Apr 1989.

\bibitem{Petri62}
C.~A. Petri.
\newblock Fundamentals of a theory of asynchronous information flow.
\newblock In {\em Proc.}\ {\em {IFIP} Congress}, pages 386--390, 1962.

\bibitem{Petri}
C.~A. Petri.
\newblock {\em Kommunikation mit Automaten}.
\newblock PhD thesis, Darmstadt Univ. of Tech., 1962.

\bibitem{redko64}
V.~N. Redko.
\newblock On defining relations for the algebra of regular events.
\newblock {\em Ukrainskii Matematicheskii Zhurnal}, 16:120--126, 1964.

\bibitem{savitch70}
W.~J. Savitch.
\newblock Relationships between nondeterministic and deterministic tape
  complexities.
\newblock {\em Journal of computer and system sciences}, 4(2):177--192, 1970.

\bibitem{thompson68}
K.~Thompson.
\newblock \href
  {http://www.fing.edu.uy/inco/cursos/intropln/material/p419-thompson.pdf}
  {Regular expression search algorithm}.
\newblock {\em Communications of the ACM}, 11:419--422, 1968.

\bibitem{Valdes79}
J.~Valdes, R.~E. Tarjan, and E.~L. Lawler.
\newblock \href {http://dx.doi.org/10.1145/800135.804393} {The recognition of
  series parallel digraphs}.
\newblock In {\em Proc.}\ {\em STOC}, pages 1--12. ACM, 1979.

\end{thebibliography}


\end{document}